\definecolor{dullmagenta}{rgb}{0.4,0,0.4}   
\definecolor{darkblue}{rgb}{0,0,0.4}
\newtheorem{theorem}{Theorem}
\newtheorem{lemma}[theorem]{Lemma}
\newcommand{\Pain}[1]{\text{P}_{\mathrm{#1}}}
\newcommand{\Ham}[2]{\mathrm{H}^{\mathrm{#1}}_{\mathrm{#2}}}
\newcommand{\Symp}[1]{\Omega^{\mathrm{#1}}}
\newcommand{\symp}[1]{\omega^{\mathrm{#1}}}
\newcommand{\sympt}[1]{\tilde{\omega}^{\mathrm{#1}}}
\theoremstyle{definition}
\theoremstyle{remark}
\newtheorem{remark}{Remark}
\newtheorem*{notation*}{Notation} 
\numberwithin{equation}{section}
\begin{document}

{\noindent\Large\bf Different Hamiltonians for differential Painlev\'e equations and their\\ identification 
using a geometric approach 
}
\medskip

\begin{flushleft}

\textbf{Anton Dzhamay}\\
School of Mathematical Sciences, The University of Northern Colorado, Greeley, CO 80639, US and \\
Yanqi Lake Beijing Institute of Mathematical Sciences and Applications (BIMSA), Beijing, China\\
E-mail: \href{mailto:anton.dzhamay@unco.edu}{\texttt{anton.dzhamay@unco.edu}}\qquad 
ORCID ID: \href{https://orcid.org/0000-0001-8400-2406}{\texttt{0000-0001-8400-2406}}\\[5pt]

\textbf{Galina Filipuk}\\
Institute of Mathematics, University of Warsaw, Banacha 2, Warsaw, 02-097, Poland\\
E-mail: \href{mailto:g.filipuk@uw.edu.pl}{\texttt{g.filipuk@uw.edu.pl}}\qquad
ORCID ID: \href{https://orcid.org/0000-0003-2623-5361}{\texttt{0000-0003-2623-5361}}\\[5pt] 

\textbf{Adam Lig\c{e}za}\\
Institute of Mathematics, University of Warsaw, Banacha 2, Warsaw, 02-097, Poland\\
E-mail: \href{mailto:al320166@students.mimuw.edu.pl}{\texttt{al320166@students.mimuw.edu.pl}}\qquad
ORCID ID: \href{https://orcid.org/0000-0002-1340-3840}{\texttt{0000-0002-1340-3840}}\\[5pt] 

\textbf{Alexander Stokes}\\
Graduate School of Mathematical Sciences, The University of Tokyo, 3--8--1 Komaba, Meguro-ku, \\Tokyo 153-8914, Japan\\
E-mail: \href{mailto:stokes@ms.u-tokyo.ac.jp}{\texttt{stokes@ms.u-tokyo.ac.jp}}\qquad
ORCID ID: \href{https://orcid.org/0000-0001-6874-7141}{\texttt{0000-0001-6874-7141}}\\[8pt]

\emph{Keywords}: Hamiltonian systems, Painlev\'e equations, isomonodromic transformations, birational transformations.\\[3pt]

\emph{MSC2020}: 33D45, 34M55, 34M56, 14E07, 39A13

\end{flushleft}

%

%
\date{}

\begin{abstract}
	It is well-known that differential Painlev\'e equations can be written in a Hamiltonian form. However, a coordinate
	form of such representation	is far from unique -- there are many very different Hamiltonians that result in the same
	differential Painlev\'e equation. Recognizing a Painlev\'e equation, for example when it appears in some applied problem, 
	is known as the \emph{Painlev\'e equivalence problem}, and the question that we consider here is the Hamiltonian form of this
	problem. Making such identification explicit, on the level of coordinate transformations, can be very helpful
	for an applied problem, since it gives access to the wealth of known results about Painlev\'e equations, such 
	as the structure of the symmetry group, special solutions for special values of the parameters, and so on. It can also
	provide an explicit link between different problems that have the same underlying structure.  
	In this paper we describe a systematic procedure for finding changes of coordinates
	that transform different Hamiltonian representations of a Painlev\'e equation into some chosen canonical form. 
	Our approach is based on Sakai's geometric theory of Painlev\'e equations. We explain this procedure in detail for 
	the fourth differential $\Pain{IV}$ equation, and also give a brief summary of some known examples for 
	$\Pain{V}$ and $\Pain{VI}$ cases. It is clear that this approach can easily be adapted to other examples as well, 
	so we expect our paper to be a useful reference for some of the realizations of Okamoto spaces of initial conditions 
	for Painlev\'e equations.
\end{abstract}


\section{Introduction} 
\label{sec:intro}

Painlev\'e equations play an increasingly important role in a wide range of nonlinear problems in Mathematics and Mathematical
Physics \cite{IwaKimShiYos:1991:FGP,Con:1999:PP}. It is well-known that Painlev\'e equations can be written in a Hamiltonian
form. However, there are many different Hamiltonian systems that reduce to the same differential Painlev\'e equation, and the 
relationship between such different systems are far from obvious. The need for such an identification comes from the fact that 
both discrete and differential Painlev\'e equations, as well as related Hamiltonian systems, appear in many interesting
applications, but often, when expressed in the coordinates of the problem, they are difficult to recognize. 
This is sometimes called the \emph{Painlev\'e equivalence problem} \cite{Cla:2019:OPPE}.
Knowing that a 
certain system is just a well-studied Painlev\'e equation in disguise, and explicitly identifying parameters of the problem 
with the standard ones, gives access to a large collection of known facts, such as B\"acklund transformations, special solutions
for certain parameter values, and so on. Often such connections are made on an ad-hoc basis that requires some luck and 
guesswork. The main message of our paper is that the geometric approach makes any guesswork unnecessary and instead gives us
a step-by-step procedure to perform such an identification, when possible. Thus, our objective is both to explain an algorithmic 
scheme for identifying different Hamiltonian systems related to the same Painlev\'e equation via an explicit 
birational change of variables, and to provide the necessary amount of the background material to make this approach accessible. 
This work is motivated by the recently proposed identification scheme for 
discrete Painlev\'e equations \cite{DzhFilSto:2020:RCDOPWHWDPE} and is based on the geometric theory of Painlev\'e equations
initiated in the works of K.~Okamoto \cite{Oka:1979:FAESOPCFPP} and further developed in the seminal paper of H.~Sakai 
\cite{Sak:2001:RSAWARSGPE}. In this paper we consider examples of different Hamiltonians for Painlev\'e equations that 
are given in the foundational papers by K.~Okamoto \cite{Oka:1980:PHAWPE} (and much earlier in essentially the same form
by Malmquist
\cite{Mal:1922:EDSODLGPCF}) and M.~Jimbo and T.~Miwa
\cite{JimMiw:1981:MPDLODEWRC-II}, as well as more recent examples by G.~Filipuk and H.~\.{Z}o\l\c{a}dek \cite{ZolFil:2015:PEEIEF}
and A.~Its and A.~Prokhorov \cite{ItsPro:2018:SHPIF}. In addition, for the $\Pain{IV}$ equation there is a very interesting example
of a Hamiltonian constructed by T.~Kecker \cite{Kec:2016:CHSWMS}, so for that reason in this paper we consider in detail the 
differential $\Pain{IV}$ equation, and then just summarize the data for $\Pain{V}$ and $\Pain{VI}$. The
$\Pain{II}$ and $\Pain{III}$ examples were considered previously in \cite{DzhFilLigSto:2021:HRSPE} and 
\cite{FilLigSto:2023:RBDHFTPE}.

The paper is organized as follows. In Section~\ref{sec:P4-Hams} we give examples of different Hamiltonian forms of the 
$\Pain{IV}$ equation and list explicit change of  coordinates between them. We also show how to explicitly match the 
Hamiltonian functions. In Section~\ref{sec:prelim} we carefully review the construction of the space of initial 
conditions for a Painlev\'e Hamiltonian system and, following \cite{KajNouYam:2017:GAPE}, 
recall some standard facts about the geometry of the standard realization of the 
resulting  $E_{6}^{(1)}$ surface, including the choice of root bases for the surface and symmetry sub-lattices 
in its Picard lattice. We also discuss the Hamiltonian formalism for time-dependent Hamiltonian functions. 
In Section~\ref{sec:P4-matching} we describe an algorithmic procedure on constructing a birational change of coordinates 
matching two different Hamiltonian systems based on identifying the geometry of the space of initial conditions, and then 
apply it to different forms of $\Pain{IV}$ equation, providing detailed descriptions of necessary computations. Then in 
Sections~\ref{sec:P5} and \ref{sec:P6} we give a summary of such matchings for $\Pain{V}$ and $\Pain{VI}$ respectively. 
Conclusions and some possible follow-up questions are discussed in Section~\ref{sec:conclusion}.


\section{The Fourth Painev\'e Equation $\Pain{IV}$ and its Hamiltonian Forms} 
\label{sec:P4-Hams}

Our main example is the differential $\Pain{IV}$ equation 
\begin{equation}\label{eq:P4-std}
\Pain{IV}={\Pain{IV}}_{\alpha,\beta}:\quad \frac{d^2 w}{dt^2} = \frac{1}{2w}\left(\frac{d w}{dt}\right)^2 + 
\frac{3}{2}w^3 + 4tw^2 + 2(t^2 - \alpha)w + \frac{\beta}{w},
\end{equation}
where $t$ is an independent variable, $w=w(t)$ is a dependent variable, and $\alpha$, $\beta$ are complex parameters. 
We begin by showing how this equation appears, in a Hamiltonian form, in a variety of different scenarios, and give some
examples of identification based on just the context of the problem.


\paragraph{Its--Prokhorov and Jimbo-Miwa Hamiltonians} 
\label{par:IP-JM-Ham-4}
We begin by considering the Hamiltonian system obtained by A.~Its and A.~Prokhorov \cite{ItsPro:2018:SHPIF} as isomonodromic deformations of
a  $2\times 2$ linear system with one irregular singular point at $z=\infty$ with the Poincare rank $2$ and one Fuchsian singularity at $z=0$,
\begin{equation}\label{eq:IP-linear-4}
	\begin{aligned}
	\frac{d\Phi}{dz} &= A(z;t) \Phi,\quad A(z;t) = \frac{A_{-1}(t)}{z} + A_{0}(t) + A_{1}(t) z\\
	A_{1} = \begin{bmatrix} 1 & 0 \\ 0 & -1 
	\end{bmatrix},\quad 
	A_{0} &= \begin{bmatrix} t & k \\ -\frac{q(4p-q-2t)+4\Theta_{\infty}}{2k} &-t 
	\end{bmatrix},\quad		 
	A_{-1} = \frac{1}{2}\begin{bmatrix} \frac{q(4p-q-2t)}{2} & -k q\\
	\frac{q^2(4p-q-2t)^2-16\Theta_0^2}{4k q} & -\frac{q(4p-q-2t)}{2} \end{bmatrix},
\end{aligned}
\end{equation}
where $q=q(t)$, $p=p(t)$, $k=k(t)$ and $\Theta_{0}$, $\Theta_{\infty}$ are some (time independent) 
parameters (that are formal monodromy exponents at the 
corresponding singular points). The isomonodromy deformation is then given by the equation
\begin{equation}\label{eq:IP-isom-4}
	\frac{d\Phi}{dt} = B(z;t) \Phi,\qquad B(z) = B_{1}(t) z + B_{0}(t),\qquad B_{1}=A_{1},\quad 
	B_{0} = \begin{bmatrix} 0 & k \\ 
	-\frac{q(4p-q-2t)+4\Theta_{\infty}}{2k} &0 \end{bmatrix}.
\end{equation}
The compatibility condition between \eqref{eq:IP-linear-4} and \eqref{eq:IP-isom-4} is the zero-curvature equation
$A_{t} - B_{z} + [A,B] = 0$ that results in the equations
\begin{align}
	\frac{dq}{dt} &= 4 p q,\label{eq:IP-system-4-1} \\
	\frac{dp}{dt} &= {-2p^2}+3q^2/8+q t+t^2/2-\Theta_{\infty}+1/2-2\Theta_0^2/q^2,\label{eq:IP-system-4-2} \\
	\frac{dk}{dt} &=-(q+2t)k.	\notag
\end{align}
Equations \eqref{eq:IP-system-4-1}  and \eqref{eq:IP-system-4-2} do not depend on the function $k$ and eliminating $p$ 
we obtain the fourth Painlev\'e equation \eqref{eq:P4-std} for the function $q(t)$ with parameters 
\begin{equation}\label{eq:IP-pars-4}
\alpha = 2\Theta_{\infty}-1, \quad \beta=-8\Theta_0^2.
\end{equation}
The system of equations \eqref{eq:IP-system-4-1}  and \eqref{eq:IP-system-4-2} is a non-autonomous Hamiltonian system  with the 
Its-Prokhorov Hamiltonian $\Ham{IP}{IV}=\Ham{IP}{IV}(q,p;t)$ (corresponding to a symplectic form $\symp{IP}$) given by 
 \begin{equation}\label{eq:IP-Ham-4}
 	\left\{
	\begin{aligned}
		\frac{dq}{dt} &= \frac{\partial \Ham{IP}{IV}}{\partial p},\\
		\frac{dp}{dt} &= -\frac{\partial \Ham{IP}{IV}}{\partial q},
	\end{aligned}
	\right.\qquad\text{where}\quad
	\begin{aligned}
			\Ham{IP}{IV}(q,p;t)&=2p^2 q-\frac{q^3}{8}-\frac{t q^2}{2}+\frac{(2\Theta_{\infty}-1-t^2)q}{2}+2\Theta_{\infty}t-\frac{2\Theta_0^2}{q},\\
		\symp{IP} &= dp\wedge dq.
	\end{aligned}
 \end{equation}

 In a much earlier paper \cite{JimMiw:1981:MPDLODEWRC-II} M.~Jimbo and T.~Miwa gave a different parameterization of the coefficient matrices of 
 equations \eqref{eq:IP-system-4-1} -- \eqref{eq:IP-system-4-2},
 \begin{equation}\label{eq:JM-linear-4}
 	A_0=\begin{bmatrix} t & u \\ \frac{2(z-\Theta_0-\Theta_{\infty})}{u} & -t \end{bmatrix},\quad
	A_{-1}=\begin{bmatrix} -z+\Theta_0 & -\frac{u y}{2} \\ \frac{2z(z-2\Theta_0)}{u y} & z-\Theta_0 \end{bmatrix},\quad	
	B_0=\begin{bmatrix} 0 & u \\ 	\frac{2(z-\Theta_0-\Theta_{\infty})}{u} & 0 \end{bmatrix},	 	
 \end{equation}
 where $z=z(t)$, $y=y(t)$, $u=u(t)$, and $\Theta_0$ and $\Theta_{\infty}$ are the same parameters as in \cite{ItsPro:2018:SHPIF}.
 The compatibility condition in this case leads to the following system of nonlinear differential equations for the functions $y$, $z$ and $u$:
 \begin{align}
 	\frac{dy}{dt} &= -4z+y^2+2t y+4\Theta_0,\label{eq:JM-system-4-1} \\
 	\frac{dz}{dt} &= -\frac{2}{y}z^2+\left(-y+\frac{4\Theta_0}{y}\right)z+(\Theta_0+\Theta_{\infty})y,\label{eq:JM-system-4-2} \\
 	\frac{du}{dt} &= -(y-2t)u.	\notag
 \end{align}
 Eliminating the function $z$ from the first two equations \eqref{eq:JM-system-4-1} and  \eqref{eq:JM-system-4-2} 
 one can obtain the fourth Painlev\'e equation for the variable $y$ with the same values of parameters $\alpha$ and $\beta$ as in \eqref{eq:IP-pars-4}
 above. The system \eqref{eq:JM-system-4-1} and  \eqref{eq:JM-system-4-2} is also Hamiltonian with the 
 Jimbo-Miwa Hamiltonian $\Ham{JM}{IV}=\Ham{JM}{IV}(y,z,t)$: 
  \begin{equation}\label{eq:JM-Ham-4}
  	\left\{
 	\begin{aligned}
 		-\frac{1}{y} \frac{dy}{dt} &= \frac{\partial \Ham{JM}{IV}}{\partial z},\\
 		-\frac{1}{y} \frac{dz}{dt} &= -\frac{\partial \Ham{JM}{IV}}{\partial y},			
 	\end{aligned}
 	\right.\qquad\text{where}\quad
	\begin{aligned}
		\Ham{JM}{IV}(y,z;t) &= \frac{2}{y}z^2-\left(y+2t+4\frac{\Theta_0}{y}\right)z+(\Theta_0+\Theta_{\infty})(y+2t),\\
		\symp{JM} &=(1/y)\,dy\wedge dz.
	\end{aligned} 	
  \end{equation}
 Note that this time we need to use the 
 logarithmic symplectic structure given by $\symp{JM}$; we explain the geometry behind this in 
Section~\ref{sub:JM-soic-4}.
 
 
What is the relationship between the Hamiltonian systems \eqref{eq:IP-Ham-4} and \eqref{eq:JM-Ham-4}? In this case we do not need any elaborate tools
to match these two systems exactly. Indeed, comparing the matrix coefficients in \eqref{eq:IP-linear-4} and \eqref{eq:JM-linear-4} we immediately
see that $u=k$, parameters $\Theta_{0}$ and $\Theta_{\infty}$ match, and the variables $(q,p)$ and $(y,z)$ are related by the following 
\emph{birational} change of variables:
 \begin{equation}\label{eq:JMtoIP-4}
	 \varphi: \left\{\begin{aligned}
	 	y(q,p)&=q,\\
		z(q,p)&=\frac{1}{4}(q^2-4p q+2q t+4\Theta_0),
	 \end{aligned}\right.
 \qquad\text{and conversely,} \qquad \varphi^{-1}:
 	\left\{\begin{aligned}
 		 q(x,y) &= y,\\ 
		 p(x,y) &= \frac{2t y+y^2-4z+4\Theta_0}{4y}.
 	\end{aligned}\right.
 \end{equation}
One can then check directly that this change of variables transforms system \eqref{eq:IP-Ham-4} to system \eqref{eq:JM-Ham-4}. 
However, since the change of variables is \emph{time-dependent}, we get some additional terms in the expression 
for the Hamiltonian.  Specifically, the relationship between the Hamiltonians is obtained via the pull-back of a certain 
 $2$-form on the extended phase space,
\begin{equation*}
	\Symp{IP} = dp\wedge dq - d\Ham{IP}{IV}\wedge dt = \varphi^{*}(\Symp{JM}),\qquad \Symp{JM} = \frac{1}{y} dy \wedge dz - d\Ham{JM}{IV}\wedge dt.
\end{equation*}
Thus, using \eqref{eq:JMtoIP-4}, we see that 
\begin{equation*}
	\Ham{JM}{IV}(y,z;t) = \Ham{IP}{IV}(q(y,z,t),p(y,z,t);t) - \frac{y}{2}\qquad\text{and}\qquad 
	\Ham{IP}{IV}(q,p;t) = \Ham{JM}{IV}(y(q,p,t),z(q,p,t);t) + \frac{q}{2}.
\end{equation*}
This remark is essential and so we explain it in more detail  in Section~\ref{sub:non-auto-Hams}.

\paragraph{Okamoto Hamiltonian} 
\label{par:Ok-Ham-4}

In \cite{ItsPro:2018:SHPIF,JimMiw:1981:MPDLODEWRC-II} the fourth Painlev\'e equation appeared from the monodromy preserving deformation of the linear system 
\eqref{eq:IP-linear-4}. On the other hand, in the work of K.~Okamoto the same equation appeared from the monodromy preserving deformation of a scalar 
second order linear differential equation, with coefficients depending on $t$, that has certain singularities in the complex plane. This approach leads 
to a very different Hamiltonian system with a \emph{polynomial} Hamiltonian \cite{Oka:1980:PHAWPE,Oka:1980:PHAWPEIDESPH,Oka:1986:SPEISFPEPP} given by
  \begin{equation}\label{eq:Ok-Ham-4}
  	\left\{
 	\begin{aligned}
 		\frac{df}{dt} &= \frac{\partial \Ham{Ok}{IV}}{\partial g} =4fg-f^2-2tf-2\kappa_0,\\
		\frac{dg}{dt} &= -\frac{\partial \Ham{Ok}{IV}}{\partial f} =-2g^2+(2f+2t)g -\theta_{\infty},			
 	\end{aligned}
 	\right.\qquad\text{where}\quad
	\begin{aligned}
	\Ham{Ok}{IV}(f,g;t) &= 2fg^2 - (f^2 + 2tf + 2 \kappa_0)g + \theta_\infty f.	\\
	\symp{Ok} &= dg\wedge df.
	\end{aligned}
  \end{equation}
Eliminating the function $g=g(t)$ from these equations we get $\Pain{IV}$ equation \eqref{eq:P4-std} for the function $f=f(t)$ with parameters 
\begin{equation}\label{eq:Ok-pars-4}
\alpha=1+2\theta_{\infty}-\kappa_0,\quad \beta=-2\kappa_0^2.
\end{equation}
It turns out that the \emph{Okamoto space of initial conditions} for this system has the simplest geometry, and so this system 
will be the reference example for the present paper.

\begin{remark}We use the recent comprehensive survey paper \cite{KajNouYam:2017:GAPE} (see also \cite{Nou:2004:PETS}) as the main reference for 
	our choice of the geometric data.	
\end{remark}


\paragraph{Filipuk-\.{Z}o\l\c{a}dek Hamiltonian} 
\label{par:FZ-Ham-4}
	A different approach to Hamiltonian structures of Painlev\'e equations that starts directly with the equation itself and not the isomonodromy problem,
	and that leads to rational Hamiltonians, was suggested by G.~Filipuk and H.~\.{Z}o\l\c{a}dek in \cite{ZolFil:2015:PEEIEF}. 
	The key observation there is that Painlev\'e equations can be written in the Li\'enard form
	\begin{equation*}
		\frac{d^{2} x}{dx^{2}} = A(x,t)\left(\frac{dx}{dt}\right)^{2} + B(x,t) \frac{dx}{dt} + C(x,t).
	\end{equation*}
	If we then introduce the new variable $y$ via $\frac{dx}{dt} = D(x,t)y$ (and use the subscript notation for partial derivatives, 
	$A_{x} = \frac{\partial A(x,t)}{\partial x}$, etc.), we observe that this equation transforms into a system that is Hamiltonian 
	with the Hamiltonian function $\Ham{FZ}{}(x,y;t)$, 
	\begin{equation}
		\left\{
		\begin{aligned}
			\frac{dx}{dt} &= D y = \Ham{FZ}{\mathit y},\\
			\frac{dy}{dt} &= (AD - D_{x})y^{2} + \left(B - \frac{D_{t}}{D}\right)y + \frac{C}{D} = - \Ham{FZ}{\mathit x},
		\end{aligned}
		\right.
	\end{equation}
	if the following compatibility condition 
	\begin{equation*}
		\Ham{FZ}{\mathit xy} = D_{x}y = 2(D_{x} - AD)y + \left(\frac{D_{t}}{D} - B\right)\quad\text{or, equivalently,}\quad 
		\left\{\begin{aligned}
			(\log D)_{x} &= 2A\\
			(\log D)_{t} &= B
		\end{aligned},\right.
	\end{equation*}
	holds, which is true when $2A_{t} = B$. Then we get the function $D(x,t)$ and the Hamiltonian $\Ham{FZ}{}(x,y,t)$ is
	\begin{equation}\label{eq:FZ-Ham}
		\Ham{FZ}{}(x,y;t) = D(x,t)\frac{y^{2}}{2} - \int^{x}\frac{C(x,t)}{D(x,t)}\,dx.
	\end{equation}
	
	For $\Pain{IV}$, we get $A(x,t) = \frac{1}{2x}$, $B(x,t)=0$, and $C(x,t) = \frac{3}{2} x^{3} + 4 t x^{2} + 2(t^{2} - \alpha)x + \frac{\beta}{x}$,
	and so the compatibility condition is trivially satisfied and we can take $D(x,t) = \lambda x$ for some $\lambda\neq 0$. Then 
	\begin{equation*}
		\Ham{FZ}{IV}(x,y;s)=\lambda x \frac{y^{2}}{2} - \frac{1}{\lambda}\left(\frac{x^{3}}{2} + 2t x^{2} + 2(t^{2} - \alpha)x - \frac{\beta}{x}\right).
	\end{equation*}
	In \cite{ZolFil:2015:PEEIEF} the authors take $\lambda = 1$, which gives the Hamiltonian 
	\begin{equation*}
		\widetilde{\Ham{FZ}{IV}}(x,\tilde{y};s)=\frac{x\tilde{y}^{2} -x^{3}}{2}  - 2 sx^{2} -2(s^{2}-\alpha) x + \frac{\beta}{x}.
	\end{equation*}
	However, from the geometric point of view, to avoid a normalization coefficient at the symplectic form that corresponds to the 
	standard normalization condition on the root variables, see Section~\ref{ssub:period-map-a21}, it is better to 
	take $\lambda = 4$, which results in a minor rescaling. The
	new Hamiltonian then essentially coincides with the Its-Prokhorov Hamiltonian \cite{ItsPro:2018:SHPIF}  
	(and the choice $\lambda=4$ can be also observed at this point), 
	\begin{equation}\label{eq:FZ-IP-Ham-4}
	\Ham{FZ}{IV}(x,y;s) = (1/4)\widetilde{\Ham{FZ}{IV}}(x,4y;s) = 
	\Ham{IP}{IV}\left(x,y;t; \Theta_{0}^{2} = - \frac{\beta}{8}, \Theta_{\infty} = \frac{1 + \alpha}{2}\right) - t(1+\alpha),
	\end{equation}
	%
	and so we do not consider it further. Essentially the same happens for $\Pain{VI}$, however, for $\Pain{V}$ we get something quite
	different from the other examples, as described in Section~\ref{sub:FZ-5}.

\paragraph{Kecker Hamiltonian} 
\label{par:Kek-Ham-4}

Finally, in this paper we shall also deal with the Hamiltonian that appeared in \cite{Kec:2016:CHSWMS} and was further studied in 
\cite{Kec:2019:SICCHS, Ste:2018:CMF}. This cubic Hamiltonian is given by 
  \begin{equation}\label{eq:Kek-Ham-4}
  	\left\{
 	\begin{aligned}
 		\frac{dx}{dz} &= \frac{\partial \Ham{Kek}{IV}}{\partial y} =y^{2} + z x + \tilde{\alpha},\\
		\frac{dy}{dz} &= -\frac{\partial \Ham{Kek}{IV}}{\partial x} =-x^{2} - z y - \tilde{\beta},			
 	\end{aligned}
 	\right.\qquad\text{where}\quad
	\begin{aligned}
	\Ham{Kek}{IV}(x,y;z) &= \frac{1}{3}(x^{3} + y^{3}) + z xy + \tilde{\alpha} y + \tilde{\beta} x,\\
	\symp{Kek} &= dy\wedge dx.
	\end{aligned}
  \end{equation}
Now the relationship with the standard $\Pain{IV}$ equation is less straightforward.  First, note that now we have $z$ as an independent variable,
$x=x(z)$ and $y=y(z)$ are dependent variables, and $\tilde{\alpha}$, $\tilde{\beta}$ are (complex) parameters. Let us introduce a new
variable $\tilde{w} = \tilde{w}(z) := x(z) + y(z) - z$. Then, as shown in \cite{Kec:2016:CHSWMS,Kec:2019:SICCHS, Ste:2018:CMF}, 
\eqref{eq:Kek-Ham-4} gives the following equation on $\tilde{w}(z)$:
\begin{equation*}
	\frac{d^{2}\tilde{w}}{dz^{2}} = \frac{1}{2 \tilde{w}}\left(\frac{d \tilde{w}}{dz}\right)^{2} - \frac{1}{2}\tilde{w}^{3} - 2 z \tilde{w}^{2} 
	- \frac{\tilde{w}}{w} (3 z^{2} + 2(\tilde{\alpha} + \tilde{\beta})) - \frac{(1 - \tilde{\alpha} + \tilde{\beta})^{2}}{\tilde{w}},
\end{equation*}
which then reduces to the standard equation \eqref{eq:P4-std} for parameters
\begin{equation}\label{eq:Kek-pars}
\alpha = \frac{\mathfrak{i} }{\sqrt{3}}(\tilde{\alpha} + \tilde{\beta}),\qquad \beta = -\frac{2}{9} (1 - \tilde{\alpha} + \tilde{\beta})^{2}
\end{equation}
if we put $z = \left(-\frac{4}{3}\right)^{\frac{1}{4}} t$ and 
$\tilde{w}(z) = \frac{3}{2} \left(-\frac{4}{3}\right)^{\frac{1}{4}} w(t)$.

\begin{remark} We need to mention that not every known example of Hamiltonian systems that reduce to Painlev\'e equations fits into our approach.
	As an example, consider the following system studied in \cite{Tak:2001:PCR}
    \begin{equation}\label{eq:Tak-Ham-4}
    	\left\{
   	\begin{aligned}
   		\frac{dq}{dt} &= \frac{\partial \Ham{Tak}{IV}}{\partial p} =p,\\
  		\frac{dp}{dt} &= -\frac{\partial \Ham{Tak}{IV}}{\partial q} = - \frac{\partial V(q,t)}{\partial q},			
   	\end{aligned}
   	\right.\qquad\text{where}\quad
  	\begin{aligned}
  	\Ham{Tak}{IV}(q,p;t) &= \frac{p^2}{2}+V(q,t),	\\
  	\symp{Tak} &= dp\wedge dq,
  	\end{aligned}
    \end{equation}
	and the potential $V(q,t)$ is given by 
	\begin{equation*}
	V(q,t)=-\frac{1}{2}\left(\frac{q}{2}\right)^6-2t\left(\frac{q}{2}\right)^4-2(t^2-\alpha)\left(\frac{q}{2}\right)^2+\beta\left(\frac{q}{2}\right)^{-2}.
	\end{equation*}
	
	By taking $w=(q/2)^2$ one can easily show that $w$ satisfies \eqref{eq:P4-std}. Moreover, in \cite{Tak:2001:PCR} it was shown that if one takes 
	\begin{equation}\label{eq:OktoTak-4}
	    \left\{\begin{aligned}
    	f(q,p)&=\left(\frac{q}{2}\right)^2,\\
	   	g(q,p)&=\frac{t}{2}+\frac{2\kappa_0}{q^2}+\frac{p}{2q}+\frac{q^2}{16},
	    \end{aligned}\right.
	\end{equation}
	then \eqref{eq:Ok-Ham-4} transforms to \eqref{eq:Tak-Ham-4} with 
	$\alpha=1+2\theta_{\infty}-\kappa_0$ and $\beta=-2\kappa_0^2$, which is exactly  \eqref{eq:Ok-pars-4}.
	Further, the Liouville form giving the canonical transformation is 
	\begin{equation*}
	\theta =  g df - \Ham{Ok}{IV} dt= \frac{1}{4} (p dq-\Ham{Tak}{IV}dt)+d\phi(q, p,t),
	\end{equation*}
	where $d\phi(q,p,t)$ is an exact form which can be calculated explicitly.

	However, the change of variables \eqref{eq:OktoTak-4} is only algebraic, but not birational. In this case it is impossible to 
	construct the space of initial conditions for the system \eqref{eq:Tak-Ham-4} in the sense of  \cite{Oka:1979:FAESOPCFPP}
	by resolving singularities, and so our approach cannot be used. This example is considered in detail from the geometric point of view 
	in \cite{FilSto:2022:TRFPSGRAE}.
\end{remark}

We summarize the relationship between different $\Pain{IV}$ systems considered in this paper in Figure~\ref{fig:allHams-4}.

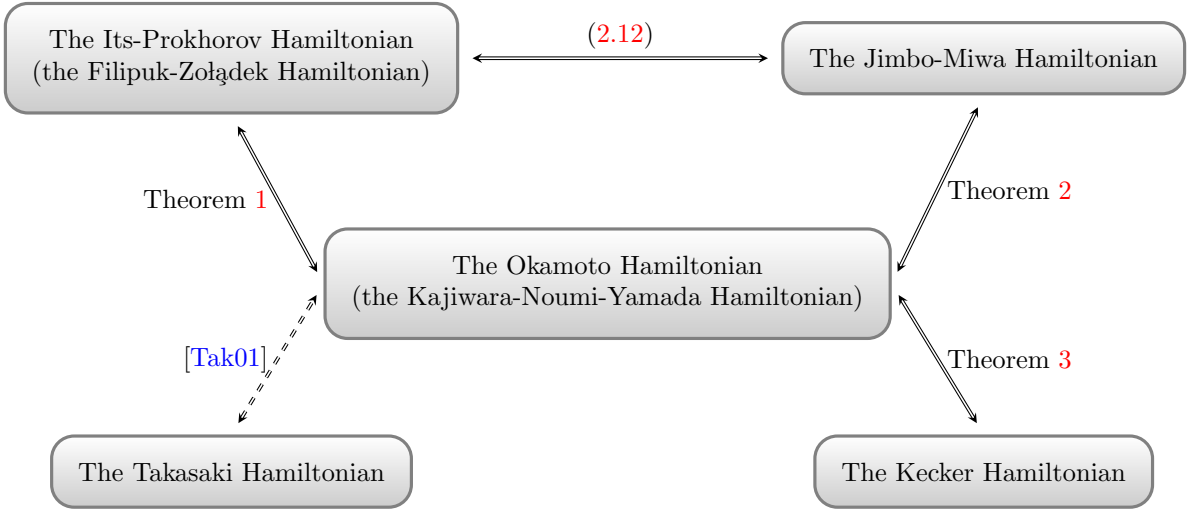
\begin{figure}[ht]
	\begin{center}
	\begin{tikzpicture}[>=stealth,node distance=5mm,
	                    rblock/.style={
	                      rectangle,minimum size=6mm,rounded corners=3mm,
	                      very thick,draw=black!50,
	                      top color=white,bottom color=black!20}]
	  \node (Ok) at (0,0) [rblock,inner sep=10pt,  align=center]   {The Okamoto Hamiltonian \\ 
	  											(the Kajiwara-Noumi-Yamada Hamiltonian)};
	  \node (IP) at (-5,3) [rblock,inner sep=10pt,  align=center] 
	    {The Its-Prokhorov Hamiltonian \\ (the Filipuk-Zo\l\c{a}dek Hamiltonian)};

	  \node  (JM) at (5,3) [rblock,inner sep=10pt,  align=center]    {The Jimbo-Miwa Hamiltonian};

	  \node (Tak) at (-5,-2.5) [rblock,inner sep=10pt,  align=center]    {The Takasaki Hamiltonian};

	  \node (Kek) at (5,-2.5) [rblock,inner sep=10pt,  align=center]    {The Kecker Hamiltonian};
	  
	  \draw[-latex, <->,double, shorten >=5pt, shorten <=5pt] (IP.east)--(JM.west) node[pos=0.5, above] {\eqref{eq:JMtoIP-4}};
	  \draw[-latex, <->,double, shorten >=5pt, shorten <=5pt] (IP.south)--(Ok.west) node[pos=0.5, left] {Theorem~\ref{thm:IPtoOK-4}};
	  \draw[-latex, <->,double, dashed, shorten >=5pt, shorten <=5pt] (Tak.north)--(Ok.west) node[pos=0.5, left] {\cite{Tak:2001:PCR}};
	  \draw[-latex, <->,double, shorten >=5pt, shorten <=5pt] (JM.south)--(Ok.east) node[pos=0.5, right] {Theorem~\ref{thm:JMtoOK-4}};
	  \draw[-latex, <->,double, shorten >=5pt, shorten <=5pt] (Kek.north)--(Ok.east) node[pos=0.5, right] {Theorem~\ref{thm:KektoOk-4}};
	  
	\end{tikzpicture}		
	\end{center}
	\caption{Relationship between different Hamiltonian Systems for $\Pain{IV}$}
	\label{fig:allHams-4}
\end{figure}



\subsection{Coordinate transformations between different Hamiltonian forms for $\Pain{IV}$} 
\label{sub:summary-P4}

We now give a summary of explicit coordinate transformations and parameter matching 
for different versions of Hamiltonian representations for the differential Painlev\'e
$\Pain{IV}$ equation described above. 

%
%
%

\paragraph{Its-Prokhorov and Jimbo-Miwa systems} 
\label{par:IP-JM-4}

\begin{theorem}\label{thm:IPtoOK-4} The Its-Prokhorov Hamiltonian system \eqref{eq:IP-Ham-4} and the Okamoto Hamiltonian system \eqref{eq:Ok-Ham-4} are 
	related by the following change of variables and parameter correspondence:
    \begin{equation}\label{eq:IPtoOk-coords-4}
   	 \left\{\begin{aligned}
   	 	q(f,g,t)&=f,\\
   		p(f,g,t)&= g - \frac{f}{4} - \frac{t}{2} - \frac{\kappa_{0}}{2f},\\
		\Theta_{0}&=\frac{\kappa_{0}}{2},\\  
		\Theta_{\infty} &= 1 + \theta_{\infty} - \frac{\kappa_{0}}{2},
   	 \end{aligned}\right.
    \qquad\text{and conversely} \qquad 
    	\left\{\begin{aligned}
   	 	f(q,p,t)&=q,\\
   		g(q,p,t)&=p + \frac{q}{4} + \frac{t}{2} + \frac{\Theta_{0}}{q},\\
		\kappa_{0} &= 2\Theta_{0},\\  
		\theta_{\infty} &= -1 + \Theta_{0} + \Theta_{\infty}.
    	\end{aligned}\right.
    \end{equation}	
	The Hamiltonians are then related by 
	\begin{equation}\label{eq:IPtoOk-Hams-4}
		\Ham{IP}{IV}(q,p;t;\Theta_{0},\Theta_{\infty}) = \Ham{Ok}{IV}(f(q,p),g(q,p);t;\kappa_{0},\theta_{\infty}) + \frac{q}{2} + 
		2t(\Theta_{0} + \Theta_{\infty}),
	\end{equation}
	where the purely $t$-dependent terms can in fact be ignored.
\end{theorem}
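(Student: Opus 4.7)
The proof is a direct verification; the substantive content, namely how the change of variables is derived in the first place, is the geometric matching procedure developed in Section~\ref{sec:P4-matching}. The only subtlety here is that the change of variables depends explicitly on $t$, so the Okamoto Hamiltonian does not simply pull back to the Its-Prokhorov Hamiltonian: there is a correction coming from the extended phase space as reviewed in Section~\ref{sub:non-auto-Hams}.

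First, I would check that the vector fields agree. Substitute $f = q$ and $g = p + q/4 + t/2 + \Theta_0/q$ (with $\kappa_0 = 2\Theta_0$ and $\theta_\infty = -1 + \Theta_0 + \Theta_\infty$) into the Okamoto system \eqref{eq:Ok-Ham-4}. The equation for $df/dt$ collapses immediately to $dq/dt = 4pq$, matching \eqref{eq:IP-system-4-1}. For the second equation, use $dg/dt = dp/dt + (1/4 - \Theta_0/q^2)(dq/dt) + 1/2$ together with the already-established $dq/dt = 4pq$, then expand $-2g^2 + (2f+2t)g - \theta_\infty$; after the rational and polynomial terms collapse, what remains is exactly \eqref{eq:IP-system-4-2}. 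This step is routine bookkeeping.

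Second, derive the Hamiltonian correction from the extended $2$-form. Differentiating $f = q$ and $g = p + q/4 + t/2 + \Theta_0/q$ and wedging gives
\begin{equation*}
dg\wedge df = \Bigl[dp + \bigl(\tfrac{1}{4} - \tfrac{\Theta_0}{q^2}\bigr)\,dq + \tfrac{1}{2}\,dt\Bigr]\wedge dq = dp\wedge dq - \tfrac{1}{2}\,dq\wedge dt.
\end{equation*}
Requiring the extended $2$-forms $\Symp{Ok} = \symp{Ok} - d\Ham{Ok}{IV}\wedge dt$ and $\Symp{IP} = \symp{IP} - d\Ham{IP}{IV}\wedge dt$ to match under this substitution forces $d(\Ham{IP}{IV} - \Ham{Ok}{IV}(f(q,p,t),g(q,p,t);t) - q/2)\wedge dt = 0$, so the difference depends only on $t$. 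Explicit expansion of $\Ham{Ok}{IV}\bigl(q,\, p+q/4+t/2+\Theta_0/q;\, t\bigr)$ as a Laurent polynomial in $(q,p)$ and comparison term-by-term with $\Ham{IP}{IV}(q,p;t) - q/2$ will cancel all $(q,p)$-dependent pieces, as guaranteed by the previous step, and pin down the residual $t$-only function as $2t(\Theta_0 + \Theta_\infty)$, yielding \eqref{eq:IPtoOk-Hams-4}. This summand does not enter the Hamilton equations and may be discarded.

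The main obstacle is bookkeeping rather than conceptual: consistent sign and wedge-ordering conventions (noting that $\symp{Ok} = dg\wedge df$ is ordered opposite to $\symp{IP} = dp\wedge dq$), careful handling of the Laurent $\Theta_0/q$ piece in $g$ against the $-2\Theta_0^2/q$ piece in $\Ham{IP}{IV}$, and tracking the parameter shift $\kappa_0 = 2\Theta_0$ consistently in both the coordinate change and the Hamiltonians. Once these are in hand, the computation is mechanical.
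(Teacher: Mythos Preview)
Your proposal is correct and follows essentially the same route as the paper: defer the origin of the change of variables \eqref{eq:IPtoOk-coords-4} to the geometric matching (Lemma~\ref{lem:coords-IP-Ok-4}), then match the extended $2$-forms $\Symp{Ok}$ and $\Symp{IP}$ to extract the $q/2$ correction and the purely $t$-dependent remainder. The only difference is that you insert an explicit verification that the vector fields agree before doing the $2$-form computation; the paper omits this because the equality of extended $2$-forms already implies it, so your first step is redundant (though harmless).
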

\begin{proof} The change of variables \eqref{eq:IPtoOk-coords-4} is established in Lemma~\ref{lem:coords-IP-Ok-4}. To relate the Hamiltonians, 
	as explained in Section~\ref{sub:non-auto-Hams}
	we need to match the $2$-forms $\Symp{Ok} = \symp{Ok} - d\Ham{Ok}{IV}\wedge dt = \symp{IP} - d\Ham{IP}{IV}\wedge dt=\Symp{IP}$. We get 
	\begin{align*}
		\Symp{Ok} &= \symp{Ok} - d\Ham{Ok}{IV}\wedge dt = dg\wedge df - 
		d\left(2fg^2 - (f^2 + 2tf + 2 \kappa_0)g + \theta_\infty f\right) \wedge dt\\
		& = dp\wedge dq + \frac{1}{2} dt\wedge dq - d\left(2p^2 q-\frac{q^3}{8}-
		\frac{t q^2}{2} - \frac{t^{2} q}{2} +  \Theta_{\infty}q-\frac{2\Theta_0^2}{q} - q - 2 t \Theta_{0}\right) \wedge dt\\
		& = \symp{IP} - d\left(\Ham{IP}{IV} - 2t (\Theta_{0} + \Theta_{\infty})\right)\wedge dt.
	\end{align*}
\end{proof}

A similar result holds for the Jimbo-Miwa system \eqref{eq:JM-Ham-4}.
\begin{theorem}\label{thm:JMtoOK-4}
	The change of coordinates and parameter matching between the Jimbo-Miwa \eqref{eq:JM-Ham-4} and Okamoto \eqref{eq:Ok-Ham-4} Hamiltonian systems is given by
	    \begin{equation}\label{eq:JMtoOk-coords-4}
   	 \left\{\begin{aligned}
   	 	y(f,g,t)&=f,\\
   		z(f,g,t)&= \frac{f^{2}}{2} - fg + ft + \kappa_{0},\\
		\Theta_{0}&=\frac{\kappa_{0}}{2},\\  
		\Theta_{\infty} &= 1 + \theta_{\infty} - \frac{\kappa_{0}}{2},
   	 \end{aligned}\right.
    \qquad\text{and conversely} \qquad 
    	\left\{\begin{aligned}
   	 	f(y,z,t)&=y,\\
   		g(y,z,t)&=\frac{y}{2}- \frac{z}{y}+ t + \frac{2\Theta_{0}}{y},\\
		\kappa_{0} &= 2\Theta_{0},\\  
		\theta_{\infty} &= -1 + \Theta_{0} + \Theta_{\infty}.
    	\end{aligned}\right.
	    \end{equation}
		The Hamiltonians are then related by 
		\begin{equation}\label{eq:JMtoOk-Hams}
			\Ham{JM}{IV}(y,z;t;\Theta_{0},\Theta_{\infty}) = \Ham{Ok}{IV}(f(y,z,t),g(y,z,t);t;\kappa_{0}= 2\Theta_{0},\theta_{\infty} = 
			-1 + \Theta_{0} + \Theta_{\infty}) + y + 2t(\Theta_{0} + \Theta_{\infty}).
		\end{equation}
\end{theorem}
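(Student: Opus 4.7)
The plan is to mimic the structure of the proof of Theorem~\ref{thm:IPtoOK-4}, splitting the task into (a) establishing the birational change of variables \eqref{eq:JMtoOk-coords-4} and (b) matching the extended symplectic $2$-forms on $(y,z,t)$-space and $(f,g,t)$-space. For step (a), I expect the paper to supply a geometric lemma analogous to Lemma~\ref{lem:coords-IP-Ok-4} that reads off the change of variables from an identification of two realizations of the Okamoto surface of initial conditions. A quick algebraic shortcut is to compose the already known change \eqref{eq:JMtoIP-4} relating Jimbo--Miwa to Its--Prokhorov with the change \eqref{eq:IPtoOk-coords-4} from Its--Prokhorov to Okamoto; a one-line simplification then produces $f=y$ and $g = y/2 - z/y + t + 2\Theta_0/y$, matching \eqref{eq:JMtoOk-coords-4}, and also makes the parameter matching $\kappa_0 = 2\Theta_0$, $\theta_\infty = -1 + \Theta_0 + \Theta_\infty$ automatic since the IP parameters already carry this identification.

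For step (b), I would substitute $y = f$ and $z = f^2/2 - fg + ft + \kappa_0$ directly into the Jimbo--Miwa logarithmic form. Using $dz = (f - g + t)\,df - f\,dg + f\,dt$, the $df \wedge df$ term drops out and one obtains
\[
\symp{JM} = \frac{1}{y}\,dy \wedge dz = \frac{1}{f}\,df \wedge(-f\,dg + f\,dt) = dg\wedge df + df \wedge dt = \symp{Ok} + df \wedge dt.
\]
Equating the extended forms $\Symp{JM} = \symp{JM} - d\Ham{JM}{IV} \wedge dt$ and $\Symp{Ok} = \symp{Ok} - d\Ham{Ok}{IV}\wedge dt$ then gives $\bigl(d\Ham{Ok}{IV} + df - d\Ham{JM}{IV}\bigr)\wedge dt = 0$, so $\Ham{JM}{IV} = \Ham{Ok}{IV} + y + \phi(t)$ for some function $\phi(t)$ of $t$ alone. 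A direct substitution of \eqref{eq:JMtoOk-coords-4} into $\Ham{Ok}{IV}$ and term-by-term comparison with the explicit $\Ham{JM}{IV}$ in \eqref{eq:JM-Ham-4} then pins down $\phi(t) = 2t(\Theta_0 + \Theta_\infty)$.

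The main obstacle, should one take the intrinsic geometric route rather than composing with \eqref{eq:JMtoIP-4}, is keeping careful track of the logarithmic nature of $\symp{JM}$ along the divisor $\{y=0\}$ and verifying that this is compatible with the normalization of root variables fixed for the Okamoto surface. The computational route, by contrast, is completely routine once the coordinate change is in hand; the only genuine risks there are an arithmetic slip in the wedge-product calculation or in transcribing the parameter identifications.
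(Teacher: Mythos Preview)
Your proposal is correct and follows essentially the same approach as the paper: the paper cites Lemma~\ref{lem:JMtoOk-coords-4} (the geometric derivation you anticipated) for part (a), and then matches the extended $2$-forms $\Symp{Ok}$ and $\Symp{JM}$ exactly as you describe for part (b), obtaining the same $df\wedge dt$ correction that forces $\Ham{JM}{IV} = \Ham{Ok}{IV} + y + \phi(t)$. Your alternative shortcut for (a) --- composing \eqref{eq:JMtoIP-4} with \eqref{eq:IPtoOk-coords-4} --- is valid and slightly more direct than the paper's route; the paper establishes the Jimbo--Miwa and Its--Prokhorov changes independently via their own spaces of initial conditions and only afterward remarks that composing them recovers \eqref{eq:JMtoIP-4}.
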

\begin{proof}The change of variables \eqref{eq:JMtoOk-coords-4} is established in Lemma~\ref{lem:JMtoOk-coords-4}. 
	To relate the Hamiltonians, note that 
	\begin{align*}
		\Symp{Ok} &= \symp{Ok} - d\Ham{Ok}{IV}\wedge dt = dg\wedge df - 
		d\left(2fg^2 - (f^2 + 2tf + 2 \kappa_0)g + \theta_\infty f\right) \wedge dt\\
		&= - \frac{dz}{y}\wedge dy + dt \wedge dy - d\left(\frac{2 z^{2}}{y} - zy - 2 t z -
		 \frac{4 \Theta_{0}z}{y} - y + (\Theta_{0} + \Theta_{\infty})y\right)\wedge dt\\
		 &= \frac{1}{y} dy\wedge  dz + d\left(\Ham{JM}{IV} - 2t (\Theta_{0} + \Theta_{\infty})\right)\wedge dt
		  = \symp{JM} - d\Ham{JM}{IV}\wedge dt,
	\end{align*}
	and note that the $t$-dependent terms can be ignored, as usual.
\end{proof}

Note that combining these change of variables immediately gives us the change of variables between the Its-Prokhorov 
coordinates $(q,p)$ and the Jimbo-Miwa coordinates $(y,z)$ that we already obtained in equation \eqref{eq:JMtoIP-4}.

\paragraph{The Kecker system} 
\label{par:Kek-4}
The identification between the Kecker and the Okamoto systems is more complicated. In particular, the geometric normalization of 
the symplectic form differs from the original normalization of the symplectic form  $\symp{Kek}$ in \eqref{eq:Kek-Ham-4};
$\sympt{Kek} = (1/3) \symp{Kek}$. 

\begin{theorem}\label{thm:KektoOk-4} The Kecker Hamiltonian system \eqref{eq:Kek-Ham-4} and the Okamoto Hamiltonian system \eqref{eq:Ok-Ham-4} are 
	related by the following change of variables and parameter correspondence:
    \begin{align}\label{eq:KektoOk-coords-4}
   	 &\left\{\begin{aligned}
   	 	x(f,g,t)&=\frac{(1 + \mathfrak{i})}{4(3)^{3/4}}\Big( 3(\sqrt{3}-\mathfrak{i})f + 12 \mathfrak{i} g + 2 (\sqrt{3} - 3 \mathfrak{i})t \Big),\\
   	 	y(f,g),t&=\frac{(1 + \mathfrak{i})}{4(3)^{3/4}}\Big( 3(\sqrt{3}+\mathfrak{i})f - 12 \mathfrak{i} g + 2 (\sqrt{3} + 3 \mathfrak{i})t \Big),\\
		z(t) &= \left(-\frac{4}{3}\right)^{\frac{1}{4}}t,
   	 \end{aligned}\right.\quad 
	 \begin{aligned}
		\tilde{\alpha}&=\frac{(\sqrt{3} - 3 \mathfrak{i}) - 6 \mathfrak{i} \theta_{\infty} - 3(\sqrt{3} - \mathfrak{i})\kappa_{0}}{2 \sqrt{3}},\\  
		\tilde{\beta}&=\frac{(-\sqrt{3} - 3 \mathfrak{i}) - 6 \mathfrak{i} \theta_{\infty} + 3(\sqrt{3} + \mathfrak{i})\kappa_{0}}{2 \sqrt{3}},	 	
	 \end{aligned}\\
	 \intertext{and converseley,}
    	&\left\{\begin{aligned}
   	 	f(x,y,z)&=\frac{1-\mathfrak{i}}{3^{3/4}}(x + y - z),\\
   		g(x,y,z)&= - \frac{(1 + \mathfrak{i})}{4(3)^{3/4}}\Big( (\sqrt{3}+\mathfrak{i})x - (\sqrt{3} - \mathfrak{i})y + 2 \mathfrak{i} z  \Big),\\
		t(z) &= \left(-\frac{3}{4}\right)^{\frac{1}{4}}z,
    	\end{aligned}\right.\quad
		\begin{aligned}
		\kappa_{0} &= \frac{1 - \tilde{\alpha} + \tilde{\beta}}{3},\\
		\theta_{\infty} &= \frac{-2 + (\mathfrak{i}\sqrt{3} - 1)\tilde{\alpha} + (\mathfrak{i} \sqrt{3}+ 1)\tilde{\beta} }{6}.			
		\end{aligned}\label{eq:OktoKek-coords-4}
    \end{align}
	The Hamiltonians are then related by 
	\begin{align}\label{eq:KektoOk-Hams-4}
		\Ham{Kek}{IV}(x,y;z;\tilde{\alpha},\tilde{\beta}) &= 3\left( \left(\frac{-3}{4}\right)^{\frac{1}{4}}
		\Ham{Ok}{IV}(f(x,y,z),g(x,y,z);t(z);\kappa_{0}(\tilde{\alpha},
		\tilde{\beta}),\theta_{\infty}(\tilde{\alpha},\tilde{\beta})) \right.\\
				&\qquad\qquad \left.   - \frac{1 + \mathfrak{i} \sqrt{3} }{6} x + 
		\frac{1 - \mathfrak{i} \sqrt{3} }{6} y  
		+ \frac{z^{3}}{9} + \frac{1 + \mathfrak{i} \sqrt{3}}{6} z \tilde{\alpha} + \frac{1 - \mathfrak{i} \sqrt{3}}{6} z \tilde{\beta} \right).\notag
	\end{align}
\end{theorem}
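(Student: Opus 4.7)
Following the strategy of Theorems~\ref{thm:IPtoOK-4} and \ref{thm:JMtoOK-4}, the proof naturally splits into two steps. First, I would establish the birational change of coordinates and parameter matching \eqref{eq:KektoOk-coords-4}--\eqref{eq:OktoKek-coords-4} as an independent lemma (analogous to Lemmas~\ref{lem:coords-IP-Ok-4} and \ref{lem:JMtoOk-coords-4}). This lemma is essentially a geometric matching problem: one computes the Okamoto space of initial conditions for the Kecker system, identifies the exceptional divisor configuration with the $E_{6}^{(1)}$ nodes of the standard realization used for the Okamoto system, and reads off the coordinate change from that identification. The appearance of the unusual complex coefficients involving $\mathfrak{i}$, $\sqrt{3}$, and fourth roots reflects the fact that the Kecker realization of the surface is naturally rotated relative to the Kajiwara--Noumi--Yamada reference realization; the change of the independent variable $z=(-4/3)^{1/4}t$ is forced precisely by the requirement that the matched root variables satisfy the correct normalization.

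Assuming the change of coordinates is in hand, the Hamiltonian identity \eqref{eq:KektoOk-Hams-4} is then verified directly by computing the pullback of the $2$-form on the extended phase space, exactly as in the earlier two theorems. Concretely, I would expand
\[
\Symp{Ok} = dg\wedge df - d\Ham{Ok}{IV}\wedge dt
\]
under the substitution $f=f(x,y,z)$, $g=g(x,y,z)$, $t=t(z)=(-3/4)^{1/4}z$, using $dt = (-3/4)^{1/4}\,dz$, and then compare the result with
\[
\sympt{Kek} - d\widetilde{H}\wedge dz = \frac{1}{3}\symp{Kek} - d\widetilde{H}\wedge dz
\]
for an appropriate $\widetilde H$, which should equal $\tfrac{1}{3}\Ham{Kek}{IV}$ modulo terms that are purely functions of $z$. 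The multiplicative factor $3$ in \eqref{eq:KektoOk-Hams-4} arises from the normalization $\sympt{Kek} = (1/3)\symp{Kek}$, the factor $(-3/4)^{1/4}$ arises from the time rescaling, and the linear-in-$x,y$ corrections together with the $z^3$ and $z\tilde\alpha$, $z\tilde\beta$ terms come from the $dt$-components produced when differentiating the time-dependent coordinate change.

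The main obstacle is not conceptual but bookkeeping. The coordinate change mixes several algebraic numbers --- $\mathfrak{i}$, $\sqrt{3}$, $3^{3/4}$, $(-4/3)^{1/4}$ --- and for the various collections of terms to collapse correctly one must fix consistent branches of the fourth roots so that \eqref{eq:OktoKek-coords-4} is honestly inverse to \eqref{eq:KektoOk-coords-4}. In practice, the verification is best carried out in a computer algebra system: one checks the inverse-composition identity, computes $\varphi^{*}\Symp{Ok}$ term by term, and confirms that the discrepancy with $\sympt{Kek} - \tfrac{1}{3}d\Ham{Kek}{IV}\wedge dz$ is exact and purely $z$-dependent, hence dynamically irrelevant.
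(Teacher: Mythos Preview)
Your proposal is correct and follows essentially the same approach as the paper: the coordinate change is deferred to a separate geometric lemma (Lemma~\ref{lem:coords-Kek-Ok-4}, obtained by building the Kecker space of initial conditions, matching its $E_6^{(1)}$ configuration to the Okamoto surface via a change of basis in the Picard lattice, and reading off the map from coordinate pencils), and the Hamiltonian relation is then verified by expanding $\Symp{Ok}=dg\wedge df - d\Ham{Ok}{IV}\wedge dt$ under the substitution and checking that it equals $\tfrac{1}{3}\Symp{Kek}$, with the factor $3$ coming from the symplectic normalization and the remaining corrections from the time-dependent change of variables. Your identification of the sources of each factor and of the bookkeeping difficulty matches the paper's treatment exactly.
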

\begin{proof} The change of variables (\ref{eq:KektoOk-coords-4}--\ref{eq:OktoKek-coords-4})
	is established in Lemma~\ref{lem:coords-Kek-Ok-4}. For the Hamiltonians, given the difference in normalizations, we 
	should have 
	\begin{equation*}
		\Symp{Ok} = \symp{Ok} - d\Ham{Ok}{IV}\wedge dt = \frac{1}{3}\cdot (\symp{Kek} - d\Ham{Kek}{IV}\wedge dz) = \frac{1}{3} \Symp{Kek}.
	\end{equation*}
	That is,
	\begin{align*}
		\Symp{Ok} &= dg \wedge df - d\left(2fg^2 - f^2g - 2tfg - 2 \kappa_0 g + \theta_\infty f\right)\wedge dt \\
		& = -\frac{2 \sqrt{3} dx \wedge dy - \sqrt{3}(1 + \mathfrak{i}\sqrt{3}) dx\wedge dz  + \sqrt{3}(1 - \mathfrak{i}\sqrt{3})dy \wedge dz}{6 \sqrt{3}}\\
		&\qquad - d\frac{(1 + \mathfrak{i})\Big(2 \sqrt{3} (x^{3} + y^{3} - z^{3} + 3 xyz) + 3(3 \mathfrak{i} + \sqrt{3} + 2 \sqrt{3} \tilde{\beta})x 
		+ 3(3 \mathfrak{i} - \sqrt{3} + 2 \sqrt{3} \tilde{\alpha}) y
		\Big)}{18 (3^{3/4})} \wedge \left(-\frac{3}{4}\right)^{\frac{1}{4}} dz\\
		&= \frac{1}{3}\left(
		dy \wedge dx +  \frac{1 + \mathfrak{i} \sqrt{3}}{2}dx\wedge dz -  \frac{1 - \mathfrak{i} \sqrt{3}}{2}dy\wedge dz    \right. \\
		&\qquad \qquad  \left.- d\Big(x^{3} + y^{3} - z^{3} + \tilde{\beta} x + \frac{1 + \mathfrak{i} \sqrt{3}}{2}x + \tilde{\alpha} y 
		+ \frac{- 1 + \mathfrak{i} \sqrt{3}}{2}y\Big)\wedge dz\right)\\
		&= 	\frac{1}{3}\left(dy \wedge dx - d \Ham{Kek}{IV}(x,y,z;\tilde{\alpha},\tilde{\beta})\wedge dz\right),
	\end{align*}
	which gives us the needed rescaling coefficients, as well as the $x$ and $y$-dependent corrections to the Hamiltonian 
	$\Ham{Ok}{IV}(f(x,y),g(x,y);t(z);\kappa_{0}(\tilde{\alpha},\tilde{\beta}),\theta_{\infty}(\tilde{\alpha},\tilde{\beta}))$. The remaining terms 
	are purely time $z$-dependent and can be ignored. 
\end{proof}

\section{Preliminaries} 
\label{sec:prelim}

\subsection{The Okamoto space of initial conditions} 
\label{sub:Ok-soic-4}

\begin{notation*}
For the Okamoto system we use the following notation: coordinates $(f,g)$, parameters $\kappa_{0}$ and $\theta_{\infty}$; time variable $t$;
base points $q_{i}$, exceptional divisors $F_{i}$.	
\end{notation*}

The foundations of the geometric analysis of Painlev\'e equations were developed by K.~Okamoto \cite{Oka:1979:FAESOPCFPP}.
To make this paper self-contained, in this section we briefly explain how to construct the space of initial conditions 
for the Hamiltonian system \eqref{eq:Ok-Ham-4}. This will also allow us to introduce various notational conventions. This section is closely related
to \cite[Section 2.6]{KajNouYam:2017:GAPE} that we recommend for details.

Recall that the Painlev\'e property essentially requires that the general solution of an ODE has no movable (i.e., dependent of initial conditions)
singularities other than poles. If we think about parameterizing solutions via initial conditions at some time $t_{0}$, we then need to allow 
infinities as initial conditions, i.e., we need to change from $\mathbb{C}$ to $\mathbb{P}^{1}$. Thus, we consider the pair of dependent variables
$(f,g)$ as affine coordinates on the complex projective plane $\mathbb{P}^{1}\times \mathbb{P}^{1}$. We then introduce three more charts 
$(F,g)$, $(f,G)$, and $(F,G)$, where $F = 1/f$ and $G = 1/g$ are coordinates in the neighborhood of infinity, and via direct substitution 
we can easily rewrite our system in those charts:
\begin{alignat*}{2}
  	(f,G):&\quad\left\{
 	\begin{aligned}
 		\frac{df}{dt} &= - f^{2}  + \frac{4f}{G} - 2 t f - 2\kappa_{0},\\
		\frac{dG}{dt} &= \theta_{\infty} G^{2} - 2 f G  - 2 t G + 2,
 	\end{aligned}
 	\right. & \qquad\qquad
  	(F,G):&\quad\left\{
 	\begin{aligned}
 		\frac{dF}{dt} &=  2\kappa_{0} F^{2} - \frac{4F}{G} + 2 t F + 1,\\
		\frac{dG}{dt} &= \theta_{\infty}G^{2} - \frac{2G}{F} - 2 t G + 2, 
 	\end{aligned}
 	\right.\\[5pt]
  	(f,g):&\quad\left\{
 	\begin{aligned}
 		\frac{df}{dt} &= -f^{2} + 4 f g - 2 t f - 2\kappa_0,\\
		\frac{dg}{dt} &= -2 g^{2} + 2 f g + 2 t g - \theta_{\infty},			
 	\end{aligned}
 	\right. & \qquad\qquad
  	(F,g):&\quad\left\{
 	\begin{aligned}
 		\frac{dF}{dt} &= 2F^{2}\kappa_{0} - 4 F g + 2 t F + 1,\\
		\frac{dg}{dt} &= -2 g^{2} + \frac{2g}{F} + 2 t g - \theta_{\infty}.			
 	\end{aligned}
 	\right.	
\end{alignat*}
Consider, for example, our system in the $(F,g)$-chart. Solutions correspond to the flowlines of the vector field 
$\mathbf{V}(F,g) = (2F^{2}\kappa_{0} - 4 F g + 2 F t + 1)\partial_{F}+( -2 g^{2} + \frac{2g}{F} + 2 t g - \theta_{\infty})\partial_{g} + 
\partial_{t}$ that becomes undefined when $F=0$. Rescaling, 
$F \mathbf{V}(F,g) = (2F^{3}\kappa_{0} - 4 F^{2} g + 2 F^{2} t + F)\partial_{F} + 
( -2F g^{2} + 2g + 2 t F g - \theta_{\infty}F)\partial_{g} + F \partial_{t}$, we see that at the points 
$(F=0,g\neq0)$ the field becomes $2g \partial_{g}$, and so this flow is ``vertical''
(has a zero $\partial_{t}$-component) and hence such points do not parametrize solutions of the Painlev\'e equations 
(that are functions of $t$). Thus, we call 
the curve given by $F=0$ (in this chart) a \emph{vertical leaf} (or an \emph{inaccessible divisor}). However, at the point $(0,0)$ the rescaled field also vanishes (and we see the  indeterminacy $F/g=0/0$ in the original field $\mathbf{V}(F,g)$) and so a further adjustment is needed. 
This indeterminacy is resolved by the standard blowup procedure from algebraic geometry, see, e.g., \cite{Sha:2013:BAG1} for details.

In the two-dimensional case the blowup procedure is particularly simple and can be thought of as follows \cite{DzhFilSto:2020:RCDOPWHWDPE}.
\emph{Geometrically}, the blowup procedure ``separates'' the lines passing through the point $q_{i}$ (the \emph{center} of the blowup) 
by ``lifting'' them according to their ``slopes" (see the left picture on Figure~\ref{fig:blowup} for the local illustration of a blowup 
in the real-variable case). \emph{Topologically}, for complex surfaces, blowup is a surgery that creates a Riemann sphere ``bubble" (projectivized tangent
space) $S^{2}\simeq \mathbb{P}^{1}_{\mathbb{C}}$ in place of the center of the blowup $q_{i}$, thus adding a new spherical class to homology (and, via the 
Poincar\'e duality, cohomology) of the surface. \emph{Algebraically}, the blowup procedure is an introduction of two new charts $(u_{i},v_{i})$ and $(U_{i},V_{i})$ 
in the neighborhood of the blowup point $q_{i}(x_{i},y_{i})$, where the change of variables is given by 
$x = x_{i} + u_{i} = x_{i} + U_{i} V_{i}$ and $y = y_{i} + u_{i} v_{i} =y_{i} +  V_{i}$. 
This change of variables is a bijection away from $q_{i}$, but the point $q_{i}$ is replaced by the $\mathbb{P}^{1}$-line of all possible slopes, 
called the \emph{central fiber} or the \emph{exceptional divisor} of the blowup. We denote this central fiber by $F_{i}$ 
(and sometimes by $E_{i}$), it is given in the blowup charts by local equations $u_{i}=0$ and $V_{i} = 0$.
For these charts the upper/lower-case naming 
convention is only for convenience and, in contrast to the naming of affine charts, it does not hold that $U_{i} = 1/u_{i}$. 
However, it is true that $v_{i} = 1/U_{i}$ ---  these local coordinates on $\mathbb{P}^{1}$ represent all
possible slopes of lines passing through the point $q_{i}$, and so this variable change ``separates'' all curves passing through $q_{i}$ based on their
slopes.
\emph{Schematically}, it is convenient to illustrate the blowup on a
diagram as shown on the right on Figure~\ref{fig:blowup}. The notation $L-F_{i}$ denotes the \emph{proper transform} $\overline{\pi^{-1}(L-(x_{i},y_{i}))}$,
that needs to be distinguished from the \emph{total transform} $\pi^{-1}(L) = (L-F_{i}) + F_{i}$. Note that, despite the presence of the negative sign, $L-F_{i}$ is an actual geometric curve, i.e., an effective divisor. 

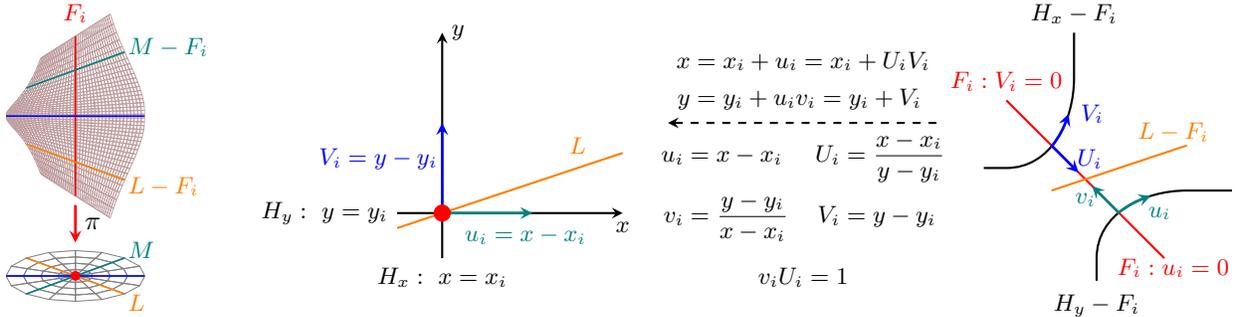
\begin{figure}[ht]
	 \begin{center}
		 \vskip-13pt
		\begin{tikzpicture}[>=stealth,scale=0.5]

		\begin{scope}[xshift=0cm,yshift=0cm]

		\begin{axis}[view={0}{10},axis lines=none, plot box ratio = 1 1 1, width=150pt, height=300pt]
		\addplot3[mesh,z buffer=sort, pink!50!gray, samples=40,domain=-1:1,y domain=-pi/3:pi/3] ({x * cos(deg(y))},{x * 	sin(deg(y))},{sqrt(3)*tan(deg(y))});
		\addplot3[mesh,z buffer=sort, gray, samples=9,domain=-1:1,y domain=0:pi] ({x * cos(deg(y))},{x * sin(deg(y))},-6);
		\addplot3[red,very thick] coordinates {(0,0,-3) (0,0,3)};
		\addplot3[orange,very thick] coordinates
		{({-1 * cos(deg(-pi/4))},{-1 * sin(deg(-pi/4))},{sqrt(3)*tan(deg(-pi/4))}) ({ cos(deg(-pi/4))},{ sin(deg(-pi/4))},{sqrt(3)*tan(deg(-pi/4))})};
		\addplot3[orange,very thick] coordinates
		{({-1 * cos(deg(-pi/4))},{-1 * sin(deg(-pi/4))},-6) ({ cos(deg(-pi/4))},{ sin(deg(-pi/4))},-6)};
		\addplot3[blue,very thick] coordinates
		{({-1 * cos(deg(0))},{-1 * sin(deg(0))},{sqrt(3)*tan(deg(0))}) ({ cos(deg(0))},{ sin(deg(0))},{sqrt(3)*tan(deg(0))})};
		\addplot3[blue,very thick] coordinates
		{({-1 * cos(deg(0))},{-1 * sin(deg(0))},-6) ({ cos(deg(0))},{ sin(deg(0))},-6)};
		\addplot3[teal,very thick] coordinates
		{({-1 * cos(deg(pi/4))},{-1 * sin(deg(pi/4))},{sqrt(3)*tan(deg(pi/4))}) ({ cos(deg(pi/4))},{ sin(deg(pi/4))},{sqrt(3)*tan(deg(pi/4))})};
		\addplot3[teal,very thick] coordinates
		{({-1 * cos(deg(pi/4))},{-1 * sin(deg(pi/4))},-6) ({ cos(deg(pi/4))},{ sin(deg(pi/4))},-6)};
		\end{axis}

		\node at (52.5pt,37.5pt) [circle, draw=red!100, fill=red!100, thick, inner sep=0pt,minimum size=1mm] {};
		\draw [red, line width = 1pt, ->] (52.5pt,3.2) -- (52.5pt,2.2);
		\draw (52.5pt,2.7) node [right] {$\pi$};
		\draw [red] (52.5pt,7.8) node [above] {\small $F_{i}$};
		\draw [orange] (3,3.7) node [right] {\small $L-F_{i}$};
		\draw [orange] (3,0.6) node [right] {\small $L$};
		\draw [teal] (3,7.4) node [right] {\small $M-F_{i}$};
		\draw [teal] (3,2) node [right] {\small $M$};

		\end{scope}
		\end{tikzpicture}
		\quad
		\begin{tikzpicture}[>=stealth,scale=0.6]
			\draw [black, line width = 0.8pt, ->] (-1,0) -- (4,0);
			\draw [teal, line width = 1pt, ->] (0,0) -- (2,0);
			\draw [black, line width = 0.8pt, ->] (0,-1) -- (0,4);
			\draw [blue, line width = 1pt, ->] (0,0) -- (0,2);
			\draw [orange, line width = 0.8pt] (-1,-1/3) -- (4,4/3);
			\node at (0,0) [circle, draw=red!100, fill=red!100, thick, inner sep=0pt,minimum size=2mm] {};
			\draw (0,-1) node[below] {\small $H_{x}:\ x=x_{i}$};
			\draw [teal] (0.3,-0.5) node[right] {\small $u_{i}=x-x_{i}$};
			\draw [blue] (0.1,1.2) node[left] {\small $V_{i}=y-y_{i}$};
			\draw (-1,0) node[left] {\small $H_{y}:\ y=y_{i}$};
			\draw (4,0) node[below] {\small $x$};
			\draw (0,4) node[right] {\small $y$};
			\node at (3,1.5) [orange] {\small $L$};
			\draw [black, line width = 0.8pt, dashed, <-] (5,2) -- (11,2);
			\draw (8,2) node [above] {\small $\begin{aligned} x &=x_{i} + u_{i} = x_{i} + U_{i} V_{i} \\ y & = y_{i} + u_{i}v_{i} = y_{i} + V_{i}\end{aligned}$};
			\draw (8,2) node [below] {\small $\begin{aligned} u_{i} &=x -x_{i} & U_{i}&=\frac{x-x_{i}}{y-y_{i}} \\ v_{i} & = \frac{y-y_{i}}{x-x_{i}} &
			V_{i} &= y-y_{i}\end{aligned} $};
			\draw (8,-1) node [below] {\small $v_{i} U_{i} = 1$};
			\begin{scope}[xshift=14cm]
			\draw [black, line width = 0.8pt, - ] (-2,1) -- (-1.5,1) to [out=0,in=-90]  	(0,3) -- (0,4);
			\draw [red, line width = 0.8pt] 	(-1.5,2.5) -- (2,-1);
			\draw [blue, line width = 1pt, ->] (-0.5,1.5) -- (0.1,0.9);
			\draw [blue, line width = 1pt, ->] (-0.51,1.46) to [out=52,in=-110] (-0.12,2.2);
			\draw [orange, line width = 0.8pt] (-0.5,0.5) -- (2.5,1.5);
			\begin{scope}[xshift=-0.5cm,yshift=+0.5cm]
			\draw [black, line width = 0.8pt, - ] (1,-2) -- (1,-1.5) to [out=90,in=-180]  	(3,0) -- (4,0);
			\draw [teal, line width = 1pt, ->] (1.5,-0.5) -- (0.9,0.1);
			\draw [teal, line width = 1pt, ->] (1.46,-0.51) to [out=38,in=-160] (2.2,-0.12);
			\end{scope}
			\draw [red] (-1.5,2.5) node[above] {\small $F_{i}:V_{i}=0$};
			\draw [red] (2.2,-0.8) node[below] {\small $F_{i}:u_{i}=0$};
			\draw (0,4) node[above] {\small $H_{x}-F_{i}$};
			\draw (0.5,-1.6) node[below] {\small $H_{y}-F_{i}$};
			\draw [orange] (2.2,1.8) node {\small $L-F_{i}$};
			\draw [teal] (1.9,0.0) node {\small $u_{i}$};
			\draw [teal] (0.25,0.3) node {\small $v_{i}$};
			\draw [blue] (0.4,2.2) node {\small $V_{i}$};
			\draw [blue] (0.35,1.2) node {\small $U_{i}$};
			\end{scope}
		\end{tikzpicture}
	\end{center}
	\caption{The Blowup Procedure}
	\label{fig:blowup}
\end{figure}

Thus, blowing up the point $q_{1}\left(F=0,g=0\right)$ amounts to introducing coordinate charts $(u_{1},v_{1})$ and $(U_{1},V_{1})$ via 
$F = u_{1} = U_{1}V_{1}$, $g=u_{1}v_{1} = V_{1}$. Since the change of variables is algebraic, it is easy to rewrite our system in those charts:
\begin{equation*}
	(u_{1}v_{1}):\quad \left\{
	\begin{aligned}
		\frac{d u_{1}}{dt} &= - 4 u_{1}^{2} v_{1} + 2 \kappa_{0} u_{1}^{2} + 2t u_{1} + 1,\\
		\frac{dv_{1}}{dt} &= 2 u_{1} v_{1}^{2} + \frac{v_{1} - \theta_{\infty}}{u_{1}}  - 2 \kappa_{0} u_{1} v_{1},
	\end{aligned}
	\right.\qquad 
	(U_{1},V_{1}):\quad\left\{
	\begin{aligned}
		\frac{dU_{1}}{dt} &= 2 \kappa_{0} U_{1}^{2} V_{1} + \frac{\theta_{\infty}U_{1}-1}{V_{1}} - 2 U_{1}V_{1},\\
		\frac{d V_{1}}{dt} &= - 2 V_{1}^{2}  + \frac{2}{U_{1}} + 2 t V_{1} - \theta_{\infty}.
	\end{aligned}
	\right.
\end{equation*}
Performing the same kind of analysis we see that there is another vertical leaf given by the equations $u_{1}=V_{1} = 0$ (that corresponds to 
the central fiber) that contains a new indeterminate point $q_{2}\left(u_{1} = 0, v_{1} = \theta_{\infty} \right)$ 
(equivalently, $U_{1} = 1/\theta_{\infty}$, $V_{1} = 0$); since all new base points appear on exceptional divisors, we 
omit the coordinates and write them, e.g., as $q_{2}(0,\theta_{\infty})$ or $q_{2}(1/\theta_{\infty},0)$. 
Introducing the new charts $(u_{2},v_{2})$ and $(U_{2}, V_{2})$ via 
$u_{1} = u_{2} = U_{2} V_{2}$, $v_{1} = \theta_{\infty} + u_{2} v_{2} = \theta_{\infty} + V_{2}$ and repeating this process we
see that there are no new base points in these charts and this cascade is resolved. 

\begin{figure}[ht]
	\begin{tikzpicture}[>=stealth,basept/.style={circle, draw=red!100, fill=red!100, thick, inner sep=0pt,minimum size=1.2mm}]
	\begin{scope}[xshift=0cm,yshift=0cm]
	\draw [black, line width = 1pt] (-0.2,0) -- (3.2,0)	node [pos=0,left] {\small $H_{g}$} node [pos=1,right] {\small $g=0$};
	\draw [black, line width = 1pt] (-0.2,2.5) -- (3.2,2.5) node [pos=0,left] {\small $H_{g}$} node [pos=1,right] {\small $g=\infty$};
	\draw [black, line width = 1pt] (0,-0.2) -- (0,2.7) node [pos=0,below] {\small $H_{f}$} node [pos=1,above] {\small $f=0$};
	\draw [black, line width = 1pt] (3,-0.2) -- (3,2.7) node [pos=0,below] {\small $H_{f}$} node [pos=1,above] {\small $f=\infty$};
	\node (p1) at (3,0) [basept,label={[xshift = -5pt, yshift=0pt] \small $q_{1}$}] {};
	\node (p2) at (3.5,0.5) [basept,label={[yshift=0pt] \small $q_{2}$}] {};
	\node (p3) at (0,2.5) [basept,label={[xshift = 8pt, yshift=-15pt] \small $q_{3}$}] {};
	\node (p4) at (-0.5,2) [basept,label={[yshift=-15pt] \small $q_{4}$}] {};
	\node (p5) at (3,2.5) [basept,label={[xshift = -5pt, yshift=-15pt] \small $q_{5}$}] {};
	\node (p6) at (3.7,2) [basept,label={[yshift=-15pt] \small $q_{6}$}] {};
	\node (p7) at (4.5,2) [basept,label={[yshift=-15pt] \small $q_{7}$}] {};
	\node (p8) at (5.3,2) [basept,label={[yshift=-15pt] \small $q_{8}$}] {};
	\draw [red, line width = 0.8pt, ->] (p2) -- (p1);
	\draw [red, line width = 0.8pt, ->] (p4) -- (p3);	
	\draw [red, line width = 0.8pt, ->] (p6) -- (p5);
	\draw [red, line width = 0.8pt, ->] (p7) -- (p6);
	\draw [red, line width = 0.8pt, ->] (p8) -- (p7);
	\end{scope}
	\draw [->] (8,1.5)--(6,1.5) node[pos=0.5, below] {$\operatorname{Bl}_{q_{1}\cdots q_{8}}$};
	\begin{scope}[xshift=9.5cm,yshift=0cm]
	\draw [blue, line width = 1pt] (-0.5,2.5) -- (3.5,2.5) node [pos=0,left] {\small $H_{g} - F_{3} - F_{5}$};
	\draw [blue, line width = 1pt] (-0.9,1.5) -- (0.3,2.7) node [pos=1,above] {\small $F_{3} - F_{4}$};		
	\draw [blue, line width = 1pt] (3.9,1.5) -- (2.7,2.7) node [pos=1,above] {\small $F_{5} - F_{6}$};		
	\draw [red, line width = 1pt] (-0.8,0) -- (3.5,0)	node [pos=0,left] {\small $H_{g}-F_{1}$};
	\draw [red, line width = 1pt] (-0.6,-0.2) -- (-0.6,2.1) node [pos=0,below] {\small $H_{f} - F_{5}$};
	\draw [red, line width = 1pt] (-0.4,2.3) -- (0.3,1.6)	node [pos=1,right] {\small $F_{4}$};
	\draw [red, line width = 1pt] (3.4,0.2) -- (2.7,0.9)	node [pos=1,left] {\small $F_{2}$};
	\draw [blue, line width = 1pt] (2.7,1.5) -- (3.9,2.7)	node [pos=1,right] {\small $F_{6}-F_{7}$};
	\draw [blue, line width = 1pt] (4.7,1.5) -- (3.5,2.7) node [pos=0,right] {\small $F_{7} - F_{8}$};		
	\draw [blue, line width = 1pt] (3.9,1) -- (2.7,-0.2) node [pos=1,below] {\small $F_{1} - F_{2}$};		
	\draw [blue, line width = 1pt] (3.6,0.4) -- (3.6,2.1) node [pos=0,right] {\small $H_{f} - F_{1} - F_{5}$};
	\draw [red, line width = 1pt] (4.4,1.5) -- (5.1,2.2) node [pos=1,right] {\small $F_{8}$};
	\end{scope}
	\end{tikzpicture}
	\caption{The Space of Initial Conditions for the $\Pain{IV}$ Okamoto Hamiltonian System \eqref{eq:Ok-Ham-4}}
	\label{fig:soic-Okamoto-4}
\end{figure}	

We now do the same procedure in the remaining charts to get the following cascades of base points: 
\begin{equation}\label{eq:basept-Okamoto-4}
\begin{aligned}
q_{1}(\infty,0) &\leftarrow q_{2}(0,\theta_{\infty}),\qquad q_{3}(0,\infty)\leftarrow q_{4}(\kappa_{0},0),\\
q_{5}(\infty,\infty) &\leftarrow q_{6}(0,2)\leftarrow q_{7}(0,-4t)
\leftarrow q_{8}(0,4(1 + 2 t^{2} + \theta_{\infty} - \kappa_{0})).	
\end{aligned}
\end{equation}

\begin{remark} Since the blowup points that appear in the cascades are on exceptional divisors, when we write 
	the coordinates of these points as $q_{i}(0,a)$, we work in the $(u,v)$ coordinate system, and points 
	$q_{i}(a,0)$ are in the $(U,V)$ coordinate system; for points $q_{i}(0,0)$ we would always specify which
	coordinate system is used.	
\end{remark}

The resulting configuration of the base points and vertical leaves is shown on 
Figure~\ref{fig:soic-Okamoto-4}. The \emph{Okamoto space of initial conditions} is then the resulting surface with the
configuration of vertical leaves removed. For the geometric analysis it is, however, more convenient to consider
the compact surface and just realize that points on vertical leaves do not correspond to initial conditions of 
our equation.

 Note that, modulo the coordinates of the base points, we get exactly the standard 
$E_{6}^{(1)}$ surface as given in \cite{KajNouYam:2017:GAPE}, see Figure~\ref{fig:surface-e61}. All of the 
standard geometric data, such as the choice of surface and root bases, for this surface is 
summarized in Section~\ref{sub:surface-e61} (note that we use the notation $F_{i}$ for exceptional 
divisors in the Okamoto case, and $E_{i}$ in the KNY case). However, the normalization 
of points $q_{6}(0,2)$ in the Okamoto case and $p_{6}(0,1)$ in the standard case is slightly different, 
which results in the need for some scaling adjustments in the variables. 
Specifically, the Hamiltonian system 
\begin{equation}\label{eq:KNY-Ham-4}
\left\{
\begin{aligned}
	\frac{dq}{d\tilde{t}} &= \frac{\partial \Ham{KNY}{IV}}{\partial p} = 2qp - q^{2} - \tilde{t} q  - a_{1},\\
	\frac{dp}{d\tilde{t}} &= -\frac{\partial \Ham{KNY}{IV}}{\partial q} = 2qp - p^{2} + \tilde{t}q -a_{2},			
\end{aligned}
\right.\qquad\text{where}\quad
\begin{aligned}
\Ham{KNY}{IV}(q,p;\tilde{t})&= qp^{2} - q^{2}p - \tilde{t}qp - a_{2}q - a_{1}p,\\	
\symp{KNY} &= dp \wedge dq,
\end{aligned}
\end{equation}
considered in \cite[Section 2.6]{KajNouYam:2017:GAPE} results in the following form of $\Pain{IV}$ equation:
\begin{equation*}
	\frac{d^{2} q}{d \tilde{t}^{2}} = \frac{1}{2q}\left(\frac{dq}{d\tilde{t}}\right)^{2} + \frac{3}{2} q^{3}
	+ 2 \tilde{t} q^{2} + \left(a_{2} - a_{0} + \frac{\tilde{t}^{2}}{2}\right)q - \frac{a_{1}^{2}}{2q}.
\end{equation*}
By the rescaling change of variables $\tilde{t} = \sqrt{2}t$, $q(\tilde{t}) = f(t(\tilde{t}))/\sqrt{2}$
this equation can be transformed to the standard form \eqref{eq:P4-std} with parameters 
$\alpha = 1 - a_{1} - 2 a_{2}$, $\beta = -2a_{1}^{2}$. The parameters $a_{1}$ and $a_{2}$ here are 
canonical geometric parameters called the \emph{root variables}, as explained in 
Section~\ref{ssub:period-map-a21}. For future computations, it is important to 
relate these parameters to the parameters $\kappa_{0}$ and $\theta_{\infty}$ of the Okamoto Hamiltonian.
Comparing
\begin{equation}\label{eq:rv-Ok2Std-4}
	\alpha = 1 - a_{1} - 2 a_{2} = 1 + 2 \theta_{\infty} - \kappa_{0},\qquad \beta = -2a_{1}^{2} = - 2 \kappa_{0}^{2}
\end{equation}
we immediately see that 
\begin{equation}\label{eq:rv-Ok2KNY-4}
	a_{1} = \kappa_{0},\qquad a_{2} = -\theta_{\infty},\qquad\text{and also,}\quad a_{0} = 1 - a_{2} - a_{2} = 1 + \theta_{\infty} - \kappa_{0}.
\end{equation}
Note that we could have also computed the root variables directly from the geometric data in the same way as 
outlined in Lemma~\ref{lem:period-map-a21} using the symplectic form $\symp{Ok} = dg\wedge df$.


\subsection{Geometry of the standard $E_{6}^{(1)}$ Surface} 
\label{sub:surface-e61}

In this section we describe the geometry of the model example of $E_{6}^{(1)}$ Sakai surface, following the standard reference 
\cite{KajNouYam:2017:GAPE}. This surface is the Okamoto space of initial conditions for $\Pain{IV}$. Its symmetry group is 
the extended affine Weyl group $\widetilde{W}\left(A_{2}^{(1)}\right)$. Here we describe the standard configuration of the blowup points, the choice of 
the surface and the symmetry root bases in the Picard lattice, and the birational representation of the symmetry group 
$\widetilde{W}\left(A_{2}^{(1)}\right)$. Recall that each nontrivial family of Sakai surfaces is obtained by blowing up 
$\mathbb{P}^{1} \times \mathbb{P}^{1}$ at $8$ points lying on the polar divisor of some symplectic form. The Picard lattice of the 
resulting rational algebraic surface $\mathcal{X}$ has rank $10$ and is 
$\operatorname{Pic}(\mathcal{X}) = \operatorname{Span}_{\mathbb{Z}}\{\mathcal{H}_{1}, \mathcal{H}_{2},\mathcal{E}_{1},\ldots,\mathcal{E}_{8}\}$,
where
$\mathcal{H}_{i}$ stand for the classes of coordinate divisors and $\mathcal{E}_{i}$ are the exceptional divisors, or the 
classes of central fibers of the blowup points. The anti-canonical
divisor class then is $-K_{\mathcal{X}} = 2 \mathcal{H}_{1} + 2 \mathcal{H}_{2} - \mathcal{E}_{1}-\cdots-\mathcal{E}_{8}$, and 
surfaces of different types correspond to different configurations of 
irreducible components of the anti-canonical divisor \cite{Sak:2001:RSAWARSGPE}.

\subsubsection{The point configuration} 
\label{ssub:point-conf-e61}
Consider the following decomposition of the anti-canonical divisor class $\delta = - \mathcal{K}_{\mathcal{X}}$ into classes 
$\delta_{i}$ (\emph{surface roots}) of the irreducible components $d_{i}$ of the anti-canonical divisor $-K_{\mathcal{X}}$:
\begin{align*}
	\delta &=  \delta_{0} + \delta_{1} + 2 \delta_{2} + 3 \delta_{3} + 2 \delta_{4} + \delta_{5} + 2 \delta_{6} \\
	&= (\mathcal{E}_{7} - \mathcal{E}_{6}) + (\mathcal{E}_{1} - \mathcal{E}_{2}) + 2 (\mathcal{H}_{q} - \mathcal{E}_{1} - \mathcal{E}_{5})
	+ 3 (\mathcal{E}_{5} - \mathcal{E}_{6}) + 2 (\mathcal{H}_{p} -  \mathcal{E}_{3} - \mathcal{E}_{5}) + 
	(\mathcal{E}_{3} - \mathcal{E}_{4}) + 2(\mathcal{E}_{6} - \mathcal{E}_{7}).
\end{align*}

The intersection configuration of those roots is given by the Dynkin diagram of type $E_{6}^{(1)}$, as shown on Figure~\ref{fig:d-roots-e61}.
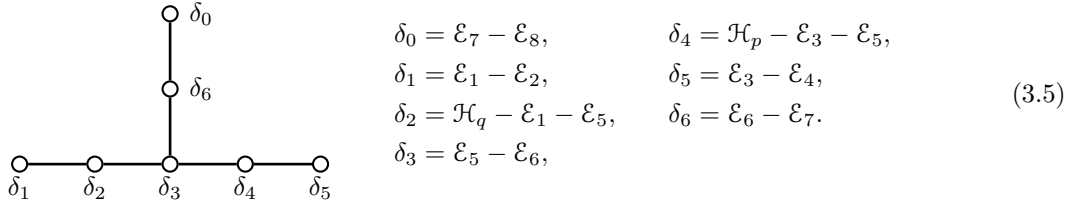
\begin{figure}[ht]
\begin{equation}\label{eq:d-roots-e61}			
	\raisebox{-40pt}{\begin{tikzpicture}[
			elt/.style={circle,draw=black!100,thick, inner sep=0pt,minimum size=2mm}]
		\path 	(-2,0) 	node 	(d1) [elt] {}
		        (-1,0) 	node 	(d2) [elt] {}
		        ( 0,0) 	node  	(d3) [elt] {}
		        ( 1,0) 	node  	(d4) [elt] {}
		        ( 2,0) 	node 	(d5) [elt] {}
		        ( 0,1)	node 	(d6) [elt] {}
		        ( 0,2)	node 	(d0) [elt] {};
		\draw [black,line width=1pt ] (d1) -- (d2) -- (d3) -- (d4) -- (d5)  (d3) -- (d6) -- (d0); 
			\node at ($(d1.south) + (0,-0.2)$) 	{$\delta_{1}$};
			\node at ($(d2.south) + (0,-0.2)$)  {$\delta_{2}$};
			\node at ($(d3.south) + (0,-0.2)$)  {$\delta_{3}$};
			\node at ($(d4.south) + (0,-0.2)$)  {$\delta_{4}$};		
			\node at ($(d5.south) + (0,-0.2)$)  {$\delta_{5}$};		
			\node at ($(d6.east) + (0.3,0)$) 	{$\delta_{6}$};		
			\node at ($(d0.east) + (0.3,0)$) 	{$\delta_{0}$};		
			\end{tikzpicture}} \qquad
			\begin{alignedat}{2}
			\delta_{0} &= \mathcal{E}_{7} - \mathcal{E}_{8}, &\qquad  \delta_{4} &= \mathcal{H}_{p} -  \mathcal{E}_{3} - \mathcal{E}_{5},\\
			\delta_{1} &= \mathcal{E}_{1} - \mathcal{E}_{2}, &\qquad  \delta_{5} &= \mathcal{E}_{3} - \mathcal{E}_{4},\\
			\delta_{2} &= \mathcal{H}_{q} - \mathcal{E}_{1} - \mathcal{E}_{5}, &\qquad  \delta_{6} &= \mathcal{E}_{6} - \mathcal{E}_{7}.\\
			\delta_{3} &= \mathcal{E}_{5} - \mathcal{E}_{6}, 			
			\end{alignedat}
\end{equation}
	\caption{The Surface Root Basis for the standard $E_{6}^{(1)}$ Sakai surface}
	\label{fig:d-roots-e61}	
\end{figure}

Consider the complex projective plane $\mathbb{P}^{1}\times \mathbb{P}^{1}$ covered by four coordinate
charts $(q,p)$, $(Q,p)$, $(q,P)$, and $(Q,P)$, where $Q = 1/q$ and $P=1/p$.  
Using the action of the gauge group $\mathbf{PGL}_{2}(\mathbb{C})\times \mathbf{PGL}_{2}(\mathbb{C})$ 
of M\"obius transformations, we can, without loss of generality, put the divisors $d_{2}$ and $d_{4}$, where $\delta_{i} = [d_{i}]$,
to be the lines at infinity, $d_{2} = V(Q) = \{q=\infty\}$ and $d_{4} = V(P) = \{p=\infty\}$ and put the base points $p_{1}$, $p_{3}$, and $p_{5}$
to be on the intersection of coordinate lines: $p_{1}(\infty,0)$, $p_{3}(0,\infty)$, and $p_{5}(\infty,\infty)$. We then get the blowup 
diagram for the standard example of Sakai surface of type $E_{6}^{(1)}$ as shown on Figure~\ref{fig:surface-e61}.

\begin{remark} Here we use the same coordinates $(q,p)$ as \cite{KajNouYam:2017:GAPE}, but these coordinates should not be confused with 
	the different coordinates $(q,p)$ for the Its-Prokhorov Hamiltonian system in  \eqref{eq:IP-Ham-4} and 	
	Section~\ref{sub:IP-soic-4}.
\end{remark}

\begin{figure}[ht]
	\begin{tikzpicture}[>=stealth,basept/.style={circle, draw=red!100, fill=red!100, thick, inner sep=0pt,minimum size=1.2mm}]
	\begin{scope}[xshift=0cm,yshift=0cm]
	\draw [black, line width = 1pt] (-0.2,0) -- (3.2,0)	node [pos=0,left] {\small $H_{p}$} node [pos=1,right] {\small $p=0$};
	\draw [black, line width = 1pt] (-0.2,2.5) -- (3.2,2.5) node [pos=0,left] {\small $H_{p}$} node [pos=1,right] {\small $p=\infty$};
	\draw [black, line width = 1pt] (0,-0.2) -- (0,2.7) node [pos=0,below] {\small $H_{q}$} node [pos=1,above] {\small $q=0$};
	\draw [black, line width = 1pt] (3,-0.2) -- (3,2.7) node [pos=0,below] {\small $H_{q}$} node [pos=1,above] {\small $q=\infty$};
	\node (p1) at (3,0) [basept,label={[xshift = -5pt, yshift=-15pt] \small $p_{1}$}] {};
	\node (p2) at (3.5,0.5) [basept,label={[yshift=0pt] \small $p_{2}$}] {};
	\node (p3) at (0,2.5) [basept,label={[xshift = 8pt, yshift=-15pt] \small $p_{3}$}] {};
	\node (p4) at (-0.5,2) [basept,label={[yshift=-15pt] \small $p_{4}$}] {};
	\node (p5) at (3,2.5) [basept,label={[xshift = -5pt, yshift=-15pt] \small $p_{5}$}] {};
	\node (p6) at (3.7,2) [basept,label={[yshift=-15pt] \small $p_{6}$}] {};
	\node (p7) at (4.5,2) [basept,label={[yshift=-15pt] \small $p_{7}$}] {};
	\node (p8) at (5.3,2) [basept,label={[yshift=-15pt] \small $p_{8}$}] {};
	\draw [line width = 0.8pt, ->] (p2) -- (p1);
	\draw [line width = 0.8pt, ->] (p4) -- (p3);	
	\draw [line width = 0.8pt, ->] (p6) -- (p5);
	\draw [line width = 0.8pt, ->] (p7) -- (p6);
	\draw [line width = 0.8pt, ->] (p8) -- (p7);
	\end{scope}
	\draw [->] (8,1.5)--(6,1.5) node[pos=0.5, below] {$\operatorname{Bl}_{p_{1}\cdots p_{8}}$};
	\begin{scope}[xshift=9.5cm,yshift=0cm]
	\draw [blue, line width = 1pt] (-0.5,2.5) -- (3.5,2.5) node [pos=0,left] {\small $H_{p} - E_{3} - E_{5}$};
	\draw [blue, line width = 1pt] (-0.9,1.5) -- (0.3,2.7) node [pos=1,above] {\small $E_{3} - E_{4}$};		
	\draw [blue, line width = 1pt] (3.9,1.5) -- (2.7,2.7) node [pos=1,above] {\small $E_{5} - E_{6}$};		
	\draw [red, line width = 1pt] (-0.8,0) -- (3.5,0)	node [pos=0,left] {\small $H_{p}-E_{1}$};
	\draw [red, line width = 1pt] (-0.6,-0.2) -- (-0.6,2.1) node [pos=0,below] {\small $H_{q} - E_{3}$};
	\draw [red, line width = 1pt] (-0.4,2.3) -- (0.3,1.6)	node [pos=1,right] {\small $E_{4}$};
	\draw [red, line width = 1pt] (3.4,0.2) -- (2.7,0.9)	node [pos=1,left] {\small $E_{2}$};
	\draw [blue, line width = 1pt] (2.7,1.5) -- (3.9,2.7)	node [pos=1,right] {\small $E_{6}-E_{7}$};
	\draw [blue, line width = 1pt] (4.7,1.5) -- (3.5,2.7) node [pos=0,right] {\small $E_{7} - E_{8}$};		
	\draw [blue, line width = 1pt] (3.9,1) -- (2.7,-0.2) node [pos=1,below] {\small $E_{1} - E_{2}$};		
	\draw [blue, line width = 1pt] (3.6,0.4) -- (3.6,2.1) node [pos=0,right] {\small $H_{q} - E_{1} - E_{5}$};
	\draw [red, line width = 1pt] (4.4,1.5) -- (5.1,2.2) node [pos=1,right] {\small $E_{8}$};
	\end{scope}
	\end{tikzpicture}
	\caption{The standard $E_{6}^{(1)}$ Sakai surface}
	\label{fig:surface-e61}
\end{figure}
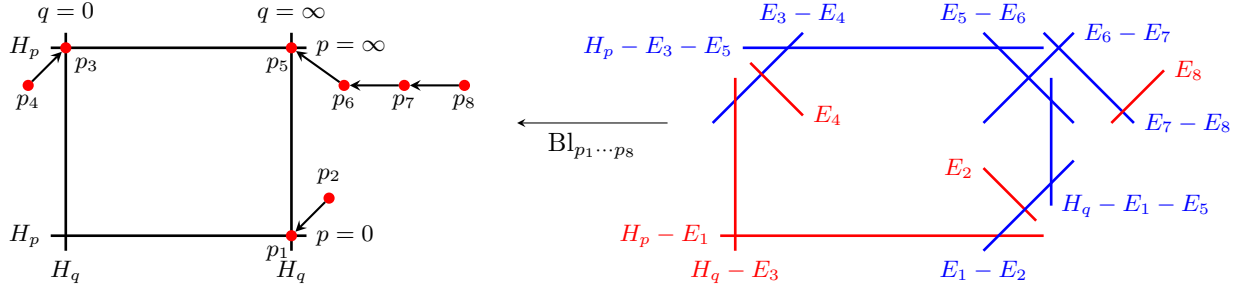	

This point configuration can be parameterized by five parameters $b_{2}, b_{4}, b_{6}, b_{7}$, and $b_{8}$ as follows:
\begin{equation}\label{eq:basept-e61}
\begin{aligned}
p_{1}(\infty,0) &\leftarrow p_{2}(\infty,0;qp=b_{2}),\qquad p_{3}(0,\infty)\leftarrow p_{4}(\infty,0;qp=b_{4}),\\
p_{5}(\infty,\infty) &\leftarrow p_{6}\left(\infty,\infty; \frac{q}{p}=b_{6}\right)\leftarrow
p_{7}\left(\infty,\infty; \frac{q}{p}=b_{6}; \frac{q(q-b_{6}p)}{p}=b_{7}\right)\\		
&\leftarrow p_{8}\left(\infty,\infty; \frac{q}{p}=b_{6}; \frac{q(q-b_{6}p)}{p}=b_{7}; 
\frac{q(q(q - b_{6}p)- b_{7}p)}{p}=b_{8}\right).	
\end{aligned}
\end{equation}
The residual two-parameter gauge group acts on these configurations via rescaling,
	\begin{equation}\label{eq:gauge-e61}
			\left(\begin{matrix}
				b_{2} & b_{4} & \\
				b_{6} & b_{7} & b_{8}
			\end{matrix}; \   \begin{matrix}
				f \\ g
			\end{matrix}\right) \sim \left(\begin{matrix}
			\lambda \mu b_{2} & \lambda \mu b_{4} \\
			\frac{\lambda}{\mu} b_{6} & \frac{\lambda^{2}}{\mu} b_{7} & \frac{\lambda^{3}}{\mu} b_{8}
			\end{matrix};\  \begin{matrix}
				\lambda f \\ \mu g
			\end{matrix}\right),\,\lambda,\mu\neq0,
	\end{equation}
and so the true number of parameters is three. A canonical choice of such parameters is known as the 
\emph{root variables}, as we explain next.

\subsubsection{The period map and the root variables} 
\label{ssub:period-map-a21}
To define the root variables we need to choose a root basis in the 
\emph{symmetry sub-lattice} $Q = \Pi(R^{\perp}) \triangleleft \operatorname{Pic}(\mathcal{X})$ 
and find a symplectic form $\omega$
whose polar divisor $-K_{\mathcal{X}}$ is the configuration of $-2$-curves shown on Figure~\ref{fig:surface-e61}. 
As usual, we take the same basis as in \cite{KajNouYam:2017:GAPE}, see Figure~\ref{fig:a-roots-a21}.
A symplectic form $\omega$ with $[(\omega)]\in \mathcal{K}_{\mathcal{X}}$ such that 
$-[(\omega)] = \delta_{0} + \delta_{1} + 2 \delta_{2} + 3\delta_{3} + 2\delta_{4} + \delta_{5} + 2 \delta_{6}$ 
can be written in the main affine $(q,p)$-chart as $\omega = k dq\wedge dp$.

\begin{figure}[ht]
\begin{equation}\label{eq:a-roots-a21}			
	\raisebox{-32.1pt}{\begin{tikzpicture}[
			elt/.style={circle,draw=black!100,thick, inner sep=0pt,minimum size=2mm}]
		\path 	(0,1.73) 	node 	(a0) [elt, label={[xshift=0pt, yshift = 0 pt] $\alpha_{0}$} ] {}
		        (-1,0) node 	(a1) [elt, label={[xshift=-10pt, yshift = -10 pt] $\alpha_{1}$} ] {}
		        ( 1,0) 	node  	(a2) [elt, label={[xshift=10pt, yshift = -10 pt] $\alpha_{2}$} ] {};
\		\draw [black,line width=1pt ] (a0) -- (a1) -- (a2) -- (a0);
	\end{tikzpicture}} \qquad
			\begin{aligned}
			\alpha_{0} &= \mathcal{H}_{q} + \mathcal{H}_{p} - \mathcal{E}_{5} - \mathcal{E}_{6} - \mathcal{E}_{7} - \mathcal{E}_{8}, \\
			\alpha_{1} &= \mathcal{H}_{q} - \mathcal{E}_{3} - \mathcal{E}_{4}, \\
			\alpha_{2} &= \mathcal{H}_{p} - \mathcal{E}_{1} - \mathcal{E}_{2},\\[5pt]
			\delta & = \mathrlap{\alpha_{0} + \alpha_{1} + \alpha_{2}.} 
			\end{aligned}
\end{equation}
	\caption{The Symmetry Root Basis for the standard d-$P\left(A_{2}^{(1)}\right)$ case}
	\label{fig:a-roots-a21}	
\end{figure}

The \emph{period map} $\chi: Q\to \mathbb{C}$ is first defined on the simple roots $\alpha_{i}$ 
and then extended to the full symmetry sub-lattice by linearity. The \emph{root variables} are the 
values of the period map on the roots $\alpha_{i}$, $a_{i}:= \chi(\alpha_{i})$ that can be 
computed as follows, see \cite{Sak:2001:RSAWARSGPE} and 
\cite{DzhFilSto:2020:RCDOPWHWDPE} for details.
	\begin{itemize}
		\item First, we represent $\alpha_{i}$ as a difference of two effective divisors, 
		$\alpha_{i} = [C_{i}^{1}] - [C_{1}^{0}]$;
		\item second, note that there exists a \emph{unique} component $d_{k}$ of $-K_{\mathcal{X}}$ such that 
		$d_{k}\bullet C_{i}^{1} = d_{k}\bullet C_{i}^{0} = 1$, put $P_{i} =  d_{k}\cap C_{i}^{0}$ and 
		$Q_{i} =  d_{k}\cap C_{i}^{1}$:
		\begin{center}
			\begin{tikzpicture}[>=stealth, 
					elt/.style={circle,draw=black!100, fill=black!100, thick, inner sep=0pt,minimum size=1.5mm}]
					\draw[black, very thick] (0,0) -- (4,0);
					\draw[blue, thick] (1,0) -- (1,0.5);
					\draw[blue,thick] (1,0) .. controls (1,-0.3) and (1,-0.6) .. (0.6,-1);
					\draw[blue, thick] (3,0) -- (3,0.5);
					\draw[blue,thick] (3,0) .. controls (3,-0.3) and (3,-0.6) .. (3.4,-1);
					\node[style=elt] (P) at (1,0) {}; 		\node [above left] at (P) {$P_{i}$};
					\node[style=elt] (Q) at (3,0) {};		\node [above right] at (Q) {$Q_{i}$};
					\node at (-0.6,0) {$d_{k}$};
					\node at (0.4,-1) {$C_{i}^{0}$}; \node at (3.7,-1) {$C_{i}^{1}$};
					\end{tikzpicture}
		\end{center}
		\item then 
		\begin{equation*}
			\chi(\alpha_{i}) = \chi\left([C_{i}^{1}] - [C_{i}^{0}]\right) = 
			\int_{P_{i}}^{Q_{i}} \frac{ 1 }{ 2 \pi \mathfrak{i} }\oint_{d_{k}} \omega
			= \int_{P_{i}}^{Q_{i}} \operatorname{res}_{d_{k}} \omega,
		\end{equation*}
		where $\omega$ is our symplectic form.
	\end{itemize}

We then get the following result.

\begin{lemma}\label{lem:period-map-a21} 
	 The root variables $a_{i}$ are given by 
		\begin{equation}\label{eq:a2-root_vars}
			a_{0} = k\, \frac{b_{7}^{2} - b_{6} b_{8}}{b_{6}^{3}},\qquad 
			a_{1} = -k b_{4},\qquad 
			a_{2} = k b_{2}.
		\end{equation}
		It is convenient to put $k=-1$, so that the symplectic form is the standard symplectic form $\omega=dp \wedge dq$.	
		Using the gauge action \eqref{eq:gauge-e61} we can normalize $b_{6} = 1$. It is easy to see that under this gauge action 
		each root variable scales by $\lambda \mu$, so we can use the remaining scaling freedom to ensure the
		normalization condition $\chi(\delta) = a_{0} + a_{1} + a_{2} = 1$. Finally, in view of the relation of this example to $\Pain{IV}$,
		we denote $b_{7}:=-t$. This then gives us the point configuration in terms of root variables. Using the notation of \cite{KajNouYam:2017:GAPE}
		$p_{12}\left(\frac{1}{\varepsilon},-a_{2} \varepsilon\right)_{2}$, $p_{34}\left(a_{1} \varepsilon,\frac{1}{\varepsilon}\right)$,
		$p_{5678}\left(\frac{1}{\varepsilon},\frac{1}{\varepsilon} + t - a_{0}\varepsilon \right)_{4}$, we get exactly the the same 
		parameterization of the base points as in \cite[section 8.2.22]{KajNouYam:2017:GAPE}.
\end{lemma}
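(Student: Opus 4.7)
The plan is to apply the three-step procedure for the period map $\chi$ spelled out just before the lemma: for each root $\alpha_i$, decompose it as $\alpha_i = [C_i^1] - [C_i^0]$ with $C_i^0, C_i^1$ effective and sharing a unique component $d_k$ of $-K_\mathcal{X}$ which each meets with intersection one, then evaluate $\chi(\alpha_i) = \int_{P_i}^{Q_i} \operatorname{res}_{d_k}\omega$ using $\omega = k\, dq \wedge dp$. The three computations have a common shape, differing only in how deep the underlying blowup cascade runs.

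For $\alpha_1 = \mathcal{H}_q - \mathcal{E}_3 - \mathcal{E}_4$ I would take $C_1^1 = H_q - E_3$ (the proper transform of $q=0$) and $C_1^0 = F_4$; a short intersection computation singles out $d_k = F_3 - F_4$ (the component of class $\delta_5$). Pulling $\omega$ back to the chart $(U_3, V_3)$ over $p_3(0,\infty)$, where $q = U_3 V_3$ and $p = 1/V_3$, one obtains $\omega = -k\, dU_3 \wedge dV_3/V_3$, whose Poincar\'e residue along $F_3 - F_4 : V_3 = 0$ is $k\, dU_3$. Reading off the endpoints, $Q_1$ at $U_3 = 0$ (where $H_q - E_3$ hits $F_3$) and $P_1$ at $U_3 = b_4$ (the location of $p_4$ on $F_3$ through the condition $qp = b_4$), gives $a_1 = \int_{b_4}^{0} k\, dU_3 = -k b_4$. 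A completely parallel computation in the cascade over $p_1(\infty, 0)$, with $d_k = F_1 - F_2$ of class $\delta_1$, yields $a_2 = k b_2$.

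The main computation is $a_0$ for $\alpha_0 = \mathcal{H}_q + \mathcal{H}_p - \mathcal{E}_5 - \mathcal{E}_6 - \mathcal{E}_7 - \mathcal{E}_8$, whose cascade over $p_5(\infty,\infty)$ has length four. Here I would take $C_0^1$ to be the (unique up to rescaling) $(1,1)$-curve of class $\mathcal{H}_q + \mathcal{H}_p - \mathcal{E}_5 - \mathcal{E}_6 - \mathcal{E}_7$ passing through $p_5, p_6, p_7$, and $C_0^0 = F_8$, so that the intersection calculation identifies $d_k = F_7 - F_8$ (of class $\delta_0$). Pulling $\omega$ back through the four successive blowups and simplifying brings it, in the chart $(U_8, V_8)$ where $F_7 - F_8$ is cut out by $U_8 = 0$, to the form $\omega = k\, dU_8 \wedge dV_8 / \bigl[U_8\bigl(b_6 + b_7 U_8 V_8 + b_8 U_8^2 V_8^2 + U_8^2 V_8^3\bigr)^2\bigr]$, so that the residue evaluates cleanly to $\operatorname{res}_{U_8 = 0}\omega = (k/b_6^2)\, dV_8$. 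The more delicate step is locating where $C_0^1$ meets $F_7 - F_8$: writing a generic $(1,1)$-curve in affine coordinates as $q - b_6 p = c$ automatically places it through $p_5$ and $p_6$, and imposing passage through $p_7$ then fixes $c = b_7/b_6$. Following this curve through the cascade places its intersection with $F_7 - F_8$ at $V_8 = (b_7^2 - b_6 b_8)/b_6$, and the residue integral then gives $a_0 = k(b_7^2 - b_6 b_8)/b_6^3$.

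The normalization statements at the end of the lemma are bookkeeping. Setting $k = -1$ just fixes the orientation and scale of $\omega$; the gauge action \eqref{eq:gauge-e61} rescales every $a_i$ by the common factor $\lambda \mu$, so the ratio $\lambda/\mu$ can be used to impose $b_6 = 1$ while the remaining diagonal $\lambda\mu$ scaling is fixed by requiring $\chi(\delta) = a_0 + a_1 + a_2 = 1$. Renaming $b_7 = -t$ and substituting these normalizations into the cascade data \eqref{eq:basept-e61} then reproduces exactly the point parameterization of \cite[\S 8.2.22]{KajNouYam:2017:GAPE}.
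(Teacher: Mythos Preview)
Your proposal is correct and follows exactly the period-map recipe the paper sets out in the bullet points preceding the lemma; the paper itself does not spell out the computation, so your proof simply fills in those details faithfully. One cosmetic point: you write $F_i$ for the exceptional divisors where the paper uses $E_i$ on the standard $E_6^{(1)}$ surface (reserving $F_i$ for the Okamoto realization), but this does not affect the argument.
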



\subsubsection{The extended affine Weyl symmetry group} 
\label{ssub:Weyl-group-a21}

For completeness, we also include here the description of the birational representation of the extended
affine Weyl symmetry group $\widetilde{W}\left(A_{2}^{(1)}\right) = \operatorname{Aut}\left(A_{2}^{(1)}\right) \ltimes W\left(A_{2}^{(1)}\right)$.
The computations here follow the same steps as in \cite{KajNouYam:2017:GAPE} or in \cite{DzhFilSto:2020:RCDOPWHWDPE}
and are omitted, we only state the final result.

The affine Weyl group $W\left(A_{2}^{(1)}\right)$ is defined in terms of generators $w_{i} = w_{\alpha_{i}}$ and relations that 
are encoded by the affine Dynkin diagram $A_{2}^{(1)}$,
\begin{equation*}
	W\left(A_{2}^{(1)}\right) = W\left(\raisebox{-20pt}{\begin{tikzpicture}[
			elt/.style={circle,draw=black!100,thick, inner sep=0pt,minimum size=2mm},scale=0.5]
		\path 	(0,1.73) 	node 	(a0) [elt, label={[xshift=0pt, yshift = 0 pt] $\alpha_{0}$} ] {}
		        (-1,0) node 	(a1) [elt, label={[xshift=-10pt, yshift = -10 pt] $\alpha_{1}$} ] {}
		        ( 1,0) 	node  	(a2) [elt, label={[xshift=10pt, yshift = -10 pt] $\alpha_{2}$} ] {};
		\draw [black,line width=1pt ] (a0) -- (a1) -- (a2) -- (a0);
	\end{tikzpicture}} \right)
	=
	\left\langle w_{0},\dots, w_{2}\ \left|\ 
	\begin{alignedat}{2}
    w_{i}^{2} = e,\quad  w_{i}\circ w_{j} &= w_{j}\circ w_{i}& &\text{ when 
   				\raisebox{-0.08in}{\begin{tikzpicture}[
   							elt/.style={circle,draw=black!100,thick, inner sep=0pt,minimum size=1.5mm}]
   						\path   ( 0,0) 	node  	(ai) [elt] {}
   						        ( 0.5,0) 	node  	(aj) [elt] {};
   						\draw [black] (ai)  (aj);
   							\node at ($(ai.south) + (0,-0.2)$) 	{$\alpha_{i}$};
   							\node at ($(aj.south) + (0,-0.2)$)  {$\alpha_{j}$};
   							\end{tikzpicture}}}\\
    w_{i}\circ w_{j}\circ w_{i} &= w_{j}\circ w_{i}\circ w_{j}& &\text{ when 
   				\raisebox{-0.17in}{\begin{tikzpicture}[
   							elt/.style={circle,draw=black!100,thick, inner sep=0pt,minimum size=1.5mm}]
   						\path   ( 0,0) 	node  	(ai) [elt] {}
   						        ( 0.5,0) 	node  	(aj) [elt] {};
   						\draw [black] (ai) -- (aj);
   							\node at ($(ai.south) + (0,-0.2)$) 	{$\alpha_{i}$};
   							\node at ($(aj.south) + (0,-0.2)$)  {$\alpha_{j}$};
   							\end{tikzpicture}}}
	\end{alignedat}\right.\right\rangle. 
\end{equation*} 
The natural action of this group on $\operatorname{Pic}(\mathcal{X})$ is given by reflections in the 
roots $\alpha_{i}$, 
\begin{equation}\label{eq:root-refl}
	w_{i}(\mathcal{C}) = w_{\alpha_{i}}(\mathcal{C}) = \mathcal{C} - 2 
	\frac{\mathcal{C}\bullet \alpha_{i}}{\alpha_{i}\bullet \alpha_{i}}\alpha_{i}
	= \mathcal{C} + \left(\mathcal{C}\bullet \alpha_{i}\right) \alpha_{i},\qquad \mathcal{C}\in \operatorname{Pic(\mathcal{X})},
\end{equation}
which can be extended to an action on point configurations by elementary birational maps (which lifts to 
isomorphisms $w_{i}$ on the family of Sakai's surfaces),
this is known as a birational representation of $W\left(A_{2}^{(1)}\right)$.
The group of Dynkin diagram automorphisms $\operatorname{Aut}\left(D_{4}^{(1)}\right)\simeq \mathbb{D}_{3}$, where 
$\mathbb{D}_{3}$ is the usual dihedral group. Thus we only describe two transpositions $\sigma_{1}$, $\sigma_{2}$ that generate the whole group.

\begin{theorem}\label{thm:bir-weyl-a21}
	Reflections $w_{i}$ on $\operatorname{Pic}(\mathcal{X})$ are induced by the elementary 
	birational mappings
	\begin{align*}
		w_{0}:	\left(\begin{matrix} a_{0} & a_{1} \\ a_{2} & t \end{matrix}; \  \begin{matrix} q \\ p \end{matrix}\right) &\mapsto
		\left(\begin{matrix} -a_{0} & a_{0} + a_{1} \\ a_{0} + a_{2} & t \end{matrix}; \  
		\begin{matrix} q - \frac{a_{0}}{q-p+t} \\ p - \frac{a_{0}}{q-p+t} \end{matrix}\right),\\
		w_{1}:	\left(\begin{matrix} a_{0} & a_{1} \\ a_{2} & t \end{matrix}; \  \begin{matrix} q \\ p \end{matrix}\right) &\mapsto
		\left(\begin{matrix} a_{0} + a_{1} & - a_{1} \\ a_{1} + a_{2} & t \end{matrix}; \  
		\begin{matrix} q \\ p - \frac{a_{1}}{q} \end{matrix}\right),\\
		w_{2}:	\left(\begin{matrix} a_{0} & a_{1} \\ a_{2} & t \end{matrix}; \  \begin{matrix} q \\ p \end{matrix}\right) &\mapsto
		\left(\begin{matrix} a_{0} + a_{2} & a_{1} + a_{2} \\ - a_{2} & t \end{matrix}; \  
		\begin{matrix} q  + \frac{a_{2}}{p}\\ p  \end{matrix}\right).
	\end{align*}
	For the automorphisms $\sigma_{i}$ we choose the following two generators whose induced action on $\operatorname{Pic}(\mathcal{X})$ 
	can be represented as a composition of reflections in roots (but no longer \emph{symmetry roots}), and that act on 
	the surface and symmetry root bases as follows:
	\begin{align*}
		\sigma_{1}&=w_{\mathcal{H}_{q}-\mathcal{H}_{p}}w_{\mathcal{E}_{1} - \mathcal{E}_{3}}w_{\mathcal{E}_{2} - \mathcal{E}_{4}} \sim 
		(\alpha_{1}\alpha_{2})\sim (\delta_{1}\delta_{5})(\delta_{2}\delta_{4}),\\
		\sigma_{2}&=w_{\mathcal{H}_{q}-\mathcal{E}_{5} - \mathcal{E}_{6}}w_{\mathcal{E}_{1} - \mathcal{E}_{7}}w_{\mathcal{E}_{2} - \mathcal{E}_{8}} \sim 
		(\alpha_{0}\alpha_{2})\sim (\delta_{0}\delta_{1})(\delta_{2}\delta_{6}).
	\end{align*}
	The corresponding birational mappings then are
	\begin{align*}
		\sigma_{1}:	\left(\begin{matrix} a_{0} & a_{1} \\ a_{2} & t \end{matrix}; \  \begin{matrix} q \\ p \end{matrix}\right) &\mapsto
		\left(\begin{matrix} -a_{0}  & - a_{2} \\ - a_{1} & t \end{matrix}; \  
		\begin{matrix} -p \\ -q \end{matrix}\right),\\
		\sigma_{2}:	\left(\begin{matrix} a_{0} & a_{1} \\ a_{2} & t \end{matrix}; \  \begin{matrix} q \\ p \end{matrix}\right) &\mapsto
		\left(\begin{matrix} - a_{2} & -a_{1}  \\ - a_{0} & t \end{matrix}; \  
		\begin{matrix} q  \\ q - p + t  \end{matrix}\right).
	\end{align*}	
\end{theorem}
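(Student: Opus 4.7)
The plan is to verify, for each generator $g \in \{w_0, w_1, w_2, \sigma_1, \sigma_2\}$, two things: first, that the proposed birational map on $(q,p)$ together with the prescribed transformation of the root variables $(a_0, a_1, a_2)$ lifts to an isomorphism from the Sakai surface $\mathcal{X}_{\mathbf{a}}$ to $\mathcal{X}_{g(\mathbf{a})}$, and second, that the induced action on $\operatorname{Pic}(\mathcal{X})$ coincides with the reflection formula \eqref{eq:root-refl} (for the $w_i$) or with the stated product of reflections in non-symmetry roots (for the $\sigma_j$). The first condition is what forces the parameter transformation, since the birational map on $\mathbb{P}^1\times\mathbb{P}^1$ must send the base point cascade of $\mathcal{X}_{\mathbf{a}}$, parameterized as in Lemma~\ref{lem:period-map-a21}, to the base point cascade of $\mathcal{X}_{g(\mathbf{a})}$.

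For the reflections $w_i$, the first step is to compute $w_i(\mathcal{H}_q)$, $w_i(\mathcal{H}_p)$, and $w_i(\mathcal{E}_j)$ directly from \eqref{eq:root-refl} using $\mathcal{H}_q \bullet \mathcal{H}_p = 1$, $\mathcal{H}_q^2 = \mathcal{H}_p^2 = 0$, $\mathcal{E}_j \bullet \mathcal{E}_k = -\delta_{jk}$. This fixes the bidegree of the required birational map and dictates which base points get exchanged. For example, $w_1$ (reflection in $\alpha_1 = \mathcal{H}_q - \mathcal{E}_3 - \mathcal{E}_4$) fixes $\mathcal{H}_q$ and swaps $\mathcal{E}_3 \leftrightarrow \mathcal{H}_q - \mathcal{E}_4$, forcing a map of bidegree $(1,*)$ that preserves $q$ and permutes the length-two cascade above $p_3(0,\infty)$. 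One then checks by direct substitution into the blowup charts of Section~\ref{sub:Ok-soic-4} that $(q, p) \mapsto (q, p - a_1/q)$, combined with the parameter change $(a_0, a_1, a_2) \mapsto (a_0 + a_1, -a_1, a_1 + a_2)$, realizes exactly this permutation. The case of $w_2$ is completely analogous with $q$ and $p$ interchanged.

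For the diagram automorphisms $\sigma_1, \sigma_2$, the birational maps are linear and therefore act transparently on the two classes $\mathcal{H}_q, \mathcal{H}_p$: $\sigma_1$ swaps them and $\sigma_2$ preserves $\mathcal{H}_q$ while sending $\mathcal{H}_p$ to $\mathcal{H}_q + \mathcal{H}_p - \cdots$. Direct substitution into the base point parameterization shows how the $\mathcal{E}_i$ are permuted, and the resulting action on $Q(A_2^{(1)})$ is seen to be the stated transposition of simple roots. The decomposition into three commuting reflections in the specified $(-2)$-classes can then be read off from this permutation of exceptional divisors: those classes are exactly the orthogonal $(-2)$-classes exchanged by $\sigma_j$, and reflecting in them reproduces the computed action on $\operatorname{Pic}(\mathcal{X})$.

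The main obstacle is $w_0$, which is the only non-elementary reflection: the birational map has bidegree $(2,2)$ and the associated cascade $p_5 \leftarrow p_6 \leftarrow p_7 \leftarrow p_8$ has length four, so the verification requires tracking $(q,p) \mapsto (q - a_0/(q-p+t), p - a_0/(q-p+t))$ through successive blowup charts near $p_5(\infty,\infty)$ and checking that, under the parameter change $(a_0, a_1, a_2) \mapsto (-a_0, a_0 + a_1, a_0 + a_2)$, each of the four points is sent to the correct image as parameterized by Lemma~\ref{lem:period-map-a21}. This is routine but bookkeeping-heavy, and one must take care that the normalization $\chi(\delta) = a_0 + a_1 + a_2 = 1$ is preserved, which is automatic since each $w_i$ stabilizes the null root $\delta = \alpha_0 + \alpha_1 + \alpha_2$.
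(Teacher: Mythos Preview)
Your approach is correct and is precisely the standard verification that the paper defers to its references: the paper itself omits the proof entirely, stating that ``the computations here follow the same steps as in \cite{KajNouYam:2017:GAPE} or in \cite{DzhFilSto:2020:RCDOPWHWDPE} and are omitted, we only state the final result.'' What you outline---computing the action of each $w_i$ on the generators of $\operatorname{Pic}(\mathcal{X})$ from the reflection formula, then confirming by direct substitution in blowup charts that the stated birational map with the stated parameter shift carries the base-point configuration of $\mathcal{X}_{\mathbf{a}}$ to that of $\mathcal{X}_{w_i(\mathbf{a})}$---is exactly the procedure spelled out in those references.

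One small imprecision: for $w_0$ the induced classes are $w_0(\mathcal{H}_q)=2\mathcal{H}_q+\mathcal{H}_p-\mathcal{E}_5-\cdots-\mathcal{E}_8$ and $w_0(\mathcal{H}_p)=\mathcal{H}_q+2\mathcal{H}_p-\mathcal{E}_5-\cdots-\mathcal{E}_8$, so the two coordinate pencils pull back to pencils of bidegrees $(2,1)$ and $(1,2)$ respectively, not $(2,2)$. This does not affect the argument, only the label you attach to it.
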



\subsection{Non-autonomous Hamiltonian systems, symplectic transformations and 2-forms} 
\label{sub:non-auto-Hams}
Suppose we have two Hamiltonian systems, with time-dependent Hamiltonians, and 
we found a birational change of coordinates that transforms one system into the other. In this section we address the question of
how, given this change of coordinates, to find the relationship between the Hamiltonians themselves. The difficulty here stems from the fact that
for time-dependent coordinate systems the above change of variables is in general also time-dependent. As a result,
the direct change of  coordinates in the Hamiltonian for one system does not, as a rule, give the Hamiltonian for another system -- some additional terms appear.
 To understand the nature of these terms we need some facts about non-autonomous Hamiltonian systems that are relevant to the 
 global Hamiltonian structures of Painlev\'e equations on Okamoto's spaces \cite{ShiTak:1997:SHSPS, MatMatTak:1999:SHSPS, Mat:1997:SHSPS}. 
 This point is explained in detail in \cite{DzhFilSto:2022:DERCSOPTRPEGA}, so
here we only briefly illustrate it, locally, 
using the Its-Prokhorov Hamiltonian system \eqref{eq:IP-Ham-4} and the Jimbo-Miwa Hamiltonian system \eqref{eq:JM-Ham-4},
since the change of coordinates \eqref{eq:JMtoIP-4} from  $(q,p) \mapsto (y,z)$ is $t$-dependent, and we can clearly see 
the additional correction terms.


Consider two copies of $\mathbb{C}^{2}$ with coordinates $(x,y)$ and $(X,Y)$ equipped with rational 
symplectic forms $\omega = F(x,y) dy \wedge dx$ and $\tilde{\omega}  = G(X,Y) dY \wedge dX$, i.e., 
$F(x,y)$ and $G(X,Y)$ are rational functions of their arguments. Suppose that we have a time-dependent 
birational change of variables $\varphi$ that we can consider as a birational transformation
on the extended phase space $\mathbb{C}^{3}$,
\begin{equation*}
\varphi : \mathbb{C}^3 \ni (x,y,t) \mapsto \left(X(x,y,t),Y(x,y,t),t \right) \in \mathbb{C}^3.
\end{equation*}
Then, for each \emph{fixed} $t$, we have a birational transformation $\varphi_{t}:(x,y)\mapsto \left( X(x,y ; t),Y (x,y ; t) \right)$ 
and we say that $\varphi_{t}$ is \emph{symplectic} with respect to $\omega$ and $\tilde{\omega}$ if, under $\varphi_t$, we have 
\begin{equation}
\omega = F(x,y) dy \wedge dx = \varphi_{t}^{*}(\tilde{\omega}) =  G(X(x,y;t),Y(x,y,;t)) d_{t}Y \wedge d_{t}X,
\end{equation}
where $\varphi_{t}^{*}$ is the usual pull-back map and $d_{t}$ indicates the exterior derivative on $\mathbb{C}^2$, 
so $t$ is treated as a constant in the calculation. We usually
omit the pull-back symbol and simply write $\omega = \tilde{\omega}$. Suppose that we now have two time-dependent
Hamiltonian functions  $H(x,y,t)$, $K(X,Y,t)$. Then we can define $2$-forms 
$\Omega = \omega - dH \wedge dt$ and $\tilde{\Omega} = \tilde{\omega} - dK \wedge dt$ on the 
two copies of the extended phase space $\mathbb{C}^{3}$. Then, if under the transformation $\varphi$ we have the equality of $2$-forms 
\begin{equation} \label{2forms}
\Omega = F(x,y) dy \wedge dx - dH \wedge dt =  G(X,Y) dY \wedge dX - dK \wedge dt = \tilde{\Omega},
\end{equation}
where $d$ here is the exterior derivative on the extended phase space $\mathbb{C}^{3}$ and so $t$ is treated as a variable in this calculation,
then the Hamiltonian system 
\begin{equation*}
	\left\{
	\begin{aligned}
		F(x,y) \frac{dx}{dt} &= \frac{\partial H}{\partial y}, \\ 
		F(x,y) \frac{dy}{dt} &= - \frac{\partial H}{\partial x},
	\end{aligned}
	\right. \qquad\text{is transformed into}\qquad
	\left\{
	\begin{aligned}
		G(X,Y) \frac{dX}{dt} &= \frac{\partial K}{\partial Y}, \\ 
		G(X,Y) \frac{dY}{dt} &= - \frac{\partial K}{\partial X}.
	\end{aligned}
	\right.	
\end{equation*}
This equality of $2$-forms dictates the correction between the Hamiltonians modulo purely $t$-dependent functions, with
\begin{equation} \label{eq:correction}
d(H - K)\wedge dt = G(X,Y) \left( \frac{\partial X}{\partial t} \frac{\partial Y}{\partial y} - \frac{\partial Y}{\partial t} \frac{\partial Y}{\partial x} \right) dy \wedge dt + G(X,Y) \left( \frac{\partial X}{\partial t} \frac{\partial Y}{\partial x} - \frac{\partial Y}{\partial t} \frac{\partial X}{\partial x} \right) dx \wedge dt.
\end{equation}
Returning now to our example, note that the change of variables \eqref{eq:JMtoIP-4} is symplectic with respect to the symplectic forms 
$\symp{IP}$ and $\symp{JM}$, 
and so the equality of $2$-forms
\begin{equation*}
\Symp{IP} = dp \wedge dq - d \Ham{IP}{IV} \wedge dt = \frac{1}{y} dy \wedge dz - d \Ham{JM}{IV} \wedge dt = \Symp{JM}
\end{equation*}
gives
\begin{equation}
d( \Ham{IP}{IV} - \Ham{JM}{IV}) \wedge dt =  - \frac{1}{2} dy \wedge dt = \frac{1}{2} dq \wedge dt.
\end{equation}
Thus we get
\begin{equation*}
\Ham{JM}{IV}(y,z;t) = \Ham{IP}{IV}\left( q(y,z,t), p(y,z,t) ; t \right) - \frac{y}{2} \quad \text{and} \quad 
\Ham{IP}{IV}(q,p;t) =  \Ham{JM}{IV}\left( y(q,p,t), z(q,p,t) ; t \right) + \frac{q}{2}.
\end{equation*}
\begin{remark}
	Sometimes after such changes of variables in the Hamiltonian there may be additional terms present depending only on $t$ that,
	as far as the dynamics is concerned, can be simply ignored.  
	It is important to note, however, that if a transformation $\varphi$ is such that $\tilde{\omega} = (\varphi_t^{-1})^*(\omega)$ has explicit dependence on $t$ in the coefficient $F$ or $G$, then the above discussion does not apply and a Hamiltonian structure of a differential system will in general not be preserved by the transformation. This is because, given $H(x,y,t)$, the pair of partial differential equations that must be solved to find $K(X,Y,t)$ satisfying the equality of two-forms \eqref{2forms} will in general not be compatible, and the transformed system for $X$, $Y$ will not be Hamiltonian with respect to $G(X,Y,t) d_tX \wedge d_tY$.	
\end{remark}


\section{Reducing Hamiltonian Systems to the Canonical Form} 
\label{sec:P4-matching}
In this chapter we construct spaces of initial conditions for each of the Hamiltonian systems described in the Introduction.
We then follow the identification procedure, described below, to match the spaces of initial conditions to some reference case,
that we take to be the Okamoto Hamiltonian system \eqref{eq:Ok-Ham-4}, both on the level of the Picard lattice, and on the level of 
a birational change of coordinates.
\subsection{The identification procedure} 
\label{sub:ident-proc}
\begin{enumerate}[(Step 1)]
	\item \textbf{Construct the space of initial conditions for the system}. We carefully review such construction in Section~\ref{sub:Ok-soic-4}.
	We need to remark here that this step is potentially more involved than in the discrete case \cite{DzhFilSto:2020:RCDOPWHWDPE}, 
	since indeterminacies of the vector field do not necessarily constitute singularities which need to be resolved through blowups.	
	In particular one must identify when a singularity is inaccessible and should be removed along with the vertical leaves rather 
	than blown up, since there is the possibility that a system is indeterminate at a point, but it does not require blowing up 
	since there is not a family of local solutions passing through, but rather a single one. In such cases more detailed inspection 
	of solutions is required along the lines of classical Painlevé analysis, but we do not encounter such examples here --- 
	see \cite{DzhFilSto:2022:DERCSOPTRPEGA} for an example where this is required. 
	%
	%
	
	\item \textbf{Determine the surface type, according to Sakai's classification scheme.} Recall that almost all surfaces in the Sakai classification 
	scheme are obtained as blowups of $\mathbb{P}^{1}\times \mathbb{P}^{1}$ at eight base points (the only 
	exception of the $E^{(1)}_{8}$ surface corresponding to $\Pain{I}$ that is obtained as a blowup of $\mathbb{P}^{2}$ at nine points, and we do 
	not consider that special case). 
	These base points lie on a biquadratic curve on $\mathbb{P}^{1}\times \mathbb{P}^{1}$ that is a \emph{polar divisor} of the symplectic form
	$\omega$ defining the Hamiltonian structure. The surface type of the system is determined by the configuration of the irreducible components of 
	this curve. Each such component should have self-intersection index $-2$ and 
	is associated with a node of an \emph{affine Dynkin diagram}; nodes are connected when the corresponding components intersect. 
	The type of this Dynkin diagram is called the \emph{surface type} of the equation. 
	This description assumes that the surfaces $\mathcal{X}_{n}$ are \emph{minimal}, and this may not be the case in general. In that case 
	some of the $-1$ curves, that should be inaccessible divisors for the system, need to be blown down, as in example in 
	Sections~\ref{sub:IP-soic-4} and \ref{sub:Kek-soic-4}.
	\item \textbf{Find a preliminary change of basis of $\operatorname{Pic}(\mathcal{X})$.} 
	At this step, we only need to ensure that this change of basis 
	identifies the \emph{surface roots} (or nodes of the Dynkin diagrams of our surface) with the standard example. 
	\item \textbf{Adjust this change of basis using root variables and parameter matching.} 
	The matching obtained in the previous step is far from unique, since we have the whole extended affine Weyl group of 
	B\"acklund transformations acting on the Picard lattice preserving the symmetry sub-lattice. On the level of the equations, 
	B\"acklund transformations act on the parameters of the equation. It may happen that we already know how the parameters should match,
	e.g., by finding a reduction of our Hamiltonian system to the standard Painlev\'e equation. This parameter matching 
	would fix the B\"acklund transformation ambiguity, but it may require adjusting the initial change of basis. 
	For that, we compute the \emph{canonical parameters} for the surface,
	known as the \emph{root variables}, using the symmetry roots that we get from the  preliminary change of basis in (Step 3), in terms of 
	parameters of the system, and compare it with the one obtained from the root variables of the reference surface using that 
	a priori parameter matching. These two sets of root variables would either match or
	differ by some symmetry. The importance of the root variables is that they are compatible with the action of the symmetry group, so we can just read off the 
	expression of that symmetry in terms of the generators of the group. 
	Then we need to act by that symmetry transformation on our preliminary choice of basis to ensure
	that the final change of basis will result in matching Hamiltonian systems on the level of parameters as well. We illustrate this 
	step in the next section, see \eqref{eq:IP-rv-prelim}.
	\item \textbf{Find the change of variables reducing the applied problem to the standard example.} 
	At this point we have the identification between the two surfaces on the level of the basis change on the Picard lattice, and so  
	next we need to find the actual change of variables that induces
	that linear change of basis. For that, identify the curves that form the basis for the corresponding coordinate pencils. Those curves then are our 
	projective coordinates, up to a M\"obius transformation. To fix the M\"obius transformations, use the mapping of coordinate divisors. An important part
	of this computation is the identification of the parameters between the two systems.
\end{enumerate}

\subsection{The Its-Prokhorov Hamiltonian system} 
\label{sub:IP-soic-4}

\begin{notation*}
For the Its-Prokhorov system we use the following notation: coordinates $(q,p)$, parameters $\Theta_{0}$ and $\Theta_{\infty}$; time variable $s$;
base points $y_{i}$, exceptional divisors $K_{i}$.	
\end{notation*}

In this section we construct the space of initial conditions for the Hamiltonian system \eqref{eq:IP-Ham-4}. Note that, if we only 
look at the Hamiltonians, we can not see the correspondence between the Okamoto parameters $\{\theta_{\infty}, \kappa_{0}\}$ 
in \eqref{eq:IP-Ham-4} and the Its-Prokhorov parameters $\{\Theta_{0}, \Theta_{\infty}\}$ in \eqref{eq:IP-Ham-4}. However, since both 
equations reduce to the same standard form \eqref{eq:P4-std}  of $\Pain{IV}$ with
\begin{equation*}
	\alpha = 2 \Theta_{\infty} - 1 = 1 + 2 \theta_{\infty} - \kappa_{0},\qquad \beta = - 8 \Theta_{0}^{2} = - 2 \kappa_{0}^{2},
\end{equation*}
we see that the relationship between parameters is given by 
\begin{equation}\label{eq:pars-Ok2IP-4}
	\Theta_{0}^{2} = \frac{\kappa_{0}^{2}}{4},\qquad \Theta_{\infty} = \theta_{\infty} - \frac{\kappa_{0}}{2} + 1.
\end{equation}
The time variable $t$ does not have to be the same either, so for now we denote it by $s$, but from $\Pain{IV}$ we know that we can take $s=t$. 

The space of initial conditions for this system is constructed in the same way as in Section~\ref{sub:Ok-soic-4}.
However, this time we get \emph{ten} base points (and so the space of initial conditions is not minimal), 
\begin{equation}\label{eq:IP-pts}
	\begin{tikzpicture}
	\node (y1) at (0,0) {$y_{1}(0,\infty)$}; 
	\node (y2) at (2,0.7) {$y_{2}(-\Theta_{0},0)$}; \node (y3) at (2,-0.7) {$y_{3}(\Theta_{0},0)$}; 
	\draw[->] (y2)--(1,0)--(y1);	\draw[->] (y3)--(1,0)--(y1); 
	\node (y4) at (5,0) {$y_{4}(\infty,\infty)$};
	\node (y5) at (7,0.7) {$y_{5}(0,-4)$}; \node (y8) at (7,-0.7) {$y_{8}(0,4)$};
	\draw[->] (y5)--(6.5,0)--(y4); \draw[->] (y8)--(6.5,0)--(y4);
	\node (y6) at (9.5,0.7) {$y_{6}(0,8s)$}; \node (y9) at (9.5,-0.7) {$y_{9}(0,-8s)$};
	\draw[->] (y6)--(y5); \draw[->] (y9)--(y8);
	\node (y7) at (12.5,0.7) {$y_{7}(0,16(1-s^{2} - \Theta_{\infty}))$}; \node (y10) at (12.7,-0.7) {$y_{10}(0,16(s^{2} + \Theta_{\infty}))$};
	\draw[->] (y7)--(y6); \draw[->] (y10)--(y9);
	\end{tikzpicture}
\end{equation}
whose configuration is shown on Figure~\ref{fig:IP-soic-4} (left). After blowing them up and denoting the exceptional divisors
for the blowup points $y_{i}$ by $K_{i}$, we get the configuration of 
vertical leaves that includes two curves with self-intersection index $-3$, $K_{1} - K_{2} - K_{3}$ and $K_{4} - K_{5} - K_{8}$.
These points lie on the polar divisor of a symplectic form $\omega = k dq\wedge dp$.

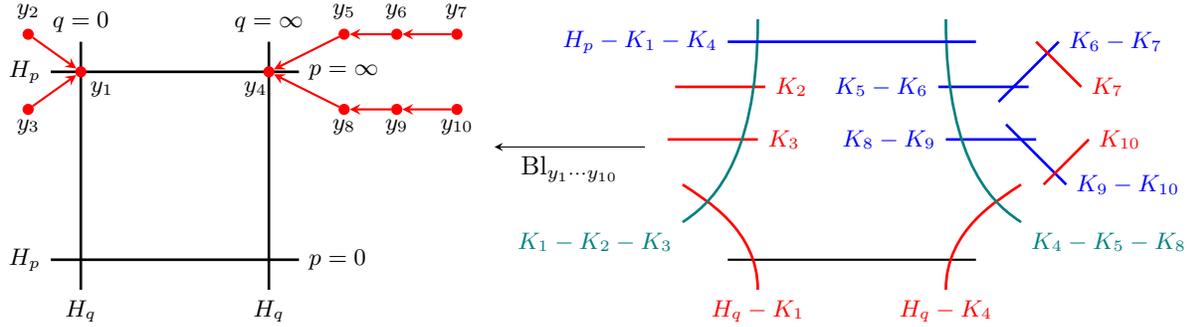
\begin{figure}[ht]
	\begin{tikzpicture}[>=stealth,basept/.style={circle, draw=red!100, fill=red!100, thick, inner sep=0pt,minimum size=1.2mm}]
	\begin{scope}[xshift=0cm,yshift=0cm]
	\draw [black, line width = 1pt] (-0.4,0) -- (2.9,0)	node [pos=0,left] {\small $H_{p}$} node [pos=1,right] {\small $p=0$};
	\draw [black, line width = 1pt] (-0.4,2.5) -- (2.9,2.5) node [pos=0,left] {\small $H_{p}$} node [pos=1,right] {\small $p=\infty$};
	\draw [black, line width = 1pt] (0,-0.4) -- (0,2.9) node [pos=0,below] {\small $H_{q}$} node [pos=1,above] {\small $q=0$};
	\draw [black, line width = 1pt] (2.5,-0.4) -- (2.5,2.9) node [pos=0,below] {\small $H_{q}$} node [pos=1,above] {\small $q=\infty$};
	\node (p1) at (0,2.5) [basept,label={[xshift = 8pt, yshift=-15pt] \small $y_{1}$}] {};
	\node (p2) at (-0.7,3) [basept,label={[yshift=0pt] \small $y_{2}$}] {};
	\node (p3) at (-0.7,2) [basept,label={[yshift=-15pt] \small $y_{3}$}] {};
	\node (p4) at (2.5,2.5) [basept,label={[xshift = -5pt, yshift=-15pt] \small $y_{4}$}] {};
	\node (p5) at (3.5,3) [basept,label={[yshift=0pt] \small $y_{5}$}] {};
	\node (p6) at (4.2,3) [basept,label={[yshift=0pt] \small $y_{6}$}] {};
	\node (p7) at (5.0,3) [basept,label={[yshift=0pt] \small $y_{7}$}] {};
	\node (p8) at (3.5,2) [basept,label={[yshift=-15pt] \small $y_{8}$}] {};
	\node (p9) at (4.2,2) [basept,label={[yshift=-15pt] \small $y_{9}$}] {};
	\node (p10) at (5.0,2) [basept,label={[yshift=-15pt] \small $y_{10}$}] {};
	\draw [red, line width = 0.8pt, ->] (p2) -- (p1);
	\draw [red, line width = 0.8pt, ->] (p3) -- (p1);
	\draw [red, line width = 0.8pt, ->] (p5) -- (p4);	
	\draw [red, line width = 0.8pt, ->] (p6) -- (p5);
	\draw [red, line width = 0.8pt, ->] (p7) -- (p6);
	\draw [red, line width = 0.8pt, ->] (p8) -- (p4);	
	\draw [red, line width = 0.8pt, ->] (p9) -- (p8);
	\draw [red, line width = 0.8pt, ->] (p10) -- (p9);
	\end{scope}
	\draw [->] (7.5,1.5)--(5.5,1.5) node[pos=0.5, below] {$\operatorname{Bl}_{y_{1}\cdots y_{10}}$};
	\begin{scope}[xshift=9cm,yshift=0cm]
	\draw [black, line width = 0.8pt] (-0.4,0) -- (2.9,0)	node [pos=0,left] {};	
	\draw [red, line width = 1pt] 		(-1,1) .. controls (-0.7,0.8) and (0,0.4) .. (0,-0.4) node[below] {\small $H_{q} - K_{1}$};
	\draw [red, line width = 1pt] 		(3.5,1) .. controls (3.2,0.8) and (2.5,0.4) .. (2.5,-0.4) node[below] {\small $H_{q} - K_{4}$};
	\draw [red, line width = 1pt] 		(-1.1,2.3) -- (0.1,2.3) node[right] {\small $K_{2}$};
	\draw [blue, line width = 1pt] 		(2.4,2.3) -- (3.6,2.3) node[left,pos=0] {\small $K_{5}-K_{6}$};
	\draw [blue, line width = 1pt] 		(3.2,2.1) -- (4.0,2.9) node[right,pos=1] {\small $K_{6}-K_{7}$};
	\draw [red, line width = 1pt] 		(3.7,2.9) -- (4.3,2.3) node[right,pos=1] {\small $K_{7}$};	
	\draw [red, line width = 1pt] 		(-1.2,1.6) -- (0.0,1.6) node[right] {\small $K_{3}$};
	\draw [blue, line width = 1pt] 		(2.5,1.6) -- (3.7,1.6) node[left,pos=0] {\small $K_{8}-K_{9}$};
	\draw [blue, line width = 1pt] 		(3.3,1.8) -- (4.1,1) node[right,pos=1] {\small $K_{9}-K_{10}$};
	\draw [red, line width = 1pt] 		(3.8,1.0) -- (4.4,1.6) node[right,pos=1] {\small $K_{10}$};
	\draw [teal, line width = 1pt] 	(-1,0.5) .. controls (-0.7,0.7) and (0,1.2) .. (0,3.2) node[pos=0,below left] {\small $K_{1} - K_{2} - K_{3}$};
	\draw [teal, line width = 1pt] 	(3.5,0.5) .. controls (3.2,0.7) and (2.5,1.2) .. (2.5,3.2) node[pos=0,below right] {\small $K_{4} - K_{5} - K_{8}$};
	\draw [blue, line width = 1pt] (-0.4,2.9) -- (2.9,2.9) node [pos=0,left] {\small $H_{p} - K_{1} - K_{4}$};
	\end{scope}
	\end{tikzpicture}
	\caption{The Space of Initial Conditions for the Its-Prokhorov Hamiltonian System \eqref{eq:IP-Ham-4}}
	\label{fig:IP-soic-4}
\end{figure}

To match this space with the space of initial conditions for the Okamoto system shown on Figure~\ref{fig:soic-Okamoto-4}, we need to blow down 
two $-1$ curves, and we should choose them among the curves intersecting with above $-3$ curves to reduce the index; to avoid going back
we should not choose the exceptional divisors. Thus, we choose the curves $H_{q}-K_{1}$ and $H_{q}-K_{4}$. Note that these curves are also 
\emph{inaccessible divisors}, and so blowing them down does not change the space of initial conditions. It is also more convenient  
instead of blowing down those two curves to blow up two points on the Okamoto surface. Looking at the intersection 
diagram \eqref{eq:d-roots-e61}	and  Figure~\ref{fig:soic-Okamoto-4}, we see that one point should be on the curve 
$d_{3}=F_{5} - F_{6}$ corresponding to the central node of the Dynkin diagram $E_{6}^{(1)}$, and the other one should be taken on 
a divisor corresponding to one of the boundary nodes. At this point we do not have enough information to make the right choice, so we just choose one of them, say 
$d_{1} = F_{1} - F_{2}$. Using the \emph{root variables} we can later see whether this choice is correct and make adjustments, if necessary.
The same applies to the matching of the remaining two legs of the $E_{6}^{(1)}$ diagram. Thus, we want to find an identification of 
two bases of the Picard lattice of two non-minimal surfaces
\begin{equation*}
	\operatorname{Pic}^{\mathrm{IP}}(\mathcal{X}) = \operatorname{Span}_{\mathbb{Z}}\{\mathcal{H}_{q},\mathcal{H}_{p},\mathcal{K}_{1},\ldots
	\mathcal{K}_{10}\}\simeq
	\operatorname{Pic}^{\mathrm{Ok}}_{\mathrm{ext}}(\mathcal{X}) = \operatorname{Span}_{\mathbb{Z}}\{\mathcal{H}_{f},\mathcal{H}_{g},\mathcal{F}_{1},\ldots
	\mathcal{F}_{10}\},
\end{equation*}
that matches the irreducible components of the anti-canonical divisor as follows:
\begin{equation}\label{eq:geom-IP-Ok-prelim-4}
	\begin{aligned}
		\delta_{0} &= \mathcal{F}_{7} - \mathcal{F}_{8} = \mathcal{K}_{9} - \mathcal{K}_{10}, &\qquad  
			\delta_{4} &= \mathcal{H}_{g} - \mathcal{F}_{3} - \mathcal{F}_{5} = \mathcal{K}_{3} - \mathcal{K}_{6}, \\
		\delta_{1} &= \mathcal{F}_{1} - \mathcal{F}_{2} - \mathcal{F}_{9} = \mathcal{K}_{1} - \mathcal{K}_{2} - \mathcal{K}_{3}, &\qquad  
			\delta_{5} &= \mathcal{F}_{3} - \mathcal{F}_{4} = \mathcal{K}_{6} - \mathcal{K}_{7}, \\
		\delta_{2} &= \mathcal{H}_{f} - \mathcal{F}_{1} - \mathcal{F}_{5} = \mathcal{H}_{p} - \mathcal{K}_{1} - \mathcal{K}_{4}, &\qquad  
			\delta_{6} &= \mathcal{F}_{6} - \mathcal{F}_{7} = \mathcal{K}_{8} - \mathcal{K}_{9}. \\
		\delta_{3} &= \mathcal{F}_{5} - \mathcal{F}_{6} - \mathcal{F}_{10} = \mathcal{K}_{4} - \mathcal{K}_{5} - \mathcal{K}_{8}, &\qquad  
	\end{aligned}
\end{equation}
We summarize such preliminary identification in the following Lemma.

\begin{lemma} A preliminary change of basis resulting in the identification \eqref{eq:geom-IP-Ok-prelim-4} is given by
	\begin{equation}\label{eq:basis-IP-Ok-prelim-4}
		\begin{aligned}
			\mathcal{H}_{f} &= \mathcal{H}_{p} + 2 \mathcal{H}_{q} - \mathcal{K}_{1} - \mathcal{K}_{3} - \mathcal{K}_{4} - \mathcal{K}_{5}, &\qquad 
				\mathcal{H}_{q} & = \mathcal{H}_{g},\\
			\mathcal{H}_{g} & = \mathcal{H}_{q}, &\qquad 	
				\mathcal{H}_{p} &= \mathcal{H}_{f} + 2\mathcal{H}_{g} - \mathcal{F}_{1} - \mathcal{F}_{5} - \mathcal{F}_{9} - \mathcal{F}_{10}, \\
			\mathcal{F}_{1}	&= \mathcal{H}_{q} - \mathcal{K}_{3}, &\qquad 
				\mathcal{K}_{1} &= \mathcal{H}_{g} - \mathcal{F}_{9},\\ 
			\mathcal{F}_{2}	&= \mathcal{K}_{2}, &\qquad 
				\mathcal{K}_{2} &= \mathcal{F}_{2},\\ 
			\mathcal{F}_{3}	&= \mathcal{K}_{6}, &\qquad 
				\mathcal{K}_{3} &= \mathcal{H}_{g} - \mathcal{F}_{1},\\ 
			\mathcal{F}_{4}	&= \mathcal{K}_{7}, &\qquad 
				\mathcal{K}_{4} &= \mathcal{H}_{g} - \mathcal{F}_{10},\\ 
			\mathcal{F}_{5}	&= \mathcal{H}_{q} - \mathcal{K}_{5}, &\qquad 
				\mathcal{K}_{5} &= \mathcal{H}_{g} - \mathcal{F}_{5},\\ 
			\mathcal{F}_{6}	&= \mathcal{K}_{8}, &\qquad 
				\mathcal{K}_{6} &= \mathcal{F}_{3},\\ 
			\mathcal{F}_{7}	&= \mathcal{K}_{9}, &\qquad 
				\mathcal{K}_{7} &= \mathcal{F}_{4},\\ 
			\mathcal{F}_{8}	&= \mathcal{K}_{10}, &\qquad 
				\mathcal{K}_{8} &= \mathcal{F}_{6},\\ 
			\mathcal{F}_{9}	&= \mathcal{H}_{q} - \mathcal{K}_{1}, &\qquad 
				\mathcal{K}_{9} &= \mathcal{F}_{7},\\ 
			\mathcal{F}_{10}	&= \mathcal{H}_{q} - \mathcal{K}_{4}, &\qquad 
				\mathcal{K}_{10} &=\mathcal{F}_{8}.
		\end{aligned}
	\end{equation}
	The corresponding symmetry roots then become 
	\begin{equation*}
		\begin{aligned}
			\alpha_{0} &= 2\mathcal{H}_{q} + \mathcal{H}_{p} - \mathcal{K}_{1} - \mathcal{K}_{3} - \mathcal{K}_{4}
				 - \mathcal{K}_{8} - \mathcal{K}_{9} - \mathcal{K}_{10},\\
			\alpha_{1} &= 2\mathcal{H}_{q} + \mathcal{H}_{p} - \mathcal{K}_{1} - \mathcal{K}_{3} - \mathcal{K}_{4}
				 - \mathcal{K}_{5} - \mathcal{K}_{6} - \mathcal{K}_{7},	\\
			\alpha_{2} &= \mathcal{K}_{3} - \mathcal{K}_{2},				 					
		\end{aligned}
	\end{equation*}
	and using the \emph{Period Map} for the symplectic form $\omega = k dq \wedge dp$ 
	we get the following root variables:
	\begin{equation*}
		a_{0}=-k(\Theta_{0} + \Theta_{\infty}),\quad a_{1} = -k(1 + \Theta_{0} - \Theta_{\infty}),\quad a_{2} = 2 k \Theta_{0}.
	\end{equation*}
	Imposing the normalization condition $a_{0} + a_{1} + a_{2} = 1$ gives $k=-1$, and so we get back the standard symplectic form
	$\symp{IP} = dp\wedge dq$.
\end{lemma}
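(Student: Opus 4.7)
The plan is to verify each assertion of the lemma by substitution into the data already established in Section~\ref{sub:Ok-soic-4}, Section~\ref{sub:surface-e61}, and the construction of the Its--Prokhorov space of initial conditions summarized in \eqref{eq:IP-pts} and Figure~\ref{fig:IP-soic-4}. First I would confirm that the two columns of \eqref{eq:basis-IP-Ok-prelim-4} define mutually inverse linear isomorphisms of the two extended Picard lattices of rank $12$: this is a mechanical check but guards against sign errors. Then I would substitute the right-hand side of \eqref{eq:basis-IP-Ok-prelim-4} into each surface root $\delta_j$ of the standard $E_6^{(1)}$ basis \eqref{eq:d-roots-e61} and simplify, matching it term-by-term to the Its--Prokhorov expression stated in \eqref{eq:geom-IP-Ok-prelim-4}. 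The crucial geometric point to emphasize is that the roots $\delta_1$ and $\delta_3$ become \emph{reducible} on the non-minimal surface, acquiring the auxiliary exceptional components $\mathcal{K}_3$ and $\mathcal{K}_8$; this is exactly the manifestation of the two $-1$-curves $\mathcal{H}_q-\mathcal{K}_1$ and $\mathcal{H}_q-\mathcal{K}_4$ that must be contracted (or equivalently, the two extra blowups $\mathcal{F}_9, \mathcal{F}_{10}$ on the Okamoto side) to reach a minimal model on both sides.

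The computation of the symmetry roots $\alpha_0, \alpha_1, \alpha_2$ in the Its--Prokhorov basis is then immediate: substitute \eqref{eq:basis-IP-Ok-prelim-4} into the defining expressions of \eqref{eq:a-roots-a21} and collect terms. For the root variables I would follow the procedure recalled just before Lemma~\ref{lem:period-map-a21}. The easiest case is $\alpha_2 = \mathcal{K}_3 - \mathcal{K}_2$: both exceptional divisors lie over base points on the proper transform of $\{p=\infty\}$, so the relevant component $d_k$ is known, and the residue of $\omega = k\,dq\wedge dp$ along it is $-k\,dq$, whose integration between the two $q$-coordinates $\pm\Theta_0$ of $y_2, y_3$ produces $a_2 = 2k\Theta_0$. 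The roots $\alpha_0$ and $\alpha_1$ each involve $2\mathcal{H}_q + \mathcal{H}_p$, so writing them as differences of effective divisors requires choosing a suitable pencil; the natural choice is lines through appropriate base points, after which the residue is computed on one of the exceptional components of the $y_4$-cascade or the $y_1$-cascade.

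The main technical obstacle will be the explicit residue evaluation for $\alpha_0$ and $\alpha_1$, because the polar divisor of $\omega$ on the infinity side involves the reducible components produced by the long cascades over $y_4$, and one has to pass to the appropriate blowup charts (coordinates $(U,V)$ of Section~\ref{sub:Ok-soic-4}) to identify the correct local parameter and extract $-k(\Theta_0+\Theta_\infty)$ and $-k(1+\Theta_0-\Theta_\infty)$ respectively. A useful consistency check is that the linearity of the period map forces $\chi(\alpha_0) + \chi(\alpha_1) + \chi(\alpha_2) = \chi(\delta)$, which reduces one of the two harder residue computations to bookkeeping once the other is known. Finally, the normalization step is immediate: the sum of the three root variables equals $k\bigl[-( \Theta_0+\Theta_\infty) - (1+\Theta_0-\Theta_\infty) + 2\Theta_0\bigr] = -k$, so the condition $a_0+a_1+a_2 = 1$ yields $k=-1$ and hence $\omega^{\mathrm{IP}} = dp \wedge dq$, completing the proof.
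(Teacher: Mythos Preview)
The paper does not supply a proof for this lemma; it is stated as a computational result whose verification is left to the reader. Your plan---check that the two columns of \eqref{eq:basis-IP-Ok-prelim-4} are mutually inverse, substitute into the $\delta_j$ and $\alpha_j$ of \eqref{eq:d-roots-e61}--\eqref{eq:a-roots-a21}, then evaluate the period map---is exactly the right approach and is essentially what the paper expects.

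There is one imprecision worth correcting in your treatment of $\alpha_2 = \mathcal{K}_3 - \mathcal{K}_2$. The base points $y_2, y_3$ are \emph{infinitely near} $y_1(0,\infty)$, so the exceptional divisors $K_2, K_3$ intersect the component $d_k = K_1 - K_2 - K_3$ (the $-3$-curve playing the role of $\delta_1$), not the proper transform of $\{p=\infty\}$. In the blowup chart $(U_1,V_1)$ at $y_1$ with $q = U_1 V_1$, $P = 1/p = V_1$, the curve $d_k$ is $\{V_1 = 0\}$ and is parameterized by $U_1 = qp$, \emph{not} by $q$; the points $y_2, y_3$ sit at $U_1 = \mp\Theta_0$. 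One has $\omega = k\,dq\wedge dp = k\,V_1^{-1}\,dV_1\wedge dU_1$, whose residue along $V_1=0$ is $k\,dU_1$, and integrating from $-\Theta_0$ to $\Theta_0$ yields $a_2 = 2k\Theta_0$. The statement that the residue is ``$-k\,dq$'' and that $\pm\Theta_0$ are ``$q$-coordinates'' is not correct, though your final answer is. The same care with blowup charts is needed for $\alpha_0$ and $\alpha_1$, where the relevant components lie in the cascades over $y_4$; your consistency check via $\chi(\delta)$ is a good safeguard there.
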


Now we can use the root variables to check whether we got the correct bases identification. We know the correspondence 
between the Okamoto parameters $\kappa_{0}$ and $\theta_{\infty}$, Painlev\'e parameters $\alpha$, $\beta$, and root variables $a_{i}$ from 
\eqref{eq:rv-Ok2Std-4} and \eqref{eq:rv-Ok2KNY-4}. We also 
have the correspondence between the Its-Prokhorov parameters $\Theta_{0}$ and $\Theta_{\infty}$ 
and Painlev\'e parameters $\alpha$, $\beta$ from \eqref{eq:IP-pars-4}, and between  $\Theta_{0}$, $\Theta_{\infty}$ and 
$\kappa_{0}$ and $\theta_{\infty}$ from \eqref{eq:pars-Ok2IP-4}. In particular, we have $\Theta_{0}^{2} = \frac{\kappa_{0}^{2}}{4}$, and so we 
need to choose the correct sign, which turns out to be a bit delicate, since the Its-Prokhorov system only depends on $\Theta^{2}$. However, 
the same parameters are also used in the Jimbo-Miwa system \eqref{eq:JM-Ham-4} considered in the next section, and from that parameter matching
we see that we need to take $\kappa_{0} = 2 \Theta_{0}$. 
Then we get
\begin{equation}\label{eq:IP-rv-prelim}
	\begin{aligned}
			a_{0}^{\mathrm{IP}}&= \Theta_{0} + \Theta_{\infty}  = 1 + \theta_{\infty}  
				 = a_{0}^{\mathrm{Ok}} + a_{1}^{\mathrm{Ok}},\\
			a_{1}^{\mathrm{IP}}&= 1+ \Theta_{0} - \Theta_{\infty} = \kappa_{0} - \theta_{\infty}
				=  a_{1}^{\mathrm{Ok}}+a_{2}^\mathrm{Ok},\\
			a_{2}^{\mathrm{IP}}&= -2 \Theta_{0} = - \kappa_{0} = -a_{1}^\mathrm{Ok},			
	\end{aligned}	
\end{equation}
where we chose $\sqrt{\kappa_{0}^{2}} = - \kappa_{0}$. Thus, we see that our root variables (and hence the bases identification) 
differ by a composition of a reflection $w_{2}$ and an automorphism $\sigma_{1}$ described in Theorem~\ref{thm:bir-weyl-a21}. 

\begin{remark} We are off by a sign in the action of $\sigma_{1}$ on $a_{i}$, but that's due to some normalization choices and can be ignored
	at this point. 
\end{remark}

Acting by $\sigma_{1}\circ w_{2}=w_{\mathcal{H}_{f}-\mathcal{H}_{g}}w_{\mathcal{F}_{1} - \mathcal{F}_{3}}w_{\mathcal{F}_{2} - \mathcal{F}_{4}}
w_{\mathcal{H}_{g} - \mathcal{F}_{1} - \mathcal{F}_{2}}$ 
on the identification \eqref{eq:basis-IP-Ok-prelim-4} we arrive at the final bases identification.

\begin{lemma}\label{lem:IP-to-Ok-4} The change of bases for Picard lattices between the Its-Prokhorov and the Okamoto 
	(with two additional blowup points) surfaces is given by 
	\begin{equation}\label{eq:basis-IP-Ok}
		\begin{aligned}
			\mathcal{H}_{f} &= \mathcal{H}_{q}, &\qquad 
				\mathcal{H}_{q} & = \mathcal{H}_{f},\\
			\mathcal{H}_{g} & = 2 \mathcal{H}_{q} + \mathcal{H}_{p} - \mathcal{K}_{1} - \mathcal{K}_{2} - \mathcal{K}_{4} - \mathcal{K}_{5},  &\qquad 	
				\mathcal{H}_{p} &= 2\mathcal{H}_{f} + \mathcal{H}_{g} - \mathcal{F}_{3} - \mathcal{F}_{5} - \mathcal{F}_{9} - \mathcal{F}_{10}, \\
			\mathcal{F}_{1}	&= \mathcal{K}_{6}, &\qquad 
				\mathcal{K}_{1} &= \mathcal{H}_{f} - \mathcal{F}_{9},\\ 
			\mathcal{F}_{2}	&= \mathcal{K}_{7}, &\qquad 
				\mathcal{K}_{2} &= \mathcal{H}_{f} - \mathcal{F}_{3},\\ 
			\mathcal{F}_{3}	&= \mathcal{H}_{q} - \mathcal{K}_{2}, &\qquad 
				\mathcal{K}_{3} &= \mathcal{F}_{4},\\ 
			\mathcal{F}_{4}	&= \mathcal{K}_{3}, &\qquad 
				\mathcal{K}_{4} &= \mathcal{H}_{f} - \mathcal{F}_{10},\\ 
			\mathcal{F}_{5}	&= \mathcal{H}_{q} - \mathcal{K}_{5}, &\qquad 
				\mathcal{K}_{5} &= \mathcal{H}_{f} - \mathcal{F}_{5},\\ 
			\mathcal{F}_{6}	&= \mathcal{K}_{8}, &\qquad 
				\mathcal{K}_{6} &= \mathcal{F}_{1},\\ 
			\mathcal{F}_{7}	&= \mathcal{K}_{9}, &\qquad 
				\mathcal{K}_{7} &= \mathcal{F}_{2},\\ 
			\mathcal{F}_{8}	&= \mathcal{K}_{10}, &\qquad 
				\mathcal{K}_{8} &= \mathcal{F}_{6},\\ 
			\mathcal{F}_{9}	&= \mathcal{H}_{q} - \mathcal{K}_{1}, &\qquad 
				\mathcal{K}_{9} &= \mathcal{F}_{7},\\ 
			\mathcal{F}_{10}	&= \mathcal{H}_{q} - \mathcal{K}_{4}, &\qquad 
				\mathcal{K}_{10} &=\mathcal{F}_{8}.
		\end{aligned}
	\end{equation}
	This results in the following correspondences between the surface roots (note that we need to move an additional blowup point from 
	the divisor $d_{1}$ to the divisor $d_{5}=F_{3} - F_{4}$),
	\begin{equation}\label{eq:geom-IP-Ok}
	\begin{aligned}
		\delta_{0} &= \mathcal{F}_{7} - \mathcal{F}_{8} = \mathcal{K}_{9} - \mathcal{K}_{10}, &\qquad  
			\delta_{4} &= \mathcal{H}_{g} - \mathcal{F}_{3} - \mathcal{F}_{5} = \mathcal{H}_{p} - \mathcal{K}_{1} - \mathcal{K}_{4}, \\
		\delta_{1} &= \mathcal{F}_{1} - \mathcal{F}_{2} = \mathcal{K}_{6} - \mathcal{K}_{7}, &\qquad  
			\delta_{5} &= \mathcal{F}_{3} - \mathcal{F}_{4} - \mathcal{F}_{9} = \mathcal{K}_{1} - \mathcal{K}_{2} - \mathcal{K}_{3}, \\
		\delta_{2} &= \mathcal{H}_{f} - \mathcal{F}_{1} - \mathcal{F}_{5} = \mathcal{K}_{5} - \mathcal{K}_{6}, &\qquad  
			\delta_{6} &= \mathcal{F}_{6} - \mathcal{F}_{7} = \mathcal{K}_{8} - \mathcal{K}_{9}; \\
		\delta_{3} &= \mathcal{F}_{5} - \mathcal{F}_{6} - \mathcal{F}_{10} = \mathcal{K}_{4} - \mathcal{K}_{5} - \mathcal{K}_{8}, &\qquad  
	\end{aligned}
	\end{equation}
	and the symmetry roots,
	\begin{equation}\label{eq:sym-IP-Ok-4}
	\begin{aligned}
			\alpha_{0} &= \mathcal{H}_{f} + \mathcal{H}_{g} - \mathcal{F}_{5} - \mathcal{F}_{6} - \mathcal{F}_{7} - \mathcal{F}_{8}
			= 2 \mathcal{H}_{q} + \mathcal{H}_{p} - \mathcal{K}_{1} - \mathcal{K}_{2} - \mathcal{K}_{4} - \mathcal{K}_{8} - \mathcal{K}_{9} - \mathcal{K}_{10}, \\
			\alpha_{1} &= \mathcal{H}_{f} - \mathcal{F}_{3} - \mathcal{F}_{4} =  \mathcal{K}_{2} - \mathcal{K}_{3}, \\
			\alpha_{2} &= \mathcal{H}_{g} - \mathcal{F}_{1} - \mathcal{F}_{2} = 
			2 \mathcal{H}_{q} + \mathcal{H}_{p} - \mathcal{K}_{1} - \mathcal{K}_{2} - \mathcal{K}_{4} - \mathcal{K}_{5} - \mathcal{K}_{6} - \mathcal{K}_{7}.
		\end{aligned}
	\end{equation}
	The symplectic form is the standard one, $\symp{IP} = dp\wedge dq$, and the root variables match, 
	\begin{equation*}
		a_{0}^{\mathrm{IP}} = \Theta_{\infty} - \Theta_{0} = 1 + \theta_{\infty} - \kappa_{0} = a_{0}^{\mathrm{Ok}},\quad 
		a_{1}^{\mathrm{IP}} =  2 \Theta_{0} = 2 \kappa_{0} = a_{1}^\mathrm{Ok},\quad 
		a_{2}^{\mathrm{IP}} = 1 - \Theta_{0} - \Theta_{\infty} = - \theta_{\infty} = a_{2}^\mathrm{Ok}.				
	\end{equation*}
\end{lemma}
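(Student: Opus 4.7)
The plan is to start from the preliminary identification \eqref{eq:basis-IP-Ok-prelim-4} and correct it by applying the composition of reflections $\sigma_{1}\circ w_{2}$ that was identified from the mismatch \eqref{eq:IP-rv-prelim} between the two sets of root variables. Concretely, I would decompose this composition into the four elementary reflections $w_{\mathcal{H}_{f}-\mathcal{H}_{g}}$, $w_{\mathcal{F}_{1} - \mathcal{F}_{3}}$, $w_{\mathcal{F}_{2} - \mathcal{F}_{4}}$, $w_{\mathcal{H}_{g} - \mathcal{F}_{1} - \mathcal{F}_{2}}$, and apply each of them in turn to the right-hand sides of the preliminary identification using the reflection formula \eqref{eq:root-refl}. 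This is a purely mechanical computation on $\operatorname{Pic}(\mathcal{X})$; since each reflection is an isometry of the intersection form preserving the canonical class, invertibility of the resulting change of basis and preservation of all intersection numbers are automatic.

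Next, I would verify that the transformed basis satisfies the asserted identifications \eqref{eq:geom-IP-Ok} and \eqref{eq:sym-IP-Ok-4}. The surface-root identification is checked by substituting \eqref{eq:basis-IP-Ok} into the definitions of the $\delta_i$ from \eqref{eq:d-roots-e61} (read in the Okamoto basis) and comparing with the realization in the Its--Prokhorov basis read off from the configuration in Figure~\ref{fig:IP-soic-4}. Note in particular that one of the extra blowup points has effectively moved from $d_{1}$ to $d_{5}$, so I should explicitly note that the preliminary choice of which boundary node to attach the additional $-1$ curve to has been corrected by the Weyl group action, and that this is precisely what causes the correct root-variable matching. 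The symmetry-root identifications \eqref{eq:sym-IP-Ok-4} are then obtained directly from $\alpha_i$ as given in \eqref{eq:a-roots-a21}.

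For the symplectic form and root variables, I would apply the period map procedure of Lemma~\ref{lem:period-map-a21} in the Its--Prokhorov picture with $\omega = k\,dq\wedge dp$: represent each $\alpha_i$ from \eqref{eq:sym-IP-Ok-4} as a difference of effective divisors, find the unique component $d_k$ of $-K_{\mathcal{X}}$ meeting both, and compute the residue integrals using the explicit coordinates of the base points \eqref{eq:IP-pts}. Imposing the normalization $a_{0}+a_{1}+a_{2}=1$ fixes $k=-1$, giving $\symp{IP}=dp\wedge dq$. Substituting the parameter relations \eqref{eq:pars-Ok2IP-4} with the sign choice $\kappa_{0}=2\Theta_{0}$ (dictated by the Jimbo--Miwa matching in the next section) then yields the claimed equalities $a_i^{\mathrm{IP}}=a_i^{\mathrm{Ok}}$.

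The main obstacle is not conceptual but bookkeeping: tracking the effect of a four-fold composition of reflections on eleven basis elements without arithmetic error, and in particular making sure that the reordering of the additional exceptional divisors $\mathcal{F}_{9}$, $\mathcal{F}_{10}$ (attached to the $d_{5}$ and $d_{3}$ legs) is handled consistently with the residue computation for the root variables. A secondary subtlety is the sign ambiguity $\sqrt{\kappa_{0}^{2}}=\pm\kappa_{0}$ in \eqref{eq:pars-Ok2IP-4}, which is harmless for the Its--Prokhorov equation itself but must be pinned down here; I would simply record the choice and defer its justification to the Jimbo--Miwa matching in Lemma~\ref{lem:JMtoOk-coords-4}.
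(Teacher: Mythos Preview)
Your proposal is correct and follows essentially the same approach as the paper: the paper derives \eqref{eq:basis-IP-Ok} precisely by acting with $\sigma_{1}\circ w_{2}=w_{\mathcal{H}_{f}-\mathcal{H}_{g}}w_{\mathcal{F}_{1} - \mathcal{F}_{3}}w_{\mathcal{F}_{2} - \mathcal{F}_{4}}w_{\mathcal{H}_{g} - \mathcal{F}_{1} - \mathcal{F}_{2}}$ on the preliminary identification \eqref{eq:basis-IP-Ok-prelim-4}, and the remaining assertions are direct verifications. Your remarks about the bookkeeping of the extra exceptional divisors and the sign choice $\kappa_{0}=2\Theta_{0}$ are exactly the subtleties the paper flags as well.
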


It remains to find the birational change of variables that corresponds to this change of bases of the Picard lattice. It is given in the following Lemma.

\begin{lemma}\label{lem:coords-IP-Ok-4} The change of coordinates and parameter matching between the Its-Prokhorov and Okamoto Hamiltonian systems is given by
    \begin{equation*}
   	 \left\{\begin{aligned}
   	 	q(f,g,t)&=f,\\
   		p(f,g,t)&= g - \frac{f}{4} - \frac{t}{2} - \frac{\kappa_{0}}{2f},\\
		\Theta_{0}&=\frac{\kappa_{0}}{2},\\  
		\Theta_{\infty} &= 1 + \theta_{\infty} - \frac{\kappa_{0}}{2},
   	 \end{aligned}\right.
    \qquad\text{and conversely} \qquad 
    	\left\{\begin{aligned}
   	 	f(q,p,t)&=q,\\
   		g(q,p,t)&=p + \frac{q}{4} + \frac{t}{2} + \frac{\Theta_{0}}{q},\\
		\kappa_{0} &= 2\Theta_{0},\\  
		\theta_{\infty} &= -1 + \Theta_{0} + \Theta_{\infty}.
    	\end{aligned}\right.
    \end{equation*}
\end{lemma}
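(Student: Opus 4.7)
The plan is to extract the change of coordinates directly from the Picard lattice identification \eqref{eq:basis-IP-Ok} established in the preceding lemma, using Step 5 of the identification procedure in Section~\ref{sub:ident-proc}. The key input is that the coordinate pencils on the two sides are now identified: $\mathcal{H}_f = \mathcal{H}_q$ and $\mathcal{H}_g = 2\mathcal{H}_q + \mathcal{H}_p - \mathcal{K}_{1} - \mathcal{K}_{2} - \mathcal{K}_{4} - \mathcal{K}_{5}$. Consequently each Okamoto coordinate, viewed as a rational function on the Its--Prokhorov surface, is determined by its divisor class up to a M\"obius transformation in the target, and the residual M\"obius freedom is fixed by matching concrete coordinate divisors and base-point locations.

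First I would read off $f$: since $\mathcal{H}_f = \mathcal{H}_q$, the two pencils coincide, so $f$ is a M\"obius function of $q$ alone. Comparing how the two coordinate lines $\{f = 0\}$ and $\{f = \infty\}$ correspond under \eqref{eq:basis-IP-Ok} to $\{q = 0\}$ and $\{q = \infty\}$ respectively (via the locations of the base points $y_i$ and $q_j$ that sit on these divisors, using \eqref{eq:basept-Okamoto-4} and \eqref{eq:IP-pts}), one checks immediately that the correct normalization is $f = q$. For $g$, the class $2\mathcal{H}_q + \mathcal{H}_p - \mathcal{K}_{1} - \mathcal{K}_{2} - \mathcal{K}_{4} - \mathcal{K}_{5}$ says that a level set $\{g = c\}$, pulled back to the $(q,p)$-chart, is a bidegree $(2,1)$ curve passing simply through $y_1(0,\infty)$, $y_2(-\Theta_0, 0)$, $y_4(\infty,\infty)$ and the point $y_5$ on the exceptional divisor over $y_4$. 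The most general such rational function is
\[
g(q,p,t) \;=\; \frac{A_2(q,t)\, p + B_2(q,t)}{A_1(q,t)\, p + B_1(q,t)},
\]
with $A_i, B_i$ polynomial in $q$ of controlled degree. Imposing the four base-point conditions (and absorbing an overall additive and multiplicative constant in the target) produces
\[
g \;=\; p + \frac{q}{4} + \frac{t}{2} + \frac{\Theta_{0}}{q},
\]
which inverts to $p = g - f/4 - t/2 - \kappa_0/(2f)$ once one sets $\kappa_0 = 2 \Theta_0$.

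The parameter identification $\kappa_0 = 2\Theta_0$ and $\theta_\infty = -1 + \Theta_0 + \Theta_\infty$ is not an extra input: it is forced by the equality of root variables $a_i^{\mathrm{IP}} = a_i^{\mathrm{Ok}}$ already obtained at the end of Lemma~\ref{lem:IP-to-Ok-4}, including the sign choice $\sqrt{\kappa_0^2} = -\kappa_0$ fixed in the discussion before \eqref{eq:IP-rv-prelim}. The main obstacle, and the part that will require the most bookkeeping, is verifying that this candidate transformation correctly identifies not only the coordinate divisors but also all ten exceptional divisors on each side --- in particular the nested cascades rooted at $y_4$ and at $q_5$, after accounting for the two extra blowups that make the extended Okamoto surface non-minimal. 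This is carried out chart by chart: one introduces the local blowup coordinates at each base point and checks that the images of the $q_j$ under the candidate map agree with the locations dictated by \eqref{eq:basept-Okamoto-4}, using the parameter correspondence above. As a final sanity check one substitutes into the Its--Prokhorov system \eqref{eq:IP-system-4-1}--\eqref{eq:IP-system-4-2} and recovers the Okamoto system \eqref{eq:Ok-Ham-4}, with the discrepancy in the Hamiltonian accounted for by the pull-back of $2$-forms as in Theorem~\ref{thm:IPtoOK-4}.
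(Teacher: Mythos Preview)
Your approach is essentially the paper's: read off the coordinate pencils from \eqref{eq:basis-IP-Ok}, write the general $(2,1)$-curve through $y_1,y_2,y_4,y_5$ for the $g$-pencil, and fix the remaining M\"obius freedom by matching divisors and base-point coordinates. The only organizational difference is that in the paper the M\"obius parameters for both $f$ and $g$ (and the time-matching $s=t$) are \emph{derived} step by step from specific curve mappings --- e.g.\ $\varphi_*(\mathcal{H}_p - \mathcal{K}_1 - \mathcal{K}_4) = \mathcal{H}_g - \mathcal{F}_3 - \mathcal{F}_5$ forces $N=0$, and the scalings come from pushing forward $\mathcal{K}_9-\mathcal{K}_{10}$, $\mathcal{K}_6-\mathcal{K}_7$, $\mathcal{K}_1-\mathcal{K}_2-\mathcal{K}_3$, $\mathcal{K}_{10}$ --- rather than guessed and then verified; note in particular that the four base conditions alone only cut out the pencil, not its parameterization, so your ``imposing the four base-point conditions \ldots\ produces'' is a little premature.
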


\begin{proof} The proof here is by now a standard computation, see detailed examples in \cite{DzhTak:2018:SASGTDPE,DzhFilSto:2020:RCDOPWHWDPE}, but to 
	make this paper self-contained, we briefly outline the argument here as well. From the change of basis we see that the coordinate classes are  
	\begin{equation*}
		\mathcal{H}_{f} = \mathcal{H}_{q},\qquad 
		\mathcal{H}_{g} = 2 \mathcal{H}_{q} + \mathcal{H}_{p} - \mathcal{K}_{1} - \mathcal{K}_{2} - \mathcal{K}_{4} - \mathcal{K}_{5}.
	\end{equation*}
	Thus, up to a M\"obius transformation, the coordinates $f$ and $q$ coinside, 
	\begin{equation*}
		f = \frac{A q + B}{C q + D},
	\end{equation*}
	where $A,\ldots D$ are the parameters of the M\"obius transformation, defined up to a common multiple, that we still need to find. The $g$-coordinate
	is more interesting --- it is a (projective) coordinate on a pencil of $(2,1)$-curves passing through the points $y_{1}$, $y_{2}$, $y_{4}$, and $y_{5}$.
	A generic $(2,1)$-curve, written in the affine $(q,p)$ chart, has the equation 
	\begin{align*}
		a_{21}q^{2}p + a_{20}q^{2} + a_{11} qp + a_{10}q + a_{01}p + a_{00} &= 0,\\
		\intertext{and imposing the conditions given by points $y_{1}$, $y_{2}$, $y_{4}$, $y_{5}$ reduces this equation to}
		a_{20}(q^{2} + 4 qp + 4 \Theta_{0}) + a_{10}q = 0.
	\end{align*}
	Thus, equations $q=0$ and $q^{2} + 4 qp + 4 \Theta_{0}=0$ define two basis curves in this pencil, and so 
	\begin{equation*}
		g(q,p) = \frac{K q + L  (q^{2} + 4 qp + 4 \Theta_{0})}{M q + N  (q^{2} + 4 qp + 4 \Theta_{0})},
	\end{equation*}
	where $K,L,M,N$ are again some paramaters to be determined. Let now $\varphi: (q,p)\to (f,g)$ be our change of variables, and consider the 
	induced forward mapping on the (unique irreducible) divisors corresponding to classes of $-3$, $-2$, and $-1$ curves. For example, 
	$\varphi_{*}(\mathcal{H}_{p}-\mathcal{K}_{1}-\mathcal{K}_{4}) = \mathcal{H}_{g}-\mathcal{F}_{3}-\mathcal{F}_{5}$ means that the 
	$-2$ curve $H_{p} - K_{1} - K_{4}$, whose projection down to $\mathbb{P}^{1} \times \mathbb{P}^{1}$ is  given in the $(q,P)$-chart by 
	the equation $P=0$ (and parameterized by $q$) should map on the $-2$ curve $H_{g}-F_{3}-F_{5}$ whose projection in the $(f,G)$-chart
	is given by the equation $G=0$ (and parameterized by $g$). Thus,
	\begin{equation*}
		(f,G)(q,P=0) = \left(\frac{A q + B}{Cq + D},\frac{P(Mq + N (q^{2}  + 4 \Theta_{0})) + 4 N q}{P(Kq + L(q^{2} + 4 \Theta_{0})) + 4 Lq}\right){\Big|_{P=0}}
		= \left(\frac{A q + B}{Cq + D},\frac{N}{L}\right)
	\end{equation*}
	implies that $N=0$ (and hence $M\neq 0$ and we can take $M=1$). So  $g(q,p) = K + L \left(q + 4 p + \frac{4 \Theta_{0} }{q}\right)$. Similarly,	
	$\varphi_{*}(\mathcal{K}_{9} - \mathcal{K}_{10}) = \mathcal{F}_{7} - \mathcal{F}_{8}$ means that the $-2$-curve $K_{9}-K_{10}$, given in the 
	$(u_{9},v_{9})$ chart in the domain by the equation $u_{9}=0$ and parameterized by $v_{9}$, should collapse on the $q_{5}\leftarrow q_{6}\leftarrow q_{7}$
	cascade. This results in $C=0$, $D=1$, $A = 4L$, $B = 2(K+ 2 L s - t)$ and so $f(q,p) = 2(K + 2L(q + s) - t)$. Similar computations for 
	$\varphi_{*}(\mathcal{K}_{6}-\mathcal{K}_{7}) = \mathcal{F}_{1}-\mathcal{F}_{2}$ imply that $K = 2 L s$, 
	and $\varphi_{*}(\mathcal{K}_{1} - \mathcal{K}_{2} - \mathcal{K}_{3}) = \mathcal{F}_{3} - \mathcal{F}_{4} - \mathcal{F}_{9}$ gives $L = t/(4s)$. Finally,
	$\varphi_{*}(\mathcal{K}_{10}) = \mathcal{K}_{8}$ gives $s = \pm t$, as expected, and we take $s = t$ (we also expect this from the Painlev\'e equation)
	to get the final change of variables.	The inverse change of variables is then immediate.
\end{proof}


\subsection{The Jimbo-Miwa Hamiltonian system} 
\label{sub:JM-soic-4}

\begin{notation*}
For the Jimbo-Miwa system we use the following notation: coordinates $(y,z)$, parameters $\Theta_{0}$ and $\Theta_{\infty}$ (same as in 
Section~\ref{sub:IP-soic-4}); time variable $s$;
base points $x_{i}$, exceptional divisors $M_{i}$.	
\end{notation*}

Consider now the system \eqref{eq:JM-Ham-4}. It has eight base points, 
\begin{align*}
&x_{1}(0,0),\quad x_{2}(0,2\Theta_{0}),\quad x_{3}(\infty,\Theta_{0} + \Theta_{\infty}),\quad \\
&x_{4}(\infty,\infty)\leftarrow x_{5}(u_{4}=0,v_{4}=0)\leftarrow x_{6}(0,2)\leftarrow x_{7}(0,-4s)
\leftarrow x_{8}(0,4(-1 + 2s^{2}- \Theta_{0} + \Theta_{\infty}))	
\end{align*}
that lie on the polar divisor of a symplectic form $\omega = k \frac{dy\wedge dz}{y}$. 
Blowing them up we obtain the space of initial conditions for the Jimbo-Miwa system shown on Figure~\ref{fig:soic-JM}.
Proceeding in the same way as in Section~\ref{sub:JM-soic-4}, but omitting the preliminary 
basis identification, we get the following result.

\begin{figure}[ht]
	\begin{tikzpicture}[>=stealth,basept/.style={circle, draw=red!100, fill=red!100, thick, inner sep=0pt,minimum size=1.2mm}]
	\begin{scope}[xshift=0cm,yshift=0cm]
	\draw [black, line width = 1pt] (-0.4,0) -- (2.9,0)	node [pos=0,left] {\small $H_{z}$} node [pos=1,right] {\small $z=0$};
	\draw [black, line width = 1pt] (-0.4,2.5) -- (2.9,2.5) node [pos=0,left] {\small $H_{z}$} node [pos=1,right] {\small $z=\infty$};
	\draw [black, line width = 1pt] (0,-0.4) -- (0,2.9) node [pos=0,below] {\small $H_{y}$} node [pos=1,above] {\small $y=0$};
	\draw [black, line width = 1pt] (2.5,-0.4) -- (2.5,2.9) node [pos=0,below] {\small $H_{y}$} node [pos=1,above] {\small $y=\infty$};
	\node (p1) at (0,0) [basept,label={[xshift = 8pt, yshift=-15pt] \small $x_{1}$}] {};
	\node (p2) at (0,1.7) [basept,label={[xshift = -8pt,yshift=-10pt] \small $x_{2}$}] {};
	\node (p3) at (2.5,1) [basept,label={[xshift = 8pt,yshift=-10pt] \small $x_{3}$}] {};
	\node (p4) at (2.5,2.5) [basept,label={[xshift = -5pt, yshift=-15pt] \small $x_{4}$}] {};
	\node (p5) at (3.5,3) [basept,label={[yshift=0pt] \small $x_{5}$}] {};
	\node (p6) at (4.2,3) [basept,label={[yshift=0pt] \small $x_{6}$}] {};
	\node (p7) at (4.9,3) [basept,label={[yshift=0pt] \small $x_{7}$}] {};
	\node (p8) at (5.6,3) [basept,label={[yshift=0pt] \small $x_{8}$}] {};
	\draw [red, line width = 0.8pt, ->] (p5) -- (2.8,2.5) -- (p4);	
	\draw [red, line width = 0.8pt, ->] (p6) -- (p5);
	\draw [red, line width = 0.8pt, ->] (p7) -- (p6);
	\draw [red, line width = 0.8pt, ->] (p8) -- (p7);	
	\end{scope}
	\draw [->] (7,1.5)--(5,1.5) node[pos=0.5, below] {$\operatorname{Bl}_{x_{1}\cdots x_{8}}$};
	\begin{scope}[xshift=8.5cm,yshift=0cm]
	\draw [blue, line width = 1pt] (-0.6,0.2) -- (-0.6,2.7) node [pos=1,above] {\small $H_{y} - M_{1}-M_{2}$};
	\draw [red, line width = 1pt] (-0.9,1) -- (0.3,-0.2)	node [pos=1,below right] {\small $M_{1}$};
	\draw [red, line width = 1pt] (-0.4,0) -- (3.5,0)	node [pos=0,below left] {\small $H_{z}-M_{1}$};
	\draw [blue, line width = 1pt] (3.9,1) -- (2.7,-0.2) node [pos=1,below] {\small $H_{y} - M_{3} - M_{4}$};	
	\draw [red, line width = 1pt] (3.4,0.2) -- (2.7,0.9)	node [pos=1,left] {\small $M_{3}$};		
	\draw [red, line width = 1pt] (-0.8,2.1) -- (-0.1,1.4)	node [pos=1,right] {\small $M_{2}$};
	\draw [blue, line width = 1pt] (3.6,0.4) -- (3.6,2.1) node [pos=0,right] {\small $M_{4} - M_{5}$};
	\draw [blue, line width = 1pt] (3.9,1.5) -- (2.7,2.7) node [pos=1,above] {\small $M_{5} - M_{6}$};		
	\draw [blue, line width = 1pt] (-0.8,2.5) -- (3.5,2.5) node [pos=0,left] {\small $H_{x} - M_{4} - M_{5}$};
	\draw [blue, line width = 1pt] (3,1.7) -- (4.1,2.9)	node [pos=1,right] {\small $M_{6}-M_{7}$};
	\draw [blue, line width = 1pt] (5,1.7) -- (3.8,2.9) node [pos=0,right] {\small $M_{7} - M_{8}$};		
	\draw [red, line width = 1pt] (4.7,1.7) -- (5.4,2.4) node [pos=1,right] {\small $M_{8}$};
	\end{scope}
	\end{tikzpicture}
	\caption{The Space of Initial Conditions for the Jimbo-Miwa Hamiltonian System \eqref{eq:JM-Ham-4}}
	\label{fig:JM-soic-4}
\end{figure}
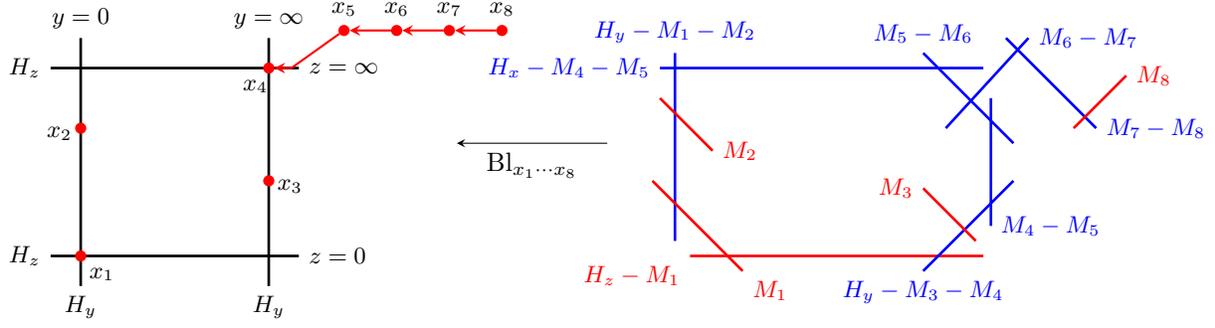

\begin{lemma}\label{lem:JMtoOk-basis-4} The change of bases for Picard lattices between the Jimbo-Miwa and the Okamoto surfaces 
	is given by 
	\begin{equation}\label{eq:basis-JM-Ok-4}
		\begin{aligned}
			\mathcal{H}_{f} &= \mathcal{H}_{y}, &\qquad 
				\mathcal{H}_{y} & = \mathcal{H}_{f},\\
			\mathcal{H}_{g} & = 2 \mathcal{H}_{y} + \mathcal{H}_{z} - \mathcal{M}_{2} - \mathcal{M}_{4} - \mathcal{M}_{5} - \mathcal{M}_{6},  &\qquad 	
				\mathcal{H}_{z} &= 2\mathcal{H}_{f} + \mathcal{H}_{g} - \mathcal{F}_{3} - \mathcal{F}_{5} - \mathcal{F}_{6} - \mathcal{F}_{7}, \\
			\mathcal{F}_{1}	&= \mathcal{M}_{7}, &\qquad 
				\mathcal{M}_{1} &= \mathcal{F}_{4},\\ 
			\mathcal{F}_{2}	&= \mathcal{M}_{8}, &\qquad 
				\mathcal{M}_{2} &= \mathcal{H}_{f} - \mathcal{F}_{3},\\ 
			\mathcal{F}_{3}	&= \mathcal{H}_{y} - \mathcal{M}_{2}, &\qquad 
				\mathcal{M}_{3} &=\mathcal{F}_{8},\\ 
			\mathcal{F}_{4}	&= \mathcal{M}_{1}, &\qquad 
				\mathcal{M}_{4} &= \mathcal{H}_{f} - \mathcal{F}_{7},\\ 
			\mathcal{F}_{5}	&= \mathcal{H}_{y} - \mathcal{M}_{6}, &\qquad 
				\mathcal{M}_{5} &= \mathcal{H}_{f} - \mathcal{F}_{6},\\ 
			\mathcal{F}_{6}	&= \mathcal{H}_{y} - \mathcal{M}_{5}, &\qquad 
				\mathcal{M}_{6} &=\mathcal{H}_{f} -  \mathcal{F}_{5},\\ 
			\mathcal{F}_{7}	&= \mathcal{H}_{f} -  \mathcal{M}_{4}, &\qquad 
				\mathcal{M}_{7} &= \mathcal{F}_{1},\\ 
			\mathcal{F}_{8}	&= \mathcal{M}_{3}, &\qquad 
				\mathcal{M}_{8} &= \mathcal{F}_{2}.
		\end{aligned}
	\end{equation}
	This results in the following correspondences between the surface roots,
	\begin{equation}\label{eq:geom-JM-Ok-4}
	\begin{aligned}
		\delta_{0} &= \mathcal{F}_{7} - \mathcal{F}_{8} = \mathcal{H}_{y} - \mathcal{M}_{3} - \mathcal{M}_{4}, &\qquad  
			\delta_{4} &= \mathcal{H}_{g} - \mathcal{F}_{3} - \mathcal{F}_{5} = \mathcal{H}_{z} - \mathcal{M}_{4} - \mathcal{M}_{5}, \\
		\delta_{1} &= \mathcal{F}_{1} - \mathcal{F}_{2} = \mathcal{M}_{7} - \mathcal{M}_{8}, &\qquad  
			\delta_{5} &= \mathcal{F}_{3} - \mathcal{F}_{4}  = \mathcal{H}_{y} - \mathcal{M}_{1} - \mathcal{M}_{2}, \\
		\delta_{2} &= \mathcal{H}_{f} - \mathcal{F}_{1} - \mathcal{F}_{5} = \mathcal{M}_{6} - \mathcal{M}_{7}, &\qquad  
			\delta_{6} &= \mathcal{F}_{6} - \mathcal{F}_{7} = \mathcal{M}_{4} - \mathcal{M}_{5}; \\
		\delta_{3} &= \mathcal{F}_{5} - \mathcal{F}_{6}  = \mathcal{M}_{5} - \mathcal{M}_{6}, &\qquad  
	\end{aligned}
	\end{equation}
	and the symmetry roots,
	\begin{equation}\label{eq:sym-JM-Ok-4}
	\begin{aligned}
			\alpha_{0} &= \mathcal{H}_{f} + \mathcal{H}_{g} - \mathcal{F}_{5} - \mathcal{F}_{6} - \mathcal{F}_{7} - \mathcal{F}_{8}
			= \mathcal{H}_{z} - \mathcal{M}_{2} - \mathcal{M}_{3}, \\
			\alpha_{1} &= \mathcal{H}_{f} - \mathcal{F}_{3} - \mathcal{F}_{4} =  \mathcal{M}_{2} - \mathcal{M}_{1}, \\
			\alpha_{2} &= \mathcal{H}_{g} - \mathcal{F}_{1} - \mathcal{F}_{2} = 
			2 \mathcal{H}_{y} + \mathcal{H}_{z} - \mathcal{M}_{2} - \mathcal{M}_{4} - \mathcal{M}_{5} - \mathcal{M}_{6} - \mathcal{M}_{7} - \mathcal{M}_{8}.
		\end{aligned}
	\end{equation}
	The normalization $a_{0}+a_{1}+a_{2}=1$  then results in $k=1$ for the symplectic form, and so we recover the symplectic form 
	$\symp{JM} = (1/y) dy\wedge dz$. The root variables then match the Okamoto (and the Its-Prokhorov) ones,
	\begin{equation*}
		a_{0}^{\mathrm{JM}} = \Theta_{\infty} -\Theta_{0} = a_{0}^{\mathrm{IP}}  = a_{0}^{\mathrm{Ok}},\quad 
		a_{1}^{\mathrm{JM}}= 2 \Theta_{0}  = a_{1}^{\mathrm{IP}}  = a_{1}^\mathrm{Ok},\quad 
		a_{2}^{\mathrm{JM}}= 1 - \Theta_{0} - \Theta_{\infty} = a_{1}^{\mathrm{IP}} = a_{2}^\mathrm{Ok}.
	\end{equation*}
\end{lemma}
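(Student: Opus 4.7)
The plan is to follow the five-step identification procedure from Section~\ref{sub:ident-proc}, closely paralleling the Its-Prokhorov analysis in Lemma~\ref{lem:IP-to-Ok-4}. First, I would read off the configuration of $-2$ curves forming the anti-canonical divisor from Figure~\ref{fig:JM-soic-4} and verify that their pairwise intersection pattern reproduces the $E_{6}^{(1)}$ Dynkin diagram of Figure~\ref{fig:d-roots-e61}. In contrast to the Its-Prokhorov case, the Jimbo-Miwa surface has exactly eight base points and no $-3$ curves, so no additional blow-ups or blow-downs are needed to reach a minimal model: the anti-canonical decomposition already consists of nine irreducible $-2$ components, and each can be matched with a node of the Okamoto diagram by inspection.

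Next, I would propose a preliminary identification of the surface roots as displayed on the right-hand side of \eqref{eq:geom-JM-Ok-4}. The central component $\delta_{3}$, which carries coefficient $3$ in the anti-canonical decomposition, is uniquely determined by its intersection pattern with all the others, and the two length-two tails of the $E_{6}^{(1)}$ diagram are matched by tracking the two chains of exceptional divisors emanating from the cascade over $x_{4}$. This matching, together with the correspondence of coordinate divisor classes $\mathcal{H}_{f}$, $\mathcal{H}_{g}$, $\mathcal{H}_{y}$, $\mathcal{H}_{z}$ and the exceptional classes, determines a candidate integer linear isomorphism \eqref{eq:basis-JM-Ok-4} of the Picard lattices, which can be verified against the intersection form on each side.

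Then, using the induced symmetry roots \eqref{eq:sym-JM-Ok-4}, I would apply the period map procedure of Lemma~\ref{lem:period-map-a21} to the logarithmic symplectic form $\omega = k\, dy \wedge dz / y$, computing the root variables explicitly in terms of $k$, $\Theta_{0}$, $\Theta_{\infty}$. Imposing the normalization $a_{0}+a_{1}+a_{2} = 1$ fixes $k = 1$, recovering $\symp{JM}$. Comparing with the Okamoto root variables under the parameter matching $\kappa_{0} = 2\Theta_{0}$, $\theta_{\infty} = -1 + \Theta_{0} + \Theta_{\infty}$ already obtained in Section~\ref{sub:IP-soic-4}, I would check whether the two sets of root variables agree directly, as stated at the end of the lemma. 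If they differed by a Weyl group element, I would adjust the preliminary basis identification by composing with the appropriate reflections and diagram automorphisms from Theorem~\ref{thm:bir-weyl-a21}, exactly as was done in \eqref{eq:IP-rv-prelim} for Its-Prokhorov.

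The main technical delicacy will be the period map computation for the \emph{logarithmic} symplectic form, since residues must be taken along the component $H_{y}$ where $\omega$ has its simple pole, and this behaves differently from the holomorphic case that appears for Its-Prokhorov and Okamoto; this is precisely the point alluded to in the remark following \eqref{eq:JM-Ham-4}. Apart from that, the argument is bookkeeping: checking intersection numbers, verifying that \eqref{eq:basis-JM-Ok-4} maps the anti-canonical class to itself, and confirming that it intertwines the two decompositions of the surface and symmetry roots. The reverse direction of \eqref{eq:basis-JM-Ok-4} then follows by inverting the integer linear map, which can be cross-checked by pairing with the intersection form.
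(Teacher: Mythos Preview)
Your approach is correct and matches the paper's: the paper itself simply states that the Jimbo-Miwa computation proceeds identically to the Its-Prokhorov case in Section~\ref{sub:IP-soic-4}, omitting the preliminary basis identification, and your outline fills in precisely those steps. One small slip to fix when you write it up: the $E_{6}^{(1)}$ anti-canonical divisor has \emph{seven} irreducible $-2$ components (not nine), and all \emph{three} length-two branches emanating from the central node $\delta_{3}$ must be matched---but this is cosmetic and does not affect the argument.
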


The corresponding birational change of variables is given in the following Lemma.

\begin{lemma}\label{lem:JMtoOk-coords-4} The change of coordinates and parameter matching between the Jimbo-Miwa and Okamoto Hamiltonian systems is given by
    \begin{equation*}
   	 \left\{\begin{aligned}
   	 	y(f,g,t)&=f,\\
   		z(f,g,t)&= \frac{f^{2}}{2} - fg + ft + \kappa_{0},\\
		\Theta_{0}&=\frac{\kappa_{0}}{2},\\  
		\Theta_{\infty} &= 1 + \theta_{\infty} - \frac{\kappa_{0}}{2},
   	 \end{aligned}\right.
    \qquad\text{and conversely} \qquad 
    	\left\{\begin{aligned}
   	 	f(y,z,t)&=y,\\
   		g(y,z,t)&=\frac{y}{2}- \frac{z}{y}+ t + \frac{2\Theta_{0}}{y},\\
		\kappa_{0} &= 2\Theta_{0},\\  
		\theta_{\infty} &= -1 + \Theta_{0} + \Theta_{\infty}.
    	\end{aligned}\right.
    \end{equation*}
\end{lemma}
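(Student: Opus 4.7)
The plan is to apply the same geometric procedure as in Lemma~\ref{lem:coords-IP-Ok-4}. The change of basis \eqref{eq:basis-JM-Ok-4} dictates the projective classes of the new coordinates: $\mathcal{H}_f = \mathcal{H}_y$ forces $f$ to be a M\"obius transformation of $y$ alone, $f = (Ay+B)/(Cy+D)$, while $\mathcal{H}_g = 2\mathcal{H}_y + \mathcal{H}_z - \mathcal{M}_2 - \mathcal{M}_4 - \mathcal{M}_5 - \mathcal{M}_6$ identifies $g$ as a projective coordinate on the pencil of bidegree-$(2,1)$ curves in the $(y,z)$-chart passing through $x_2$, $x_4$, $x_5$, $x_6$, with the two points infinitely near $x_4$ handled via the blowup charts of Figure~\ref{fig:JM-soic-4}.

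First I would write a generic such curve $a_{21}y^2 z + a_{20}y^2 + a_{11}yz + a_{10}y + a_{01}z + a_{00} = 0$ and impose the four incidence conditions. Passing through $x_2(0, 2\Theta_0)$, then through $x_4(\infty,\infty)$ in the $(Y, Z) = (1/y, 1/z)$ chart, and finally through $x_5$, $x_6$ in the successive blowup charts yields $a_{21}=a_{11}=0$, $a_{01}=-2a_{20}$, $a_{00}=4\Theta_0 a_{20}$, so the pencil is spanned by the two curves $y=0$ and $y^2 - 2z + 4\Theta_0 = 0$, giving
\begin{equation*}
g = \frac{Ky + L(y^2 - 2z + 4\Theta_0)}{My + N(y^2 - 2z + 4\Theta_0)}
\end{equation*}
for constants $K,L,M,N$ still to be pinned down.

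Next I would determine all remaining constants by demanding that the forward map $\varphi : (y,z) \mapsto (f,g)$ pushes the irreducible $(-1)$ and $(-2)$-curves in the Jimbo-Miwa space to the classes prescribed by \eqref{eq:basis-JM-Ok-4} in the Okamoto space. Requiring that the surface root $\delta_4$ (the line $z = \infty$ in the Jimbo-Miwa chart) maps to $g = \infty$ forces $N = 0$, and we may set $M = 1$. Matching $\varphi_{*}(\mathcal{M}_1) = \mathcal{F}_4$, so that $M_1$ collapses onto the $q_3 \leftarrow q_4$ cascade with limiting $f \to 0$, gives $B = 0$, and matching $\varphi_{*}(\mathcal{M}_6) = \mathcal{H}_f - \mathcal{F}_5$, collapsing $M_6$ onto the proper transform of $f = \infty$, gives $C = 0$, so $f = y$ after rescaling. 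Comparing the coordinates $q_6(0,2) \leftarrow q_7(0, -4t) \leftarrow q_8$ with the Jimbo-Miwa cascade $x_6(0,2) \leftarrow x_7(0, -4s) \leftarrow x_8$ through the map then forces $L = 1/2$, $K = t$, and $s = t$, and pins down the parameter correspondence $\Theta_0 = \kappa_0/2$, $\Theta_\infty = 1 + \theta_\infty - \kappa_0/2$ uniquely. The inverse formula follows by straightforward algebra.

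The main obstacle will be the careful bookkeeping across the blowup cascade $x_4 \leftarrow x_5 \leftarrow x_6 \leftarrow x_7 \leftarrow x_8$, but no conceptually new ingredient beyond the Its-Prokhorov case of Lemma~\ref{lem:coords-IP-Ok-4} is required. A useful consistency check, and indeed an alternative route to the conclusion, is to compose the change of variables \eqref{eq:JMtoIP-4} between the Jimbo-Miwa and Its-Prokhorov coordinates with the Its-Prokhorov-to-Okamoto change \eqref{eq:IPtoOk-coords-4}: substituting $q=y$ and $p = (2ty+y^2-4z+4\Theta_0)/(4y)$ into $g = p + q/4 + t/2 + \Theta_0/q$ with $\Theta_0 = \kappa_0/2$ produces exactly $g = y/2 - z/y + t + 2\Theta_0/y$, confirming both the inverse coordinate transformation and the parameter matching in \eqref{eq:JMtoOk-coords-4}.
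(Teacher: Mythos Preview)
Your proposal is correct and follows exactly the procedure the paper intends: the paper does not write out a separate proof for this lemma, simply stating the result after Lemma~\ref{lem:JMtoOk-basis-4} with the understanding that one proceeds ``in the same way'' as in the Its--Prokhorov case of Lemma~\ref{lem:coords-IP-Ok-4}. Your derivation of the $(2,1)$-pencil through $x_2,x_4,x_5,x_6$ and the subsequent divisor matching reproduce precisely that template, and the composition check via \eqref{eq:JMtoIP-4} and \eqref{eq:IPtoOk-coords-4} is a welcome redundancy that the paper does not spell out.
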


\begin{remark} 
	We also note that knowing the configuration of the base points amounts to knowing the correct symplectic structure, 
	from which we can then directly obtain the 
	Hamiltonian, as was observed in \cite{DzhFilLigSto:2021:HSDSFMLWGMTPE}. Indeed, equations (\ref{eq:JM-system-4-1} -- \ref{eq:JM-system-4-2})
	are clearly \emph{not} Hamiltonian w.r.t.~the ``standard'' symplectic form $dz\wedge dy$. However, using the correct form
	$\symp{JM} = (1/y) dy\wedge dz$ obtained in Lemma~\ref{lem:JMtoOk-basis-4}, we get 
	 \begin{align*}
		 \frac{\partial \Ham{}{}}{\partial z}& = -\frac{dy}{y dt} = \frac{4z}{y} - y - 2t - \frac{4\Theta_0}{y}, \\
		 \frac{\partial \Ham{}{}}{\partial y}& = \frac{dz}{y dt} = -\frac{2z^{2}}{y^{2}} - z + \frac{4 \Theta_{0}z}{y^{2}} + (\Theta_{0} + \Theta_{\infty}),\\
		 \intertext{and therefore}
		 \Ham{}{}(y,z;t) &= \frac{2z^{2}}{y} - yz - 2t z - \frac{4 \Theta_{0} z}{y} + (\Theta_{0} + \Theta_{\infty})y = 
		 \Ham{JM}{IV}(y,z;t) -(\Theta_{0} + \Theta_{\infty})t,
	 \end{align*}
	and so we recovered the Jimbo-Miwa Hamiltonian, up to $t$-dependent terms.
\end{remark}


\subsection{The Kecker Hamiltonian system} 
\label{sub:Kek-soic-4}
\begin{notation*}
For the Kecker system we use the following notation: coordinates $(x,y)$, parameters $\tilde{\alpha}$ and $\tilde{\beta}$; time variable $z$;
base points $w_{i}$, exceptional divisors $N_{i}$. 
\end{notation*}

The space of initial conditions for this system is again not minimal and has \emph{ten}
base points, but it looks quite different from the other examples. The base points come in three cascades originating from a single point
$w_{1}(\infty,\infty)$, reflecting the cubic nature of $\Ham{Kek}{IV}(x,y;z)$. Because of that, it is convenient to introduce the cubic roots of unity,
\begin{equation}\label{eq:cube-roots}
	\xi_{0} = 1,\qquad \xi_{1} = e^{\frac{2 \pi \mathfrak{i}}{3}} = \frac{-1 + \mathfrak{i} \sqrt{3}}{2},\qquad
	\xi_{2}= \xi_{1}^{2} = e^{\frac{4 \pi \mathfrak{i}}{3}} = \frac{-1 - \mathfrak{i} \sqrt{3}}{2}.
\end{equation}
Then the base point cascades are
\begin{equation*}
	\begin{tikzpicture}
	\node (w1) at (0,0) [left] {$w_{1}(\infty,\infty)$}; 
	\node (w2) at (2,1) {$w_{2}(-\xi_{0},0)$};  \node (w3) at (4.5,1) {$w_{3}(-z,0)$}; 
		\node (w4) at (8.5,1) {$w_{4}(-\tilde{\beta} - (1 + z^{2}) \xi_{0}  + \tilde{\alpha}\xi_{0},0)$}; 
	\node (w5) at (2,0) {$w_{5}(-\xi_{2},0)$}; \node (w6) at (4.5,0) {$w_{6}(-z,0)$}; 
		\node (w7) at (8.5,0) {$w_{7}(-\tilde{\beta}  - (1 + z^{2}) \xi_{1} + \tilde{\alpha} \xi_{2},0)$};
	\node (w8) at (2,-1) {$w_{8}(-\xi_{1},0)$}; \node (w9) at (4.5,-1) {$w_{9}(-z,0)$}; 
		\node (w10) at (8.5,-1) {$w_{10}(-\tilde{\beta} - (1 + z^{2}) \xi_{2} + \tilde{\alpha} \xi_{1},0)$};
	\draw[->] (w2) -- (0.5,1) -- (w1);	\draw[->] (w3) -- (w2); \draw[->] (w4) -- (w3); 
	\draw[->] (w5) -- (w1); \draw[->] (w6) -- (w5); \draw[->] (w7) -- (w6); 
	\draw[->] (w8) -- (0.5,-1) -- (w1); \draw[->] (w9) -- (w8); \draw[->] (w10) -- (w9); 
	\end{tikzpicture}
\end{equation*}
that lie on the polar divisor of a symplectic form $\omega = k dx\wedge dy$. The space of initial conditions is shown on
Figure~\ref{fig:Kek-soic-4}. Note that this time we have a curve $N_{1} - N_{2} - N_{5} - N_{8}$  with self-intersection index $-4$, and 
it is clear that we should blow down two $-1$ curves intersecting with it, with the obvious choice being the 
inaccessible divisors $H_{x} - N_{1}$ and $H_{y} - N_{1}$.

\begin{figure}[ht]
	\begin{tikzpicture}[>=stealth,basept/.style={circle, draw=red!100, fill=red!100, thick, inner sep=0pt,minimum size=1.2mm}]
	\begin{scope}[xshift=0cm,yshift=0cm]
	\draw [black, line width = 1pt] (-0.4,0) -- (2.9,0)	node [pos=0,left] {\small $H_{y}$} node [pos=1,right] {\small $y=0$};
	\draw [black, line width = 1pt] (-0.4,2.5) -- (2.9,2.5) node [pos=0,left] {\small $H_{y}$} node [pos=1,right] {\small $y=\infty$};
	\draw [black, line width = 1pt] (0,-0.4) -- (0,2.9) node [pos=0,below] {\small $H_{x}$} node [pos=1,above] {\small $x=0$};
	\draw [black, line width = 1pt] (2.5,-0.4) -- (2.5,2.9) node [pos=0,below] {\small $H_{x}$} node [pos=1,above] {\small $x=\infty$};
	\node (w1) at (2.5,2.5) [basept,label={[xshift = -8pt, yshift=-1pt] \small $w_{1}$}] {};
	\node (w5) at (3.5,3) [basept,label={[above] \small $w_{5}$}] {};
	\node (w6) at (4.2,3) [basept,label={[above] \small $w_{6}$}] {};
	\node (w7) at (5.0,3) [basept,label={[above] \small $w_{7}$}] {};
	\node (w2) at (1.8,2) [basept,label={[left] \small $w_{2}$}] {};
	\node (w3) at (1.8,1.3) [basept,label={[left] \small $w_{3}$}] {};
	\node (w4) at (1.8,0.6) [basept,label={[left] \small $w_{4}$}] {};
	\node (w8) at (3.2,2) [basept,label={[right] \small $w_{8}$}] {};
	\node (w9) at (3.2,1.3) [basept,label={[right] \small $w_{9}$}] {};
	\node (w10) at (3.2,0.6) [basept,label={[right] \small $w_{10}$}] {};
	\draw [red, line width = 0.8pt, ->] (w2) -- (w1);	
	\draw [red, line width = 0.8pt, ->] (w3) -- (w2);
	\draw [red, line width = 0.8pt, ->] (w4) -- (w3);
	\draw [red, line width = 0.8pt, ->] (w5) -- (w1);	
	\draw [red, line width = 0.8pt, ->] (w6) -- (w5);
	\draw [red, line width = 0.8pt, ->] (w7) -- (w6);
	\draw [red, line width = 0.8pt, ->] (w8) -- (w1);	
	\draw [red, line width = 0.8pt, ->] (w9) -- (w8);
	\draw [red, line width = 0.8pt, ->] (w10) -- (w9);
	\end{scope}
	\draw [->] (7,1.5)--(5,1.5) node[pos=0.5, below] {$\operatorname{Bl}_{w_{1}\cdots w_{10}}$};
	\begin{scope}[xshift=8cm,yshift=0cm]
	\draw [black, line width = 1pt] (-0.4,0) -- (3.1,0);
	\draw [black, line width = 1pt] (0,-0.4) -- (0,3.1);
	\draw [red, line width = 1pt] 		(3.7,1) .. controls (3.4,0.8) and (2.7,0.4) .. (2.7,-0.4) node[below] {\small $H_{y} - N_{1}$};
	\draw [red, line width = 1pt] 		(1,3.7) .. controls (0.8,3.4) and (0.4,2.7) .. (-0.4,2.7) node[left] {\small $H_{x} - N_{1}$};
	\draw [teal, line width = 1pt] 		(0.5,3.5) -- (3.5,0.5) node[below right] {\small $N_{1} - N_{2} - N_{5} - N_{8}$};
	\draw [blue, line width = 1pt] 		(1,2.5) -- (2,3.5) node[pos = 0, xshift=-8pt, yshift=-6pt] {\small $N_{2} - N_{3}$};
	\draw [blue, line width = 1pt] 		(1.3,3.3) -- (2.8,3.3) node[pos = 1, right] {\small $N_{3} - N_{4}$};
	\draw [red, line width = 1pt] 		(2.3,3.1) -- (2.9,3.7) node[pos = 0, yshift=-3pt, xshift=-5pt] {\small $N_{4}$};
	\draw [blue, line width = 1pt] 		(1.8,1.7) -- (2.8,2.7) node[pos = 0, xshift=-5pt, yshift=-5pt] {\small $N_{5} - N_{6}$};
	\draw [blue, line width = 1pt] 		(2.1,2.5) -- (3.6,2.5) node[pos = 1, right] {\small $N_{6} - N_{7}$};
	\draw [red, line width = 1pt] 		(3.1,2.3) -- (3.7,2.9) node[pos = 0, yshift=-3pt, xshift=-5pt] {\small $N_{7}$};
	\draw [blue, line width = 1pt] 		(2.6,0.9) -- (3.6,1.9) node[pos = 0, xshift=-8pt, yshift=-6pt] {\small $N_{8} - N_{9}$};
	\draw [blue, line width = 1pt] 		(2.9,1.7) -- (4.4,1.7) node[pos = 1, right] {\small $N_{9} - N_{10}$};
	\draw [red, line width = 1pt] 		(3.9,1.5) -- (4.5,2.1) node[pos = 0, yshift=-3pt, xshift=-4pt ] {\small $N_{10}$};
	\end{scope}
	\end{tikzpicture}
	\caption{The Space of Initial Conditions for the Kecker Hamiltonian System \eqref{eq:Kek-Ham-4}}
	\label{fig:Kek-soic-4}
\end{figure}
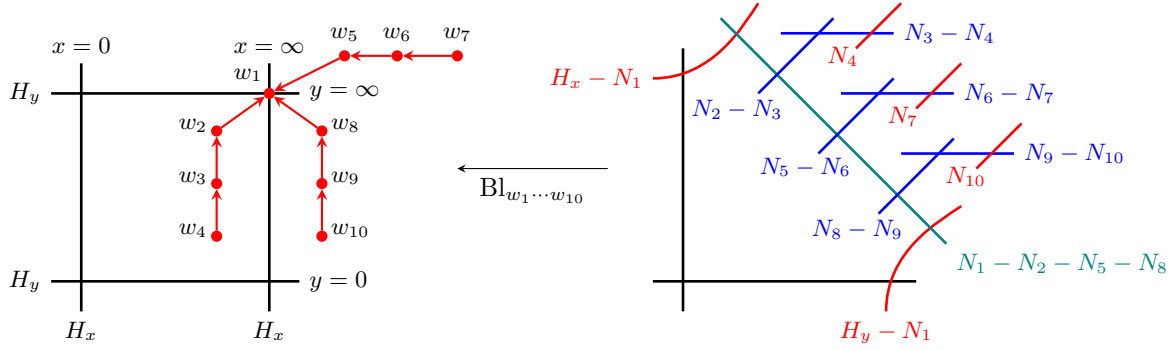

The process here is analogous to the one described in Section~\ref{sub:IP-soic-4}. Instead of blowing down, we add two additional 
blowup points to the standard $E_{6}^{(1)}$ surface shown on Figure~\ref{fig:surface-e61}. Since the $-4$-curve on Figure~\ref{fig:Kek-soic-4}
clearly corresponds to the central node of the surface roots Dynkin diagram, we add those additional  points to the 
divisor $d_{3}=F_{5} - F_{6}$ on Figure~\ref{fig:surface-e61}  and then find the change of basis of the
Picard lattices that matches the irreducible components of the anti-canonical divisor, using the root variables and the period map to find the correct
identification between the three branches of the $E_{6}^{(1)}$ diagrams. We only state the result. 

\begin{lemma}\label{lem:Kek-to-Ok} The change of bases for Picard lattices between the Kecker and the Okamoto (with two additional blowup points) surfaces 
	is given by 
	\begin{equation}\label{eq:basis-Kek-Ok}
		\begin{aligned}
			\mathcal{H}_{f} &= \mathcal{H}_{x} + \mathcal{H}_{y} - \mathcal{N}_{1} - \mathcal{N}_{2}, &\qquad 
				\mathcal{H}_{x} & = \mathcal{H}_{f} + \mathcal{H}_{g} - \mathcal{F}_{5} - \mathcal{F}_{10},\\
			\mathcal{H}_{g} & = \mathcal{H}_{x} + \mathcal{H}_{y} - \mathcal{N}_{1} - \mathcal{N}_{5},  &\qquad 	
				\mathcal{H}_{y} &= \mathcal{H}_{f} + \mathcal{H}_{g} - \mathcal{F}_{5} - \mathcal{F}_{9}, \\
			\mathcal{F}_{1}	&= \mathcal{N}_{6}, &\qquad 
				\mathcal{N}_{1} &= \mathcal{H}_{f} + \mathcal{H}_{g} -  \mathcal{F}_{5} - \mathcal{F}_{9} - \mathcal{F}_{10},\\ 
			\mathcal{F}_{2}	&= \mathcal{N}_{7}, &\qquad 
				\mathcal{N}_{2} &= \mathcal{H}_{g} - \mathcal{F}_{5},\\ 
			\mathcal{F}_{3}	&=  \mathcal{N}_{3}, &\qquad 
				\mathcal{N}_{3} &= \mathcal{F}_{3},\\ 
			\mathcal{F}_{4}	&= \mathcal{N}_{4}, &\qquad 
				\mathcal{N}_{4} &= \mathcal{F}_{4},\\ 
			\mathcal{F}_{5}	&= \mathcal{H}_{x} +  \mathcal{H}_{y} - \mathcal{N}_{1} - \mathcal{N}_{2} - \mathcal{N}_{5}, &\qquad 
				\mathcal{N}_{5} &= \mathcal{H}_{f} - \mathcal{F}_{5},\\ 
			\mathcal{F}_{6}	&= \mathcal{N}_{8}, &\qquad 
				\mathcal{N}_{6} &= \mathcal{F}_{1},\\ 
			\mathcal{F}_{7}	&= \mathcal{N}_{9}, &\qquad 
				\mathcal{N}_{7} &= \mathcal{F}_{2},\\ 
			\mathcal{F}_{8}	&= \mathcal{N}_{10}, &\qquad 
				\mathcal{N}_{8} &= \mathcal{F}_{6},\\ 
			\mathcal{F}_{9}	&= \mathcal{H}_{x} - \mathcal{N}_{1}, &\qquad 
				\mathcal{N}_{9} &= \mathcal{F}_{7},\\ 
			\mathcal{F}_{10}	&= \mathcal{H}_{y} - \mathcal{N}_{1}, &\qquad 
				\mathcal{N}_{10} &=\mathcal{F}_{8}.
		\end{aligned}
	\end{equation}
	This results in the following correspondences between the surface roots,
	\begin{equation}\label{eq:geom-Kek-Ok}
	\begin{aligned}
		\delta_{0} &= \mathcal{F}_{7} - \mathcal{F}_{8} = \mathcal{N}_{9} - \mathcal{N}_{10}, &\qquad  
			\delta_{4} &= \mathcal{H}_{g} - \mathcal{F}_{3} - \mathcal{F}_{5} = \mathcal{N}_{2} - \mathcal{N}_{3}, \\
		\delta_{1} &= \mathcal{F}_{1} - \mathcal{F}_{2} = \mathcal{N}_{6} - \mathcal{N}_{7}, &\qquad  
			\delta_{5} &= \mathcal{F}_{3} - \mathcal{F}_{4} = \mathcal{N}_{3} - \mathcal{N}_{4}, \\
		\delta_{2} &= \mathcal{H}_{f} - \mathcal{F}_{1} - \mathcal{F}_{5} = \mathcal{N}_{5} - \mathcal{N}_{6}, &\qquad  
			\delta_{6} &= \mathcal{F}_{6} - \mathcal{F}_{7} = \mathcal{N}_{8} - \mathcal{N}_{9}, \\
		\delta_{3} &= \mathcal{F}_{5} - \mathcal{F}_{6} - \mathcal{F}_{9} -  \mathcal{F}_{10} = 
		\mathcal{N}_{1} - \mathcal{N}_{2} - \mathcal{N}_{5} - \mathcal{N}_{8}, &\qquad  
	\end{aligned}
	\end{equation}
	and the symmetry roots,
	\begin{equation}\label{eq:sym-Kek-Ok}
	\begin{aligned}
			\alpha_{0} &= \mathcal{H}_{f} + \mathcal{H}_{g} - \mathcal{F}_{5} - \mathcal{F}_{6} - \mathcal{F}_{7} - \mathcal{F}_{8}
			= \mathcal{H}_{x} + \mathcal{H}_{y} - \mathcal{N}_{1} - \mathcal{N}_{8} - \mathcal{N}_{9} - \mathcal{N}_{10}, \\
			\alpha_{1} &= \mathcal{H}_{f} - \mathcal{F}_{3} - \mathcal{F}_{4} =  
			\mathcal{H}_{x} + \mathcal{H}_{y} - \mathcal{N}_{1} - \mathcal{N}_{2} - \mathcal{N}_{3} - \mathcal{N}_{4},\\
			\alpha_{2} &= \mathcal{H}_{g} - \mathcal{F}_{1} - \mathcal{F}_{2} = 
			\mathcal{H}_{x} + \mathcal{H}_{y} - \mathcal{N}_{1} - \mathcal{N}_{5} - \mathcal{N}_{6} - \mathcal{N}_{7}.
		\end{aligned}
	\end{equation}
	The symplectic form here is $\sympt{Kek} = (1/3)dy\wedge dx$, and the root variables match, 
	\begin{align*}
		a_{0}^{\mathrm{Kek}} &= \frac{\xi_{0} + \tilde{\beta} \xi_{1} - \tilde{\alpha} \xi_{2} }{3} 
		= \frac{1}{3} + \frac{\tilde{\alpha}- \tilde{\beta}}{6} + \frac{\mathfrak{i}(\tilde{\alpha} + \tilde{\beta})}{2 \sqrt{3}} 
		= 1 + \theta_{\infty} - \kappa_{0} = a_{0}^{\mathrm{Ok}},\\
		a_{1}^{\mathrm{Kek}} &=  \frac{\xi_{0} + \tilde{\beta} \xi_{0} - \tilde{\alpha} \xi_{0} }{3} 
		= \frac{1 - \tilde{\alpha} + \tilde{\beta}}{3}  = 2 \kappa_{0} = a_{1}^\mathrm{Ok},\\
		a_{2}^{\mathrm{Kek}} &= \frac{\xi_{0} + \tilde{\beta} \xi_{2} - \tilde{\alpha} \xi_{1} }{3} 
		= \frac{1}{3} + \frac{\tilde{\alpha}- \tilde{\beta}}{6} - \frac{\mathfrak{i}(\tilde{\alpha} + \tilde{\beta})}{2 \sqrt{3}}
		= - \theta_{\infty} = a_{2}^\mathrm{Ok}.				
	\end{align*}
\end{lemma}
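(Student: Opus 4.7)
The plan is to apply the five-step identification procedure of Section~\ref{sub:ident-proc} to the Kecker surface whose space of initial conditions is described in Figure~\ref{fig:Kek-soic-4}. The new feature here compared to Lemmas~\ref{lem:IP-to-Ok-4} and \ref{lem:JMtoOk-basis-4} is that the polar divisor contains a $-4$ curve $N_1 - N_2 - N_5 - N_8$ with three cascades emanating from $w_1$; this indicates that the Kecker surface is not minimal, but carries two inaccessible $-1$ divisors $H_x - N_1$ and $H_y - N_1$. Rather than blowing these down, I will follow the same strategy as in Section~\ref{sub:IP-soic-4} and instead augment the Okamoto surface of Figure~\ref{fig:surface-e61} by two extra blowup points, placed on the divisor $F_5 - F_6$ to drop its self-intersection from $-2$ to $-4$.

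First I will propose a preliminary linear change of basis between $\operatorname{Pic}^{\mathrm{Kek}}(\mathcal{X})$ and $\operatorname{Pic}^{\mathrm{Ok}}_{\mathrm{ext}}(\mathcal{X})$ that matches the $-4$ component to the central node $\delta_3 = F_5 - F_6 - F_9 - F_{10}$ of the $E_6^{(1)}$ diagram and distributes the three branches $\{N_2, N_3, N_4\}$, $\{N_5, N_6, N_7\}$, $\{N_8, N_9, N_{10}\}$ arbitrarily among the three legs $\delta_3 - \delta_i - \delta_j$ of $E_6^{(1)}$. This assignment uniquely determines the classes of $\mathcal{H}_f$, $\mathcal{H}_g$ and all the $\mathcal{F}_k$ in the Kecker basis by matching the irreducible components of $-K_{\mathcal{X}}$ and then inverting the resulting system, analogously to \eqref{eq:basis-IP-Ok-prelim-4}.

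Next I will apply the period map of Section~\ref{ssub:period-map-a21} to a general symplectic form $\omega = k\, dy \wedge dx$, using the base point parameterization listed just before Figure~\ref{fig:Kek-soic-4}. For each of the three symmetry roots $\alpha_i$ the residue integral on the relevant component of $-K_{\mathcal{X}}$ can be evaluated directly and produces an expression linear in $\tilde{\alpha}$, $\tilde{\beta}$, and the cube roots of unity $\xi_j$ from \eqref{eq:cube-roots}; the three values involve three different pairs $(\xi_a, \xi_b)$ depending on which cascade has been matched to which leg. The normalization condition $a_0^{\mathrm{Kek}} + a_1^{\mathrm{Kek}} + a_2^{\mathrm{Kek}} = 1$, together with the identity $1 + \xi_1 + \xi_2 = 0$, fixes $k = 1/3$, which accounts for the rescaling to $\sympt{Kek} = (1/3)\, dy \wedge dx$ announced in the lemma.

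The main obstacle is selecting the correct matching among the three cyclic possibilities for the legs, since the three cascades of Figure~\ref{fig:Kek-soic-4} are permuted by an outer $\mathbb{Z}/3$ symmetry. This is settled by the parameter correspondence \eqref{eq:Kek-pars} imposed via $\alpha$, $\beta$ together with the Okamoto values $a_0^{\mathrm{Ok}} = 1 + \theta_\infty - \kappa_0$, $a_1^{\mathrm{Ok}} = 2\kappa_0$, $a_2^{\mathrm{Ok}} = -\theta_\infty$ from \eqref{eq:rv-Ok2KNY-4}: the leg whose $\alpha_i$ evaluates to a purely rational expression $(1 - \tilde{\alpha} + \tilde{\beta})/3$ in $\tilde{\alpha}$, $\tilde{\beta}$ must be the one matched to $a_1^{\mathrm{Ok}}$ (the only Okamoto root variable that carries no factor of $\mathfrak{i}\sqrt{3}$ under \eqref{eq:Kek-pars}), while the remaining two, which come out complex conjugate, are assigned to $a_0^{\mathrm{Ok}}$ and $a_2^{\mathrm{Ok}}$. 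Any residual discrepancy, as in the paragraph following \eqref{eq:IP-rv-prelim}, is removed by composing with an appropriate element of $\widetilde{W}(A_2^{(1)})$ from Theorem~\ref{thm:bir-weyl-a21}. Once this matching is locked in, the basis change \eqref{eq:basis-Kek-Ok} and the induced identifications \eqref{eq:geom-Kek-Ok}, \eqref{eq:sym-Kek-Ok} follow by linear algebra, completing the proof.
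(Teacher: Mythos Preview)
Your proposal is correct and follows essentially the same approach as the paper: the paper states that the process is analogous to Section~\ref{sub:IP-soic-4}, with the two extra blowups placed on $d_{3}=F_{5}-F_{6}$ so that the central node acquires self-intersection $-4$, and that the root variables and period map are then used to select the correct identification among the three branches of the $E_{6}^{(1)}$ diagram; the paper omits the intermediate details and only states the result. Your description of how the $\mathbb{Z}/3$ ambiguity is resolved---by identifying the unique real root variable $(1-\tilde{\alpha}+\tilde{\beta})/3$ with $a_{1}^{\mathrm{Ok}}$ and the complex-conjugate pair with $a_{0}^{\mathrm{Ok}},a_{2}^{\mathrm{Ok}}$---is exactly the mechanism the paper alludes to but does not spell out.
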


To find the corresponding birational change of variables, we need to keep in mind that we also have to find the 
relationship between the independent variables that enter geometrically as coordinates of the blowup points, so we include some
details in the proof.

\begin{lemma}\label{lem:coords-Kek-Ok-4} The change of coordinates and parameter matching between the Kecker and Okamoto Hamiltonian systems is given by
    \begin{align}\label{eq:KektoOk-4}
   	 &\left\{\begin{aligned}
   	 	x(f,g,t)&=\frac{(1 + \mathfrak{i})}{4(3)^{3/4}}\Big( 3(\sqrt{3}-\mathfrak{i})f + 12 \mathfrak{i} g + 2 (\sqrt{3} - 3 \mathfrak{i})t \Big),\\
   	 	y(f,g,t)&=\frac{(1 + \mathfrak{i})}{4(3)^{3/4}}\Big( 3(\sqrt{3}+\mathfrak{i})f - 12 \mathfrak{i} g + 2 (\sqrt{3} + 3 \mathfrak{i})t \Big),\\
		z(t) &= \left(-\frac{4}{3}\right)^{\frac{1}{4}}t,
   	 \end{aligned}\right.\quad 
	 \begin{aligned}
		\tilde{\alpha}&=\frac{(\sqrt{3} - 3 \mathfrak{i}) - 6 \mathfrak{i} \theta_{\infty} - 3(\sqrt{3} - \mathfrak{i})\kappa_{0}}{2 \sqrt{3}},\\  
		\tilde{\beta}&=\frac{(-\sqrt{3} - 3 \mathfrak{i}) - 6 \mathfrak{i} \theta_{\infty} + 3(\sqrt{3} + \mathfrak{i})\kappa_{0}}{2 \sqrt{3}},	 	
	 \end{aligned}\\
	 \intertext{and converseley,}
    	&\left\{\begin{aligned}
   	 	f(x,y,z)&=\frac{1-\mathfrak{i}}{3^{3/4}}(x + y - z),\\
   		g(x,y,z)&= - \frac{(1 + \mathfrak{i})}{4(3)^{3/4}}\Big( (\sqrt{3}+\mathfrak{i})x - (\sqrt{3} - \mathfrak{i})y + 2 \mathfrak{i} z  \Big),\\
		t(z) &= \left(-\frac{3}{4}\right)^{\frac{1}{4}}z,
    	\end{aligned}\right.\quad
		\begin{aligned}
		\kappa_{0} &= \frac{1 - \tilde{\alpha} + \tilde{\beta}}{3},\\
		\theta_{\infty} &= \frac{-2 + (\sqrt{3}\mathfrak{i} - 1)\tilde{\alpha} + (\sqrt{3}\mathfrak{i} + 1)\tilde{\beta} }{6}.			
		\end{aligned}\label{eq:OktoKek-4}
    \end{align}
\end{lemma}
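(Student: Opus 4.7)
The plan is to mirror the strategy used in the proof of Lemma~\ref{lem:coords-IP-Ok-4}: from the change of basis \eqref{eq:basis-Kek-Ok} in Lemma~\ref{lem:Kek-to-Ok} I read off what linear systems the coordinates $f$ and $g$ must parametrize on the Kecker side, then I fix the residual (time-dependent) Möbius ambiguity by tracking how individual components of the anti-canonical configuration are pushed forward.

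First I would use $\mathcal{H}_{f} = \mathcal{H}_{x} + \mathcal{H}_{y} - \mathcal{N}_{1} - \mathcal{N}_{2}$ and $\mathcal{H}_{g} = \mathcal{H}_{x} + \mathcal{H}_{y} - \mathcal{N}_{1} - \mathcal{N}_{5}$. A generic curve of class $\mathcal{H}_{x}+\mathcal{H}_{y}$ is a $(1,1)$-polynomial
\begin{equation*}
a_{11}xy + a_{10}x + a_{01}y + a_{00} = 0.
\end{equation*}
Passing through $w_{1}(\infty,\infty)$ forces $a_{11}=0$, and the extra condition imposed by $w_{2}$ (respectively $w_{5}$), which lies on the exceptional divisor over $w_{1}$ in the direction labelled by $-\xi_{0}$ (respectively $-\xi_{2}$), further forces the slope of the resulting affine line to equal $-\xi_{0}$ (respectively $-\xi_{2}$). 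Hence the two pencils are, up to scale, $\{x+y=\mathrm{const}\}$ and $\{\xi_{2}x+y=\mathrm{const}\}$, and so
\begin{equation*}
f = A(z)(x+y) + B(z),\qquad g = C(z)(\xi_{2}x + y) + D(z),
\end{equation*}
where $A,B,C,D$ are $z$-dependent coefficients still to be determined.

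Next I would pin these coefficients down using the push-forward action of $\varphi:(x,y)\mapsto(f,g)$ on the remaining irreducible components listed in \eqref{eq:geom-Kek-Ok}. Tracking $\varphi_{*}(\mathcal{N}_{3}-\mathcal{N}_{4}) = \mathcal{F}_{3}-\mathcal{F}_{4}$ and $\varphi_{*}(\mathcal{N}_{6}-\mathcal{N}_{7}) = \mathcal{F}_{1}-\mathcal{F}_{2}$ controls how the deeper cascade points $w_{3}$ and $w_{6}$ --- which carry the $z$-dependence through their coordinates $-z$ --- are sent to the cascades over $p_{3}$ and $p_{1}$ on the Okamoto side, forcing the specific linear combinations $x+y-z$ and $(\sqrt{3}+\mathfrak{i})x-(\sqrt{3}-\mathfrak{i})y+2\mathfrak{i}z$ in $f$ and $g$. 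The overall normalization constants $A$ and $C$ are then fixed by the contraction of the $-2$-curves $\mathcal{N}_{9}-\mathcal{N}_{10}\mapsto \mathcal{F}_{7}-\mathcal{F}_{8}$ (together with its counterpart in the $\xi_{0}$-cascade), and the remaining condition $\varphi_{*}(\mathcal{N}_{10})=\mathcal{F}_{8}$ from the third cascade $w_{1}\leftarrow w_{8}\leftarrow w_{9}\leftarrow w_{10}$ gives the time correspondence $z=(-4/3)^{1/4}t$ (choosing a consistent branch of the fourth root). The inverse change of variables is then a straightforward linear inversion.

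Finally, the parameter correspondence in \eqref{eq:KektoOk-4} is already encoded in Lemma~\ref{lem:Kek-to-Ok}: solving the identifications $a_{1}^{\mathrm{Kek}} = (1-\tilde{\alpha}+\tilde{\beta})/3 = 2\kappa_{0}$ and $a_{2}^{\mathrm{Kek}} = (\xi_{0}+\tilde{\beta}\xi_{2}-\tilde{\alpha}\xi_{1})/3 = -\theta_{\infty}$ for $(\tilde{\alpha},\tilde{\beta})$ (and vice versa) yields exactly the stated formulas. The main obstacle I anticipate is purely computational: carefully managing the complex arithmetic with the cube roots $\xi_{1},\xi_{2}$ from \eqref{eq:cube-roots} alongside the four possible fourth roots of $-4/3$, and choosing consistent branches so that all three cascades map correctly and no spurious sign or phase survives --- the analogue of the $s=\pm t$ sign choice at the end of the proof of Lemma~\ref{lem:coords-IP-Ok-4}.
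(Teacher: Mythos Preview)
Your approach is essentially the same as the paper's: identify the pencils $|\mathcal{H}_{f}|$ and $|\mathcal{H}_{g}|$ as affine lines of slopes $-\xi_{0}$ and $-\xi_{2}$, then fix the residual freedom by matching the push-forwards of the $-2$-curves and the final $-1$-curves in each cascade. Two small deviations are worth noting. First, the paper begins with a full M\"obius transformation $f=\dfrac{A(y+\xi_{0}x)+B}{C(y+\xi_{0}x)+D}$ rather than jumping immediately to an affine form; you should make explicit why the denominator must be constant (it follows from $\varphi_{*}$ of the $-4$-curve $N_{1}-N_{2}-N_{5}-N_{8}$ landing on the $q_{5}$-cascade, which forces $f=\infty$ to correspond to the line at infinity). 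Second, the paper extracts the parameter correspondence and the time relation $\sqrt{3}z^{2}=2\mathfrak{i}t^{2}$ \emph{simultaneously} from the three $-1$-curve matchings $\varphi_{*}(N_{4})=F_{4}$, $\varphi_{*}(N_{7})=F_{2}$, $\varphi_{*}(N_{10})=F_{8}$, whereas you separate these and read the parameters off the root-variable identification in Lemma~\ref{lem:Kek-to-Ok}; both routes are valid and yield the same answer.
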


\begin{proof}
	 Let $\varphi: (q,p)\to (f,g)$ be the change of variables that induces the above change of bases of the Picard lattice. 
	In the $(x,y)$-chart, the equations of the base curves for the pencil $|\mathcal{H}_{f}| = |\mathcal{H}_{x} + \mathcal{H}_{y} - \mathcal{N}_{1} - \mathcal{N}_{2}|$ can be taken to be $y + \xi_{0} x$ and $1$, and for the pencil $|\mathcal{H}_{g}| = |\mathcal{H}_{x} + \mathcal{H}_{y} - \mathcal{N}_{1} - \mathcal{N}_{5}|$ we can take $y + \xi_{2} x$ and $1$. Thus, up to the M\"obius transformations, we get
	\begin{equation*}
		f(x,y) = \frac{A (y + \xi_{0}x) + B}{C (y + \xi_{0}x) + D}, \qquad 
		g(x,y) = \frac{K (y + \xi_{2}x) + L}{M (y + \xi_{2}x) + N}.
	\end{equation*}
	Using the correspondence between the $-4$ and $-2$ curves $\delta_{i}$ given by \eqref{eq:geom-Kek-Ok} we can evaluate the coefficients 
	$A,\ldots,N$ to get 
	\begin{equation*}
		f(x,y) = \frac{-2t \xi_{1}\xi_{2}(x + \xi_{0}^{2} y - \xi_{0}z)}{\xi_{0} z(\xi_{0} - \xi_{2})(\xi_{1} - \xi_{0})}, \qquad 
		g(x,y) = \frac{-2t \xi_{1}\xi_{2}(x + \xi_{0}^{2} y - \xi_{0}z)}{\xi_{0} z(\xi_{0} - \xi_{2})(\xi_{1} - \xi_{0})},
	\end{equation*}
	or, after some simplification with $\xi_{i}$,
	\begin{equation*}
		f(x,y) = \frac{2t (x +  y - z)}{ z(\xi_{1} - 1)(\xi_{2} - 1)}, \qquad 
		g(x,y) = \frac{-t \xi_{1}(x + \xi_{1} y - \xi_{2}z)}{z(\xi_{1} - 1)(\xi_{2} - 1)}.
	\end{equation*}
	Next, from the divisor matching $\varphi_{*}(N_{7})=F_{2}$, $\varphi_{*}(N_{4}) = F_{4}$, and $\varphi_{*}(N_{10}) = F_{8}$ we get a 
	correspondence between the coordinates of the blowup points, which in turn gives us the following relationship between parameters:
	\begin{equation*}
		\theta_{\infty} = \frac{2t^{2}(\tilde{\beta} + \xi_{1} - \xi_{2} \tilde{\alpha})}{3z^{2} (\xi_{2} - 1)},\quad 
		\kappa_{0} = - \frac{2t^{2}\xi_{1}(\tilde{\beta} + 1 - \tilde{\alpha})}{3z^{2} (\xi_{2} - 1)},\quad 
		1 + 2 t^{2} + \theta_{\infty} - \kappa_{0} = 
		\frac{2t^{2} (3 z^{2} (\xi_{2} - 1) + \tilde{\alpha} - \xi_{1} - \xi_{2} \tilde{\beta})}{3z^{2} (\xi_{2} - 1)}.
	\end{equation*}
	These equations then give us the required parameter matching, confirming the root variable matching in 
	Lemma~\ref{eq:basis-Kek-Ok} and  the parameter matching  
	in \eqref{eq:Kek-pars}, as well as the relation between the different time variables $t$ (for the Okamoto System)
	and $z$ (for the Kecker system),  $\sqrt{3} z^{2} = 2 \mathfrak{i} t^{2}$, i.e., $z = \left(-\frac{4}{3}\right)^{\frac{1}{4}} t$,
	as expected.
	
	Substituting expressions \eqref{eq:cube-roots} for $\xi_{i}$ then gives \eqref{eq:OktoKek-4} and \eqref{eq:KektoOk-4} is established
	directly by inverting the change of coordinates.
\end{proof}

\section{The Fifth Painlev\'e Equation $\Pain{V}$} 
\label{sec:P5}

For the standard form of the fifth Painlev\'e equations we take the one in \cite{Oka:1987:SPEISPEP},
\begin{equation}\label{eq:P5-std}
	\frac{d^{2} w}{dt^2} = \left(\frac{1}{2w} + \frac{1}{w-1}\right)\left(\frac{dw}{dt}\right)^{2} - \frac{1}{t} \frac{dw}{dt} + 
		\frac{(w-1)^{2}}{t^{2}}\left(\alpha w + \frac{\beta}{w}\right) + \frac{\gamma}{t} w + \delta \frac{w(w+1)}{w-1},
\end{equation}
where $t$ is an independent (complex) variable, $w(t)$ is the dependent variable, and $\alpha,\beta,\gamma,\delta\in \mathbb{C}$ are some 
parameters. Following Okamoto \cite{Oka:1987:SPEIFPEP}\footnote{Note that there is a  sign typo in the form of $\Pain{V}$ in that paper, it should be $\frac{d^{2} q}{dt^{2}} = \left(\frac{1}{2q} + \frac{1}{q-1}\right)\left(\frac{dq}{dt}\right)^{2} +\cdots$}, it is convenient to introduce new parameters 
$\kappa_{0}$, $\kappa_{\infty}$, $\theta$, and $\eta$ that we call \emph{Okamoto parameters}
\begin{equation}\label{eq:Ok-pars-PV}
	\alpha = \frac{\kappa_{\infty}^{2}}{2},\quad \beta = -\frac{\kappa_{0}^{2}}{2},\quad \gamma = - \eta (\theta + 1), \quad \delta = -	\frac{1}{2}\eta^{2},
	\quad\text{as well as } \kappa = \frac{(\kappa_{0}+\theta)^{2} - \kappa_{\infty}^{2}}{4},
\end{equation}
where the parameter notation is coming from a certain isomonodromy problem. Note that for $\delta=0$, 
$\Pain{V}$ reduces to $\Pain{III}$ and, excluding this case,
it is possible to rescale the variables and put $\eta=-1$. Using these parameters, one possible Hamiltonian for $\Pain{V}$
is given in \cite{Oka:1987:SPEISPEP},
\begin{equation}\label{eq:Ok-Ham5}
	\Ham{Ok}{V}(f,g;t) = \frac{1}{t}\Big(f(f-1)^{2}g^{2} - \left(\kappa_{0}(f-1)^{2} + \theta f(f-1) + \eta t f  \right)g + \kappa(f-1)\Big).
\end{equation}
Using the standard symplectic form $\symp{Ok} = dg\wedge df$, we get the Hamiltonian system
  \begin{equation}\label{eq:Ok-Ham5-sys}
  	\left\{
 	\begin{aligned}
 		\frac{df}{dt} &= \frac{\partial \Ham{Ok}{V}}{\partial g} 
			=\frac{1}{t}\Big((f-1)^{2} (2 f g - \kappa_{0}) - f(\theta(f-1) + t \eta)\Big),\\
		\frac{dg}{dt} &= -\frac{\partial \Ham{Ok}{V}}{\partial f} 
			=-\frac{1}{t}\Big(g \left((f-1)(3 f g- g - 2 \kappa_{0} - \theta) - (\theta f+ t \eta)\right) + \kappa\Big),
 	\end{aligned}
	\right.
  \end{equation}
and eliminating the function $g=g(t)$ from these equations we get $\Pain{V}$ equation \eqref{eq:P5-std} for the function $f=f(t)$ with parameters given by 
\eqref{eq:Ok-pars-PV}.
We take the Okamoto form of the Hamiltonian system for $\Pain{V}$ to be the reference one. However, from the geometric point of view it is better to consider the Hamiltonian given in \cite{KajNouYam:2017:GAPE},
\begin{equation}\label{eq:KNY-Ham5}
	\Ham{KNY}{V}(q,p;t) = \frac{1}{t}\Big(q (q-1) p(p+t) - (a_{1} + a_{3})q p + a_{1}p + a_{2} t q\Big),	
\end{equation}
since the Okamoto space of initial conditions for the resulting Hamiltonian system  w.r.t.~the standard symplectic form $\symp{KNY} = dp \wedge dq$,
\begin{equation}\label{eq-KNY-Ham5-sys}
	\left\{
	\begin{aligned}
		\frac{dq}{dt} &= \frac{\partial \Ham{KNY}{V}}{\partial p} =\frac{1}{t}\Big(q(q-1)(2p+t) - a_{1}(q-1) - a_{3}q\Big),\\
		\frac{dp}{dt} &= -\frac{\partial \Ham{KNY}{V}}{\partial q} =\frac{1}{t}\Big(p(p+t)(1-2q) + (a_{1} + a_{3})p - a_{2}t\Big)
	\end{aligned}
	\right.
\end{equation}
coincides with the standard $D_{5}^{(1)}$ Sakai surface, where the parameters $a_{i}$ satisfying the standard normalization condition 
$a_{0}+a_{1}+a_{2}+a_{3}=1$ are the \emph{root variables}. 
The system \eqref{eq-KNY-Ham5-sys} reduces to the standard Painlev\'e equation \eqref{eq:P5-std} for the variable 
$w(t) = 1 -  \frac{1}{q(t)}$ with the parameter values
\begin{equation}\label{eq:root-pars-PV}
	\alpha = \frac{a_{1}^{2}}{2},\quad \beta = -\frac{a_{3}^{2}}{2},\quad \gamma = a_{0}-a_{2}, \quad \delta = -	\frac{1}{2}.
\end{equation}

There are a few other Hamiltonians for $\Pain{V}$ known in the literature. In \cite{JimMiw:1981:MPDLODEWRC-II} M.~Jimbo and T.~Miwa gave the 
Hamiltonian
\begin{equation}\label{eq:JM-Ham5}
	\begin{aligned}
	\Ham{JM}{V}(y,z;t) &= -\frac{1}{t}\left(z - \frac{1}{y}\left(z + \frac{\Theta_{0} + \Theta_{1} + \Theta_{\infty}}{2}\right)\right)	
			\left(z + \Theta_{0} - y\left(z + \frac{\Theta_{0} - \Theta_{1} + \Theta_{\infty}}{2}\right)\right)	- z - 
			\frac{\Theta_{0} + \Theta_{\infty}}{2}.
	\end{aligned} 
\end{equation}
For this Hamiltonian the symplectic form $\symp{JM}$ is logarithmic, $\symp{JM} = (1/y) dy\wedge dz$, and the resulting system 
is
%
\begin{equation}\label{eq:JM-Ham5-sys}
\left\{
\begin{aligned}
-\frac{1}{y}\frac{dy}{dt} &= \frac{\partial \Ham{JM}{V}}{\partial z} =-1 + \frac{1}{2t y}\left(
(y-1)\Big(\Theta_{0}(y-3) - \Theta_{1}(y+1) + (\Theta_{\infty} + 4z)(y-1)\Big)\right)\\
-\frac{1}{y}\frac{dz}{dt} &= -\frac{\partial \Ham{JM}{V}}{\partial y} = - \frac{1}{2 t y^2}\Big(
z\Big(\Theta_{0}(y^{2}-3) - \Theta_{1} (y^{2} + 1) + (\Theta_{\infty} + 2z)(y^{2}-1) \Big) -
\Theta_{0}(\Theta_{0} + \Theta_{1} + \Theta_{\infty})
\Big).
\end{aligned}
\right.
\end{equation}

This system reduces to the standard $\Pain{V}$ equation \eqref{eq:P5-std} for the variable $y(t)$ with the parameter matching
\begin{equation}\label{eq:JM-pars-PV}
	\alpha = \frac{1}{2}\left(\frac{\Theta_{0}-\Theta_{1}+\Theta_{\infty}}{2}\right)^{2},\quad 
	\beta = -\frac{1}{2}\left(\frac{\Theta_{0}-\Theta_{1}-\Theta_{\infty}}{2}\right)^{2},\quad 
	\gamma = 1 - \Theta_{0} - \Theta_{1}, \quad \delta = -\frac{1}{2}.
\end{equation}

In \cite{ZolFil:2015:PEEIEF} G.~Filipuk and H.~\.{Z}o\l\c{a}dek introduced the Hamiltonian
\begin{equation*}
	\widetilde{\Ham{FZ}{V}}(x,y;t) = \frac{x(x-1)^{2}y^{2}}{2t} - \frac{\alpha x}{t} + \frac{\beta}{t x} + \frac{\gamma}{x-1} + \frac{\delta t x}{(x-1)^{2}},
\end{equation*}
where parameters $\alpha,\beta,\gamma,\delta$ are the same as in \eqref{eq:P5-std}. 
Rescaling the variables slightly, similar to \eqref{eq:FZ-IP-Ham-4}, we consider instead the Hamiltonian
\begin{equation}\label{eq:FZ-Ham5}
	\Ham{FZ}{V}(x,y;t) = \frac{1}{2}\widetilde{\Ham{FZ}{V}}(x,2y;t) = 
	\frac{x(x-1)^{2}y^{2}}{t} + \frac{1}{2}\left(- \frac{\alpha x}{t} + \frac{\beta}{t x} + \frac{\gamma}{x-1} + \frac{\delta t x}{(x-1)^{2}}\right).
\end{equation}
Using the standard symplectic form $\symp{FZ} = dy \wedge dx$
and the Okamoto parameters \eqref{eq:Ok-pars-PV}, we get the following system
\begin{equation}\label{eq:FZ-Ham5-sys}
	\left\{
	\begin{aligned}
		\frac{dx}{dt} &= \frac{\partial \Ham{FZ}{V}}{\partial y} =\frac{2x(x-1)^{2}y}{t},\\
		\frac{dy}{dt} &= -\frac{\partial \Ham{FZ}{V}}{\partial y} =-\frac{1}{4}\left(
		\frac{2 (\theta+1) \eta}{(x-1)^{2}} + \frac{\eta^{2} t(x+1)}{(x-1)^{3}} + 
			\frac{\kappa_{0}^{2}}{t x^{2}} + \frac{4(3x^{2} - 4x + 1)y^{2} - \kappa_{\infty}^{2}}{t} 
		\right).
	\end{aligned}
	\right.
\end{equation}

Finally, the Hamiltonian given in \cite{ItsPro:2018:SHPIF} (that we write in variables $(f,g)$ instead of the original variables $(q,p)$),
\begin{equation}\label{eq:IP-Ham5}
	\begin{aligned}
	\Ham{IP}{V}(f,g;t) &= \frac{1}{t}\Big(f(f-1)^{2}g^{2} + (\Theta_{0} + 3 \Theta_{1} + \Theta_{\infty})f^{2}g + (t - 2\Theta_{\infty} - 4\Theta_{1})fg
	+ (\Theta_{\infty} +  \Theta_{1} - \Theta_{0})g \\
	&\qquad + 2\Theta_{1}(\Theta_{\infty} + \Theta_{1} + \Theta_{0})f + \Theta_{0}^{2} - (\Theta_{1} + \Theta_{\infty})^{2} + \Theta_{1}t \Big)		
	\end{aligned}
\end{equation}
coincides, up to purely $t$-dependent terms, with the Okamoto Hamiltonian \eqref{eq:Ok-Ham5} under the parameter matching
\begin{equation}\label{eq:IP-pars-Ok}
\Theta_{0} = \frac{\kappa_{0} - \kappa_{\infty} - \theta}{4},\quad \Theta_{1} = - \frac{\kappa_{0} - \kappa_{\infty} + \theta}{4},\quad 
\Theta_{\infty} = -\frac{\kappa_{0} + \kappa_{\infty}}{2}, 
\end{equation}
and the normalization $\eta=-1$, and so we do not consider it in any detail. 
Note that here the parameters $\Theta_{i}$ are, up to some simple scaling, the same as in the Jimbo-Miwa case. 

In the next sections we give the spaces of initial conditions for each of those Hamiltonian systems and give the change of coordinates
reducing them to the standard Okamoto case. These computations are standard and so we only summarize the geometric data. 
We begin, however, with system~\eqref{eq-KNY-Ham5-sys} and the standard $D_{5}^{(1)}$ surface.

\subsection{The Kajiwara-Noumi-Yamada Hamiltonian system} 
\label{sub:KNY-5}

\begin{notation*}
For the Kajiwara-Noumi-Yamada system we use the following notation: coordinates $(q,p)$, 
parameters $a_{0},a_{1},a_{2},a_{3}$ (the root variables); time variable $t$;
base points $p_{i}$, exceptional divisors $E_{i}$.	
\end{notation*}

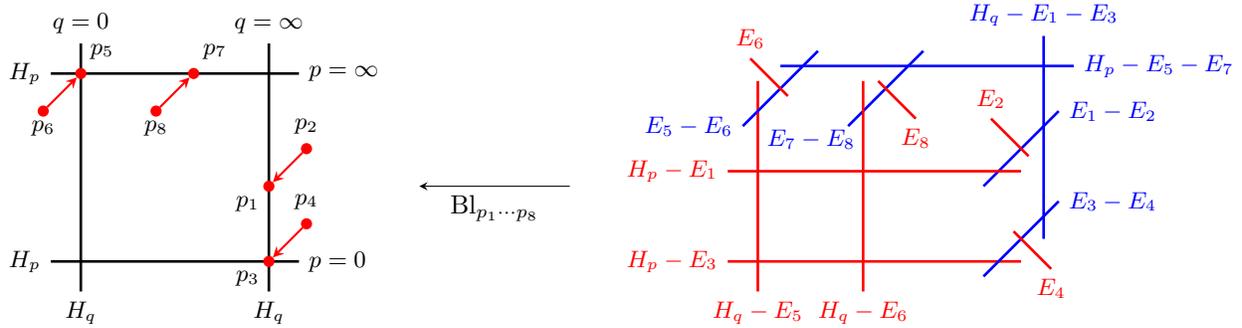
\begin{figure}[ht]
	\begin{tikzpicture}[>=stealth,basept/.style={circle, draw=red!100, fill=red!100, thick, inner sep=0pt,minimum size=1.2mm}]
	\begin{scope}[xshift=0cm,yshift=0cm]
	\draw [black, line width = 1pt] (-0.4,0) -- (2.9,0)	node [pos=0,left] {\small $H_{p}$} node [pos=1,right] {\small $p=0$};
	\draw [black, line width = 1pt] (-0.4,2.5) -- (2.9,2.5) node [pos=0,left] {\small $H_{p}$} node [pos=1,right] {\small $p=\infty$};
	\draw [black, line width = 1pt] (0,-0.4) -- (0,2.9) node [pos=0,below] {\small $H_{q}$} node [pos=1,above] {\small $q=0$};
	\draw [black, line width = 1pt] (2.5,-0.4) -- (2.5,2.9) node [pos=0,below] {\small $H_{q}$} node [pos=1,above] {\small $q=\infty$};
	\node (p3) at (2.5,0) [basept,label={[xshift = -8pt, yshift=-15pt] \small $p_{3}$}] {};
	\node (p4) at (3,0.5) [basept,label={[yshift=0pt] \small $p_{4}$}] {};
	\node (p1) at (2.5,1) [basept,label={[xshift = -8pt, yshift=-15pt] \small $p_{1}$}] {};
	\node (p2) at (3,1.5) [basept,label={[yshift=0pt] \small $p_{2}$}] {};
	\node (p5) at (0,2.5) [basept,label={[xshift = 8pt, yshift=0pt] \small $p_{5}$}] {};
	\node (p6) at (-.5,2) [basept,label={[yshift=-15pt] \small $p_{6}$}] {};
	\node (p7) at (1.5,2.5) [basept,label={[xshift = 8pt, yshift=0pt] \small $p_{7}$}] {};
	\node (p8) at (1,2) [basept,label={[yshift=-15pt] \small $p_{8}$}] {};
	\draw [red, line width = 0.8pt, ->] (p2) -- (p1);
	\draw [red, line width = 0.8pt, ->] (p4) -- (p3);
	\draw [red, line width = 0.8pt, ->] (p6) -- (p5);
	\draw [red, line width = 0.8pt, ->] (p8) -- (p7);	
	\end{scope}
	\draw [->] (6.5,1)--(4.5,1) node[pos=0.5, below] {$\operatorname{Bl}_{p_{1}\cdots p_{8}}$};
	\begin{scope}[xshift=9cm,yshift=0cm]
	\draw [red, line width = 1pt] (-0.4,0) -- (3.5,0)	node [pos=0, left] {\small $H_{p}-E_{3}$};
	\draw [red, line width = 1pt] (0,-0.4) -- (0,2.4) node [pos=0, below] {\small $H_{q}-E_{5}$};
	\draw [blue, line width = 1pt] (-0.2,1.8) -- (0.8,2.8) node [pos=0, left] {\small $E_{5}-E_{6}$};
	\draw [red, line width = 1pt] (-0.1,2.7) -- (0.4,2.2) node [pos=0, above] {\small $E_{6}$};
	\draw [blue, line width = 1pt] (1.2,1.8) -- (2.2,2.8) node [pos=0, xshift=-14pt, yshift=-5pt] {\small $E_{7}-E_{8}$};
	\draw [red, line width = 1pt] (1.6,2.4) -- (2.1,1.9) node [pos=1, below] {\small $E_{8}$};
	\draw [blue, line width = 1pt] (0.3,2.6) -- (4.2,2.6) node [pos=1,right] {\small $H_{p} - E_{5} - E_{7}$};
	\draw [blue, line width = 1pt] (3,-0.2) -- (4,0.8) node [pos=1,right] {\small $E_{3} - E_{4}$};
	\draw [red, line width = 1pt] (3.4,0.4) -- (3.9,-0.1) node [pos=1, below] {\small $E_{4}$};
	\draw [blue, line width = 1pt] (3.8,0.3) -- (3.8,3) node [pos=1, above] {\small $H_{q}-E_{1} - E_{3}$};	
	\draw [blue, line width = 1pt] (3,1) -- (4,2) node [pos=1,right] {\small $E_{1} - E_{2}$};
		\draw [red, line width = 1pt] (3.1,1.9) -- (3.6,1.4) node [pos=0, above] {\small $E_{2}$};
	\draw [red, line width = 1pt] (-0.4,1.2) -- (3.5,1.2)	node [pos=0, left] {\small $H_{p}-E_{1}$};
	\draw [red, line width = 1pt] (1.4,-0.4) -- (1.4,2.4) node [pos=0, below] {\small $H_{q}-E_{6}$};
	\end{scope}
	\end{tikzpicture}
	\caption{The Space of Initial Conditions for the Kajiwara-Noumi-Yamada Hamiltonian System (standard $D_{5}^{(1)}$ surface)}
	\label{fig:KNY-soic-5}
\end{figure}

The Okamoto space of initial conditions for system~\eqref{eq-KNY-Ham5-sys} is given on  Figure~\ref{fig:KNY-soic-5}. It is the standard realization of the
$D_{5}^{(1)}$ Sakai surface and the coordinates of the basepoints are 
given in terms of root variables satisfying the usual normalization condition $a_{0} + a_{1} + a_{2} + a_{3} = 1$ by
\begin{equation*}
	p_{1}(\infty,-t)\leftarrow p_{2}(0,-a_{0}),\quad p_{3}(\infty,0)\leftarrow p_{4}(0,-a_{2}),\quad p_{5}(0,\infty)\leftarrow p_{6}(a_{1},0),\quad 
	p_{7}(1,\infty) \leftarrow p_{8}(a_{3},0).
\end{equation*}
This is the same parameterization of the point configuration as in section 8.2.18 of \cite{KajNouYam:2017:GAPE}.

The surface and symmetry root bases for this standard realization of the $D_{5}^{(1)}$ surface are given on Fig.~\ref{fig:d-roots-d5-KNY}	
and Fig.~\ref{fig:a-roots-a3-KNY} respectively. The birational representation of the extended affine Weyl symmetry group 
$\widetilde{W}(D_{5}^{(1)})$ is given in \cite{KajNouYam:2017:GAPE} and \cite{HuDzhChe:2020:PLUEDPE}, and we do not reproduce it here. 

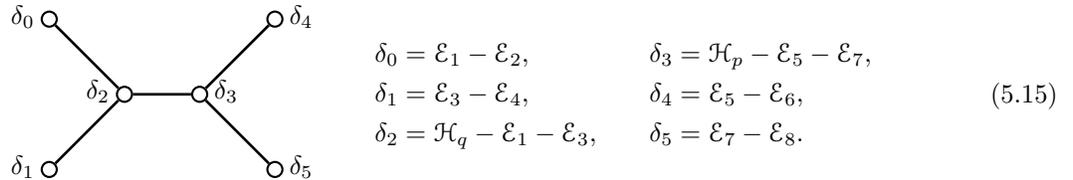
\begin{figure}[ht]
\begin{equation}\label{eq:d-roots-d51}			
	\raisebox{-32.1pt}{\begin{tikzpicture}[
			elt/.style={circle,draw=black!100,thick, inner sep=0pt,minimum size=2mm}]
		\path 	(-1,1) 	node 	(d0) [elt, label={[xshift=-10pt, yshift = -10 pt] $\delta_{0}$} ] {}
		        (-1,-1) node 	(d1) [elt, label={[xshift=-10pt, yshift = -10 pt] $\delta_{1}$} ] {}
		        ( 0,0) 	node  	(d2) [elt, label={[xshift=-10pt, yshift = -10 pt] $\delta_{2}$} ] {}
		        ( 1,0) 	node  	(d3) [elt, label={[xshift=10pt, yshift = -10 pt] $\delta_{3}$} ] {}
		        ( 2,1) 	node  	(d4) [elt, label={[xshift=10pt, yshift = -10 pt] $\delta_{4}$} ] {}
		        ( 2,-1) node 	(d5) [elt, label={[xshift=10pt, yshift = -10 pt] $\delta_{5}$} ] {};
		\draw [black,line width=1pt ] (d0) -- (d2) -- (d1)  (d2) -- (d3) (d4) -- (d3) -- (d5);
	\end{tikzpicture}} \qquad
			\begin{alignedat}{2}
			\delta_{0} &= \mathcal{E}_{1} - \mathcal{E}_{2}, &\qquad  \delta_{3} &= \mathcal{H}_{p} - \mathcal{E}_{5} - \mathcal{E}_{7},\\
			\delta_{1} &= \mathcal{E}_{3} - \mathcal{E}_{4}, &\qquad  \delta_{4} &= \mathcal{E}_{5} - \mathcal{E}_{6},\\
			\delta_{2} &= \mathcal{H}_{q} - \mathcal{E}_{1} - \mathcal{E}_{3}, &\qquad  \delta_{5} &= \mathcal{E}_{7} - \mathcal{E}_{8}.
			\end{alignedat}
\end{equation}
	\caption{The Surface Root Basis for the standard $D_{5}^{(1)}$ Sakai surface point configuration}
	\label{fig:d-roots-d5-KNY}	
\end{figure}

\begin{figure}[ht]
\begin{equation}\label{eq:a-roots-a31}			
	\raisebox{-32.1pt}{\begin{tikzpicture}[
			elt/.style={circle,draw=black!100,thick, inner sep=0pt,minimum size=2mm}]
		\path 	(-1,1) 	node 	(a0) [elt, label={[xshift=-10pt, yshift = -10 pt] $\alpha_{0}$} ] {}
		        (-1,-1) node 	(a1) [elt, label={[xshift=-10pt, yshift = -10 pt] $\alpha_{1}$} ] {}
		        ( 1,-1) node  	(a2) [elt, label={[xshift=10pt, yshift = -10 pt] $\alpha_{2}$} ] {}
		        ( 1,1) 	node 	(a3) [elt, label={[xshift=10pt, yshift = -10 pt] $\alpha_{3}$} ] {};
		\draw [black,line width=1pt ] (a0) -- (a1) -- (a2) --  (a3) -- (a0); 
	\end{tikzpicture}} \qquad
			\begin{alignedat}{2}
			\alpha_{0} &= \mathcal{H}_{p} - \mathcal{E}_{1} - \mathcal{E}_{2}, &\qquad  \alpha_{2} &= \mathcal{H}_{p} - \mathcal{E}_{3} - \mathcal{E}_{4},\\
			\alpha_{1} &= \mathcal{H}_{q} - \mathcal{E}_{5} - \mathcal{E}_{6}, &\qquad  \alpha_{3} &= \mathcal{H}_{q} - \mathcal{E}_{7} - \mathcal{E}_{8}.
			\\[5pt]
			\delta & = \mathrlap{\alpha_{0} + \alpha_{1} +  \alpha_{2} + \alpha_{3}.} 
			\end{alignedat}
\end{equation}
	\caption{The Symmetry Root Basis for the standard $A_{3}^{(1)}$ symmetry sub-lattice}
	\label{fig:a-roots-a3-KNY}	
\end{figure}

\subsection{The Okamoto Hamiltonian system} 
\label{sub:OK-5}
\begin{notation*}
For the Okamoto system we use the following notation: coordinates $(f,g)$, parameters are the Okamoto parameters \eqref{eq:Ok-pars-PV};
base points $q_{i}$, exceptional divisors $F_{i}$.	
\end{notation*}

The space of initial conditions for the system \eqref{eq:Ok-Ham5-sys} is given on Figure~\ref{fig:Ok-soic-5}, where the basepoints are
\begin{equation}\label{eq:Ok-pts-5}
	\begin{tikzpicture}
	\node (q1) at (0,0) {$q_{1}(\infty,0)$}; 
	\node (q2) at (2,0.7) {$q_{2}\left(0,\frac{\theta + \kappa_{0} - \kappa_{\infty}}{2}\right),$}; 
	\node (q3) at (2,-0.7) {$q_{3}\left(0,\frac{\theta + \kappa_{0} + \kappa_{\infty}}{2}\right),$}; 
	\draw[->] (q2)--(1,0)--(q1);	\draw[->] (q3)--(1,0)--(q1); 
	\node (q4) at (5,0.7) {$q_{4}(0,\infty)$};
	\node (q5) at (7,0.7) {$q_{5}(\kappa_{0},0),$};
	\draw[->] (q5)--(q4);
	\node (q6) at (5,-0.7) {$q_{6}(1,\infty)$}; \node (q7) at (7.5,-0.7) {$q_{7}(u_{6}=0,v_{6}=0)$};
	\node (q8) at (10,-0.7) {$q_{8}(t \eta,0)$}; \node (q9) at (12.5,-0.7) {$q_{9}(t(1 + \theta)\eta,0).$};
	\draw[->] (q7)--(q6);	\draw[->] (q8)--(q7); \draw[->] (q9)--(q8);
	\end{tikzpicture}
\end{equation}
Note that we have nine basepoints and two $-3$ curves, so this surface is not minimal and we need to blow down the $-1$-curve $H_{f} - F_{1}$.

\begin{figure}[ht]
	\begin{tikzpicture}[>=stealth,basept/.style={circle, draw=red!100, fill=red!100, thick, inner sep=0pt,minimum size=1.2mm}]
	\begin{scope}[xshift=0cm,yshift=0cm]
	\draw [black, line width = 1pt] (-0.4,0) -- (2.9,0)	node [pos=0,left] {\small $H_{g}$} node [pos=1,right] {\small $g=0$};
	\draw [black, line width = 1pt] (-0.4,2.5) -- (2.9,2.5) node [pos=0,left] {\small $H_{g}$} node [pos=1,right] {\small $g=\infty$};
	\draw [black, line width = 1pt] (0,-0.4) -- (0,2.9) node [pos=0,below] {\small $H_{f}$} node [pos=1,above] {\small $f=0$};
	\draw [black, line width = 1pt] (2.5,-0.4) -- (2.5,2.9) node [pos=0,below] {\small $H_{f}$} node [pos=1,above] {\small $f=\infty$};
	\node (q1) at (2.5,0) [basept,label={[xshift = -8pt, yshift=-15pt] \small $q_{1}$}] {};
	\node (q2) at (3.2,0.5) [basept,label={[yshift=0pt] \small $q_{2}$}] {};
	\node (q3) at (3.2,-0.5) [basept,label={[yshift=-15pt] \small $q_{3}$}] {};
	\node (q4) at (0,2.5) [basept,label={[xshift = 8pt, yshift=-15pt] \small $q_{4}$}] {};
	\node (q5) at (-.5,2) [basept,label={[yshift=-15pt] \small $q_{5}$}] {};
	\node (q6) at (0.5,2.5) [basept,label={[yshift=0pt] \small $q_{6}$}] {};
	\node (q7) at (1.2,2) [basept,label={[yshift=-15pt] \small $q_{7}$}] {};
	\node (q8) at (1.7,2) [basept,label={[yshift=-15pt] \small $q_{8}$}] {};
	\node (q9) at (2.2,2) [basept,label={[yshift=-15pt] \small $q_{9}$}] {};
	\draw [red, line width = 0.8pt, ->] (q2) -- (q1);
	\draw [red, line width = 0.8pt, ->] (q3) -- (q1);
	\draw [red, line width = 0.8pt, ->] (q5) -- (q4);	
	\draw [red, line width = 0.8pt, ->] (q7) -- (0.8,2.5) -- (q6);
	\draw [red, line width = 0.8pt, ->] (q8) -- (q7);	
	\draw [red, line width = 0.8pt, ->] (q9) -- (q8);
	\end{scope}
	\draw [->] (6.5,1)--(4.5,1) node[pos=0.5, below] {$\operatorname{Bl}_{q_{1}\cdots q_{9}}$};
	\begin{scope}[xshift=8cm,yshift=0cm]
	\draw [red, line width = 1pt] (-0.4,0) -- (3.2,0)	node [pos=0, left] {\small $H_{g}-F_{1}$};
	\draw [red, line width = 1pt] (0,-0.4) -- (0,2.4) node [pos=0, below] {\small $H_{f}-F_{4}$};
	\draw [blue, line width = 1pt] (-0.4,1.6) -- (1,3) node [pos=0, left] {\small $F_{4}-F_{5}$};
	\draw [red, line width = 1pt] (0.7,2) -- (-0.1,2.8) node [pos=1, left] {\small $F_{5}$};
	\draw [teal, line width = 1pt] (0.3,2.7) -- (4,2.7) node [pos=1,right] {\small $H_{g} - F_{4} - F_{6} - F_{7}$};
	\draw [teal, line width = 1pt] (2.5,-0.2) -- (4,1.3) node [pos=1,right] {\small $F_{1} - F_{2} - F_{3}$};
	\draw [red, line width = 1pt] (3.8,0.4) -- (3.8,3) node [pos=1, above] {\small $H_{f}-F_{1}$};
	\draw [red, line width = 1pt] (2.7,0.7) -- (3.5,-0.1) node [pos=1, below] {\small $F_{2}$};
	\draw [red, line width = 1pt] (3,1) -- (3.8,0.2) node [pos=1, right] {\small $F_{3}$};
	\draw [red, line width = 1pt] (1.5,-0.4) -- (1.5,1) node [pos=0, below] {\small $H_{f}-F_{6}$};
	\draw [blue, line width = 1pt] (0.9,1.4) -- (1.7,0.6) node [pos=1, xshift = 12pt, yshift=-5pt ] {\small $F_{6}-F_{7}$};	
	\draw [blue, line width = 1pt] (0.8,1) -- (2.8,3) node [pos=1,yshift=9pt, xshift = -8pt] {\small $F_{7}-F_{8}$};	
	\draw [blue, line width = 1pt] (2,2.5) -- (2.8,1.7) node [pos=1, xshift = 5pt, yshift=-5pt ] {\small $F_{8}-F_{9}$};	
	\draw [red, line width = 1pt] (2,1.5) -- (2.8,2.3) node [pos=0, below] {\small $F_{9}$};
	\end{scope}
	\end{tikzpicture}
	\caption{The Space of Initial Conditions for the Okamoto Hamiltonian System}
	\label{fig:Ok-soic-5}
\end{figure}

We then get the following Lemma.

\begin{lemma}\label{lem:KNY-to-Ok-5} The change of bases for Picard lattices between the standard Kajiwara-Noumi-Yamada (with an additional blowup point) 
	and the Okamoto surfaces is given by 
	\begin{equation}\label{eq:basis-KNY-Ok-5}
		\begin{aligned}
			\mathcal{H}_{q} & = \mathcal{H}_{f}, &\qquad 
				\mathcal{H}_{f} &= \mathcal{H}_{q},\\
			\mathcal{H}_{p} &= 2\mathcal{H}_{f} + \mathcal{H}_{g} - \mathcal{F}_{1} - \mathcal{F}_{2} - \mathcal{F}_{6} - \mathcal{F}_{7},  &\qquad 	
				\mathcal{H}_{g} & = 2 \mathcal{H}_{q} + \mathcal{H}_{p} - \mathcal{E}_{3} - \mathcal{E}_{4} - \mathcal{E}_{5} - \mathcal{E}_{9}, \\
			\mathcal{E}_{1} &= \mathcal{F}_{8}, &\qquad 
				\mathcal{F}_{1}	&= \mathcal{H}_{q} - \mathcal{E}_{9},\\ 
			\mathcal{E}_{2} &= \mathcal{F}_{9}, &\qquad 
				\mathcal{F}_{2}	&= \mathcal{H}_{q} - \mathcal{E}_{5},\\ 
			\mathcal{E}_{3} &= \mathcal{H}_{f} - \mathcal{F}_{7}, &\qquad 
				\mathcal{F}_{3}	&= \mathcal{E}_{6},\\ 
			\mathcal{E}_{4} &= \mathcal{H}_{f} - \mathcal{F}_{6}, &\qquad 
				\mathcal{F}_{4}	&= \mathcal{E}_{7},\\ 
			\mathcal{E}_{5} &= \mathcal{H}_{f} - \mathcal{F}_{2}, &\qquad 
				\mathcal{F}_{5}	&= \mathcal{E}_{8},\\ 
			\mathcal{E}_{6} &= \mathcal{F}_{3}, &\qquad 
				\mathcal{F}_{6}	&= \mathcal{H}_{q} - \mathcal{E}_{4},\\ 
			\mathcal{E}_{7} &= \mathcal{F}_{4}, &\qquad 
				\mathcal{F}_{7}	&= \mathcal{H}_{q} - \mathcal{E}_{3},\\ 
			\mathcal{E}_{8} &= \mathcal{F}_{5}, &\qquad 
				\mathcal{F}_{8}	&= \mathcal{E}_{1},\\ 
			\mathcal{E}_{9} &= \mathcal{H}_{f} - \mathcal{F}_{1}, &\qquad 
				\mathcal{F}_{9}	&= \mathcal{E}_{2}.
		\end{aligned}
	\end{equation}
	This results in the following correspondences between the surface roots with additional blowups (so strictly speaking these are no longer roots but 
	 classes of curves with indices $-2$ and $-3$), 
	\begin{equation}\label{eq:geom-KNY-Ok-5}
	\begin{aligned}
		\delta_{0} &= \mathcal{E}_{1} - \mathcal{E}_{2} = \mathcal{F}_{8} - \mathcal{F}_{9}, &\qquad
			\delta_{3} &=\mathcal{H}_{p} - \mathcal{E}_{5} - \mathcal{E}_{7} - \mathcal{E}_{9} = 
				\mathcal{H}_{g} - \mathcal{F}_{4} - \mathcal{F}_{6} - \mathcal{F}_{7}, \\
		\delta_{1} &= \mathcal{E}_{3} - \mathcal{E}_{4} = \mathcal{F}_{6} - \mathcal{F}_{7}, &\qquad
			\delta_{4} &= \mathcal{E}_{5} - \mathcal{E}_{6} - \mathcal{E}_{9} = \mathcal{F}_{1} - \mathcal{F}_{2} - \mathcal{F}_{3}, \\
		\delta_{2} &=  \mathcal{H}_{q} - \mathcal{E}_{1} - \mathcal{E}_{3} = \mathcal{F}_{7} - \mathcal{F}_{8}, &\qquad
			\delta_{5} &= \mathcal{E}_{7} - \mathcal{E}_{8} = \mathcal{F}_{4} - \mathcal{F}_{5};
	\end{aligned}
	\end{equation}
	and the symmetry roots,
	\begin{equation}\label{eq:sym-KNY-Ok-5}
	\begin{aligned}
			\alpha_{0} &= \mathcal{H}_{p} - \mathcal{E}_{1} - \mathcal{E}_{2}
		= 2 \mathcal{H}_{f} + \mathcal{H}_{g} - \mathcal{F}_{1} - \mathcal{F}_{2} - \mathcal{F}_{6} - \mathcal{F}_{7} - \mathcal{F}_{8} - \mathcal{F}_{9}, \\
			\alpha_{1} &= \mathcal{H}_{q} - \mathcal{E}_{5} - \mathcal{E}_{6} =  \mathcal{F}_{2} - \mathcal{F}_{3}, \\
			\alpha_{2} &= \mathcal{H}_{p} - \mathcal{E}_{3} - \mathcal{E}_{4} = 
			\mathcal{H}_{g} - \mathcal{F}_{1} - \mathcal{F}_{2},\\
			\alpha_{3} &= \mathcal{H}_{q} - \mathcal{E}_{7} - \mathcal{E}_{8} = 
			\mathcal{H}_{f} - \mathcal{F}_{4} - \mathcal{F}_{5}.
		\end{aligned}
	\end{equation}
	The symplectic form is the standard one, $\symp{Ok} = dg\wedge df$, and the root variables are 
	\begin{equation*}
		a_{0} = 1 + \frac{\theta-\kappa_{0} + \kappa_{\infty}}{2},\quad  a_{1} = - \kappa_{\infty},\quad
		a_{2} = \frac{-\theta - \kappa_{0} + \kappa_{\infty}}{2},\quad a_{3} = \kappa_{0},
	\end{equation*}
	where the standard normalization $a_{0}+a_{1}+a_{2}+a_{3}=1$ is equivalent to rescaling $\eta = -1$ in \eqref{eq:Ok-pars-PV}.
\end{lemma}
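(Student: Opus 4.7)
The plan is to follow exactly the identification procedure laid out in Section~\ref{sub:ident-proc} and applied in detail in Section~\ref{sub:IP-soic-4}, adapting it to the $D_5^{(1)}$ case. First I would verify the list of base points \eqref{eq:Ok-pts-5} by writing the Okamoto system \eqref{eq:Ok-Ham5-sys} in the four affine charts on $\mathbb{P}^1 \times \mathbb{P}^1$ and, in each chart where the vector field is indeterminate, identifying the centers of the required blowups together with the cascades they generate. This will produce the configuration drawn in Figure~\ref{fig:Ok-soic-5}, with nine base points and two components of self-intersection $-3$ (namely $\mathcal{F}_1 - \mathcal{F}_2 - \mathcal{F}_3$ and $\mathcal{H}_g - \mathcal{F}_4 - \mathcal{F}_6 - \mathcal{F}_7$), confirming that the surface is non-minimal and is made minimal by blowing down the inaccessible $-1$-curve $H_f - F_1$.

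Next I would symmetrize the comparison by blowing up one additional point on the standard $D_5^{(1)}$ surface of Figure~\ref{fig:KNY-soic-5}. Since one of the two $-3$-curves is incident with the leg corresponding to $\delta_3$ of the Dynkin diagram in Figure~\ref{fig:d-roots-d5-KNY}, the extra blowup must sit on the divisor $\mathcal{H}_p - \mathcal{E}_5 - \mathcal{E}_7$, while the other $-3$-curve is already present in the Okamoto data. With both Picard lattices now of the same rank $11$, I would write down a preliminary $\mathbb{Z}$-linear identification by matching the irreducible components of the two anti-canonical divisors componentwise as in \eqref{eq:geom-KNY-Ok-5}; the three branches of the $D_5^{(1)}$ Dynkin diagram give enough constraints, together with the images of $\mathcal{H}_q$ and $\mathcal{H}_p$, to determine the change of basis uniquely up to the symmetries of the Dynkin diagram.

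To pin down the correct branch of that ambiguity, I would compute the symmetry roots using \eqref{eq:a-roots-a31}, then apply the period map of Section~\ref{ssub:period-map-a21} to the symplectic form $\omega = k\, dg\wedge df$ restricted to the components of $-K_{\mathcal{X}}$ listed in \eqref{eq:geom-KNY-Ok-5}. This yields the Okamoto root variables $a_i$ in terms of $\kappa_0$, $\kappa_\infty$, $\theta$, $\eta$, and the overall scalar $k$; imposing the normalization $a_0+a_1+a_2+a_3=1$ fixes $k$ (equivalently $\eta = -1$), producing $\symp{Ok}=dg\wedge df$. Comparing the resulting root variables with those computed directly on the KNY side (where $a_i$ are the canonical parameters by construction) will either confirm the chosen branch or reveal a discrepancy lying in the action of $\widetilde{W}(D_5^{(1)})$; in the latter case I would compose with the indicated reflection or diagram automorphism exactly as in the adjustment \eqref{eq:IP-rv-prelim}, and feed the corrected linear map back into \eqref{eq:basis-KNY-Ok-5}. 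The main obstacle here is not computational but bookkeeping: getting the Dynkin-diagram leg assignment right on the first try and tracking the signs in the period computation so that the final root-variable matching is clean and parameter-free rescaling is avoided.
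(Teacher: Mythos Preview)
Your proposal is correct and follows exactly the approach the paper intends: it explicitly states that ``these computations are standard and so we only summarize the geometric data,'' referring back to the identification procedure of Section~\ref{sub:ident-proc} and the worked $\Pain{IV}$ example in Section~\ref{sub:IP-soic-4}. One small sharpening: the extra blowup point $p_9$ on the KNY side must lie at the \emph{intersection} of $d_3$ and $d_4$ (both $\delta_3$ and $\delta_4$ acquire the $-\mathcal{E}_9$ term in \eqref{eq:geom-KNY-Ok-5}), not merely on $d_3$ alone; this is forced by the fact that the Okamoto configuration has \emph{two} $-3$-curves meeting transversally, and placing $p_9$ correctly from the start saves you the Weyl-group adjustment in Step~4.
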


The birational change of variables corresponding to this change of bases of the Picard lattice  is given in the following Lemma.

\begin{lemma}\label{lem:coords-KNY-Ok-5} The change of coordinates and parameter matching between the Kajiwara-Noumi-Yamada 
	and Okamoto Hamiltonian systems is given by
    \begin{equation}\label{eq:KNYtoOk-5}
   	 \left\{\begin{aligned}
   	 	q(f,g,t)&=\frac{1}{1-f},\\
   		p(f,g,t)&= \rlap{$\displaystyle\frac{(f-1)(2(f-1)g - \theta- \kappa_{0}+\kappa_{\infty})}{2}$,}\\
		a_{0}&= 1 + \frac{\theta- \kappa_{0}+\kappa_{\infty}}{2},&\quad a_{1}&=-\kappa_{\infty},\\  
		a_{2}&= - \frac{\theta + \kappa_{0}-\kappa_{\infty}}{2},&\quad a_{3}&=\kappa_{0}.
   	 \end{aligned}\right.
    \qquad\text{and} \qquad 
    	\left\{\begin{aligned}
   	 	f(q,p,t)&=\rlap{$\displaystyle 1 - \frac{1}{q} \quad\text{\emph{(as expected)}}$},\\
   		g(q,p,t)&=q(pq + a_{2}),\\
		\kappa_{0} &= a_{3},&\quad \kappa_{\infty}&=-a_{1},\\  
		\theta &= -a_{1} - 2 a_{2} - a_{3},&\quad \eta &= -1.
    	\end{aligned}\right.
    \end{equation}
\end{lemma}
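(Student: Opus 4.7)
The plan is to follow the template used in the proof of Lemma~\ref{lem:coords-IP-Ok-4}. From the basis change \eqref{eq:basis-KNY-Ok-5} I read off two pieces of information: first, $\mathcal{H}_q = \mathcal{H}_f$, so $q$ must be a M\"obius function of $f$ alone, $q = (Af + B)/(Cf + D)$; second, $\mathcal{H}_p = 2\mathcal{H}_f + \mathcal{H}_g - \mathcal{F}_1 - \mathcal{F}_2 - \mathcal{F}_6 - \mathcal{F}_7$, so $p$ is a projective coordinate on the pencil of $(2,1)$-curves in the $(f,g)$-plane that pass through the four base points whose exceptional divisors are $F_1, F_2, F_6, F_7$, namely the first points of the cascades at $q_1(\infty,0)$ and $q_6(1,\infty)$ together with their proximate partners $q_2$ and $q_7$.

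First I will write a generic $(2,1)$-curve $a_{00} + a_{10}f + a_{01}g + a_{20}f^2 + a_{11}fg + a_{21}f^2 g = 0$ and impose the four incidence conditions; two of these must be realized as tangency-type conditions in the blowup charts around $q_1$ and $q_6$. Solving the resulting linear system reduces the family to a two-parameter span, producing explicit basis curves $C_1, C_2$ so that $p = (K_1 C_1 + L_1 C_2)/(K_2 C_1 + L_2 C_2)$ up to a common scalar. Next, to pin down the remaining M\"obius coefficients in both $q(f,g)$ and $p(f,g)$, I will invoke the forward images of irreducible $-2$-curves dictated by \eqref{eq:basis-KNY-Ok-5}: for example $\varphi_*(\mathcal{F}_8 - \mathcal{F}_9) = \mathcal{E}_1 - \mathcal{E}_2$ forces the curve with class $\mathcal{F}_8 - \mathcal{F}_9$ to collapse onto the $p_1 \leftarrow p_2$ cascade, while $\varphi_*(\mathcal{H}_g - \mathcal{F}_1 - \mathcal{F}_2) = \mathcal{H}_p - \mathcal{E}_3 - \mathcal{E}_4$ controls the image of the $g$-line. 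Each such divisor correspondence yields one or two linear equations on the M\"obius coefficients.

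The parameter correspondence in \eqref{eq:KNYtoOk-5} then drops out of matching the coordinates of the blown-up points through $\varphi_*$; for instance comparing the image of $q_2\bigl(0, (\theta + \kappa_0 - \kappa_\infty)/2\bigr)$ with $p_2(0, -a_0)$ on the $p_1 \leftarrow p_2$ cascade yields $a_0 = 1 + (\theta - \kappa_0 + \kappa_\infty)/2$, and similar comparisons along the remaining three cascades pin down $a_1, a_2, a_3$. The main technical nuisance, as in the $\Pain{IV}$ case, will be careful bookkeeping in the blowup charts when imposing the tangency conditions, together with the sign ambiguity inherent to root variables which enter the Okamoto parameters only through their squares; however this ambiguity has already been resolved by the explicit root variable matching established in Lemma~\ref{lem:KNY-to-Ok-5}. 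Once \eqref{eq:KNYtoOk-5} is verified in one direction, the inverse is immediate by direct algebraic inversion, and the expected relation $f = 1 - 1/q$ dictated by \eqref{eq:P5-std} provides an internal consistency check.
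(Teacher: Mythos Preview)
Your proposal is correct and follows precisely the template the paper establishes in the proof of Lemma~\ref{lem:coords-IP-Ok-4}; the paper itself does not give a separate proof here, merely remarking that these computations are standard and stating the result.

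One small slip: in your illustrative example you write that comparing the image of $q_2$ with $p_2$ yields the value of $a_0$, but from the basis change \eqref{eq:basis-KNY-Ok-5} one has $\mathcal{E}_2 = \mathcal{F}_9$ and $\mathcal{F}_2 = \mathcal{H}_q - \mathcal{E}_5$, so it is $q_9$ (not $q_2$) whose exceptional divisor corresponds to $E_2$, and the $a_0$-relation comes from matching the $q_8 \leftarrow q_9$ data to the $p_1 \leftarrow p_2$ cascade. This does not affect your method, only the bookkeeping in that particular sentence.
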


Since the change of variables is time-independent, the Hamiltonians match (up to $t$-dependent terms), 
\begin{equation}\label{eq:Ham-KNY-Ok-5}
	\Ham{KNY}{V}(q,p;t) = \Ham{Ok}{V}(f(q,p,t),g(q,p,t);t) + a_{2} + \frac{a_{2}(1 - a_{0})}{t}.
\end{equation}


\subsection{The Jimbo-Miwa Hamiltonian System} 
\label{sub:JM-5}
\begin{notation*}
For the Jimbo-Miwa system we use the following notation: coordinates $(y,z)$, parameters \eqref{eq:JM-pars-PV},
base points $x_{i}$, exceptional divisors $M_{i}$.	
\end{notation*}

The space of initial conditions for the system \eqref{eq:JM-Ham5-sys} is given on Figure~\ref{fig:soic-JM}, where the basepoints are
\begin{equation}\label{eq:JM-pts}
\begin{aligned}
	&x_{1}(0,-\Theta_{0}),\quad x_{2}\left(0,-\frac{\Theta_{0} + \Theta_{1} + \Theta_{\infty}}{2}\right),\quad x_{3}(\infty,0),\quad 
	x_{4}\left(\infty,-\frac{\Theta_{0} - \Theta_{1} + \Theta_{\infty}}{2}\right),\\
	&x_{5}(1,\infty)\leftarrow x_{6}(u_{5} = 0, v_{5} = 0)\leftarrow x_{7}(t,0)\leftarrow x_{8}\Big(t\left(t + \Theta_{0} + \Theta_{1} - 1\right),0\Big).
\end{aligned}
\end{equation}

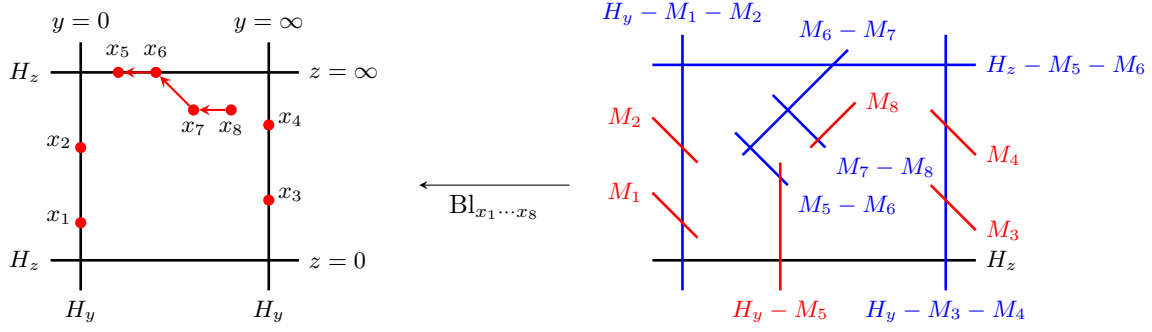
\begin{figure}[ht]
	\begin{tikzpicture}[>=stealth,basept/.style={circle, draw=red!100, fill=red!100, thick, inner sep=0pt,minimum size=1.2mm}]
	\begin{scope}[xshift=0cm,yshift=0cm]
	\draw [black, line width = 1pt] (-0.4,0) -- (2.9,0)	node [pos=0,left] {\small $H_{z}$} node [pos=1,right] {\small $z=0$};
	\draw [black, line width = 1pt] (-0.4,2.5) -- (2.9,2.5) node [pos=0,left] {\small $H_{z}$} node [pos=1,right] {\small $z=\infty$};
	\draw [black, line width = 1pt] (0,-0.4) -- (0,2.9) node [pos=0,below] {\small $H_{y}$} node [pos=1,above] {\small $y=0$};
	\draw [black, line width = 1pt] (2.5,-0.4) -- (2.5,2.9) node [pos=0,below] {\small $H_{y}$} node [pos=1,above] {\small $y=\infty$};
	\node (x1) at (0,0.5) [basept,label={[left] \small $x_{1}$}] {};
	\node (x2) at (0,1.5) [basept,label={[left] \small $x_{2}$}] {};
	\node (x3) at (2.5,0.8) [basept,label={[right] \small $x_{3}$}] {};
	\node (x4) at (2.5,1.8) [basept,label={[right] \small $x_{4}$}] {};
	\node (x5) at (0.5,2.5) [basept,label={[above] \small $x_{5}$}] {};
	\node (x6) at (1,2.5) [basept,label={[above] \small $x_{6}$}] {};
	\node (x7) at (1.5,2) [basept,label={[xshift = 0pt, yshift=-15pt] \small $x_{7}$}] {};
	\node (x8) at (2,2) [basept,label={[xshift = 0pt, yshift=-15pt] \small $x_{8}$}] {};
	\draw [red, line width = 0.8pt, ->] (x6) -- (x5);
	\draw [red, line width = 0.8pt, ->] (x7) -- (x6);
	\draw [red, line width = 0.8pt, ->] (x8) -- (x7);
	\end{scope}
	\draw [->] (6.5,1)--(4.5,1) node[pos=0.5, below] {$\operatorname{Bl}_{x_{1}\cdots x_{8}}$};
	\begin{scope}[xshift=8cm,yshift=0cm]
	\draw [black, line width = 1pt] (-0.4,0) -- (3.9,0)	node [pos=1, right] {\small $H_{z}$};
	\draw [blue, line width = 1pt] (-0.4,2.6) -- (3.9,2.6)	node [pos=1, right] {\small $H_{z}-M_{5}-M_{6}$};
	\draw [blue, line width = 1pt] (0,-0.4) -- (0,3)	node [pos=1, above] {\small $H_{y}-M_{1}-M_{2}$};
	\draw [blue, line width = 1pt] (3.5,-0.4) -- (3.5,3)	node [pos=0, below] {\small $H_{y}-M_{3}-M_{4}$};	
	\draw [red, line width = 1pt] (0.2,0.3) -- (-0.4,0.9)	node [pos=1, left] {\small $M_{1}$};	
	\draw [red, line width = 1pt] (0.2,1.3) -- (-0.4,1.9)	node [pos=1, left] {\small $M_{2}$};	
	\draw [red, line width = 1pt] (3.3,1) -- (3.9,0.4)	node [pos=1, right] {\small $M_{3}$};		
	\draw [red, line width = 1pt] (3.3,2) -- (3.9,1.4)	node [pos=1, right] {\small $M_{4}$};		
	\draw [blue, line width = 1pt] (2.2,2.8) -- (0.8,1.4)	node [pos=0, above] {\small $M_{6}-M_{7}$};	
	\draw [blue, line width = 1pt] (0.7,1.7) -- (1.4,1)	node [pos=1, below right] {\small $M_{5}-M_{6}$};	
	\draw [blue, line width = 1pt] (1.2,2.2) -- (1.9,1.5)	node [pos=1, below right] {\small $M_{7}-M_{8}$};	
	\draw [red, line width = 1pt] (1.3,-0.4) -- (1.3,1.3)	node [pos=0, below] {\small $H_{y} - M_{5}$};	
	\draw [red, line width = 1pt] (1.7,1.5) -- (2.3,2.1)	node [pos=1, right] {\small $M_{8}$};	
	\end{scope}
	\end{tikzpicture}
	\caption{The Space of Initial Conditions for the Jimbo-Miwa Hamiltonian System}
	\label{fig:soic-JM}
\end{figure}

From the geometric point of view, it is best to match it to the standard $D_{5}^{(1)}$ Sakai surface, which we do next.
\begin{lemma}\label{lem:JM-KNY-Ok} The change of bases for Picard lattices between the standard $D_{5}^{(1)}$ Kajiwara-Noumi-Yamada surface
	and the Jimbo-Miwa space of initial condition is given by 
	\begin{equation}\label{eq:basis-KNY-JM}
		\begin{aligned}
			\mathcal{H}_{q} & = \mathcal{H}_{y}, &\qquad 
				\mathcal{H}_{y} &= \mathcal{H}_{q},\\
			\mathcal{H}_{p} &= 2\mathcal{H}_{y} + \mathcal{H}_{z} - \mathcal{M}_{2} - \mathcal{M}_{4} - \mathcal{M}_{5} - \mathcal{M}_{6},  &\qquad 	
				\mathcal{H}_{z} & = 2 \mathcal{H}_{q} + \mathcal{H}_{p} - \mathcal{E}_{3} - \mathcal{E}_{4} - \mathcal{E}_{5} - \mathcal{E}_{7}, \\
			\mathcal{E}_{1} &= \mathcal{M}_{7}, &\qquad 
				\mathcal{M}_{1}	&= \mathcal{E}_{8},\\ 
			\mathcal{E}_{2} &= \mathcal{M}_{8}, &\qquad 
				\mathcal{M}_{2}	&= \mathcal{H}_{q} - \mathcal{E}_{7},\\ 
			\mathcal{E}_{3} &= \mathcal{H}_{y} - \mathcal{M}_{6}, &\qquad 
				\mathcal{M}_{3}	&= \mathcal{E}_{6},\\ 
			\mathcal{E}_{4} &= \mathcal{H}_{y} - \mathcal{M}_{5}, &\qquad 
				\mathcal{M}_{4}	&=\mathcal{H}_{q} - \mathcal{E}_{5},\\ 
			\mathcal{E}_{5} &= \mathcal{H}_{y} - \mathcal{M}_{4}, &\qquad 
				\mathcal{M}_{5}	&=\mathcal{H}_{q} -  \mathcal{E}_{4},\\ 
			\mathcal{E}_{6} &= \mathcal{M}_{3}, &\qquad 
				\mathcal{M}_{6}	&= \mathcal{H}_{q} - \mathcal{E}_{3},\\ 
			\mathcal{E}_{7} &=\mathcal{H}_{y} - \mathcal{M}_{2}, &\qquad 
				\mathcal{M}_{7}	&= \mathcal{E}_{1},\\ 
			\mathcal{E}_{8} &= \mathcal{M}_{1}, &\qquad 
				\mathcal{M}_{8}	&= \mathcal{E}_{2}.
		\end{aligned}
	\end{equation}
	This results in the following correspondences between the surface roots, 
	\begin{equation}\label{eq:geom-KNY-JM}
	\begin{aligned}
		\delta_{0} &= \mathcal{E}_{1} - \mathcal{E}_{2} = \mathcal{M}_{7} - \mathcal{M}_{8}, &\qquad
			\delta_{3} &=\mathcal{H}_{p} - \mathcal{E}_{5} - \mathcal{E}_{7}  = 
				\mathcal{H}_{z} - \mathcal{M}_{5} - \mathcal{M}_{6}, \\
		\delta_{1} &= \mathcal{E}_{3} - \mathcal{E}_{4} = \mathcal{M}_{5} - \mathcal{M}_{6}, &\qquad
			\delta_{4} &= \mathcal{E}_{5} - \mathcal{E}_{6}  = \mathcal{H}_{y}  - \mathcal{M}_{3} - \mathcal{M}_{4}, \\
		\delta_{2} &=  \mathcal{H}_{q} - \mathcal{E}_{1} - \mathcal{E}_{3} = \mathcal{M}_{6} - \mathcal{M}_{7}, &\qquad
			\delta_{5} &= \mathcal{E}_{7} - \mathcal{E}_{8} = \mathcal{H}_{y} - \mathcal{M}_{1} - \mathcal{M}_{2};
	\end{aligned}
	\end{equation}
	and the symmetry roots,
	\begin{equation}\label{eq:sym-KNY-JM}
	\begin{aligned}
			\alpha_{0} &= \mathcal{H}_{p} - \mathcal{E}_{1} - \mathcal{E}_{2}
		= 2 \mathcal{H}_{y} + \mathcal{H}_{z} - \mathcal{M}_{2} - \mathcal{M}_{4} - \mathcal{M}_{5} - \mathcal{M}_{6} - \mathcal{M}_{7} - \mathcal{M}_{8}, \\
			\alpha_{1} &= \mathcal{H}_{q} - \mathcal{E}_{5} - \mathcal{E}_{6} =  \mathcal{M}_{4} - \mathcal{M}_{3}, \\
			\alpha_{2} &= \mathcal{H}_{p} - \mathcal{E}_{3} - \mathcal{E}_{4} = 
			\mathcal{H}_{z} - \mathcal{M}_{2} - \mathcal{M}_{4},\\
			\alpha_{3} &= \mathcal{H}_{q} - \mathcal{E}_{7} - \mathcal{E}_{8} = 
			\mathcal{M}_{2} - \mathcal{M}_{1}.
		\end{aligned}
	\end{equation}
	The symplectic form is the logarithmic one, $\symp{JM} = (1/y)dy\wedge dg$, and the root variables are 
	\begin{equation*}
		a_{0} = 1 -\Theta_{0},\quad  a_{1} = \frac{\Theta_{0} - \Theta_{1} + \Theta_{\infty}}{2},\quad
		a_{2} = \Theta_{1},\quad a_{3} = \frac{\Theta_{0} - \Theta_{1} - \Theta_{\infty}}{2}.
	\end{equation*}
\end{lemma}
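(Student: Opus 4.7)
My plan is to follow the identification procedure of Section~\ref{sub:ident-proc}, using the treatment of the Its-Prokhorov system in Section~\ref{sub:IP-soic-4} as a close structural model. Steps 1 and 2 of that procedure are already handled by Figure~\ref{fig:soic-JM}: the Jimbo-Miwa space of initial conditions is minimal, with eight blowup points, and the six $(-2)$-curves forming the anti-canonical divisor have the intersection pattern of $D_5^{(1)}$.

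For the preliminary matching (Step 3), I would first verify by direct intersection-pairing computation that among these six curves, the two ``fork'' nodes are $H_z - M_5 - M_6$ and $M_6 - M_7$ (each meeting three of the others), with leaves $H_y - M_1 - M_2$ and $H_y - M_3 - M_4$ attached to the first fork and $M_5 - M_6$, $M_7 - M_8$ attached to the second. Assigning these six curves to the six nodes of the Dynkin diagram in Figure~\ref{fig:d-roots-d5-KNY} (making an initial choice within the ambiguity allowed by diagram automorphisms) then determines the linear change of basis between $\operatorname{Pic}^{\mathrm{KNY}}(\mathcal{X})$ and $\operatorname{Pic}^{\mathrm{JM}}(\mathcal{X})$. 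The coordinate classes $\mathcal{H}_q, \mathcal{H}_p$ and exceptional classes $\mathcal{E}_j$ are then obtained by solving the resulting linear system, with uniqueness enforced by preservation of the intersection form and the canonical class.

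For Step 4, I would normalize the symplectic form as $\omega = k (dy \wedge dz)/y$ and compute the root variables via the period map construction analogous to that of Section~\ref{ssub:period-map-a21}, adapted to the $A_3^{(1)}$ symmetry root basis of Figure~\ref{fig:a-roots-a3-KNY}. Imposing the standard normalization $a_0 + a_1 + a_2 + a_3 = 1$ fixes $k = 1$, thereby recovering the logarithmic form $\symp{JM} = (1/y)\, dy \wedge dz$. The resulting expressions for the root variables in terms of $\Theta_0, \Theta_1, \Theta_\infty$ must then be cross-checked against the parameter matching implicit in the JM--IP relation and the reduction to the standard $\Pain{V}$ equation via \eqref{eq:JM-pars-PV} together with \eqref{eq:root-pars-PV}.

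The main subtlety, in direct analogy with the $\Pain{IV}$ Its-Prokhorov case \eqref{eq:IP-rv-prelim}, is that the preliminary leaf-to-node assignment is only determined up to diagram automorphisms of $D_5^{(1)}$, so the initial root variables may differ from the target values by an element of $\widetilde{W}(D_5^{(1)})$. If so, I would adjust the preliminary change of basis by composing with the inverse of the relevant extended Weyl group element to arrive at \eqref{eq:basis-KNY-JM}. The remaining identities \eqref{eq:geom-KNY-JM} and \eqref{eq:sym-KNY-JM} then follow by direct substitution into the definitions of the surface and symmetry roots from Figures \ref{fig:d-roots-d5-KNY} and \ref{fig:a-roots-a3-KNY}.
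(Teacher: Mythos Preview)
Your proposal is correct and follows essentially the same approach as the paper, which simply states that the computations are standard and refers back to the identification procedure of Section~\ref{sub:ident-proc} and the detailed $\Pain{IV}$ examples; your outline is exactly that procedure specialized to this case, with the correct identification of the two trivalent nodes of the $D_{5}^{(1)}$ configuration. One small slip: the symmetry group you may need to adjust by is $\widetilde{W}(A_{3}^{(1)})$, not $\widetilde{W}(D_{5}^{(1)})$, since it is the symmetry sub-lattice (Figure~\ref{fig:a-roots-a3-KNY}) that governs the action on root variables.
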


In the next Lemma we give the birational change of variables corresponding to this change of bases of the Picard lattice, as well as 
the resulting change of variables between the Jimbo-Miwa and the Okamoto systems.

\begin{lemma}\label{lem:coords-JM-KNY-Ok} The change of coordinates and parameter matching between the Kajiwara-Noumi-Yamada 
	and Jimbo-Miwa Hamiltonian systems is given by
    \begin{equation}\label{eq:KNYtoJM}
   	 \left\{\begin{aligned}
   	 	q(y,z,t)&=\frac{1}{1-y},\\
   		p(y,z,t)&= \rlap{$\displaystyle-\frac{(y-1)\left( (\Theta_{0} + \Theta_{\infty} + 2z)(y-1) -\Theta_{1}(y+1) \right)}{2y}$,}\\
		a_{0} &= 1 -\Theta_{0},&\quad  a_{1} &= \frac{\Theta_{0} - \Theta_{1} + \Theta_{\infty}}{2},\\  
		a_{2} &= \Theta_{1},& \quad a_{3} &= \frac{\Theta_{0} - \Theta_{1} - \Theta_{\infty}}{2}.
   	 \end{aligned}\right.
    \qquad\hskip0.7in\text{and} \quad 
    	\left\{\begin{aligned}
   	 	y(q,p,t)&=1 - \frac{1}{q}\\
   		z(q,p,t)&=-a_{1} -a_{2} q - (q-1)qp,\\
		\Theta_{0} &= a_{1} + a_{2} + a_{3},\\  
		\Theta_{1} &= a_{2},\quad \Theta_{\infty} = a_{1}-a_{3}.
    	\end{aligned}\right.
    \end{equation}
	
	Combining it with the change of variables \eqref{eq:KNYtoOk-5}, we get the change of variables between the Okamoto (with $\eta=-1$) 
	and the Jimbo-Miwa Hamiltonian systems,
    \begin{equation}\label{eq:JMtoOk}
   	 \left\{\begin{aligned}
   	 	y(f,g,t)&=f,\\
   		z(f,g,t)&= \frac{\theta + \kappa_{0}+\kappa_{\infty}}{2} - fg,\\
		\Theta_{0}&= \frac{\kappa_{0} - \kappa_{\infty} - \theta}{2},\\ 
		\Theta_{1}&= -\frac{\kappa_{0} - \kappa_{\infty} + \theta}{2},\\  
		\Theta_{\infty}&= - \kappa_{0} - \kappa_{\infty}.
   	 \end{aligned}\right.
    \qquad\text{and} \qquad 
    	\left\{\begin{aligned}
   	 	f(y,z,t)&=y,\\
   		g(y,z,t)&=  - \frac{2z + \Theta_{0} + \Theta_{1} + \Theta_{\infty}}{2y},\\
		\kappa_{0} &= \frac{\Theta_{0}-\Theta_{1} -\Theta_{\infty}}{2},\\
		 \kappa_{\infty}&=- \frac{\Theta_{0}-\Theta_{1} +\Theta_{\infty}}{2},\\  
		\theta &= -\Theta_{0}-\Theta_{1}.
    	\end{aligned}\right.
    \end{equation}
\end{lemma}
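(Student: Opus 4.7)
The proof follows the same template as those of Lemma~\ref{lem:coords-IP-Ok-4} and Lemma~\ref{lem:coords-Kek-Ok-4}. My plan is to first exhibit the birational map $\varphi:(y,z)\mapsto(q,p)$ realizing the Picard lattice identification of Lemma~\ref{lem:JM-KNY-Ok}, which yields \eqref{eq:KNYtoJM}, and then obtain \eqref{eq:JMtoOk} by composing $\varphi$ with the KNY--Okamoto map of Lemma~\ref{lem:coords-KNY-Ok-5}.

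To construct $\varphi$, I would read off the coordinate classes from the change of basis \eqref{eq:basis-KNY-JM}. From $\mathcal{H}_{q}=\mathcal{H}_{y}$ we see that $q$ is a M\"obius function of $y$ alone, so $q(y,z)=(Ay+B)/(Cy+D)$. From $\mathcal{H}_{p}=2\mathcal{H}_{y}+\mathcal{H}_{z}-\mathcal{M}_{2}-\mathcal{M}_{4}-\mathcal{M}_{5}-\mathcal{M}_{6}$ the function $p$ is a ratio of two generators of the pencil of $(2,1)$-curves in the $(y,z)$-chart passing through $x_{2}$, $x_{4}$, $x_{5}$, $x_{6}$. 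Writing a generic such curve $a_{21}y^{2}z+a_{20}y^{2}+a_{11}yz+a_{10}y+a_{01}z+a_{00}=0$ and imposing the four conditions from \eqref{eq:JM-pts} collapses the ansatz to a two-dimensional family, giving an explicit parametric form for $p(y,z)$ up to a M\"obius transformation in the pencil parameter.

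The remaining constants are then determined by matching forward images of specific anti-canonical components listed in \eqref{eq:geom-KNY-JM}. For example, $\varphi_{\ast}(\mathcal{M}_{7}-\mathcal{M}_{8})=\mathcal{E}_{1}-\mathcal{E}_{2}$ forces the cascade $x_{5}\leftarrow x_{6}\leftarrow x_{7}\leftarrow x_{8}$ to collapse onto $p_{1}\leftarrow p_{2}$ in a specific way; the further matchings $\varphi_{\ast}(\mathcal{H}_{y}-\mathcal{M}_{1}-\mathcal{M}_{2})=\mathcal{E}_{7}-\mathcal{E}_{8}$ and $\varphi_{\ast}(\mathcal{M}_{4}-\mathcal{M}_{3})=\mathcal{H}_{q}-\mathcal{E}_{5}-\mathcal{E}_{6}$, together with the explicit coordinates of the base points listed in \eqref{eq:Ok-pts-5} and \eqref{eq:JM-pts}, pin down the remaining M\"obius parameters and produce the parameter correspondence between $\{a_{i}\}$ and $\{\Theta_{0},\Theta_{1},\Theta_{\infty}\}$. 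Solving the resulting rational system for $y$ and $z$ then yields the inverse formulas in \eqref{eq:KNYtoJM}.

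For the second change of variables \eqref{eq:JMtoOk} no further lattice analysis is required: I would simply substitute $q=1/(1-f)$ and $p=(f-1)\bigl(2(f-1)g-\theta-\kappa_{0}+\kappa_{\infty}\bigr)/2$ from \eqref{eq:KNYtoOk-5} into the just-derived formulas for $y(q,p)$ and $z(q,p)$, using the parameter dictionaries of Lemmas~\ref{lem:KNY-to-Ok-5} and \ref{lem:JM-KNY-Ok} to express the $a_{i}$'s in terms of $(\kappa_{0},\kappa_{\infty},\theta)$. The main subtlety, as in the $\Pain{IV}$ case exhibited in \eqref{eq:IP-rv-prelim}, is that the Jimbo-Miwa parameters enter only through certain quadratic combinations in $\Ham{JM}{V}$, so several Weyl-equivalent matchings are a priori possible; compatibility with the KNY matching of Lemma~\ref{lem:KNY-to-Ok-5} is what selects the signs in $\Theta_{0}=(\kappa_{0}-\kappa_{\infty}-\theta)/2$, $\Theta_{1}=-(\kappa_{0}-\kappa_{\infty}+\theta)/2$, $\Theta_{\infty}=-\kappa_{0}-\kappa_{\infty}$ uniquely, and ensuring that the preliminary basis identification of Lemma~\ref{lem:JM-KNY-Ok} does not require adjustment by a Weyl group element is the delicate step of the argument.
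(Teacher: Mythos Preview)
Your proposal is correct and follows exactly the paper's own methodology: the paper does not write out a proof for this lemma explicitly (it states that ``these computations are standard and so we only summarize the geometric data''), but the procedure you describe --- reading off the coordinate pencils from \eqref{eq:basis-KNY-JM}, parameterizing the $(2,1)$-pencil through $x_{2},x_{4},x_{5},x_{6}$, fixing the M\"obius constants via divisor matching, and then composing with \eqref{eq:KNYtoOk-5} --- is precisely the template worked out in detail in Lemmas~\ref{lem:coords-IP-Ok-4} and~\ref{lem:coords-Kek-Ok-4}. One small slip: for the KNY side of the divisor matching you should use the KNY base points $p_{1},\ldots,p_{8}$ listed in Section~\ref{sub:KNY-5} rather than the Okamoto points \eqref{eq:Ok-pts-5}.
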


Since the change of variables is time-independent, the Hamiltonians match (up to $t$-dependent terms), 
\begin{equation}\label{eq:Ham-JM-Ok}
	\Ham{JM}{V}(y,z;t) = \Ham{Ok}{V}(f(y,z,t),g(y,z,t);t) + \frac{\Theta_{1}}{2} + \frac{(\Theta_{0} + \Theta_{1})^{2} - \Theta_{\infty}^{2}}{4t}.
\end{equation}

\subsection{The Filipuk-\.{Z}o\l\c{a}dek Hamiltonian System} 
\label{sub:FZ-5}
\begin{notation*}
For the Filipuk-\.{Z}o\l\c{a}dek system we use the following notation: coordinates $(x,y)$, parameters \eqref{eq:Ok-pars-PV},
base points $z_{i}$, exceptional divisors $K_{i}$.	
\end{notation*}

The space of initial conditions for the system \eqref{eq:FZ-Ham5-sys} is given on Figure~\ref{fig:soic-FZ}, where the basepoints are
\begin{equation}\label{eq:FZ-pts}
	\begin{tikzpicture}
	\node (z1) at (0,0) {$z_{1}(\infty,0)$}; 
	\node (z2) at (2,0.7) {$z_{2}\left(0,\frac{\kappa_{\infty}}{2}\right),$}; 
	\node (z3) at (2,-0.7) {$z_{3}\left(0,-\frac{\kappa_{\infty}}{2}\right),$}; 
	\draw[->] (z2)--(1,0)--(z1);	\draw[->] (z3)--(1,0)--(z1); 
	\node (z4) at (3.5,0) {$z_{4}(0,\infty)$}; 
	\node (z5) at (5.5,0.7) {$z_{5}\left(\frac{\kappa_{0}}{2},0\right),$}; 
	\node (z6) at (5.5,-0.7) {$z_{6}\left(-\frac{\kappa_{0}}{2},0\right),$}; 
	\draw[->] (z5)--(4.5,0)--(z4);	\draw[->] (z6)--(4.5,0)--(z4); 
	\node (z78) at (8.5,0) {$z_{7}(1,\infty)\leftarrow z_{8}(u_{7}=0,v_{7}=0)$}; 
	\node (z910) at (13,0.7) {$z_{9}\left(\frac{t\eta}{2}, 0\right)\leftarrow z_{10}(\left(\frac{\theta+2)t \eta}{4},0\right),$}; 
	\node (z1112) at (13,-0.7) {$z_{11}\left(2\frac{t\eta}{2}, 0\right)\leftarrow z_{12}\left( \frac{\theta t \eta}{4},0\right).$}; 
	\draw[->] (z910)--(11.5,0)--(z78);	\draw[->] (z1112)--(11.5,0)--(z78);
	\end{tikzpicture}
\end{equation}
Note that we have twelve basepoints and four $-3$ curves, so this surface is not minimal and we need to blow down some curves. 
From Figure~\ref{fig:soic-FZ} it is pretty clear that we should blow down the $-1$- curves $H_{x} - K_{1}$ and $H_{x}-K_{4}$, 
as well as the cascade $H_{x}-K_{7}$ and $K_{7}-K_{8}$. As usual, instead we blow up the standard surface at additional points 
$p_{9},\ldots p_{12}$.

\begin{figure}[ht]
	\begin{tikzpicture}[>=stealth,basept/.style={circle, draw=red!100, fill=red!100, thick, inner sep=0pt,minimum size=1.2mm}]
	\begin{scope}[xshift=0cm,yshift=0cm]
	\draw [black, line width = 1pt] (-0.4,0) -- (3.9,0)	node [pos=0,left] {\small $H_{g}$} node [pos=1,right] {\small $g=0$};
	\draw [black, line width = 1pt] (-0.4,2.5) -- (3.9,2.5) node [pos=0,left] {\small $H_{g}$} node [pos=1,right] {\small $g=\infty$};
	\draw [black, line width = 1pt] (0,-0.4) -- (0,2.9) node [pos=0,below] {\small $H_{f}$} node [pos=1,above] {\small $f=0$};
	\draw [black, line width = 1pt] (3.5,-0.4) -- (3.5,2.9) node [pos=0,below] {\small $H_{f}$} node [pos=1,above] {\small $f=\infty$};
	\node (z1) at (3.5,0) [basept,label={[xshift = -8pt, yshift=-15pt] \small $z_{1}$}] {};
	\node (z2) at (4,0.5) [basept,label={[yshift=0pt] \small $z_{2}$}] {};
	\node (z3) at (4,-0.5) [basept,label={[yshift=-15pt] \small $z_{3}$}] {};
	\node (z4) at (0,2.5) [basept,label={[xshift = -6pt, above] \small $z_{4}$}] {};
	\node (z5) at (-.5,2) [basept,label={[yshift=-15pt] \small $z_{5}$}] {};
	\node (z6) at (.5,2) [basept,label={[yshift=-15pt] \small $z_{6}$}] {};
	\node (z7) at (0.6,2.5) [basept,label={[above] \small $z_{7}$}] {};
	\node (z8) at (1.2,2.5) [basept,label={[above] \small $z_{8}$}] {};
	\node (z9) at (2,3) [basept,label={[above] \small $z_{9}$}] {};
	\node (z10) at (2.6,3) [basept,label={[above] \small $z_{10}$}] {};
	\node (z11) at (2,2) [basept,label={[yshift=-15pt] \small $z_{11}$}] {};
	\node (z12) at (2.6,2) [basept,label={[yshift=-15pt] \small $z_{12}$}] {};
	\draw [red, line width = 0.8pt, ->] (z2) -- (z1);
	\draw [red, line width = 0.8pt, ->] (z3) -- (z1);
	\draw [red, line width = 0.8pt, ->] (z5) -- (z4);	
	\draw [red, line width = 0.8pt, ->] (z6) -- (z4);	
	\draw [red, line width = 0.8pt, ->] (z8) -- (z7);	
	\draw [red, line width = 0.8pt, ->] (z10) -- (z9);		
	\draw [red, line width = 0.8pt, ->] (z12) -- (z11);		
	\draw [red, line width = 0.8pt, ->] (z9) -- (1.6,2.5) -- (z8);
	\draw [red, line width = 0.8pt, ->] (z11) -- (1.6,2.5) -- (z8);
	\end{scope}
	\draw [->] (6.5,1)--(4.5,1) node[pos=0.5, below] {$\operatorname{Bl}_{z_{1}\cdots z_{12}}$};
	\begin{scope}[xshift=7.8cm,yshift=0cm]
	\draw [red, line width = 1pt] (-0.4,0) -- (3.7,0)	node [pos=0, left] {\small $H_{y}-K_{1}$};
	\draw [red, line width = 1pt] (0,-0.4) -- (0,2.1) node [pos=0, below, xshift=-10pt] {\small $H_{x}-K_{4}$};
	\draw [teal, line width = 1pt] (-0.2,1.4) -- (1.3,2.9) node [pos=0, left] {\small $K_{4}-K_{5}-K_{6}$};
	\draw [red, line width = 1pt] (0.7,1.6) -- (-0.1,2.4) node [pos=1, left] {\small $K_{5}$};
	\draw [red, line width = 1pt] (1.1,2) -- (0.3,2.8) node [pos=1, above] {\small $K_{6}$};
	\draw [teal, line width = 1pt] (0.6,2.7) -- (4.5,2.7) node [pos=1,right] {\small $H_{y} - K_{4} - K_{7} - K_{8}$};
	\draw [teal, line width = 1pt] (3,-0.2) -- (4.5,1.3) node [pos=1,right] {\small $K_{1} - K_{2} - K_{3}$};
	\draw [red, line width = 1pt] (4.3,0.6) -- (4.3,3) node [pos=1, above, xshift=10pt] {\small $H_{x}-K_{1}$};
	\draw [red, line width = 1pt] (3.2,0.7) -- (4,-0.1) node [pos=1, below] {\small $K_{2}$};
	\draw [red, line width = 1pt] (3.6,1.1) -- (4.4,0.3) node [pos=1, right] {\small $K_{3}$};
	\draw [red, line width = 1pt] (1.5,-0.4) -- (1.5,1) node [pos=0, below] {\small $H_{x}-K_{7}$};
	\draw [teal, line width = 1pt] (0.8,1) -- (2.8,3) node [pos=1, xshift = 0pt, above] {\small $K_{8}-K_{9}-K_{11}$};	
	\draw [blue, line width = 1pt] (0.9,1.4) -- (1.7,0.6) node [pos=1, xshift = 15pt, yshift=-5pt ] {\small $K_{7}-K_{8}$};	
	\draw [blue, line width = 1pt] (1.5,2) -- (2.3,1.2) node [pos=1, xshift = 15pt, yshift=-5pt ] {\small $K_{9}-K_{10}$};
	\draw [blue, line width = 1pt] (2.1,2.6) -- (2.9,1.8) node [pos=1, xshift = 15pt, yshift=-5pt ] {\small $K_{11}-K_{12}$};	
	\draw [red, line width = 1pt] (1.8,1.2) -- (2.3,1.7) node [pos=1, xshift=-3pt,yshift=5pt] {\small $K_{10}$};
	\draw [red, line width = 1pt] (2.6,1.8) -- (3.1,2.3) node [pos=1, right] {\small $K_{12}$};
	\end{scope}
	\end{tikzpicture}
	\caption{The Space of Initial Conditions for the Filipuk-\.{Z}o\l\c{a}dek Hamiltonian System}
	\label{fig:soic-FZ}
\end{figure}
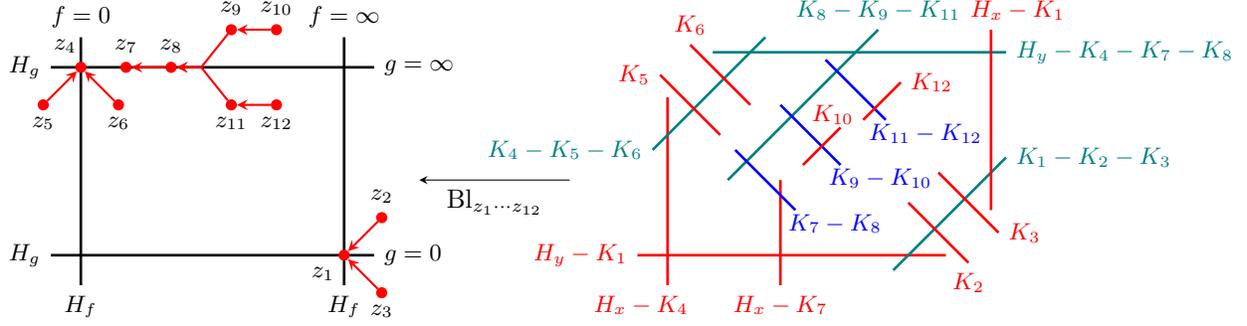

\begin{lemma}\label{lem:FZ-KNY-Ok} The change of bases for Picard lattices between the standard Kajiwara-Noumi-Yamada (with four additional blowup points) 
	and the Filipuk-\.{Z}o\l\c{a}dek surfaces is given by 
	\begin{equation}\label{eq:basis-KNY-FZ}
		\begin{aligned}
			\mathcal{H}_{q} & = \mathcal{H}_{x}, &
			\mathcal{H}_{x} &= \mathcal{H}_{q},\\
			\mathcal{H}_{p} &= 3\mathcal{H}_{x} + \mathcal{H}_{y} - \mathcal{K}_{1} - \mathcal{K}_{3} - \mathcal{K}_{4} 
				- \mathcal{K}_{6}- \mathcal{K}_{7} - \mathcal{K}_{8},  & 	
			\mathcal{H}_{y} & = 3 \mathcal{H}_{q} + \mathcal{H}_{p} - \mathcal{E}_{5} - \mathcal{E}_{7} - \mathcal{E}_{9} - \mathcal{E}_{10}
			 - \mathcal{E}_{11}  - \mathcal{E}_{12}, \\
			\mathcal{E}_{1} &= \mathcal{K}_{9}, &
			\mathcal{K}_{1}	&= \mathcal{H}_{q} - \mathcal{E}_{12},\\ 
			\mathcal{E}_{2} &= \mathcal{K}_{10}, &
			\mathcal{K}_{2}	&= \mathcal{E}_{6},\\ 
			\mathcal{E}_{3} &=  \mathcal{K}_{11}, &
			\mathcal{K}_{3}	&= \mathcal{H}_{q} - \mathcal{E}_{5},\\ 
			\mathcal{E}_{4} &= \mathcal{K}_{12}, &
			\mathcal{K}_{4}	&= \mathcal{H}_{q} - \mathcal{E}_{11},\\ 
			\mathcal{E}_{5} &= \mathcal{H}_{x} - \mathcal{K}_{3}, &
			\mathcal{K}_{5}	&= \mathcal{E}_{8},\\ 
			\mathcal{E}_{6} &= \mathcal{K}_{2}, &
			\mathcal{K}_{6}	&= \mathcal{H}_{q} - \mathcal{E}_{7},\\ 
			\mathcal{E}_{7} &= \mathcal{H}_{x} - \mathcal{K}_{6}, &
			\mathcal{K}_{7}	&= \mathcal{H}_{q} - \mathcal{E}_{10},\\ 
			\mathcal{E}_{8} &= \mathcal{K}_{5}, &
			\mathcal{K}_{8}	&= \mathcal{H}_{q} - \mathcal{E}_{9},\\ 
			\mathcal{E}_{9} &= \mathcal{H}_{x} - \mathcal{K}_{8}, &
			\mathcal{K}_{9}	&= \mathcal{E}_{1},\\
			\mathcal{E}_{10} &= \mathcal{H}_{x} - \mathcal{K}_{7}, &
			\mathcal{K}_{10} &= \mathcal{E}_{2},\\
			\mathcal{E}_{11} &= \mathcal{H}_{x} - \mathcal{K}_{4}, &
			\mathcal{K}_{11} &= \mathcal{E}_{3},\\
			\mathcal{E}_{12} &= \mathcal{H}_{x} - \mathcal{K}_{1}, &
			\mathcal{K}_{12} &= \mathcal{E}_{4}.
		\end{aligned}
	\end{equation}
	This results in the following correspondences between the surface roots with additional blowups 
	(i.e., classes of curves with indices $-2$ and $-3$), 
	\begin{equation}\label{eq:geom-KNY-FZ}
	\begin{aligned}
		\delta_{0} &= \mathcal{E}_{1} - \mathcal{E}_{2} = \mathcal{K}_{9} - \mathcal{K}_{10}, &\quad 
			\delta_{3} &=\mathcal{H}_{p} - \mathcal{E}_{5} - \mathcal{E}_{7} - \mathcal{E}_{12} = 
				\mathcal{H}_{y} - \mathcal{K}_{4} - \mathcal{K}_{7} - \mathcal{K}_{8}, \\
		\delta_{1} &= \mathcal{E}_{3} - \mathcal{E}_{4} = \mathcal{K}_{11} - \mathcal{K}_{12}, &\quad
		\delta_{4} &= \mathcal{E}_{5} - \mathcal{E}_{6} - \mathcal{E}_{12} = \mathcal{K}_{1} - \mathcal{K}_{2} - \mathcal{K}_{3}, \\
		\delta_{2} &=  \mathcal{H}_{q} - \mathcal{E}_{1} - \mathcal{E}_{3} - \mathcal{E}_{9} = 
		\mathcal{K}_{8} - \mathcal{K}_{9} - \mathcal{K}_{11}, &\quad 
		\delta_{5} &= \mathcal{E}_{7} - \mathcal{E}_{8} -\mathcal{E}_{11} = \mathcal{K}_{4} - \mathcal{K}_{5} - \mathcal{K}_{6},
	\end{aligned}
	\end{equation}
	as well as $\mathcal{E}_{9}-\mathcal{E}_{10}=\mathcal{K}_{7}-\mathcal{K}_{8}$; and the symmetry roots,
	\begin{equation}\label{eq:sym-KNY-FZ}
	\begin{aligned}
			\alpha_{0} &= \mathcal{H}_{p} - \mathcal{E}_{1} - \mathcal{E}_{2}
			= 3 \mathcal{H}_{x} + \mathcal{H}_{y} - \mathcal{K}_{1} - \mathcal{K}_{3} - \mathcal{K}_{4} 
				- \mathcal{K}_{6} - \mathcal{K}_{7} - \mathcal{K}_{8} - \mathcal{K}_{9} - \mathcal{K}_{10}, \\
			\alpha_{1} &= \mathcal{H}_{q} - \mathcal{E}_{5} - \mathcal{E}_{6} =  \mathcal{K}_{3} - \mathcal{K}_{2}, \\
			\alpha_{2} &= \mathcal{H}_{p} - \mathcal{E}_{3} - \mathcal{E}_{4}
			= 3 \mathcal{H}_{x} + \mathcal{H}_{y} - \mathcal{K}_{1} - \mathcal{K}_{3} - \mathcal{K}_{4} 
				- \mathcal{K}_{6} - \mathcal{K}_{7} - \mathcal{K}_{8} - \mathcal{K}_{11} - \mathcal{K}_{12}, \\
			\alpha_{3} &= \mathcal{H}_{q} - \mathcal{E}_{7} - \mathcal{E}_{8} = 
			\mathcal{K}_{6} - \mathcal{K}_{5}.
		\end{aligned}
	\end{equation}
	The symplectic form is $\symp{FZ} = dy\wedge dx$, and the root variables are the same as in 
	the Okamoto case \eqref{eq:KNYtoOk-5}. 
\end{lemma}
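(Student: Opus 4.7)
The plan is to follow the identification procedure from Section~\ref{sub:ident-proc}, applied now to the pair consisting of the standard Kajiwara-Noumi-Yamada $D_{5}^{(1)}$ surface from Section~\ref{sub:KNY-5} and the space of initial conditions for the Filipuk-\.{Z}o\l\c{a}dek system constructed on Figure~\ref{fig:soic-FZ}. The first step is to handle the non-minimality. On the FZ side, Figure~\ref{fig:soic-FZ} shows four $-3$ curves $K_{1}-K_{2}-K_{3}$, $K_{4}-K_{5}-K_{6}$, $K_{8}-K_{9}-K_{11}$, and $H_{y}-K_{4}-K_{7}-K_{8}$, with the $-1$ curves $H_{x}-K_{1}$, $H_{x}-K_{4}$, $H_{x}-K_{7}$, and $K_{7}-K_{8}$ intersecting them; since these are inaccessible divisors, contracting them would return us to a minimal surface. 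Rather than blow down, I will add four extra blowup points $p_{9},\dots,p_{12}$ to the KNY surface on the divisors corresponding to the two legs of the $D_{5}^{(1)}$ Dynkin diagram that need to host the $-3$ curves, guided by the matching $K_{1}-K_{2}-K_{3} \leftrightarrow \mathcal{E}_{5}-\mathcal{E}_{6}-\mathcal{E}_{12}$ and similar identifications.

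Next, I will find a preliminary change of basis of the Picard lattices matching the irreducible components of the anti-canonical divisor as encoded in \eqref{eq:geom-KNY-FZ}. The ``central'' node $\delta_{2}$ and its ``mirror'' $\delta_{3}$ on the $D_{5}^{(1)}$ Dynkin diagram (see Figure~\ref{fig:d-roots-d5-KNY}) have to be sent to the two $-3$ curves $K_{8}-K_{9}-K_{11}$ and $H_{y}-K_{4}-K_{7}-K_{8}$ respectively, since these are the only components with connectivity four in the extended Dynkin diagram of the FZ configuration. The four leg endpoints $\delta_{0},\delta_{1},\delta_{4},\delta_{5}$ must then be paired with the four $-2$ curves $\mathcal{K}_{9}-\mathcal{K}_{10}$, $\mathcal{K}_{11}-\mathcal{K}_{12}$, $\mathcal{K}_{1}-\mathcal{K}_{2}-\mathcal{K}_{3}$, $\mathcal{K}_{4}-\mathcal{K}_{5}-\mathcal{K}_{6}$, and at this stage we have a residual ambiguity from the $D_{5}^{(1)}$ diagram automorphisms (swapping the two pairs of ``short'' legs). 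From the divisor identifications already written down one can expand $\mathcal{H}_{x}$ and $\mathcal{H}_{y}$ in the KNY basis (and vice versa) by solving a small linear system using $\mathcal{H}_{x}\bullet \mathcal{H}_{y} = 1$, $\mathcal{H}_{x}^{2}= \mathcal{H}_{y}^{2}=0$, and the fact that both coordinate classes have multiplicity one in $-K_{\mathcal{X}}$; this produces \eqref{eq:basis-KNY-FZ}.

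To pin down the right branch of the ambiguity, I will use the period map from Section~\ref{ssub:period-map-a21}, now applied to the $D_{5}^{(1)}$ surface using the symplectic form $\symp{FZ}= dy\wedge dx$ rewritten via the forthcoming change of variables. Computing $a_{i}=\chi(\alpha_{i})$ on the symmetry roots \eqref{eq:sym-KNY-FZ} in terms of the FZ parameters, and comparing with the Okamoto root variables of Lemma~\ref{lem:KNY-to-Ok-5}, $a_{0}=1+(\theta-\kappa_{0}+\kappa_{\infty})/2$, $a_{1}=-\kappa_{\infty}$, $a_{2}=-(\theta+\kappa_{0}-\kappa_{\infty})/2$, $a_{3}=\kappa_{0}$, fixes the normalization $k$ of the symplectic form and simultaneously selects the correct automorphism branch; if the computed root variables match only up to an automorphism of the Dynkin diagram, I will compose the preliminary identification with the corresponding reflection, exactly as in \eqref{eq:IP-rv-prelim} for the $\Pain{IV}$ case. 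Verification of the surface root correspondences \eqref{eq:geom-KNY-FZ} and the symmetry root expressions \eqref{eq:sym-KNY-FZ} is then a direct linear-algebra check using \eqref{eq:basis-KNY-FZ}.

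The main obstacle I anticipate is keeping the bookkeeping of blowup points and exceptional divisors consistent across the four parallel cascades $\{z_{1},z_{2},z_{3}\}$, $\{z_{4},z_{5},z_{6}\}$, $\{z_{7},z_{8},z_{9},z_{10}\}$, $\{z_{11},z_{12}\}$ --- the $\mathbb{Z}/2$ symmetry exchanging $\{z_{2},z_{3}\}$, $\{z_{5},z_{6}\}$, and the two sub-cascades originating from $z_{8}$ is reflected on the KNY side as the $D_{5}^{(1)}$ Dynkin automorphism swapping the two branches at each end, and one has to commit to a definite choice that is consistent across all four legs (essentially, a sign choice for each of $\kappa_{0}$ and $\kappa_{\infty}$). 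As in the $\Pain{IV}$ case, the information we need to fix these choices is not contained in the Hamiltonian system itself (only $\kappa_{0}^{2}$ and $\kappa_{\infty}^{2}$ appear in \eqref{eq:FZ-Ham5-sys}), so a definite choice is made only after specifying the coordinates of the basepoints in \eqref{eq:FZ-pts}, and this is what the final identification records. Once the basis change \eqref{eq:basis-KNY-FZ} is verified, all the remaining claims reduce to direct computations.
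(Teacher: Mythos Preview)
Your proposal is correct and follows exactly the identification procedure of Section~\ref{sub:ident-proc} that the paper applies throughout; the paper itself omits the proof of this lemma entirely (stating only that ``these computations are standard and so we only summarize the geometric data''), so you are filling in precisely the details the authors left out, in the manner they illustrate for $\Pain{IV}$.

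Two small corrections of wording, not of substance: in the $D_{5}^{(1)}$ Dynkin diagram \eqref{eq:d-roots-d51} the nodes $\delta_{2}$ and $\delta_{3}$ each have valency three, not four, so your selection criterion should read ``connectivity three''; and your listing of cascades is slightly off, since $z_{11}\leftarrow z_{12}$ is a branch emanating from $z_{8}$, not a separate cascade (cf.\ \eqref{eq:FZ-pts}). Neither point affects the argument.
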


\begin{lemma}\label{lem:coords-FZ-KNY-Ok} The change of coordinates and parameter matching between the Kajiwara-Noumi-Yamada 
	and Filipuk-\.{Z}o\l\c{a}dek (with $\eta=-1$) Hamiltonian systems is given by
    \begin{equation}\label{eq:KNY-FZ}
   	 \left\{\begin{aligned}
   	 	q(x,y,t)&=\frac{1}{1-x},\\
   		p(x,y,t)&= (x-1)^{2} y+ \frac{\eta t x + (x-1)(\kappa_{\infty} x - \kappa_{0})}{2x},\\
   	 \end{aligned}\right.
    \qquad\text{and} \quad 
    	\left\{\begin{aligned}
   	 	x(q,p,t)&=1 - \frac{1}{q},\\
   		y(q,p,t)&= q^{2} \left(p - \frac{\eta t}{2}\right) - \frac{a_{3} q^{2}}{2(q-1)} - \frac{a_{1}q}{2}.
    	\end{aligned}\right.
    \end{equation}
	
	Combining it with the change of variables \eqref{eq:KNYtoOk-5}, we get the change of variables between the Okamoto
	and the Filipuk-\.{Z}o\l\c{a}dek Hamiltonian systems,
    \begin{equation}\label{eq:FZtoOk}
   	 \left\{\begin{aligned}
   	 	x(f,g,t)&=f,\\
   		y(f,g,t)&= g - \frac{1}{2}\left(\frac{\theta}{f-1} + \frac{\eta t}{(f-1)^{2}} + \frac{\kappa_{0}}{f}\right),\\
   	 \end{aligned}\right.
    \qquad\text{and} \qquad 
    	\left\{\begin{aligned}
   	 	f(x,y,t)&=x,\\
   		g(x,y,t)&=  y +  \frac{1}{2}\left(\frac{\theta}{x-1} + \frac{\eta t}{(x-1)^{2}} + \frac{\kappa_{0}}{x}\right).
    	\end{aligned}\right.
    \end{equation}
\end{lemma}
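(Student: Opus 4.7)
The plan is to follow the same strategy used in the proof of Lemma~\ref{lem:coords-IP-Ok-4}, applying it to the change of basis \eqref{eq:basis-KNY-FZ} already established in Lemma~\ref{lem:FZ-KNY-Ok}. From $\mathcal{H}_{q}=\mathcal{H}_{x}$ we immediately get that $q$ and $x$ are related by a M\"obius transformation $q = (Ax+B)/(Cx+D)$, so the first task is to pin down the four projective coefficients. From $\mathcal{H}_{p}=3\mathcal{H}_{x}+\mathcal{H}_{y}-\mathcal{K}_{1}-\mathcal{K}_{3}-\mathcal{K}_{4}-\mathcal{K}_{6}-\mathcal{K}_{7}-\mathcal{K}_{8}$ we see that $p$ is a projective coordinate on the pencil of $(3,1)$-curves passing through the six listed base points. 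Writing a generic such curve in the affine $(x,y)$-chart and imposing these six conditions (three of which are second-order, coming from the cascades through $z_{1}$, $z_{4}$ and $z_{7}$) will cut the space of coefficients down to a two-dimensional one spanned by two explicit basis curves, and hence give $p = (K C_{1}(x,y) + L C_{2}(x,y))/(M C_{1}(x,y) + N C_{2}(x,y))$ with four more unknowns.

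Next I would fix the eight M\"obius parameters (modulo an overall scalar in each factor) by tracking the forward image of irreducible $-2$ and $-1$ curves under $\varphi:(x,y)\to(q,p)$. The correspondences \eqref{eq:geom-KNY-FZ} tell us for instance that $\varphi_{*}(K_{9}-K_{10})=E_{1}-E_{2}$ and $\varphi_{*}(H_{y}-K_{4}-K_{7}-K_{8})=H_{p}-E_{5}-E_{7}-E_{12}$, and similar assignments on the other $-2$ and $-3$ legs. Each such forward mapping turns into a vanishing/non-vanishing condition on the coefficients $A,\ldots,N$ after restricting to a coordinate chart; in practice a couple of these conditions collapse the $N=0$-type degrees of freedom, fix $M$ up to scale, and then a further pair fixes the remaining ratios. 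This is the same mechanical procedure used for the Its--Prokhorov case and I expect it to go through without obstruction, producing exactly the formulas in \eqref{eq:KNY-FZ}.

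To verify parameter matching, I would check that the coordinates of the blowup points $z_{i}$ in \eqref{eq:FZ-pts} expressed via the map agree with the coordinates of $p_{i}$ in Section~\ref{sub:KNY-5} under the root-variable identification already fixed by Lemma~\ref{lem:FZ-KNY-Ok}; the normalization $\eta=-1$ is forced by the standard normalization $a_{0}+a_{1}+a_{2}+a_{3}=1$, exactly as in the Okamoto case. Finally, the change of variables between Okamoto and Filipuk-\.{Z}o\l\c{a}dek, equation \eqref{eq:FZtoOk}, is obtained by straight substitution: compose \eqref{eq:KNY-FZ} with the already-established \eqref{eq:KNYtoOk-5}, solving for $(f,g)$ in terms of $(x,y)$ using the fact that the $q$-coordinate of both intermediate systems is $1/(1-x)=1/(1-f)$, which gives $f=x$ directly, and then the $g$-formula drops out from the $p$-formula after a short algebraic simplification.

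The main obstacle is a bookkeeping one rather than a conceptual one: the $(3,1)$-pencil for $\mathcal{H}_{p}$ is considerably larger than the $(2,1)$-pencils appearing in the $\Pain{IV}$ examples, and the six base points include three infinitely near cascades of length two, so setting up the linear system for the coefficients and then cleanly picking two basis curves requires careful local-chart computations in $(u_{1},v_{1})$, $(u_{4},v_{4})$ and $(u_{7},v_{7})$. Once those local expansions are handled correctly, however, the identification is determined and the remaining verifications are direct substitutions; since the change of coordinates is time-independent the Hamiltonians can only differ by a function of $t$, which can be ignored for the dynamics.
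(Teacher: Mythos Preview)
Your approach is essentially the same as the paper's: the paper does not spell out a proof for this lemma, but it is the same mechanical procedure as in Lemma~\ref{lem:coords-IP-Ok-4}, which you correctly invoke. The identification of $\mathcal{H}_{p}$ with a $(3,1)$-pencil through six (infinitely near) base points, the use of forward images of $-2$ and $-3$ curves to pin down the M\"obius parameters, and the final composition with \eqref{eq:KNYtoOk-5} are all exactly right.

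There is one genuine error in your final paragraph, however. You write that ``since the change of coordinates is time-independent the Hamiltonians can only differ by a function of $t$,'' but the change of coordinates in \eqref{eq:KNY-FZ} is \emph{not} time-independent: the formula for $p(x,y,t)$ contains the term $\eta t x/(2x)$, and likewise \eqref{eq:FZtoOk} has $\eta t/(f-1)^{2}$. The paper notes this explicitly immediately after the lemma, computing the correction term $-\eta/(2(x-1))$ from the equality $dg\wedge df = dy\wedge dx + d(\eta/(2(x-1)))\wedge dt$. This does not affect the proof of the lemma itself, which concerns only the coordinate transformation, but your aside about the Hamiltonians is wrong and should be removed or replaced by the correct time-dependent computation along the lines of Section~\ref{sub:non-auto-Hams}.
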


Note that the change of variables is now time-dependent, and so there will be a correction term that, from
\begin{align*}
dg \wedge df - d\Ham{Ok}{V}\wedge dt &= dy \wedge dx - d\Ham{FZ}{V}\wedge dt,\\
dg\wedge df &=  dy\wedge dx + \frac{\eta}{2(x-1)^{2}}dt\wedge dx = 	
dy\wedge dx + d\left(\frac{\eta}{2(x-1)}\right)\wedge dt,
\end{align*}
should be equal to $-\frac{\eta}{2(x-1)}$. Indeed,
\begin{equation*}
	\Ham{FZ}{V}(x,y;t) = \Ham{Ok}{V}(f(x,y,t),g(x,y,t),t) - \frac{\eta}{2(x-1)} + \frac{\eta(\theta + \kappa_{0})}{2} + 
		\frac{\theta^{2} - \kappa_{0}^{2} - \kappa_{\infty}^{2}}{4t}.
\end{equation*}



\section{The Sixth Painlev\'e Equation $\Pain{VI}$} 
\label{sec:P6}

For the standard form of the sixth Painlev\'e equations we again take the one in \cite{Oka:1987:SPEISPEP},
\begin{equation}\label{eq:P6-std}
	\begin{aligned}
	\frac{d^{2} w}{dt^2} &= \frac{1}{2}\left(\frac{1}{w} + \frac{1}{w-1} + \frac{1}{w-t}\right)\left(\frac{dw}{dt}\right)^{2}
	-\left(\frac{1}{t} + \frac{1}{t-1} + \frac{1}{w-t}\right)\frac{dw}{dt} + \\
	&\qquad \frac{w(w-1)(w-t)}{t^{2}(t-1)^{2}}\left(\alpha + \beta \frac{t}{w^{2}} + \gamma \frac{t-1}{(w-1)^{2}} + 
	\delta\frac{t(t-1)}{(w-t)^{2}}\right), 		
	\end{aligned}
\end{equation}
where $t$ is an independent (complex) variable, $w(t)$ is the dependent variable, and $\alpha,\beta,\gamma,\delta\in \mathbb{C}$ are some 
parameters. Following Okamoto \cite{Oka:1987:SPEIFPEP}, it is convenient to introduce the \emph{Okamoto parameters}
$\kappa_{0}$, $\kappa_{1}$, $\kappa_{\infty}$, and $\theta$ given by
\begin{equation}\label{eq:Ok-pars-PVI}
	\alpha = \frac{\kappa_{\infty}^{2}}{2},\quad \beta = -\frac{\kappa_{0}^{2}}{2},\quad \gamma = \frac{\kappa_{1}^{2}}{2}, 
		\quad \delta = \frac{1-\theta^{2}}{2},
	\quad\text{as well as } \kappa = \frac{(\kappa_{0}+\kappa_{1}+\theta-1)^{2} - \kappa_{\infty}^{2}}{4},
\end{equation}
where the parameter notation is coming from a certain isomonodromy problem. 
Our reference Hamiltonian for $\Pain{VI}$ is the one in \cite{Oka:1987:SPEISPEP},
\begin{equation}\label{eq:Ok-Ham6}
	\Ham{Ok}{VI}(f,g;t) = \frac{1}{t(t-1)}\Big(f(f-1)(f-t)g^{2} - \big(
	\kappa_{0}(f-1)(f-t) + \kappa_{1}f(f-t)+ (\theta-1)f(f-1)
	\big)g + \kappa (f-t)\Big),
\end{equation}
which gives, w.r.t.~the standard symplectic form $\symp{Ok} = dg\wedge df$, the Hamiltonian system
  \begin{equation}\label{eq:Ok-Ham6-sys}
  	\left\{
 	\begin{aligned}
 		\frac{df}{dt} &= \frac{\partial \Ham{Ok}{VI}}{\partial g} 
			=\frac{1}{t(t-1)}\Big( (f-1)(f-t)(2 f g - \kappa_{0}) -\kappa_{1} f(f-t) - (\theta-1) f (f-1)\Big),\\
		\frac{dg}{dt} &= -\frac{\partial \Ham{Ok}{VI}}{\partial f} 
			=-\frac{1}{t(t-1)}\Big( \big( (3f^2 -2(t+1)f +t)g - (\kappa_{0} + \kappa_{1}) (2f-t) - (\theta-1)(2f-1) + \kappa_{0}\big)g + \kappa\Big).
 	\end{aligned}
	\right.
  \end{equation}
Eliminating the function $g=g(t)$ from these equations we get the $\Pain{VI}$ equation \eqref{eq:P6-std} for the function $f=f(t)$ with parameters given by 
\eqref{eq:Ok-pars-PVI}. The same Hamiltonian in \cite{KajNouYam:2017:GAPE} is written in terms of the root variables 
\begin{equation}\label{eq:pars-root2Ok-P6}
	a_{0} = \theta,\quad a_{1}=-\kappa_{\infty},\quad a_{2} = \frac{1-\theta-\kappa_{0}-\kappa_{1}+\kappa_{\infty}}{2},\quad 
	a_{3} = \kappa_{1},\quad a_{4}=\kappa_{0}
\end{equation}
with the usual normalization condition $a_{0}+a_{1}+2a_{2}+a_{3}+a_{4}=1$. From the geometric point of view it is better to rewrite 
everything in the coordinates $(q,p)=(f,fg)$\footnote{For consistency of notation throughout the paper our coordinate labels here are reverse to  those in 
\cite{KajNouYam:2017:GAPE}} since the corresponding space of initial conditions is \emph{minimal} and corresponds to the standard realization of the 
$D_{4}^{(1)}$ algebraic surface. Thus, we let the Kajiwara-Noumi-Yamada Hamiltonian be
\begin{equation}\label{eq:KNY-Ham6}
	\Ham{KNY}{VI}(f,g;t) = \frac{(q-1)(q-t)p}{t(t-1)}
	\left(\frac{p}{q}-\left(\frac{a_{0}-1}{q-t} + \frac{a_{3}}{q-1} + \frac{a_{4}}{q}\right)\right) + \frac{a_{2}(a_{1}+a_{2})(q-t)}{t(t-1)}.
\end{equation}
In these coordinates the symplectic form becomes logarithmic, $\symp{KNY}=(1/q)dp\wedge dq$ and the resulting Hamiltonian system is 
\begin{equation}\label{eq:KNY-Ham6-sys}
\left\{
\begin{aligned}
	\frac{dq}{dt} &= q \frac{\partial \Ham{KNY}{VI}}{\partial p} 
		=\frac{1}{t(t-1)}\Big( (q-1)(q-t)(2p-a_{4}) - (a_{0}-1)q(q-1) -a_{3} q(q-t) \Big),\\
	\frac{dp}{dt} &= -q\frac{\partial \Ham{KNY}{VI}}{\partial q} 
		=-\frac{1}{t(t-1)}\Big( q(p+a_{2})(p+a_{1}+a_{2}) - \frac{t p}{q}(p-a_{4})\Big),
\end{aligned}
\right.
\end{equation}
which reduces to the standard Painlev\'e $\Pain{VI}$ equation \eqref{eq:P6-std} with parameters
\begin{equation}\label{eq:pars-root-P6}
	\alpha = \frac{a_{1}^{2}}{2},\quad \beta=-\frac{a_{4}^{2}}{2},\quad \gamma=\frac{a_{3}^{2}}{2},\quad \delta=\frac{1-a_{0}^{2}}{2}.
\end{equation}

The Jimbo-Miwa Hamiltonian \cite{JimMiw:1981:MPDLODEWRC-II}
\begin{equation}\label{eq:JM-Ham6}
	\begin{aligned}
	\Ham{JM}{VI}(y,z;t) &= 
	\frac{1}{t(t-1)}\Big(
	y(y-1)(y-t)z^{2} -\big(\Theta_{0}(y-1)(y-t)+ \Theta_{1}y(y-t) + \Theta_{t}y(y-1)\big)z +\\ 
	&\qquad\qquad K_{1} K_{2} (y-t) + \Theta_{0}\Theta_{t}(t-1)	+ \Theta_{1} \Theta_{t} t
	\Big)
	\end{aligned} 
\end{equation}
is, up to purely $t$-dependent terms, the same as the Okamoto Hamiltonian \eqref{eq:Ok-Ham6}
\begin{equation*}
	\Ham{JM}{VI}(f,g;t) = \Ham{Ok}{VI}(f,g;t) - \Theta_{t}\left(\frac{\Theta_{0}}{t} + \frac{\Theta_{1}}{t-1}\right)
\end{equation*} 
under the parameter identification
\begin{equation*}
	\Theta_{0} = \kappa_{0},\quad \Theta_{1} = \kappa_{1},\quad \Theta_{t}=\theta-1,\quad \Theta_{\infty}=1 - \kappa_{\infty},\quad \text{and }
	4K_{1}K_{2} = (\Theta_{0}+\Theta_{1}+\Theta_{t})^{2} + (\Theta_{\infty}-1)^{2},
\end{equation*}
or
\begin{equation*}
	\alpha=\frac{(\Theta_{\infty}-1)^{2}}{2},\quad \beta = -\frac{\Theta_{0}^{2}}{2},\quad \gamma = \frac{\Theta_{1}^{2}}{2},\quad 
	\delta = \frac{1-(1+\Theta_{t})^{2}}{2},
\end{equation*}
and so we do not consider it further.

The Hamiltonian given in \cite{ItsPro:2018:SHPIF} 
\begin{equation}\label{eq:IP-Ham6}
	\begin{aligned}
	\Ham{IP}{VI}(x,y;t) &= y^{2} \frac{x(x-1)(x-t)}{t(t-1)} + y \frac{x(x-1)}{t(t-1)} + \frac{\Theta_{\infty}(1 - \Theta_{\infty})(x-t)}{t(t-1)}+ \\
	&\qquad  \frac{\Theta_{0}^{2}(x-t)}{x t (t-1)} - \frac{\Theta_{1}^{2}(x-t)}{(x-1)t(t-1)} + \frac{\Theta_{t}^{2}(t^{2}-x(2t-1))}{(x-t)t(t-1)},
	\end{aligned}
\end{equation}
where $\Theta_{i}$ are some isomonodromy parameters (different from the Jimbo-Miwa parameters above),
gives, w.r.t.~the standard symplectic form $\symp{IP}=dy\wedge dx$, the Hamiltonian system
\begin{equation}\label{eq:IP-Ham6-sys}
\left\{
\begin{aligned}
	\frac{dx}{dt} &= \frac{\partial \Ham{IP}{VI}}{\partial y} 
		=\frac{x(x-1)}{t(t-1)} + 2y \frac{x(x-1)(x-t)}{t(t-1)},\\
	\frac{dy}{dt} &= -\frac{\partial \Ham{IP}{VI}}{\partial x} 
		=\frac{y(t y - 1)(2x-1) + y^{2}x(2-3x)}{t(t-1)} - \frac{\Theta_{0}^{2}}{x^{2}(t-1)} - \frac{\Theta_{t}^{2}}{(t-x)^{2}} 
		+ \frac{\Theta_{1}^{2}}{t(x-1)^{2}} - \frac{\Theta_{\infty}(\Theta_{\infty}-1)}{t(t-1)},
\end{aligned}
\right.
\end{equation}
which reduces to $\Pain{VI}$ for $x(t)$ for parameters
\begin{equation}\label{eq:pars-P2IP-P6}
	\alpha = \frac{(2 \Theta_{\infty}-1)^{2}}{2},\quad \beta = - 2\Theta_{0}^{2},\quad \gamma = 2 \Theta_{1}^{2},\quad 
	\delta = \frac{1 - 4 \Theta_{t}^{2}}{2}.
\end{equation}
The matching between the Its-Prokhorov and Okamoto parameters is 
\begin{equation}\label{eq:pars-IP2Ok-P6}
	\Theta_{0} = \frac{\kappa_{0}}{2}, \quad \Theta_{1} = \frac{\kappa_{1}}{2},\quad \Theta_{t} = \frac{\theta}{2},\quad 
	\Theta_{\infty} = \frac{1 - \kappa_{\infty}}{2}.
\end{equation}

Closely related to the Its-Prokhorov Hamiltonian is the Filipuk-\.{Z}o\l\c{a}dek Hamiltonian \cite{ZolFil:2015:PEEIEF}
\begin{equation}\label{eq:FZ-Ham6-pre}
	\widetilde{\Ham{FZ}{VI}}(x,\tilde{y};t) = 
	\frac{1}{t(t-1)}\left(\frac{x(x-1)(x-t)\tilde{y}^{2}}{2} - \alpha x + \beta \frac{t}{x} + \gamma\frac{t-1}{x-1} 
	+ \delta \frac{t(t-1)}{x-t}\right),
\end{equation}
which we again rescale slightly and consider instead
\begin{equation}\label{eq:FZ-Ham6}
	\Ham{FZ}{VI}(x,y;t) =  \frac{1}{2}\widetilde{\Ham{FZ}{VI}}(x,2y;t)=
	\frac{1}{t(t-1)}\left( x(x-1)(x-t)y^{2} - \frac{\alpha}{2} x + \frac{\beta}{2} \frac{t}{x} + \frac{\gamma}{2}\frac{t-1}{x-1} 
	+ \frac{\delta}{2} \frac{t(t-1)}{x-t}\right).
\end{equation}
It is also convenient to rewrite this Hamiltonian in Okamoto parameters \eqref{eq:Ok-pars-PVI}, which gives us the following Hamiltonian system,
\begin{equation}\label{eq:FZ-Ham6-sys}
\left\{
\begin{aligned}
	\frac{dx}{dt} &= \frac{\partial \Ham{FZ}{VI}}{\partial y} 
		=\frac{2xy(x-1)(x-t)}{t(t-1)},\\
	\frac{dy}{dt} &= -\frac{\partial \Ham{FZ}{VI}}{\partial x} 
		=-\frac{y^{2}(3x^{2}-2(t+1)x+t)}{t(t-1)} + \frac{1}{4}\left(
		\frac{1-\theta^{2}}{(x-t)^{2}} - \frac{\kappa_{0}^{2}}{(t-1) x^{2}} + \frac{\kappa_{1}^{2}}{t(x-1)^{2}} + \frac{\kappa_{\infty}^{2}}{t(t-1)}
		\right).
\end{aligned}
\right.
\end{equation}
Next we construct the Okamoto spaces of initial conditions for each of these systems and use them to find the coordinate identification between them.

%

\subsection{The Kajiwara-Noumi-Yamada Hamiltonian system} 
\label{sub:KNY-6}
\begin{notation*}
For the Kajiwara-Noumi-Yamada system \eqref{eq:KNY-Ham6-sys} we use the following notation: 
coordinates $(q,p)$, parameters $a_{0},\ldots, a_{4}$ (\emph{root variables}),
base points $p_{i}$, exceptional divisors $E_{i}$.	
\end{notation*}

\begin{figure}[ht]
	\begin{tikzpicture}[>=stealth,basept/.style={circle, draw=red!100, fill=red!100, thick, inner sep=0pt,minimum size=1.2mm}]
	\begin{scope}[xshift=0cm,yshift=0cm]
	\draw [black, line width = 1pt] (-0.4,0) -- (2.9,0)	node [pos=0,left] {\small $H_{p}$} node [pos=1,right] {\small $p=0$};
	\draw [black, line width = 1pt] (-0.4,2.5) -- (2.9,2.5) node [pos=0,left] {\small $H_{p}$} node [pos=1,right] {\small $p=\infty$};
	\draw [black, line width = 1pt] (0,-0.4) -- (0,2.9) node [pos=0,below] {\small $H_{q}$} node [pos=1,above] {\small $q=0$};
	\draw [black, line width = 1pt] (2.5,-0.4) -- (2.5,2.9) node [pos=0,below] {\small $H_{q}$} node [pos=1,above] {\small $q=\infty$};
	\node (p1) at (2.5,0.9) [basept,label={[right] \small $p_{1}$}] {};
	\node (p2) at (2.5,1.5) [basept,label={[right] \small $p_{2}$}] {};
	\node (p3) at (1.5,2.5) [basept,label={[above] \small $p_{3}$}] {};
	\node (p4) at (2,2) [basept,label={[xshift = 0pt, yshift=-15pt] \small $p_{4}$}] {};
	\node (p5) at (0,0) [basept,label={[xshift = -8pt, yshift=-15pt] \small $p_{5}$}] {};
	\node (p6) at (0,0.6) [basept,label={[left] \small $p_{6}$}] {};
	\node (p7) at (0.5,2.5) [basept,label={[above] \small $p_{7}$}] {};
	\node (p8) at (1,2) [basept,label={[xshift = 0pt, yshift=-15pt] \small $p_{8}$}] {};
	\draw [red, line width = 0.8pt, ->] (p4) -- (p3);
	\draw [red, line width = 0.8pt, ->] (p8) -- (p7);
	\end{scope}
	\draw [->] (5.5,1)--(3.5,1) node[pos=0.5, below] {$\operatorname{Bl}_{p_{1}\cdots p_{8}}$};
	\begin{scope}[xshift=7.5cm,yshift=0cm]
	\draw [blue, line width = 1pt] (0,-0.3) -- (0,3) node [pos=1, above, xshift=-10pt] {\small $H_{q}-E_{5}-E_{6}$};
	\draw [blue, line width = 1pt] (-0.4,2.6) -- (5.2,2.6) node [pos=1,right] {\small $H_{p} - E_{3} - E_{7}$};
	\draw [blue, line width = 1pt] (4.8,-0.9) -- (4.8,3) node [pos=1, above, xshift=10pt] {\small $H_{q}-E_{1} - E_{3}$};	
	\draw [red, line width = 1pt] (0.2,-0.5) -- (5.2,-0.5)	node [pos=1, right] {\small $H_{p}-E_{5}$};
	\draw [red, line width = 1pt] (0.7,-0.8) -- (-0.3,0.2) node [pos=1,left] {\small $E_{5}$};
	\draw [red, line width = 1pt] (0.2,0) -- (5.2,0)	node [pos=1, right] {\small $H_{p}-E_{6}$};
	\draw [red, line width = 1pt] (0.7,-0.3) -- (-0.3,0.7) node [pos=1,left] {\small $E_{6}$};
	\draw [red, line width = 1pt] (-0.4,1) -- (4.6,1)	node [pos=0, left] {\small $H_{p}-E_{1}$};
	\draw [red, line width = 1pt] (4.1,0.7) -- (5.1,1.7) node [pos=1,right] {\small $E_{1}$};
	\draw [red, line width = 1pt] (-0.4,1.5) -- (4.6,1.5)	node [pos=0, left] {\small $H_{p}-E_{2}$};
	\draw [red, line width = 1pt] (4.1,1.2) -- (5.1,2.2) node [pos=1,right] {\small $E_{2}$};
	\draw [red, line width = 1pt] (0.9,-0.9) -- (0.9,2.3) node [pos=0,below] {\small $H_{q} - E_{7}$};
	\draw [red, line width = 1pt] (1,2.5) -- (1.5,2) node [pos=1,xshift=5pt,yshift=-2pt] {\small $E_{8}$};
	\draw [blue, line width = 1pt] (0.7,1.9) -- (1.7,2.9) node [pos=1,above] {\small $E_{7}-E_{8}$};
	\draw [red, line width = 1pt] (2.4,-0.9) -- (2.4,2.3) node [pos=0,below] {\small $H_{q} - E_{3}$};
	\draw [blue, line width = 1pt] (2.2,1.9) -- (3.2,2.9) node [pos=1,above] {\small $E_{3}-E_{4}$};
	\draw [red, line width = 1pt] (2.5,2.5) -- (3,2) node [pos=1,xshift=5pt,yshift=-2pt] {\small $E_{4}$};
	\end{scope}
	\end{tikzpicture}
	\caption{The standard realization of the $D_{4}^{(1)}$ Sakai surface}
	\label{fig:soic-KNY-P6}
\end{figure}
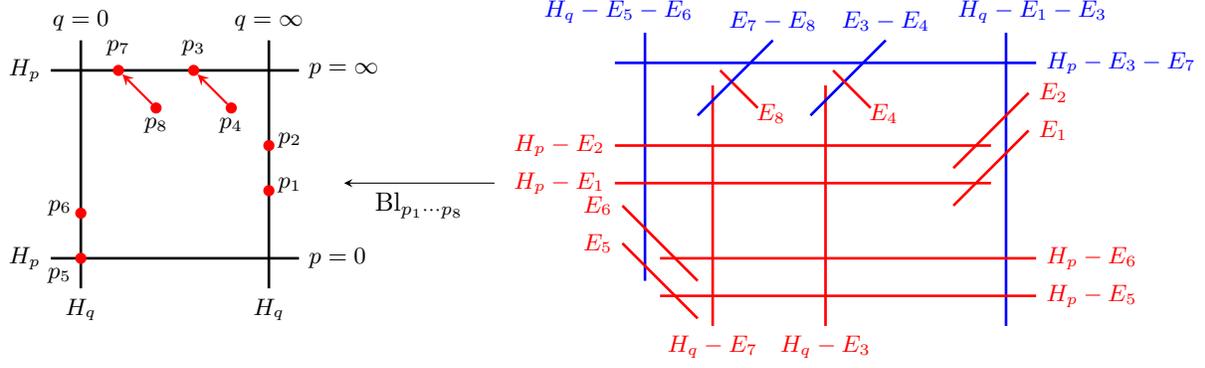

Its space of initial conditions is the standard geometric realization 
of the $D_{4}^{(1)}$ surface shown on Figure~\ref{fig:soic-KNY-P6}. 
The configuration of the base points is given
in terms of root variables $a_{i}$,
\begin{equation*}
	p_{1}(\infty,-a_{2}),\quad p_{2}(\infty,-a_{1}-a_{2}),\quad p_{3}(t,\infty)\leftarrow p_{4}(ta_{0},0),\quad 
	p_{5}(0,0),\quad p_{6}(0,a_{4}),\quad p_{7}(1,\infty)\leftarrow p_{8}(a_{3},0).
\end{equation*}

The surface and symmetry root bases for this standard realization of the $D_{4}^{(1)}$ surface are given on Fig.~\ref{fig:d-roots-d41-KNY}	
and Fig.~\ref{fig:a-roots-d41-KNY} respectively. The birational representation of the extended affine Weyl symmetry group 
$\widetilde{W}(D_{4}^{(1)})$ is given in \cite{KajNouYam:2017:GAPE} and we do not reproduce it here. 

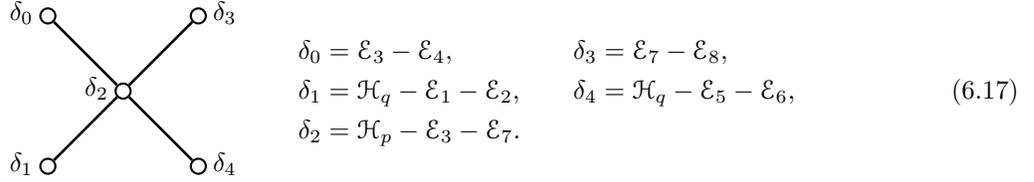
\begin{figure}[ht]
\begin{equation}\label{eq:d-roots-d41}			
	\raisebox{-32.1pt}{\begin{tikzpicture}[
			elt/.style={circle,draw=black!100,thick, inner sep=0pt,minimum size=2mm}]
		\path 	(-1,1) 	node 	(d0) [elt, label={[xshift=-10pt, yshift = -10 pt] $\delta_{0}$} ] {}
		        (-1,-1) node 	(d1) [elt, label={[xshift=-10pt, yshift = -10 pt] $\delta_{1}$} ] {}
		        ( 0,0) 	node  	(d2) [elt, label={[xshift=-10pt, yshift = -10 pt] $\delta_{2}$} ] {}
		        ( 1,1) 	node  	(d3) [elt, label={[xshift=10pt, yshift = -10 pt] $\delta_{3}$} ] {}
		        ( 1,-1) node 	(d4) [elt, label={[xshift=10pt, yshift = -10 pt] $\delta_{4}$} ] {};
		\draw [black,line width=1pt ] (d0) -- (d2) -- (d1)  (d3) -- (d2) -- (d4);
	\end{tikzpicture}} \qquad
			\begin{alignedat}{2}
			\delta_{0} &= \mathcal{E}_{3} - \mathcal{E}_{4}, &\qquad  \delta_{3} &= \mathcal{E}_{7} - \mathcal{E}_{8},\\
			\delta_{1} &= \mathcal{H}_{q} - \mathcal{E}_{1} - \mathcal{E}_{2}, &\qquad  \delta_{4} &= \mathcal{H}_{q} - \mathcal{E}_{5} - \mathcal{E}_{6},\\
			\delta_{2} &= \mathcal{H}_{p} - \mathcal{E}_{3} - \mathcal{E}_{7}.
			\end{alignedat}
\end{equation}
	\caption{The Surface Root Basis for the standard $D_{4}^{(1)}$ Sakai surface point configuration}
	\label{fig:d-roots-d41-KNY}	
\end{figure}
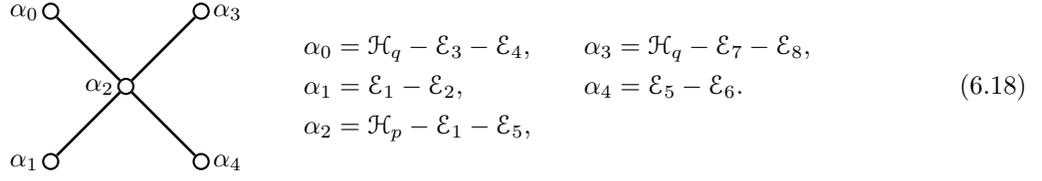
\begin{figure}[H]
\begin{equation}\label{eq:a-roots-d41}			
	\raisebox{-32.1pt}{\begin{tikzpicture}[
			elt/.style={circle,draw=black!100,thick, inner sep=0pt,minimum size=2mm}]
		\path 	(-1,1) 	node 	(d0) [elt, label={[xshift=-10pt, yshift = -10 pt] $\alpha_{0}$} ] {}
		        (-1,-1) node 	(d1) [elt, label={[xshift=-10pt, yshift = -10 pt] $\alpha_{1}$} ] {}
		        ( 0,0) 	node  	(d2) [elt, label={[xshift=-10pt, yshift = -10 pt] $\alpha_{2}$} ] {}
		        ( 1,1) 	node  	(d3) [elt, label={[xshift=10pt, yshift = -10 pt] $\alpha_{3}$} ] {}
		        ( 1,-1) node 	(d4) [elt, label={[xshift=10pt, yshift = -10 pt] $\alpha_{4}$} ] {};
		\draw [black,line width=1pt ] (d0) -- (d2) -- (d1)  (d3) -- (d2) -- (d4);
	\end{tikzpicture}} \qquad
			\begin{alignedat}{2}
			\alpha_{0} &=  \mathcal{H}_{q} -\mathcal{E}_{3} - \mathcal{E}_{4}, &\qquad  \alpha_{3} &= \mathcal{H}_{q} - \mathcal{E}_{7} - \mathcal{E}_{8},\\
			\alpha_{1} &= \mathcal{E}_{1} - \mathcal{E}_{2}, &\qquad  \alpha_{4} &= \mathcal{E}_{5} - \mathcal{E}_{6}.\\
			\alpha_{2} &= \mathcal{H}_{p} - \mathcal{E}_{1} - \mathcal{E}_{5},
			\end{alignedat}
\end{equation}
	\caption{The Symmetry Root Basis for the standard $D_{4}^{(1)}$ symmetry sub-lattice}
	\label{fig:a-roots-d41-KNY}	
\end{figure}


\subsection{The Okamoto Hamiltonian system} 
\label{sub:Ok-6}
\begin{notation*}
For the Okamoto system \eqref{eq:Ok-Ham6-sys} we use the following notation: coordinates $(f,g)$, parameters are the Okamoto parameters \eqref{eq:Ok-pars-PVI};
base points $q_{i}$, exceptional divisors $F_{i}$.	
\end{notation*}

The space of initial conditions for the system \eqref{eq:Ok-Ham6-sys} is given on Figure~\ref{fig:soic-Ok-P6}, where the basepoints are
\begin{equation}\label{eq:Ok-pts-P6}
	\begin{tikzpicture}
	\node (q9) at (-1.5,0) {$q_{9}(\infty,0)$}; 
	\node (q1) at (3,0.7) {$q_{1}\left(0,\frac{\theta + \kappa_{0} + \kappa_{1} - \kappa_{\infty} -1}{2}\right),$}; 
	\node (q2) at (3,-0.7) {$q_{2}\left(0,\frac{\theta + \kappa_{0} + \kappa_{1} + \kappa_{\infty}-1}{2}\right),$}; 
	\draw[->] (q1)--(0.5,0.7) --(0,0) --(q9);	\draw[->] (q2)--(0.5,-0.7) --(0,0) --(q9);	
	\node (q3) at (6.5,0.7) {$q_{3}(t,\infty)$};
	\node (q4) at (8.5,0.7) {$q_{4}(\theta,0),$};
	\draw[->] (q4)--(q3);
	\node (q5) at (6.5,0) {$q_{5}(0,\infty)$};
	\node (q6) at (8.5,0) {$q_{6}(\kappa_{0},0),$};
	\draw[->] (q6)--(q5);
	\node (q7) at (6.5,-0.7) {$q_{7}(1,\infty)$};
	\node (q8) at (8.5,-0.7) {$q_{8}(\kappa_{1},0).$};
	\draw[->] (q8)--(q7);
	\end{tikzpicture}
\end{equation}
Note that we have nine basepoints and two $-3$ curves, so this surface is not minimal and we need to blow down the $-1$-curve $H_{f} - F_{9}$. As mentioned before, the change of coordinates is $(q,p)=(f,fg)$, and the corresponding change of bases is given in the following Lemma.

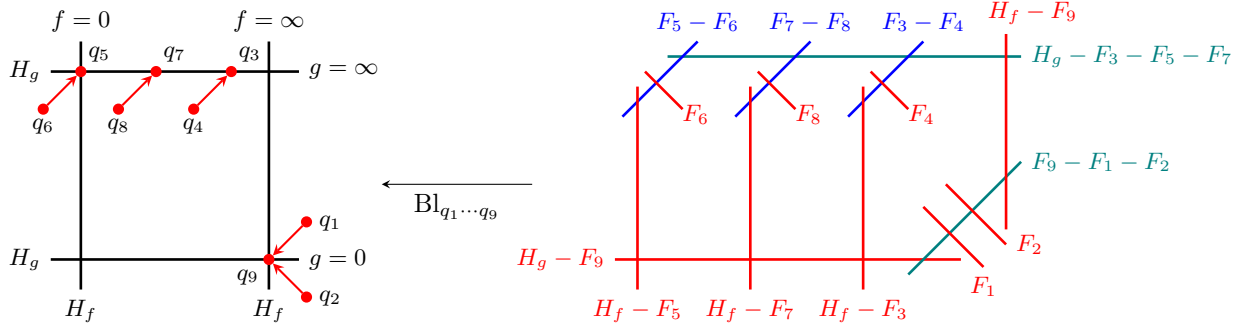
\begin{figure}[ht]
	\begin{tikzpicture}[>=stealth,basept/.style={circle, draw=red!100, fill=red!100, thick, inner sep=0pt,minimum size=1.2mm}]
	\begin{scope}[xshift=0cm,yshift=0cm]
	\draw [black, line width = 1pt] (-0.4,0) -- (2.9,0)	node [pos=0,left] {\small $H_{g}$} node [pos=1,right] {\small $g=0$};
	\draw [black, line width = 1pt] (-0.4,2.5) -- (2.9,2.5) node [pos=0,left] {\small $H_{g}$} node [pos=1,right] {\small $g=\infty$};
	\draw [black, line width = 1pt] (0,-0.4) -- (0,2.9) node [pos=0,below] {\small $H_{f}$} node [pos=1,above] {\small $f=0$};
	\draw [black, line width = 1pt] (2.5,-0.4) -- (2.5,2.9) node [pos=0,below] {\small $H_{f}$} node [pos=1,above] {\small $f=\infty$};
	\node (q9) at (2.5,0) [basept,label={[xshift = -8pt, yshift=-15pt] \small $q_{9}$}] {};
	\node (q1) at (3,0.5) [basept,label={[xshift = 9pt, yshift=-9pt] \small $q_{1}$}] {};
	\node (q2) at (3,-0.5) [basept,label={[xshift = 9pt, yshift=-9pt] \small $q_{2}$}] {};
	\node (q3) at (2,2.5) [basept,label={[xshift = 7pt, yshift=-2pt] \small $q_{3}$}] {};
	\node (q4) at (1.5,2) [basept,label={[xshift = 0pt, yshift=-15pt] \small $q_{4}$}] {};
	\node (q5) at (0,2.5) [basept,label={[xshift = 7pt, yshift=-2pt] \small $q_{5}$}] {};
	\node (q6) at (-0.5,2) [basept,label={[xshift = 0pt, yshift=-15pt] \small $q_{6}$}] {};
	\node (q7) at (1,2.5) [basept,label={[xshift = 7pt, yshift=-2pt] \small $q_{7}$}] {};
	\node (q8) at (0.5,2) [basept,label={[xshift = 0pt, yshift=-15pt] \small $q_{8}$}] {};
	\draw [red, line width = 0.8pt, ->] (q1) -- (q9);
	\draw [red, line width = 0.8pt, ->] (q2) -- (q9);
	\draw [red, line width = 0.8pt, ->] (q4) -- (q3);
	\draw [red, line width = 0.8pt, ->] (q6) -- (q5);
	\draw [red, line width = 0.8pt, ->] (q8) -- (q7);
	\end{scope}
	\draw [->] (6,1)--(4,1) node[pos=0.5, below] {$\operatorname{Bl}_{q_{1}\cdots q_{9}}$};
	\begin{scope}[xshift=7.5cm,yshift=0cm]
	\draw [red, line width = 1pt] (-0.4,0) -- (4.2,0)	node [pos=0, left] {\small $H_{g}-F_{9}$};
	\draw [teal, line width = 1pt] (0.3,2.7) -- (5,2.7) node [pos=1,right] {\small $H_{g} - F_{3} - F_{5} - F_{7}$};
	\draw [teal, line width = 1pt] (3.5,-0.2) -- (5,1.3) node [pos=1,right] {\small $F_{9} - F_{1} - F_{2}$};
	\draw [red, line width = 1pt] (4.8,0.4) -- (4.8,3) node [pos=1, above, xshift=10pt] {\small $H_{f}-F_{9}$};
	\draw [red, line width = 1pt] (3.7,0.7) -- (4.5,-0.1) node [pos=1, below] {\small $F_{1}$};
	\draw [red, line width = 1pt] (4,1) -- (4.8,0.2) node [pos=1, right] {\small $F_{2}$};
	\draw [blue, line width = 1pt] (-0.3,1.9) -- (0.7,2.9) node [pos=1,above] {\small $F_{5}-F_{6}$};
	\draw [red, line width = 1pt] (0,2.5) -- (0.5,2) node [pos=1,xshift=5pt,yshift=-2pt] {\small $F_{6}$};
	\draw [red, line width = 1pt] (-0.1,-0.4) -- (-0.1,2.3) node [pos=0, below] {\small $H_{f}-F_{5}$};
	\draw [blue, line width = 1pt] (1.2,1.9) -- (2.2,2.9) node [pos=1,above] {\small $F_{7}-F_{8}$};
	\draw [red, line width = 1pt] (1.5,2.5) -- (2,2) node [pos=1,xshift=5pt,yshift=-2pt] {\small $F_{8}$};
	\draw [red, line width = 1pt] (1.4,-0.4) -- (1.4,2.3) node [pos=0,below] {\small $H_{f} - F_{7}$};
	\draw [blue, line width = 1pt] (2.7,1.9) -- (3.7,2.9) node [pos=1,above] {\small $F_{3}-F_{4}$};
	\draw [red, line width = 1pt] (3,2.5) -- (3.5,2) node [pos=1,xshift=5pt,yshift=-2pt] {\small $F_{4}$};
	\draw [red, line width = 1pt] (2.9,-0.4) -- (2.9,2.3) node [pos=0,below] {\small $H_{f} - F_{3}$};
	\end{scope}
	\end{tikzpicture}
	\caption{The Space of Initial Conditions for the Okamoto Hamiltonian System}
	\label{fig:soic-Ok-P6}
\end{figure}

\begin{lemma}\label{lem:KNY-to-Ok-6} The change of bases for Picard lattices between the standard Kajiwara-Noumi-Yamada (with an additional blowup point) 
	and the Okamoto surfaces is given by 
	\begin{equation}\label{eq:basis-KNY-Ok-P6}
		\begin{aligned}
			\mathcal{H}_{q} & = \mathcal{H}_{f}, &\qquad 
				\mathcal{H}_{f} &= \mathcal{H}_{q},\\
			\mathcal{H}_{p} &= \mathcal{H}_{f} + \mathcal{H}_{g} - \mathcal{F}_{5} - \mathcal{F}_{9},  &\qquad 	
				\mathcal{H}_{g} & = \mathcal{H}_{q} + \mathcal{H}_{p} - \mathcal{E}_{5} - \mathcal{E}_{9}, \\
			\mathcal{E}_{i} &= \mathcal{H}_{f}-\mathcal{F}_{i}, &\qquad 
				\mathcal{F}_{i}	&= \mathcal{H}_{q} - \mathcal{E}_{i},\qquad\text{for } i=5,9,\\ 
			\mathcal{E}_{i} &= \mathcal{F}_{i}, &\qquad 
				\mathcal{F}_{i}	&= \ \mathcal{E}_{i},\qquad\text{otherwise}. 
		\end{aligned}
	\end{equation}
	This results in the following correspondences between the surface roots (with an additional blowup) is 
	\begin{equation}\label{eq:geom-KNY-Ok-P6}
	\begin{aligned}
		\delta_{0} &= \mathcal{E}_{3} - \mathcal{E}_{4} = \mathcal{F}_{3} - \mathcal{F}_{4}, &\qquad
			\delta_{3} &= \mathcal{E}_{7} - \mathcal{E}_{8} = 
				\mathcal{F}_{7} - \mathcal{F}_{8}, \\
		\delta_{1} &= \mathcal{H}_{q} - \mathcal{E}_{1} - \mathcal{E}_{2} - \mathcal{E}_{9} 
		= \mathcal{F}_{9} - \mathcal{F}_{1} - \mathcal{F}_{2}, &\qquad
			\delta_{4} &= \mathcal{H}_{q} - \mathcal{E}_{5} - \mathcal{E}_{6} = \mathcal{F}_{5} - \mathcal{F}_{6}; \\
		\delta_{2} &=  \mathcal{H}_{p} - \mathcal{E}_{3} - \mathcal{E}_{7} - \mathcal{E}_{9} = \mathcal{H}_{g} - 
		\mathcal{F}_{3} - \mathcal{F}_{5} - \mathcal{F}_{7}, &\qquad
	\end{aligned}
	\end{equation}
	and the symmetry roots,
	\begin{equation}\label{eq:sym-KNY-Ok-P6}
	\begin{aligned}
		\alpha_{0} &= \mathcal{H}_{q} - \mathcal{E}_{3} - \mathcal{E}_{4} = \mathcal{H}_{f} - \mathcal{F}_{3} - \mathcal{F}_{4}, &\qquad
		\alpha_{3} &= \mathcal{H}_{q} - \mathcal{E}_{7} - \mathcal{E}_{8} = 
				\mathcal{H}_{f} - \mathcal{F}_{7} - \mathcal{F}_{8}, \\
		\alpha_{1} &= \mathcal{E}_{1} - \mathcal{E}_{2}
		= \mathcal{F}_{1} - \mathcal{F}_{2}, &\qquad
		\alpha_{4} &= \mathcal{E}_{5} - \mathcal{E}_{6} = \mathcal{H}_{f} - \mathcal{F}_{5} - \mathcal{F}_{6}. \\
		\alpha_{2} &=  \mathcal{H}_{p} - \mathcal{E}_{1} - \mathcal{E}_{5} = \mathcal{H}_{g} - \mathcal{F}_{1} - \mathcal{F}_{9}, 
	\end{aligned}
	\end{equation}
\end{lemma}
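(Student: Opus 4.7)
The plan is to verify the basis change by directly analyzing the birational map $\varphi:(f,g)\dashrightarrow (q,p)=(f,fg)$ together with its inverse $\varphi^{-1}:(q,p)\dashrightarrow (f,g)=(q,p/q)$. Since the coordinate change is explicit and elementary, we can track exactly how each divisor class transforms. The relation $q=f$ gives immediately $\mathcal{H}_{f}=\mathcal{H}_{q}$ for the two ``vertical'' coordinate pencils. For $\mathcal{H}_{g}$, a fiber $g=c$ on the Okamoto side pulls back to the $(1,1)$-curve $p=cq$ on the KNY side, which passes through $(q,p)=(0,0)$ (i.e.~$p_{5}$) and, in the chart at infinity, through $(Q,P)=(0,0)$. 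A generic member of this pencil has class $\mathcal{H}_{q}+\mathcal{H}_{p}$, and subtracting the two base points yields $\mathcal{H}_{g}=\mathcal{H}_{q}+\mathcal{H}_{p}-\mathcal{E}_{5}-\mathcal{E}_{9}$. The appearance of $\mathcal{E}_{9}$ forces the introduction of an additional blowup point on the KNY side at $(q,p)=(\infty,\infty)$, matching the extra base point $q_{9}=(\infty,0)$ on the Okamoto side, which under $\varphi^{-1}$ is exactly the indeterminacy locus.

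Next I would trace each exceptional divisor through $\varphi$. For base points whose coordinates lie away from the indeterminacy locus, the correspondence is direct: after applying the parameter identification \eqref{eq:pars-root2Ok-P6}, the points $p_{i}$ and $q_{i}$ match via $(q,p)=(f,fg)$ for $i\in\{1,2,3,4,6,7,8\}$, giving $\mathcal{E}_{i}=\mathcal{F}_{i}$. For example, $q_{1}(0, (\theta+\kappa_{0}+\kappa_{1}-\kappa_{\infty}-1)/2)$ corresponds to $p_{1}(\infty,-a_{2})$ (in the $(Q,p)$-chart) upon substituting $p=fg$ and using $-a_{2}=(\theta+\kappa_{0}+\kappa_{1}-\kappa_{\infty}-1)/2$; similar matches work for the other $p_{i}$, $q_{i}$ pairs including the proper tracking of the cascades $p_{3}\!\leftarrow\! p_{4}$, $p_{7}\!\leftarrow\! p_{8}$. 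The nontrivial cases are $i=5$ and $i=9$: the class $\mathcal{F}_{5}$ is the proper transform under $\varphi^{-1}$ of the line $f=0$ blown up at $p_{5}$, giving $\mathcal{F}_{5}=\mathcal{H}_{q}-\mathcal{E}_{5}$, and analogously $\mathcal{F}_{9}=\mathcal{H}_{q}-\mathcal{E}_{9}$ for the extra blowup at $(q,p)=(\infty,\infty)$.

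With the correspondence of base points established, the formula $\mathcal{H}_{p}=\mathcal{H}_{f}+\mathcal{H}_{g}-\mathcal{F}_{5}-\mathcal{F}_{9}$ follows by solving the linear system or symmetrically by applying the same argument to $\varphi^{-1}$, using the relation $p/q=g$. Having established \eqref{eq:basis-KNY-Ok-P6}, the surface-root correspondence \eqref{eq:geom-KNY-Ok-P6} and symmetry-root correspondence \eqref{eq:sym-KNY-Ok-P6} reduce to direct substitution and verification of the linear combinations, which serves as an internal consistency check and also confirms that the identification is compatible with the $D_{4}^{(1)}$ Dynkin structure.

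The main obstacle is the careful bookkeeping of the two exceptional divisors $\mathcal{F}_{5}$, $\mathcal{F}_{9}$ across the indeterminacy resolutions of $\varphi$ and $\varphi^{-1}$, and verifying that the non-minimal surface structures (with the $-3$ curves $H_{g}-F_{3}-F_{5}-F_{7}$ and $F_{9}-F_{1}-F_{2}$ on the Okamoto side, becoming the configurations $H_{p}-E_{3}-E_{7}$, $H_{q}-E_{1}-E_{3}$ on the KNY side after the birational change) are compatible on the level of inaccessible divisors. This is a finite verification that can be carried out by explicit computation in appropriate blowup charts, as in the analogous lemmas for $\Pain{IV}$ and $\Pain{V}$ earlier in the paper.
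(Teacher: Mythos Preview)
Your approach is correct and is essentially the same as the paper's: the paper states the lemma without proof, simply noting that the change of coordinates $(q,p)=(f,fg)$ is already known from the Hamiltonian relationship, so the basis change follows by tracking divisors through this explicit birational map---precisely what you outline. Your sketch in fact supplies the details the paper omits, and the key observation that the blowup chart $(u_{9},v_{9})$ at $q_{9}$ is literally the affine chart $(Q,p)$ on the KNY side is exactly the mechanism behind the identifications $\mathcal{F}_{9}=\mathcal{H}_{q}-\mathcal{E}_{9}$ and $\mathcal{F}_{i}=\mathcal{E}_{i}$ for $i=1,2$. One small imprecision: when you say the points match ``via $(q,p)=(f,fg)$ for $i\in\{1,2,3,4,6,7,8\}$'', the points $q_{1},q_{2},q_{6}$ are infinitely near points that lie on the indeterminacy locus, so the matching does require working in blowup charts (as you acknowledge); and in your final paragraph the $-3$ curves on the KNY side should carry the extra $-\mathcal{E}_{9}$ terms, i.e.\ $\mathcal{H}_{p}-\mathcal{E}_{3}-\mathcal{E}_{7}-\mathcal{E}_{9}$ and $\mathcal{H}_{q}-\mathcal{E}_{1}-\mathcal{E}_{2}-\mathcal{E}_{9}$, since the extra blowup is what makes them $-3$.
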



\subsection{The Its-Prokhorov Hamiltonian system} 
\label{sub:IP-6}
\begin{notation*}
For the Its-Prokhorov system we use the following notation: coordinates $(x,y)$, parameters \eqref{eq:pars-IP2Ok-P6},
base points $w_{i}$, exceptional divisors $K_{i}$.	
\end{notation*}

The space of initial conditions for the system \eqref{eq:IP-Ham6-sys} is given on Figure~\ref{fig:soic-IP-6}, where the basepoints are
\begin{equation}\label{eq:IP-pts-6}
	\begin{tikzpicture}[baseline=0cm]
	\node (w9) at (0,0) {$w_{9}(\infty,0)$}; 
	\node (w1) at (2,0.7) {$w_{1}(0,\Theta_{\infty}-1),$}; 
	\node (w2) at (2,-0.7) {$w_{2}(0,-\Theta_{\infty}),$}; 
	\draw[->] (w1)--(1.3,0)--(w9);	\draw[->] (w2)--(1.3,0)--(w9); 
	\node (w3) at (3.5,0) {$w_{3}(t,\infty)$}; 
	\node (w4) at (5.5,0.7) {$w_{4}(\Theta_{t},0),$}; 
	\node (w10) at (5.5,-0.7) {$w_{10}(-\Theta_{t},0),$}; 
	\draw[->] (w4)--(4.8,0)--(w3);	\draw[->] (w10)--(4.8,0)--(w3); 
	\node (w5) at (7,0) {$w_{5}(0,\infty)$}; 
	\node (w6) at (9,0.7) {$w_{6}(\Theta_{0},0),$}; 
	\node (w11) at (9,-0.7) {$w_{11}(-\Theta_{0},0),$}; 
	\draw[->] (w6)--(8.3,0)--(w5);	\draw[->] (w11)--(8.3,0)--(w5); 
	\node (w7) at (10.5,0) {$w_{7}(1,\infty)$}; 
	\node (w8) at (12.5,0.7) {$w_{8}(\Theta_{1},0),$}; 
	\node (w12) at (12.5,-0.7) {$w_{12}(-\Theta_{1},0).$}; 
	\draw[->] (w8)--(11.8,0)--(w7);	\draw[->] (w12)--(11.8,0)--(w7); 
	\end{tikzpicture}
\end{equation}
Note that we have twelve basepoints and four $-3$ curves, so this surface is not minimal and we need to blow down some curves. 
From Figure~\ref{fig:soic-IP-6} it is pretty clear that we should blow down the $-1$- curves $H_{x} - K_{3}$, $H_{x}-K_{5}$, 
$H_{x}-K_{7}$ and $H_{x}-K_{9}$. As usual, instead we blow up the standard surface at additional points 
$p_{9},\ldots p_{12}$.

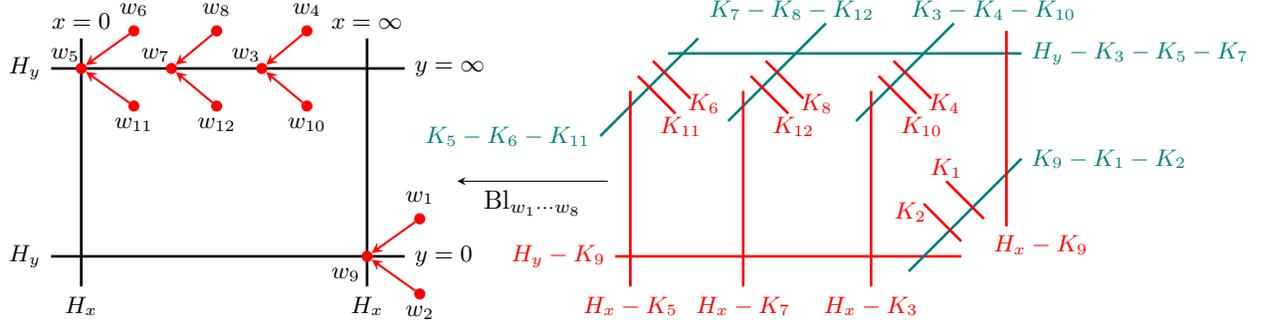
\begin{figure}[ht]
	\begin{tikzpicture}[>=stealth,basept/.style={circle, draw=red!100, fill=red!100, thick, inner sep=0pt,minimum size=1.2mm}]
	\begin{scope}[xshift=0cm,yshift=0cm]
	\draw [black, line width = 1pt] (-0.4,0) -- (4.3,0)	node [pos=0,left] {\small $H_{y}$} node [pos=1,right] {\small $y=0$};
	\draw [black, line width = 1pt] (-0.4,2.5) -- (4.3,2.5) node [pos=0,left] {\small $H_{y}$} node [pos=1,right] {\small $y=\infty$};
	\draw [black, line width = 1pt] (0,-0.4) -- (0,2.9) node [pos=0,below] {\small $H_{x}$} node [pos=1,above] {\small $x=0$};
	\draw [black, line width = 1pt] (3.8,-0.4) -- (3.8,2.9) node [pos=0,below] {\small $H_{x}$} node [pos=1,above] {\small $x=\infty$};
	\node (w9) at (3.8,0) [basept,label={[xshift = -8pt, yshift=-15pt] \small $w_{9}$}] {};
	\node (w1) at (4.5,0.5) [basept,label={[yshift=0pt] \small $w_{1}$}] {};
	\node (w2) at (4.5,-0.5) [basept,label={[yshift=-15pt] \small $w_{2}$}] {};
	\node (w5) at (0,2.5) [basept,label={[xshift = -6pt, yshift=-3pt] \small $w_{5}$}] {};
	\node (w6) at (.7,3) [basept,label={[above] \small $w_{6}$}] {};
	\node (w11) at (.7,2) [basept,label={[yshift=-15pt] \small $w_{11}$}] {};
	\node (w7) at (1.2,2.5) [basept,label={[xshift = -6pt, yshift=-3pt] \small $w_{7}$}] {};
	\node (w8) at (1.8,3) [basept,label={[above] \small $w_{8}$}] {};
	\node (w12) at (1.8,2) [basept,label={[yshift=-15pt] \small $w_{12}$}] {};
	\node (w3) at (2.4,2.5) [basept,label={[xshift = -6pt, yshift=-3pt] \small $w_{3}$}] {};
	\node (w4) at (3,3) [basept,label={[above] \small $w_{4}$}] {};
	\node (w10) at (3,2) [basept,label={[yshift=-15pt] \small $w_{10}$}] {};
	\draw [red, line width = 0.8pt, ->] (w1) -- (w9);
	\draw [red, line width = 0.8pt, ->] (w2) -- (w9);
	\draw [red, line width = 0.8pt, ->] (w6) -- (w5);	
	\draw [red, line width = 0.8pt, ->] (w11) -- (w5);	
	\draw [red, line width = 0.8pt, ->] (w8) -- (w7);	
	\draw [red, line width = 0.8pt, ->] (w12) -- (w7);		
	\draw [red, line width = 0.8pt, ->] (w4) -- (w3);	
	\draw [red, line width = 0.8pt, ->] (w10) -- (w3);		
	\end{scope}
	\draw [->] (7,1)--(5,1) node[pos=0.5, below] {$\operatorname{Bl}_{w_{1}\cdots w_{8}}$};
	\begin{scope}[xshift=7.5cm,yshift=0cm]
	\draw [red, line width = 1pt] (-0.4,0) -- (4.2,0)	node [pos=0, left] {\small $H_{y}-K_{9}$};
	\draw [teal, line width = 1pt] (0.3,2.7) -- (5,2.7) node [pos=1,right] {\small $H_{y} - K_{3} - K_{5} - K_{7}$};
	\draw [teal, line width = 1pt] (3.5,-0.2) -- (5,1.3) node [pos=1,right] {\small $K_{9} - K_{1} - K_{2}$};
	\draw [red, line width = 1pt] (4.8,0.4) -- (4.8,3) node [pos=0, xshift=13pt,yshift=-7pt] {\small $H_{x}-K_{9}$};
	\draw [red, line width = 1pt] (4,1) -- (4.5,0.5) node [pos=0, xshift=0pt,yshift=5pt] {\small $K_{1}$};
	\draw [red, line width = 1pt] (3.7,0.7) -- (4.2,0.2) node [pos=0, xshift=-5pt,yshift=-3pt] {\small $K_{2}$};

	\draw [teal, line width = 1pt] (-0.6,1.6) -- (0.7,2.9) node [pos=0,left] {\small $K_{5}-K_{6} - K_{11}$};
	\draw [red, line width = 1pt] (0.1,2.6) -- (0.6,2.1) node [pos=1,xshift=5pt,yshift=-2pt] {\small $K_{6}$};
	\draw [red, line width = 1pt] (-0.1,2.4) -- (0.4,1.9) node [pos=1,xshift=2pt,yshift=-5pt] {\small $K_{11}$};
	\draw [red, line width = 1pt] (-0.2,-0.4) -- (-0.2,2.2) node [pos=0, below] {\small $H_{x}-K_{5}$};

	\draw [teal, line width = 1pt] (1.1,1.8) -- (2.4,3.1) node [pos=1,xshift=-13pt,yshift=5pt] {\small $K_{7}-K_{8} - K_{12}$};
	\draw [red, line width = 1pt] (1.6,2.6) -- (2.1,2.1) node [pos=1,xshift=5pt,yshift=-2pt] {\small $K_{8}$};
	\draw [red, line width = 1pt] (1.4,2.4) -- (1.9,1.9) node [pos=1,xshift=2pt,yshift=-5pt] {\small $K_{12}$};
	\draw [red, line width = 1pt] (1.3,-0.4) -- (1.3,2.2) node [pos=0, below] {\small $H_{x}-K_{7}$};

	\draw [teal, line width = 1pt] (2.8,1.8) -- (4.1,3.1) node [pos=1,xshift=15pt,yshift=5pt] {\small $K_{3}-K_{4} - K_{10}$};
	\draw [red, line width = 1pt] (3.3,2.6) -- (3.8,2.1) node [pos=1,xshift=5pt,yshift=-2pt] {\small $K_{4}$};
	\draw [red, line width = 1pt] (3.1,2.4) -- (3.6,1.9) node [pos=1,xshift=2pt,yshift=-5pt] {\small $K_{10}$};
	\draw [red, line width = 1pt] (3,-0.4) -- (3,2.2) node [pos=0, below] {\small $H_{x}-K_{3}$};

	\end{scope}
	\end{tikzpicture}
	\caption{The Space of Initial Conditions for the Its-Prokhorov Hamiltonian System}
	\label{fig:soic-IP-6}
\end{figure}

\begin{lemma}\label{lem:FZ-KNY-Ok-6} The change of bases for Picard lattices between the standard Kajiwara-Noumi-Yamada (with four additional blowup points) 
	and the Its-Prokhorov surfaces is given by 
	\begin{equation}\label{eq:basis-KNY-IP-6}
		\begin{aligned}
			\mathcal{H}_{q} & = \mathcal{H}_{x}, &
			\mathcal{H}_{x} &= \mathcal{H}_{q},\\
			\mathcal{H}_{p} &= 3\mathcal{H}_{x} + \mathcal{H}_{y} - \mathcal{K}_{3} - \mathcal{K}_{5} - \mathcal{K}_{7} 
				- \mathcal{K}_{9}- \mathcal{K}_{10} - \mathcal{K}_{12},  & 	
			\mathcal{H}_{y} & = 3 \mathcal{H}_{q} + \mathcal{H}_{p} - \mathcal{E}_{3} - \mathcal{E}_{7} - \mathcal{E}_{9} - \mathcal{E}_{10}
			 - \mathcal{E}_{11}  - \mathcal{E}_{12}, \\
			\mathcal{E}_{1} &= \mathcal{K}_{1}, &
			\mathcal{K}_{1}	&= \mathcal{E}_{1},\\ 
			\mathcal{E}_{2} &= \mathcal{K}_{2}, &
			\mathcal{K}_{2}	&= \mathcal{E}_{2},\\ 
			\mathcal{E}_{3} &= \mathcal{H}_{x} -  \mathcal{K}_{10}, &
			\mathcal{K}_{3}	&= \mathcal{H}_{q} - \mathcal{E}_{10},\\ 
			\mathcal{E}_{4} &= \mathcal{K}_{4}, &
			\mathcal{K}_{4}	&= \mathcal{E}_{4},\\ 
			\mathcal{E}_{5} &= \mathcal{K}_{11}, &
			\mathcal{K}_{5}	&= \mathcal{H}_{q} - \mathcal{E}_{11},\\ 
			\mathcal{E}_{6} &= \mathcal{K}_{6}, &
			\mathcal{K}_{6}	&= \mathcal{E}_{6},\\ 
			\mathcal{E}_{7} &= \mathcal{H}_{x} - \mathcal{K}_{12}, &
			\mathcal{K}_{7}	&= \mathcal{H}_{q} - \mathcal{E}_{12},\\ 
			\mathcal{E}_{8} &= \mathcal{K}_{8}, &
			\mathcal{K}_{8}	&= \mathcal{E}_{8},\\ 
			\mathcal{E}_{9} &= \mathcal{H}_{x} - \mathcal{K}_{9}, &
			\mathcal{K}_{9}	&= \mathcal{H}_{q} - \mathcal{E}_{9},\\
			\mathcal{E}_{10} &= \mathcal{H}_{x} - \mathcal{K}_{3}, &
			\mathcal{K}_{10} &= \mathcal{H}_{q} - \mathcal{E}_{3},\\
			\mathcal{E}_{11} &= \mathcal{H}_{x} - \mathcal{K}_{5}, &
			\mathcal{K}_{11} &= \mathcal{E}_{5},\\
			\mathcal{E}_{12} &= \mathcal{H}_{x} - \mathcal{K}_{7}, &
			\mathcal{K}_{12} &= \mathcal{H}_{q} - \mathcal{E}_{7}.
		\end{aligned}
	\end{equation}
	This results in the following correspondences between the surface roots with additional blowups 
	(i.e., classes of curves with indices $-2$ and $-3$), 
	\begin{equation}\label{eq:geom-KNY-IP-6}
	\begin{aligned}
		\delta_{0} &= \mathcal{E}_{3} - \mathcal{E}_{4} - \mathcal{E}_{10} = \mathcal{K}_{3} - \mathcal{K}_{4} - \mathcal{K}_{10}, &\quad 
		\delta_{3} &= \mathcal{E}_{7} - \mathcal{E}_{8} - \mathcal{E}_{12} = 
				 \mathcal{K}_{7} - \mathcal{K}_{8} - \mathcal{K}_{12}, \\
		\delta_{1} &= \mathcal{H}_{q} - \mathcal{E}_{1} - \mathcal{E}_{2} - \mathcal{E}_{9} = \mathcal{K}_{9} - \mathcal{K}_{1} - \mathcal{K}_{2} &\quad
		\delta_{4} &= \mathcal{H}_{q} - \mathcal{E}_{5} - \mathcal{E}_{6} - \mathcal{E}_{11} = 
			\mathcal{K}_{5} - \mathcal{K}_{6} - \mathcal{K}_{11}, \\
		\delta_{2} &=  \mathcal{H}_{p} - \mathcal{E}_{3} - \mathcal{E}_{7} - \mathcal{E}_{9} = 
		\mathcal{H}_{y} - \mathcal{K}_{3} - \mathcal{K}_{5} - \mathcal{K}_{7}, &\quad 
	\end{aligned}
	\end{equation}
	and the symmetry roots,
	\begin{equation}\label{eq:sym-KNY-IP-6}
	\begin{aligned}
			\alpha_{0} &= \mathcal{H}_{q} - \mathcal{E}_{3} - \mathcal{E}_{4}
			= \mathcal{K}_{10} - \mathcal{K}_{4}, &\quad 
			\alpha_{3} &= \mathcal{H}_{q} - \mathcal{E}_{7} - \mathcal{E}_{8} = 
			\mathcal{K}_{12} - \mathcal{K}_{8},\\
			\alpha_{1} &= \mathcal{E}_{1} - \mathcal{E}_{2} =  \mathcal{K}_{1} - \mathcal{K}_{2}, &\quad 
			\alpha_{4} &= \mathcal{E}_{5} - \mathcal{E}_{6} = 
			\mathcal{K}_{11} - \mathcal{K}_{6},\\
			\alpha_{2} &= \rlap{$\displaystyle{\mathcal{H}_{p} - \mathcal{E}_{1} - \mathcal{E}_{5}
			= 3 \mathcal{H}_{x} + \mathcal{H}_{y} - \mathcal{K}_{1} - \mathcal{K}_{3} - \mathcal{K}_{5} 
				- \mathcal{K}_{7} - \mathcal{K}_{9} - \mathcal{K}_{10} - \mathcal{K}_{11} - \mathcal{K}_{12}}, $}\\
		\end{aligned}
	\end{equation}
	The symplectic form is the standard one, $\symp{IP} = dy\wedge dx$, and the root variables are 
\begin{equation}\label{eq:IP-root-P6}
	a_{0} = 2\Theta_{t},\quad a_{1} = 2\Theta_{\infty}-1,\quad a_{2} = 1 - \Theta_{0}-\Theta_{1}-\Theta_{t}-\Theta_{\infty},\quad 
	a_{3} = 2\Theta_{1},\quad a_{4} = 2\Theta_{0}.
\end{equation}
\end{lemma}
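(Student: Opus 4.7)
My plan is to follow the five-step identification procedure of Section~\ref{sub:ident-proc}, adapted to this $D_4^{(1)}$ case. First, I would verify the space of initial conditions shown in Figure~\ref{fig:soic-IP-6} by checking that the twelve base points \eqref{eq:IP-pts-6} are indeed the indeterminacies of the vector field associated to \eqref{eq:IP-Ham6-sys} and by computing, in local charts around each base point, the self-intersection indices of the exceptional divisors and their proper transforms. The result is a non-minimal surface with four $-3$-curves in the polar divisor and four inaccessible $-1$-curves $H_x - K_3$, $H_x - K_5$, $H_x - K_7$, $H_x - K_9$. Blowing these four inaccessible curves down would yield a minimal $D_4^{(1)}$ Sakai surface, confirming that the IP system and the KNY system produce surfaces of the same Sakai type.

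Rather than blowing down on the IP side, I would add four extra blowup points $p_9, \ldots, p_{12}$ to the standard surface of Figure~\ref{fig:soic-KNY-P6}, one on each of the four boundary divisors $d_0, d_1, d_3, d_4$ of the $D_4^{(1)}$ Dynkin diagram \eqref{eq:d-roots-d41}. This is the natural choice since the $-3$-curves on the IP side are precisely $K_3 - K_4 - K_{10}$, $K_9 - K_1 - K_2$, $K_7 - K_8 - K_{12}$, and $K_5 - K_6 - K_{11}$, so one extra blowup per leg of the diagram is required. I would then set up a preliminary change of basis forcing the surface-root correspondences listed in \eqref{eq:geom-KNY-IP-6}; the coordinate-class identifications $\mathcal{H}_x = \mathcal{H}_q$ and $\mathcal{H}_y = 3\mathcal{H}_q + \mathcal{H}_p - \mathcal{E}_3 - \mathcal{E}_7 - \mathcal{E}_9 - \mathcal{E}_{10} - \mathcal{E}_{11} - \mathcal{E}_{12}$ follow by writing the $-3$-curves in terms of the KNY basis and solving the resulting linear system. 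The exceptional divisor identifications in \eqref{eq:basis-KNY-IP-6} are then forced by matching individual $-1$-curves.

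To confirm this identification is correct (and not merely one of many Weyl-equivalent possibilities), I would compute the root variables on the IP side from the period map of Lemma~\ref{lem:period-map-a21} applied to the symplectic form $\omega = k\, dy \wedge dx$, using the symmetry-root expressions \eqref{eq:sym-KNY-IP-6}. The normalization $a_0 + a_1 + 2a_2 + a_3 + a_4 = 1$ forces $k = 1$, giving the standard form $\symp{IP} = dy \wedge dx$, and direct substitution of the coordinates of the base points \eqref{eq:IP-pts-6} yields \eqref{eq:IP-root-P6}. One then checks compatibility with the Okamoto parameter matching \eqref{eq:pars-IP2Ok-P6}.

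The main obstacle will be correctly pairing the $\pm \Theta_i$ base points on the IP side with the corresponding blowup centers on the KNY side. Because the IP base-point configuration is symmetric under each $\Theta_i \mapsto -\Theta_i$, the naive matching has $2^4 = 16$ combinatorial possibilities, and an initial wrong choice will manifest as a discrepancy between preliminary root variables and the expected ones \eqref{eq:IP-root-P6}, as happened in \eqref{eq:IP-rv-prelim}. In that event, I would identify the discrepancy as an explicit element of $\widetilde{W}(D_4^{(1)})$ (in analogy with the use of $\sigma_1 \circ w_2$ after Theorem~\ref{thm:bir-weyl-a21} in the $\Pain{IV}$ case), act by that element on the preliminary basis change, and thereby obtain the corrected identification \eqref{eq:basis-KNY-IP-6}.
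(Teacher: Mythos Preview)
Your plan follows exactly the identification procedure of Section~\ref{sub:ident-proc}, which is what the paper uses throughout (the paper omits the proof of this lemma entirely, stating that for $\Pain{V}$ and $\Pain{VI}$ ``these computations are standard and so we only summarize the geometric data''). One point needs correction, however: you list only four $-3$-curves on the Its--Prokhorov side, but there are in fact \emph{five}, since $H_y - K_3 - K_5 - K_7$ also has self-intersection $-3$ (it corresponds to $\delta_2$ in \eqref{eq:geom-KNY-IP-6}). Consequently, your claim that the four extra blowup points on the KNY side go ``one on each of the four boundary divisors $d_0, d_1, d_3, d_4$'' is not quite right. The correct placement, forced by matching all five modified surface roots in \eqref{eq:geom-KNY-IP-6}, puts $p_{10}, p_{11}, p_{12}$ on $d_0, d_4, d_3$ respectively, but $p_9$ must lie at the \emph{intersection} $d_1 \cap d_2$ (note that $\mathcal{E}_9$ appears in both $\delta_1$ and $\delta_2$), so that a single extra blowup turns both $d_1$ and $d_2$ into $-3$-curves. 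You would discover this immediately upon carrying out your own stated step of forcing the correspondences \eqref{eq:geom-KNY-IP-6}, so the gap is not fatal, but the heuristic reasoning preceding it should be amended.
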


From this we immediately get the change of coordinates to the standard surface, and to the Okamoto case.

\begin{lemma}\label{lem:coords-IP-KNY-Ok-6} The change of coordinates and parameter matching between the Kajiwara-Noumi-Yamada 
	and the Its-Prokhorov Hamiltonian systems is given by
    \begin{equation}\label{eq:KNY-IP-6}
   	 \left\{\begin{aligned}
   	 	q(x,y,t)&=x,\\
   		p(x,y,t)&= \Theta_{0} + x\left(y + \frac{\Theta_{1}}{x-1} + \frac{\Theta_{t}}{x-t}\right),\\
   	 \end{aligned}\right.
    \qquad\text{and} \quad 
    	\left\{\begin{aligned}
   	 	x(q,p,t)&=q,\\
   		y(q,p,t)&= \frac{1}{2}\left(\frac{2p-a_{4}}{q} - \frac{a_{0}}{q-t} - \frac{a_{3}}{q-1}\right),
    	\end{aligned}\right.
    \end{equation}
	with the parameter correspondence given by \eqref{eq:IP-root-P6}.
	
	Combining it with the change of variables $q=f$, $p=fg$, we get the change of variables between the Okamoto
	and the Its-Prokhorov Hamiltonian systems,
    \begin{equation}\label{eq:IP-Ok-6}
   	 \left\{\begin{aligned}
   	 	x(f,g,t)&=f,\\
   		y(f,g,t)&= g - \frac{\theta}{2(f-t)}- \frac{\kappa_{0}}{2f} - \frac{\kappa_{1}}{2(f-1)},\\
		\Theta_{0}&=\frac{\kappa_{0}}{2},\quad \Theta_{1}= \frac{\kappa_{1}}{2},\\
		\Theta_{t}&=\frac{\theta}{2},\quad \Theta_{\infty} = \frac{1-\kappa_{\infty}}{2},
   	 \end{aligned}\right.
    \qquad\text{and} \qquad 
    	\left\{\begin{aligned}
   	 	f(x,y,t)&=x,\\
   		g(x,y,t)&= y + \frac{\Theta_{0}}{x} + \frac{\Theta_{1}}{x-1} + \frac{\Theta_{t}}{x-t},\\
		\kappa_{0}&=2\Theta_{0},\quad \kappa_{1} = 2 \Theta_{1},\\ 
		\theta&= 2\Theta_{t},\quad \kappa_{\infty} = 1 - 2\Theta_{\infty}.
    	\end{aligned}\right.
    \end{equation}
\end{lemma}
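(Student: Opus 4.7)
The plan is to mirror the computation used in Lemma~\ref{lem:coords-IP-Ok-4} and Lemma~\ref{lem:coords-Kek-Ok-4}, extracting the birational change of variables directly from the change of basis of the Picard lattice established in Lemma~\ref{lem:FZ-KNY-Ok-6}. Let $\varphi:(x,y)\mapsto(q,p)$ denote the sought change of variables. Since $\mathcal{H}_{q}=\mathcal{H}_{x}$, the coordinates $q$ and $x$ coincide up to a M\"obius transformation, $q(x,y)=(Ax+B)/(Cx+D)$. Since $\mathcal{H}_{p}=3\mathcal{H}_{x}+\mathcal{H}_{y}-\mathcal{K}_{3}-\mathcal{K}_{5}-\mathcal{K}_{7}-\mathcal{K}_{9}-\mathcal{K}_{10}-\mathcal{K}_{12}$, the coordinate $p$ is a projective coordinate on the pencil of $(3,1)$-curves passing through the base points $w_{3},w_{5},w_{7},w_{9},w_{10},w_{12}$.

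The first substantive step is to write down the equation of a generic $(3,1)$-curve in the affine $(x,y)$-chart and impose the six passing conditions; this reduces the pencil to a two-dimensional space with a convenient basis that, after simplification, is spanned by $1$ and the rational function $y+\tfrac{\Theta_{0}}{x}+\tfrac{\Theta_{1}}{x-1}+\tfrac{\Theta_{t}}{x-t}$ (up to a shift), since the points $w_{10},w_{11},w_{12}$ contribute the partial-fraction poles at $0,1,t$ with the expected residues given by \eqref{eq:IP-pts-6}. One then writes
\begin{equation*}
p(x,y)=\frac{K+L\Big(y+\tfrac{\Theta_{0}}{x}+\tfrac{\Theta_{1}}{x-1}+\tfrac{\Theta_{t}}{x-t}\Big)}{M+N\Big(y+\tfrac{\Theta_{0}}{x}+\tfrac{\Theta_{1}}{x-1}+\tfrac{\Theta_{t}}{x-t}\Big)}
\end{equation*}
and fixes the M\"obius coefficients $A,B,C,D,K,L,M,N$ (up to the usual overall scaling) by imposing the forward push-forwards of divisors read off from \eqref{eq:basis-KNY-IP-6}, such as $\varphi_{*}(\mathcal{K}_{9}-\mathcal{K}_{1}-\mathcal{K}_{2})=\mathcal{H}_{y}-\mathcal{K}_{1}-\mathcal{K}_{2}-\mathcal{K}_{9}$ (for the $-3$ curve above $w_{9}$) and $\varphi_{*}(\mathcal{K}_{1})=\mathcal{E}_{1}$, $\varphi_{*}(\mathcal{K}_{2})=\mathcal{E}_{2}$, which pin down the coordinates of the images of those $-1$ curves.

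The next step is to check the parameter matching. Pushing forward the divisors $\varphi_{*}(\mathcal{K}_{j})$ for the exceptional divisors above $w_{4},w_{6},w_{8},w_{10},w_{11},w_{12}$ and comparing with the coordinates of the points $p_{1},p_{2},p_{4},p_{6},p_{8}$ in Section~\ref{sub:KNY-6} yields a system of equations whose solution is precisely \eqref{eq:IP-root-P6}. Substituting the explicit values back into the ansatz gives the simplified expressions $q=x$ and $p=\Theta_{0}+x(y+\tfrac{\Theta_{1}}{x-1}+\tfrac{\Theta_{t}}{x-t})$ stated in \eqref{eq:KNY-IP-6}; the inverse is then obtained by a direct algebraic inversion, using the normalization $a_{0}+a_{1}+2a_{2}+a_{3}+a_{4}=1$ together with \eqref{eq:IP-root-P6} to identify the Okamoto-style expression for $y$ in terms of $p,q$ and the root variables. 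Finally, composing $\varphi$ with the trivial substitution $(q,p)=(f,fg)$ from Lemma~\ref{lem:KNY-to-Ok-6} yields \eqref{eq:IP-Ok-6}, with the parameter correspondence obtained by combining \eqref{eq:IP-root-P6} with \eqref{eq:pars-root2Ok-P6}.

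The main obstacle in this plan is the bookkeeping for the $(3,1)$-pencil: the six conditions coming from $w_{3},w_{5},w_{7},w_{9},w_{10},w_{12}$ do not all lie in the affine chart (three are at infinity in $y$, while $w_{10},w_{11},w_{12}$ are second-level infinitely near points on the cascades above $w_{3},w_{5},w_{7}$), so imposing them requires carefully working in the blowup charts $(u_{i},v_{i})$ and $(U_{i},V_{i})$ introduced in Section~\ref{sub:Ok-soic-4}. Once the correct basis $\{1,y+\sum\tfrac{\Theta_{\bullet}}{x-\bullet}\}$ of the pencil is identified, however, the remaining divisor-matching conditions are linear constraints on $A,\ldots,N$ that are straightforward to solve, and the parameter identification \eqref{eq:IP-root-P6} follows mechanically. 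The Hamiltonian correction terms can then be recovered, if desired, by the pull-back calculation of Section~\ref{sub:non-auto-Hams}; since $\varphi$ is $t$-dependent through the base point $w_{3}(t,\infty)$, one expects a nontrivial correction analogous to the one appearing in \eqref{eq:IPtoOk-Hams-4}.
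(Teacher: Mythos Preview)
Your overall strategy is exactly the procedure the paper uses (in detail for $\Pain{IV}$, tacitly here), but there is a genuine error in the pencil computation that would block the argument as written. The class you need is $\mathcal{H}_{p} = 3\mathcal{H}_{x} + \mathcal{H}_{y} - \mathcal{K}_{3} - \mathcal{K}_{5} - \mathcal{K}_{7} - \mathcal{K}_{9} - \mathcal{K}_{10} - \mathcal{K}_{12}$, so the six base points of the pencil are $w_{3}, w_{5}, w_{7}, w_{9}, w_{10}, w_{12}$; note that $w_{11}$ is \emph{not} among them, while $w_{9}$ is. Imposing these conditions on a generic $(3,1)$-curve (using $w_{3},w_{5},w_{7},w_{9}$ to force $a_{01}=a_{30}=0$ and $a_{31}x^{3}+a_{21}x^{2}+a_{11}x=a_{31}x(x-1)(x-t)$, and then $w_{10},w_{12}$ to pin $Q(t),Q(1)$ for the residual quadratic $Q$) yields the one-parameter family $x(x-1)(x-t)y+Q(x)=0$ with $Q(0)$ free. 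The rational coordinate on this pencil is therefore
\[
xy+\frac{x\Theta_{1}}{x-1}+\frac{x\Theta_{t}}{x-t},
\]
and $p$ is an affine function of \emph{this}, not of $g=y+\Theta_{0}/x+\Theta_{1}/(x-1)+\Theta_{t}/(x-t)$.

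The function $g$ you wrote down is in fact the coordinate on the \emph{Okamoto} pencil $|\mathcal{H}_{g}|=3\mathcal{H}_{x}+\mathcal{H}_{y}-\mathcal{K}_{3}-\mathcal{K}_{5}-\mathcal{K}_{7}-\mathcal{K}_{10}-\mathcal{K}_{11}-\mathcal{K}_{12}$, which does pass through $w_{11}$ and not $w_{9}$; the two are related by $p=xg$ via the substitution $(q,p)=(f,fg)$. Since $p=xg$ is not a M\"obius function of $g$ alone, no choice of your constants $K,L,M,N$ in the ansatz $p=(K+Lg)/(M+Ng)$ can produce the stated answer, and the divisor-matching step would simply have no solution. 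The fix is easy once diagnosed: either use the correct pencil for $|\mathcal{H}_{p}|$ as above, or deliberately target the Okamoto $g$ first (through $w_{11}$ instead of $w_{9}$) and only then compose with $(q,p)=(f,fg)$; but you must be explicit about which pencil you are computing. (Minor slip: your push-forward ``$\varphi_{*}(\mathcal{K}_{9}-\mathcal{K}_{1}-\mathcal{K}_{2})=\mathcal{H}_{y}-\mathcal{K}_{1}-\mathcal{K}_{2}-\mathcal{K}_{9}$'' has $\mathcal{K}$'s on both sides; the target should be in the $\mathcal{E}$-basis, namely $\delta_{1}=\mathcal{H}_{q}-\mathcal{E}_{1}-\mathcal{E}_{2}-\mathcal{E}_{9}$.)
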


Since the change of variables is time-dependent, there will be a correction term in the Hamiltoninan (as well as some purely $t$-dependent terms),
\begin{equation*}
	\Ham{IP}{VI}(x,y;t) = 
	\Ham{Ok}{VI}\big(f(x,y,t),g(x,y,t);t\big)-\frac{\theta}{2(x-t)} + 
	\frac{\theta-1}{2}\left(\frac{\kappa_{1}}{t-1} + \frac{\kappa_{0}}{t}\right) -\frac{\theta}{2}\cdot\frac{2t-1}{t(t-1)}.
\end{equation*}


\subsection{The Filipuk-\.{Z}o\l\c{a}dek Hamiltonian system} 
\label{sub:FZ-6}

\begin{notation*}
For the Filipuk-\.{Z}o\l\c{a}dek system we use the following notation: coordinates $(x,y)$, parameters \eqref{eq:Ok-pars-PVI},
base points $z_{i}$, exceptional divisors $K_{i}$.	
\end{notation*}

The space of initial conditions for the system \eqref{eq:FZ-Ham6-sys} is the same as given on Figure~\ref{fig:soic-IP-6}
for the Its-Prokhorov case, but the coordinates of the basepoints are slightly different:
\begin{equation}\label{eq:FZ-pts-6}
	\begin{tikzpicture}[baseline=0cm]
	\node (z9) at (0,0) {$z_{9}(\infty,0)$}; 
	\node (z1) at (2,0.7) {$z_{1}\left(0,-\frac{\kappa_{\infty}}{2}\right),$}; 
	\node (z2) at (2,-0.7) {$z_{2}\left(0,\frac{\kappa_{\infty}}{2}\right),$}; 
	\draw[->] (z1)--(1.3,0)--(z9);	\draw[->] (z2)--(1.3,0)--(z9); 
	\node (z3) at (3.5,0) {$z_{3}(t,\infty)$}; 
	\node (z4) at (5.5,0.7) {$z_{4}\left(\frac{1+\theta}{2},0\right),$}; 
	\node (z10) at (5.5,-0.7) {$z_{10}\left(\frac{1-\theta}{2},0\right),$}; 
	\draw[->] (z4)--(4.8,0)--(z3);	\draw[->] (z10)--(4.8,0)--(z3); 
	\node (z5) at (7,0) {$z_{5}(0,\infty)$}; 
	\node (z6) at (9,0.7) {$z_{6}\left(\frac{\kappa_{0}}{2},0\right),$}; 
	\node (z11) at (9,-0.7) {$z_{11}\left(\frac{-\kappa_{0}}{2},0\right),$}; 
	\draw[->] (z6)--(8.3,0)--(z5);	\draw[->] (z11)--(8.3,0)--(z5); 
	\node (z7) at (10.5,0) {$z_{7}(1,\infty)$}; 
	\node (z8) at (12.5,0.7) {$z_{8}\left(\frac{\kappa_{1}}{2},0\right),$}; 
	\node (z12) at (12.5,-0.7) {$z_{12}\left(-\frac{\kappa_{0}}{2},0\right).$}; 
	\draw[->] (z8)--(11.8,0)--(z7);	\draw[->] (z12)--(11.8,0)--(z7); 
	\end{tikzpicture}
\end{equation}
The change of basis on the level of Picard lattices is the same as in Lemma~\ref{lem:FZ-KNY-Ok-6}, but the change of coordinates is slightly 
different and is given in the following Lemma.

\begin{lemma}\label{lem:coords-FZ-KNY-Ok-6} The change of coordinates and parameter matching between the Kajiwara-Noumi-Yamada 
	and the Filipuk-\.{Z}o\l\c{a}dek Hamiltonian systems is given by
    \begin{equation}\label{eq:KNY-FZ-6}
   	 \left\{\begin{aligned}
   	 	q(x,y,t)&=x,\\
   		p(x,y,t)&= x y + \frac{1}{2}\left(\kappa_{0} + \kappa_{1}\frac{x}{x-1} + (\theta-1)\frac{x}{x-t}\right),\\
   	 \end{aligned}\right.
\ \text{and}\ 
    	\left\{\begin{aligned}
   	 	x(q,p,t)&=q,\\
   		y(q,p,t)&= \frac{p}{q} - \frac{1}{2}\left(\frac{a_{0}-1}{q-t} + \frac{a_{3}}{q-1} + \frac{a_{4}}{q}\right),
    	\end{aligned}\right.
    \end{equation}
	with the parameter correspondence given by \eqref{eq:pars-root2Ok-P6}.
	
	Combining it with the change of variables $q=f$, $p=fg$, we get the change of variables between the Okamoto
	and the Filipuk-\.{Z}o\l\c{a}dek Hamiltonian systems,
    \begin{equation}\label{eq:FZ-Ok-6}
   	 \left\{\begin{aligned}
   	 	x(f,g,t)&=f,\\
   		y(f,g,t)&= g - \frac{1}{2}\left(\frac{\kappa_{0}}{f} + \frac{\kappa_{1}}{f-1} + \frac{\theta-1}{f-t}\right),\\
   	 \end{aligned}\right.
    \ \text{and} \  
    	\left\{\begin{aligned}
   	 	f(x,y,t)&=x,\\
   		g(x,y,t)&= y +\frac{1}{2}\left(\frac{\kappa_{0}}{x} + \frac{\kappa_{1}}{x-1} + \frac{\theta-1}{x-t}\right).\\
    	\end{aligned}\right.
    \end{equation}
\end{lemma}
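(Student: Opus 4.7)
The plan is to follow exactly the same strategy as in Lemma~\ref{lem:coords-IP-KNY-Ok-6}, exploiting the fact that the change of bases of the Picard lattices is identical to that in Lemma~\ref{lem:FZ-KNY-Ok-6}; only the coordinates of the base points differ between \eqref{eq:IP-pts-6} and \eqref{eq:FZ-pts-6}. Let $\varphi : (x,y) \to (q,p)$ be the unknown birational map. From $\mathcal{H}_q = \mathcal{H}_x$ we read off that $q$ is a M\"obius function of $x$ alone, and from
\begin{equation*}
\mathcal{H}_p = 3\mathcal{H}_x + \mathcal{H}_y - \mathcal{K}_3 - \mathcal{K}_5 - \mathcal{K}_7 - \mathcal{K}_9 - \mathcal{K}_{10} - \mathcal{K}_{12}
\end{equation*}
we conclude that $p$ is a ratio of two elements of the pencil $|3\mathcal{H}_x + \mathcal{H}_y - \sum \mathcal{K}_{i}|$, which we parameterise in the affine $(x,y)$-chart by requiring curves of the form $a(x)y + b(x) = 0$ with $\deg a \le 2$, $\deg b \le 3$ that pass through the six listed base points.

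First I will compute the pencil explicitly. The four points on the exceptional divisors $\mathcal K_{10},\mathcal K_{12},\mathcal K_{11},\mathcal K_{4}$ (together with $\mathcal K_{3},\mathcal K_{5},\mathcal K_{7},\mathcal K_{9}$) impose the partial-fraction skeleton
\begin{equation*}
p = A\,xy + \frac{1}{2}\Big( \alpha_0 + \frac{\alpha_1 x}{x-1} + \frac{\alpha_t x}{x-t}\Big) + \beta,
\end{equation*}
where the constants $A,\alpha_0,\alpha_1,\alpha_t,\beta$ are then fixed by matching the coordinates of $z_{6},z_{8},z_{4}$ in \eqref{eq:FZ-pts-6} with those of $p_{6},p_{8},p_{4}$ in Section~\ref{sub:KNY-6}, namely by requiring $p|_{z_{6}} = a_{4}/2$, $p|_{z_{8}} = a_{3}/2$, $p|_{z_{4}}$ corresponding to the $(q,p)$-coordinate of $p_{3}\leftarrow p_{4}$, and using the Okamoto parameterisation \eqref{eq:pars-root2Ok-P6}. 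This yields $A=1$ together with $\alpha_0 = \kappa_0$, $\alpha_1 = \kappa_1$, $\alpha_t = \theta-1$, $\beta = 0$, which is \eqref{eq:KNY-FZ-6}. Inverting is immediate since $\varphi$ is triangular in $(x,y)\mapsto (q,p)$.

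Next, composing with $(q,p) = (f,fg)$ from Section~\ref{sub:Ok-6} gives \eqref{eq:FZ-Ok-6}. For the matching of the Hamiltonians one should use the formalism of Section~\ref{sub:non-auto-Hams}: since the change of variables is $t$-dependent through the pole at $x=t$, the equality of two-forms
\begin{equation*}
dg \wedge df - d\Ham{Ok}{VI} \wedge dt = dy \wedge dx - d\Ham{FZ}{VI} \wedge dt
\end{equation*}
produces a correction term obtained from
\begin{equation*}
dg\wedge df = dy\wedge dx + d\!\left(\tfrac{\theta-1}{2(x-t)}\right) \wedge dt,
\end{equation*}
so that $\Ham{FZ}{VI}$ and $\Ham{Ok}{VI}$ differ, modulo purely $t$-dependent pieces, by $-\tfrac{\theta-1}{2(x-t)}$.

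The only place requiring attention is fixing the signs and the constant term $\beta$: the base points $z_{1},z_{2},z_{5},z_{7}$ (living on the $-3$-curves) do not see $\beta$ directly, so I will pin it down by substituting into the equations of the $-2$-curves $\delta_{0},\delta_{3},\delta_{4}$ under $\varphi_{*}$, which is a straightforward symbolic calculation. This parallels exactly the computation performed in Lemma~\ref{lem:coords-IP-KNY-Ok-6}, the only difference being the shift of residues by $\tfrac12$ reflecting the distinct parameterisation $\Theta_i \leftrightarrow \kappa_i/2$ and $\Theta_t \leftrightarrow \theta/2$, so I do not anticipate any genuine obstacle beyond bookkeeping.
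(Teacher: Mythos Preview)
Your proposal is correct and follows exactly the paper's general identification procedure (Section~\ref{sub:ident-proc}); the paper itself does not give a separate proof of this lemma, treating it as a routine application of the same method used for Lemma~\ref{lem:coords-IP-KNY-Ok-6} with the base-point data \eqref{eq:FZ-pts-6} substituted for \eqref{eq:IP-pts-6}. Your computation of the correction term via $dg\wedge df = dy\wedge dx + d\bigl(\tfrac{\theta-1}{2(x-t)}\bigr)\wedge dt$ is also exactly what the paper records immediately after the lemma.
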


The Hamiltonians are related by
\begin{align*}
	\Ham{FZ}{VI}(x,y;t) &= 
	\Ham{Ok}{VI}\big(f(x,y,t),g(x,y,t);t\big)-\frac{\theta-1}{2(x-t)} \\
	&\qquad  - \frac{\kappa_{\infty}^{2}}{4(t-1)} + 
	\frac{((2t-1)(\theta-1) - \kappa_{0} + \kappa_{1})(\theta-1 + \kappa_{0} + \kappa_{1})}{4t(t-1)}.
\end{align*}

\begin{remark}
	Note that the coordinates for the Its-Prokhorov and Filipuk-\.{Z}o\l\c{a}dek systems are related by
	\begin{equation*}
		x^{\mathrm{FZ}} = x^{\mathrm{IP}} = x,\qquad y^{\mathrm{FZ}} = y^{\mathrm{IP}} + \frac{1}{2(x-t)},
	\end{equation*}
	with the parameter correspondence the same as in \eqref{eq:IP-Ok-6}.
\end{remark}


\section{Conclusion} 
\label{sec:conclusion}
In this paper we showed that, similar to the discrete case considered earlier in \cite{DzhFilSto:2020:RCDOPWHWDPE}, 
the geometric approach to the theory of Painlev\'e equations is an effective tool in studying the 
\emph{Painlev\'e equivalence problem} and reducing Painlev\'e equations, when written in a Hamiltonian form, to 
some chosen reference Hamiltonian via a birational change of variables. We introduced an essentially 
algorithmic procedure on how to obtain this change of variables explicitly via the geometry identification 
between the corresponding spaces of initial conditions. The procedure is illustrated in detail for the systems 
related to $\Pain{IV}$, but is very general and can be used for other systems and scalar forms of Painlev\'e equations
as well. One interesting follow up question to consider is to see what additional insights, if any, can the 
geometric approach provide to the notion of the tau-functions corresponding to these Hamiltonian systems. 


\section*{Acknowledgements} 
\label{sec:acknowledgements}
AD acknowledges the support of the MIMUW grant and the  OPUS 2017/25/B/ST1/00931 grant to visit Warsaw in January 2020 where some of this work 
was done. AS was supported by a London Mathematical Society Early Career Fellowship and gratefully acknowledges the support of 
the London Mathematical Society. The authors thank Philip Boalch %
    for some helpful remarks and suggestions.


\bibliographystyle{amsalpha}

\providecommand{\bysame}{\leavevmode\hbox to3em{\hrulefill}\thinspace}
\providecommand{\MR}{\relax\ifhmode\unskip\space\fi MR }
\providecommand{\MRhref}[2]{%
  \href{http://www.ams.org/mathscinet-getitem?mr=#1}{#2}
}
\providecommand{\href}[2]{#2}

\end{document}